\newtheorem{theorem}{Theorem}
\newtheorem{corollary}{Corollary}
\newtheorem{definition}{Definition}
\newtheorem{lemma}{Lemma}
\newtheorem{assumption}{Assumption}
\newtheorem{proposition}{Proposition}
\newtheorem{remark}{Remark}
\DeclareMathOperator{\plim}{plim}
\newcommand{\fancyt}{\mathscr{T}_n}
\newcommand{\fancyv}{\mathscr{V}}
\numberwithin{equation}{section}
\numberwithin{theorem}{section}
\numberwithin{lemma}{section}
\numberwithin{proposition}{section}
\numberwithin{corollary}{section}
\begin{document}

	\title{Consistent specification testing under spatial dependence\thanks{We thank the editor, co-editor and three referees for insightful comments that improved the paper. We are grateful to Swati Chandna, Miguel Delgado, Emmanuel Guerre, Fernando L\'opez Hernand\'ez, Hon Ho Kwok, Arthur Lewbel, Daisuke Murakami, Ryo Okui and Amol Sasane  for helpful comments, and audiences at YEAP 2018 (Shanghai University of Finance and Economics), NYU Shanghai, Carlos III Madrid, SEW 2018 (Dijon), Aarhus University, SEA 2018 (Vienna), EcoSta 2018 (Hong Kong), Hong Kong University, AFES 2018 (Cotonou), ESEM 2018 (Cologne), CFE 2018 (Pisa), University of York, Penn State, Michigan State, University of Michigan, Texas A\&M, 1st Southampton Workshop on Econometrics and Statistics and MEG 2019 (Columbus). We also thank Xifeng Wen from the Experiment and Data Center of Antai College of
			Economics and Management (SJTU) for expert computing assistance.}}
	
	\date{\today }
	\author{Abhimanyu Gupta\thanks{Department of Economics, University of Essex, Wivenhoe Park, Colchester CO4 3SQ, UK. E-mail: a.gupta@essex.ac.uk.} \thanks{Research supported by ESRC grant ES/R006032/1.} \and  Xi Qu\thanks{Antai  College  of  Economics  and  Management,  Shanghai  Jiao Tong  University, Shanghai, China, 200052. E-mail: xiqu@sjtu.edu.cn.} \thanks{Research supported by the National Natural Science Foundation of China, Project Nos. 72222007, 71973097 and 72031006.}}

	\maketitle
	
	\begin{abstract}
		We propose a series-based nonparametric specification test for a regression function when data are spatially dependent, the `space' being of a general economic or social nature. Dependence can be parametric, parametric
		with increasing dimension, semiparametric or any combination thereof, thus covering a vast variety of settings. These include spatial error models of varying types and levels of complexity.  Under a new smooth spatial dependence condition, our test statistic is asymptotically standard normal. To prove the latter property, we establish a central limit theorem for quadratic forms in linear processes in an increasing dimension setting. Finite sample performance is investigated in a simulation study, with a bootstrap method also justified and illustrated. Empirical examples illustrate the test with real-world data.
	\end{abstract}
	
	\textbf{Keywords:} Specification testing, nonparametric regression,  spatial dependence, cross-sectional dependence
	
	\textbf{JEL Classification: }C21, C55

	%
	\clearpage
	\allowdisplaybreaks
	\section{Introduction}
	Models for spatial dependence have recently become the subject of vigorous research. This burgeoning interest has roots in the  needs of practitioners who frequently have access to data sets featuring inter-connected cross-sectional units. Motivated by these practical concerns, we propose a specification test for a regression function in a general setup that covers a vast variety of commonly employed spatial dependence models and permits the complexity of dependence to increase with sample size. Our test is consistent, in the sense that a parametric specification is tested with asymptotically unit power against a nonparametric alternative. The `spatial' models
	that we study are not restricted in any way to be geographic in nature, indeed `space' can be a very general economic or social space. Our empirical examples feature conflict alliances and technology externalities as examples of `spatial dependence', for instance.
	
	Specification testing is an important problem, and this is reflected in a huge literature studying consistent tests. Much of this is based on independent, and often also identically distributed, data. However data frequently exhibit dependence and consequently a branch of the literature has also examined specification tests under time series dependence.
	Our interest centers on dependence across a `space', which differs quite fundamentally from dependence in a time series context. Time series are naturally ordered and locations of the observations can be observed, or at least the process generating these locations may be modelled. It can be imagined that concepts from time series dependence be extended to settings where the data are observed on a geographic space and dependence can be treated as a decreasing function of distance between observations. Indeed much work has been done to extend notions of time series dependence in this type of setting, see e.g. \cite{Jenish2009,Jenish2012}.
	
	However, in a huge variety of economics and social science applications agents influence each other in ways that do not conform to such a setting. For example, farmers affect the demand of farmers in the same village but not in different villages, as in \cite{case1991spatial}.  Likewise, price competition among firms exhibits spatial features (\cite{pinkse2002}), input-output relations lead to complementarities between sectors (\cite{Conley2003}), co-author connections form among scientists (\cite{Oettl2012}, \cite{Mohnen2017}),  R\&D spillovers occur through technology and product market spaces (\cite{Bloom2013}), networks form due to allegiances in conflicts (\cite{konig2017networks}) and overlapping bank portfolios lead to correlated lending decisions (\cite{Gupta2018a}). Such examples cannot be studied by simply extending results developed for time series and illustrate the growing need for suitable methods.

	A very popular  model for general spatial dependence is the spatial autoregressive (SAR) class, due to \cite{cliff1973spatial}. The key feature of SAR models, and various generalizations such as SARMA (SAR moving average) and matrix exponential spatial specifications (MESS, due to \cite{LeSage2007}), is the presence of one or more spatial weight matrices
	whose elements characterize the  links between agents. As noted above, these links may form for a variety of reasons, so the `spatial' terminology represents a very general notion of space, such as social or economic space.  Key papers on the estimation of SAR models and their variants include \cite{kelejian1998generalized} and \cite{lee2004asymptotic}, but research on various aspects of these is active and ongoing, see e.g.  \cite{Robinson2015, Hillier2018, Hillier2018a, Kuersteiner2020, Han2021, Hahn2021}.

	Unlike work focusing on independent or time series data, a general drawback of spatially oriented research has been the lack of general unified theory. Typically, individual papers have studied specific special cases of various spatial specifications. A strand of the literature has introduced the notion of a cross-sectional linear-process to help address this problem, and we follow this approach. This representation can accommodate SAR models in the error term (so called spatial error models (SEM)) as a special case, as well as variants like SARMA and MESS, whence its generality is apparent. The linear-process structure shares some similarities with that familiar from the time series literature (see e.g. \cite{hannan1970multiple}). Indeed, time series versions may be regarded as very special cases but, as stressed before, the features of spatial dependence must be taken into account in the general formulation. Such a representation was introduced by \cite{Robinson2011} and further examined in other situations by \cite{Robinson2012b} (partially linear regression), \cite{Delgado2015} (non-nested correlation testing), \cite{Lee2016} (series estimation of nonparametric regression)  and \cite{Hidalgo2017} (cross-sectionally dependent panels).
	
	
	In this paper, we propose a test statistic similar to that of \cite{Hong1995},  based on estimating the nonparametric specification via series approximations. Assuming an independent and identically distributed sample, their statistic is based on the
	sample covariance between the residual from the parametric model and the
	discrepancy between the parametric and nonparametric fitted values. Allowing additionally for spatial dependence through the form of a linear process as discussed above, our statistic is shown to be asymptotically standard normal, consistent and possessing nontrivial power against local alternatives of a certain type. To prove asymptotic normality, we present a new central limit theorem (CLT) for quadratic forms in linear processes in an increasing dimension setting that may be of independent interest. A CLT for quadratic forms under time series dependence in the context of series estimation can be found in \cite{Gao2000}, and our result can be viewed as complementary to this. The setting of \cite{Su2017} is a very special case of our framework. There has been recent interest in specification testing for spatial models, see for example \cite{Sun2020} for a kernel-based model specification test and \cite{Lee2020} for a consistent omnibus test. We contribute to this literature by studying a linear process based increasing parameter dimension framework.
	
	Our linear process framework permits spatial dependence to be parametric, parametric
	with increasing dimension, semiparametric or any combination thereof, thus covering a vast variety of settings. A class of models of great empirical interest are `higher-order' SAR models in the outcome variables, but with spatial dependence structure also in the errors. We initially present  the familiar nonparametric regression to clarify the exposition, and then cover this class as the main model of interest. Our theory covers as special cases SAR, SMA, SARMA, MESS models for the error term. These specifications may be of any fixed spatial order, but our theory also covers the case where they are of increasing order.
	
	Thus we permit a more complex model of spatial dependence as more data become available, which encourages a more flexible approach to modelling such dependence as stressed by \cite{Gupta2013,Gupta2018} in a higher-order SAR context,  \cite{huber1973robust}, \cite{portnoy1984asymptotic,portnoy1985asymptotic} and \cite{Anatolyev2012} in a regression context and \cite{Koenker1999} for the generalized method of moments setting, amongst others. This literature focuses on a sequence of true models, rather than a sequence of models approximating an infinite true model. Our paper also takes the same approach. On the other hand, in the spatial setting, \cite{Gupta2018d} considers increasing lag models as approximations to an infinite lag model with lattice data and also suggests criteria for choice of lag length.
	
	Our framework is also extended to the situation where spatial dependence occurs through nonparametric functions of raw distances (these may be exogenous economic or social distances, say), as in \cite{pinkse2002}. This allows for greater flexibility in modelling spatial weights as the practitioner only has to choose an exogenous economic distance measure and allow the data to determine the functional form. It also adds a degree of robustness to the theory by avoiding potential parametric misspecification. The case of geographical  data is also covered, for example the important classes of Mat\'ern and Wendland (see e.g. \cite{Gneiting2002}) covariance functions. Finally, we introduce a new notion of smooth spatial dependence that provides more primitive, and checkable, conditions for certain properties than extant ones in the literature.
	
	To illustrate the performance of the test in finite samples, we present Monte Carlo simulations that exhibit satisfactory small sample properties. The test is demonstrated in three empirical examples, including two based on recently published work on social networks: \cite{Bloom2013} (R\&D spillovers in innovation), \cite{konig2017networks} (conflict alliances during the Congolese civil war). Another example studies cross-country spillovers in economic growth. Our test may or may not reject the null hypothesis of a linear regression in these examples, illustrating its ability to distinguish well between the null and alternative models.
	
	The next section introduces our basic setup using a nonparametric regression with no SAR structure in responses. We treat this abstraction as a base case, and Section \ref{sec:test} discusses  estimation and defines the test statistic, while Section \ref{sec:asymptotics} introduces assumptions and the key asymptotic results of the paper. Section \ref{sec:SAR_ext} examines the most commonly employed higher-order SAR models, while Section \ref{sec:nonpar_ext} deals with nonparametric spatial error structures. Nonparametric specification tests are often criticized for poor finite sample performance when using the asymptotic critical values. In Section \ref{sec:bootstrap} we present a bootstrap version of our testing procedure. Sections \ref{sec:mc} and \ref{sec:apps} contain a study of finite sample performance and the empirical examples respectively, while Section \ref{sec:conc} concludes. Proofs are contained in appendices, including a supplementary online appendix which also contains additional simulation results.
	
	For the convenience of the reader, we collect some frequently used notation here. First, we introduce three notational conventions for any parameter $\nu$ for the rest of the paper: $\nu\in\mathbb{R}^{d_\nu}$, $\nu_0$ denotes the true value of $\nu$ and for any scalar, vector or matrix valued function $f(\nu)$, we denote $f\equiv f(\nu_0)$.  Let $\overline{\varphi}(\cdot)$ (respectively $\underline{\varphi}(\cdot)$) denote
	the largest (respectively smallest) eigenvalue of a generic square
	nonnegative definite matrix argument. For a generic matrix $A$, denote  $\left\Vert A\right\Vert=\left[\overline{\varphi}(A'A)\right]^{1/2}$, i.e. the spectral norm of $A$ which reduces to the Euclidean norm if $A$ is a vector. $\left\Vert A\right\Vert_R$ denotes the maximum absolute row sum norm of a generic matrix $A$ while $\left \Vert A \right \Vert _{F}=\left[tr(AA')\right]^{1/2}$, the Frobenius norm. Throughout the paper $|\cdot|$ is absolute value when applied to a scalar and determinant when applied to a matrix. Denote by $c$ ($C$) generic positive constants, independent of any quantities that tend to infinity, and arbitrarily small (big).
	
	\section{Setup}\label{sec:setup}
	To illustrate our approach, we first consider the nonparametric regression
	\begin{equation}
		y_{i}=\theta _{0}\left( x_{i}\right) +u_{i},i=1,\ldots ,n,  \label{model}
	\end{equation}%
	where $\theta _{0}(\cdot )$ is an unknown function and $x_{i}$ is a vector of strictly exogenous explanatory variables with support $\mathcal{X}\subset \mathbb{R}^k$.  Spatial dependence is explicitly modeled via the error term $u_{i}$, which we assume is generated by:
	\begin{equation}
		u_{i}=\sum_{s=1}^{\infty }b_{is}\varepsilon _{s},  \label{linpro_gamma}
	\end{equation}
	where $\varepsilon _{s}$ are independent random variables, with zero mean and identical variance $\sigma_0^2$. Further conditions on the $\varepsilon_s$ will be assumed later. The linear process coefficients $b_{is}$ can depend on $n$, as may the covariates $x_i$. This is generally the case with spatial models and implies that asymptotic theory ought to be developed for triangular arrays. There are a number of reasons to permit dependence on sample size. The $b_{is}$ can depend on spatial weight matrices, which are usually normalized for both stability and identification purposes.
	
	Such normalizations, e.g. row-standardization or division by spectral norm, may be $n$-dependent. Furthermore, $x_i$ often includes underlying covariates of `neighbors' defined by spatial weight matrices. For instance, for some $n\times 1$ covariate vector $z$ and exogenous spatial weight matrix $W\equiv W_n$, a component of $x_i$ can be $e_i'Wz$, where $e_i$ has unity in the $i$-th position and zeros elsewhere, which depends on $n$. Thus, subsequently, any spatial weight matrices will also be allowed to depend on $n$. Finally, treating triangular arrays permits re-labelling of quantities that is often required when dealing with spatial data, due to the lack of natural ordering, see e.g. \cite{Robinson2011}. We suppress explicit reference to this $n$-dependence of various quantities for brevity, although mention will be made of this at times to remind the reader of this feature.
	
	Now, assume the
	existence of a $d_\gamma\times 1$ vector $\gamma _{0}$ such that $%
	b_{is}=b_{is}(\gamma _{0})$, possibly with $d_\gamma\rightarrow \infty $ as $n\rightarrow
	\infty $, for all $i=1,\ldots ,n$ and $s\geq 1$. Let $u$ be the $n\times 1$
	vector with typical element $u_{i}$, $\varepsilon $ be the infinite
	dimensional vector with typical element $\varepsilon _{s},$ and $B$ be an
	infinite dimensional matrix \citep{Cooke1950} with typical element $b_{is}.$ In matrix form,
	\begin{equation}\label{matrix_form}
		u=B\varepsilon \text{ and }\mathcal{E}\left( uu^{\prime }\right) =\sigma_0
		^{2}BB^{\prime }=\sigma_0^{2}\Sigma \equiv\sigma_0^{2}\Sigma \left( \gamma
		_{0}\right) .
	\end{equation}
	We assume that $\gamma _{0}\in \Gamma $, where $\Gamma $ is a compact subset
	of $\mathbb{R}^{d_\gamma}$.
	With $d_\gamma$ diverging, ensuring $\Gamma $ has bounded volume requires some
	care, see \cite{Gupta2018}.
	For a known function $f(\cdot)$, our aim is to test
	\begin{equation}\label{null}
		H_{0}:P[\theta _{0}\left( x_{i}\right)=f(x_{i},\alpha _{0}) ]=1,\text{ for some }
		\alpha _{0}\in \mathcal{A}\subset\mathbb{R}^{d_\alpha},
	\end{equation}
	against the global alternative $H_{1}:P\left[\theta _{0}\left( x_{i}\right)\neq f(x_{i},\alpha )  \right]>0,\text{ for
		all }\alpha \in \mathcal{A}$.
	
	We now nest commonly used models for spatial dependence in (\ref{matrix_form}). Introduce a set of $n\times n$ spatial weight (equivalently network adjacency) matrices $W_j$, $j=1,\ldots,m_1+m_2$. Each $W_j$ can be thought of as representing dependence through a particular space. Now, consider models of the form $\Sigma(\gamma)=A^{-1}(\gamma)A'^{-1}(\gamma)$. For example, with $\xi$ denoting a vector of iid disturbances with variance $\sigma_0^2$, the model with SARMA$(m_1,m_2)$ errors is $
	u=\sum_{j=1}^{m_1}\gamma_jW_ju+\sum_{j=m_1+1}^{m_1+m_2}\gamma_jW_j\xi+\xi$, with $A(\gamma)=\left(I_n+\sum_{j=m_1+1}^{m_1+m_2}\gamma_jW_j\right)^{-1}\left(I_n-\sum_{j=1}^{m_1}\gamma_jW_j\right)$, assuming conditions that guarantee the existence of the inverse. Such conditions can be found in the literature, see e.g. \cite{Lee2010} and \cite{Gupta2018}.
	The SEM model is obtained by setting $m_2=0$ while the model with SMA errors has $m_1=0$. The model with MESS$(m)$ errors (\cite{LeSage2007}, \cite{Debarsy2015}) is $
	u=\exp\left(\sum_{j=1}^{m} \gamma_jW_j\right)\xi,A(\gamma)=\exp\left(-\sum_{j=1}^{m} \gamma_jW_j\right).
	$
	
	In some cases the space under consideration is geographic i.e. the data may be observed at irregular points in Euclidean space. Making the identification $u_i\equiv U\left(t_i\right)$, $t_i\in\mathbb{R}^d$ for some $d>1$, and assuming covariance stationarity, $U(t)$ is said to follow an isotropic model if, for some function $\delta$ on $\mathbb{R}$, the covariance at lag $s$ is $r(s)=\mathcal{E}\left[U(t)U(t+s)\right]=\delta(\Vert s\Vert)$. An important class of parametric isotropic models is that of \cite{Matern1986}, which can be parameterized in several ways, see e.g. \cite{Stein1999}. Denoting by $\Gamma_f$ the Gamma function and by $\mathcal{K}_{\gamma_1}$ the modified Bessel function of the second kind (\cite{Gradshteyn1994}), take
	$
	\delta(\left\Vert s\right\Vert,\gamma)=\left(2^{\gamma_1-1}\Gamma_f(\gamma_1)\right)^{-1}\left(\gamma_2^{-1}\sqrt{2\gamma_1}\left\Vert s\right\Vert\right)^{\gamma_1}\mathcal{K}_{\gamma_1}\left(\gamma_2^{-1}\sqrt{2\gamma_1}\left\Vert s\right\Vert\right),
	$
	with $\gamma_1,\gamma_2>0$ and $d_\gamma=2$. With $d_\gamma=3$, another model takes $\delta(\left\Vert s\right\Vert,\gamma)=\gamma_1\exp\left(-\left\Vert s/\gamma_2\right\Vert^{\gamma_3}\right)$, see e.g. \cite{DeOliveira1997}, \cite{Stein1999}. \cite{Fuentes2007} considers this model with $\gamma_3=1$, as well as a specific parameterization of the Mat\`{e}rn covariance function.
	\section{Test statistic}\label{sec:test}
	
	We estimate $\theta _{0}(\cdot )$ via a series approximation. Certain technical conditions are needed to allow for $\mathcal{X}$ to have unbounded support. To this end, for a function $g(x)$ on $\mathcal{X}$, define a weighted sup-norm (see e.g. \cite{Chen2005}, \cite{Chen2007}, \cite{Lee2016}) by $\left\Vert g\right\Vert_{w}=\sup_{x\in\mathcal{X}}\left\vert g(x)\right\vert\left(1+\left\Vert x\right\Vert^2\right)^{-w/2}, \text{ for some } w>0$. Assume that
	there exists a sequence of functions $\psi _{i}:=\psi \left( x_{i}\right) :%
	\mathbb{R}^{k}\mapsto \mathbb{R}^{p}$, where $p\rightarrow \infty $ as $%
	n\rightarrow \infty $, and a $p\times 1$ vector of coefficients $\beta_0 $
	such that
	\begin{equation}
		\theta _{0}\left( x_{i}\right) =\psi _{i}^{\prime }\beta_0 +e\left(
		x_{i}\right),  \label{theta_series_appr}
	\end{equation}%
	where $e(\cdot)$ satisfies:
	\setcounter{assumption}{0}
	\renewcommand\theassumption{R.\arabic{assumption}}
	\begin{assumption}\label{ass:approx_error}
		There exists a constant $\mu>0$ such that $\left \Vert e \right \Vert_{w_x} =O\left( p^{-\mu}\right),$ as $p\rightarrow\infty$, where $w_x\geq 0$ is the largest value such that $\sup_{i=1,\ldots,n} \mathcal{E}\left\Vert x_i\right\Vert^{w_x}<\infty$, for all $n$.
	\end{assumption}
	By Lemma 1 in Appendix B of \cite{Lee2016}, this assumption implies that
	\begin{equation}
		\sup_{i=1,\ldots,n}\mathcal{E}\left ( e^2\left( x_i\right) \right ) =O\left( p^{-2\mu
		}\right).  \label{appr_error}
	\end{equation}
	Due to the large number of assumptions in the paper, sometimes with changes reflecting only the various setups we consider, we prefix assumptions with R in this section and the next, to signify `regression'. In Section \ref{sec:SAR_ext} the prefix is SAR, for `spatial autoregression', while in Section \ref{sec:nonpar_ext} we use NPN, for `nonparametric network'.
	
	Let $%
	y=(y_{1},\ldots,y_{n})^{\prime },{\theta _{0}}=(\theta _{0}\left(
	x_{1}\right) ,\ldots,\theta _{0}\left( x_{n}\right) )^{\prime }, \Psi =(\psi _{1},\ldots,\psi _{n})^{\prime }$. We will estimate $\gamma_0$ using a quasi maximum likelihood estimator (QMLE) based on a Gaussian likelihood, although Gaussianity is nowhere assumed. For
	any admissible values $\beta $, $\sigma ^{2}$ and $\gamma $, the (multiplied
	by $2/n$) negative quasi log likelihood function based on using the
	approximation (\ref{theta_series_appr}) is
	\begin{equation}
		{L}(\beta,\sigma^2 ,\gamma )=\ln \left( 2\pi \sigma ^{2}\right) +\frac{1}{n}%
		\ln \left \vert \Sigma \left( \gamma \right) \right \vert +\frac{1}{n\sigma
			^{2}}(y-\Psi \beta )^{\prime }\Sigma \left( \gamma \right) ^{-1}(y-\Psi
		\beta ),  \label{likelihood}
	\end{equation}%
	which is minimised with respect to $\beta $ and $\sigma ^{2}$ by
	\begin{eqnarray}
		\bar{\beta}\left( \gamma \right) &=&\left( \Psi ^{\prime }\Sigma \left(
		\gamma \right) ^{-1}\Psi \right) ^{-1}\Psi ^{\prime }\Sigma \left( \gamma
		\right) ^{-1}y,  \label{betapmle} \\
		\bar{\sigma}^{2}\left( \gamma \right) &=&{n^{-1}}y^{\prime }E(\gamma
		)^{\prime }M(\gamma )E(\gamma )y,  \label{sigmapmleconc}
	\end{eqnarray}%
	where $M(\gamma )=I_{n}-E(\gamma )\Psi \left( \Psi ^{\prime }\Sigma (\gamma
	)^{-1}\Psi \right) ^{-1}\Psi ^{\prime }E(\gamma )^{\prime }$ and $E(\gamma )$
	is the $n\times n$ symmetric matrix such that $E(\gamma )E(\gamma )^{\prime }=\Sigma
	(\gamma )^{-1}$. The use of the approximate likelihood relies on the negligibility of $e(\cdot)$, which in turn permits the replacement of $\theta_0(\cdot)$ by $\psi'\beta_0$ with asymptotically negligible cost. Thus the concentrated likelihood function is
	\begin{equation}
		\mathcal{L}(\gamma )=\ln (2\pi )+\ln \bar{\sigma}^{2}(\gamma )+\frac{1}{n}%
		\ln \left \vert \Sigma \left( \gamma \right) \right \vert.
		\label{conc_likelihood}
	\end{equation}%
	We define the QMLE of $\gamma _{0}$ as $\widehat{\gamma}=\text{arg min}_{\gamma
		\in \Gamma }\mathcal{L}(\gamma )$ and the QMLEs of $\beta_0 $ and $\sigma_0 ^{2}$
	as $\widehat{\beta}=\bar{\beta}\left( \widehat{\gamma}\right) $ and $\widehat{\sigma}%
	^{2}=\bar{\sigma}^{2}\left( \widehat{\gamma}\right) $. At a given $x_1,\ldots,x_n$, the
	series estimate of $\theta _{0} $ is defined as
	\begin{equation}
		\widehat{\theta} =\left(\hat{\theta}(x_1),\ldots,\hat{\theta}(x_n)\right)'=\left(\psi (x_1)^{\prime }\widehat{\beta},\ldots,\psi (x_n)^{\prime }\widehat{\beta}\right)'.
		\label{theta_estimate}
	\end{equation}
	Let $\widehat{\alpha }_{n}\equiv \widehat{\alpha }$ denote an estimator consistent for $\alpha _{0}$
	under $H_{0}$, for example the (nonlinear) least squares estimator. Note that $\widehat\alpha$ is consistent only under $H_0$, so we introduce a general probability limit of $\widehat{\alpha } $, as in \cite{Hong1995}.
	\begin{assumption}\label{ass:alpha_order}
		There exists a deterministic sequence $\alpha_n^*\equiv\alpha^*$ such that $\widehat\alpha-\alpha^*=O_p\left(1/\sqrt{n}\right)$.
	\end{assumption}
	\noindent Examples of estimators that satisfy this assumption include
	(nonlinear) least squares, generalized method of moments estimators or adaptive efficient
	weighted least squares \citep{Stinchcombe1998}.
	
	Following \cite{Hong1995}, define the
	regression error $u_{i}\equiv y_{i}-f(x_{i},\alpha ^{\ast })$ and the
	specification error $v_{i}\equiv \theta _{0}(x_{i})-f(x_{i},\alpha ^{\ast })$. Our test statistic is based on a scaled and centered version of
	$\widehat{m}_{n}=\widehat{\sigma }^{-2}\widehat{{v}}^{\prime }\Sigma
	\left( \widehat{\gamma }\right) ^{-1}\widehat{ {u}}/n=\widehat{\sigma }%
	^{-2}\left(\widehat{ {\theta }}- {f}\left(x,\widehat{\alpha }\right)\right)^{\prime }\Sigma \left( \widehat{\gamma }\right) ^{-1}\left(y- {f}\left(x,%
	\widehat{\alpha }\right)\right)/n$, where $f(x,\alpha)=\left(f\left(x_1,\alpha\right),\ldots,f\left(x_n,\alpha\right)\right)'$. Precisely, it is defined as
	\begin{equation}\label{statistic}
		\mathscr{T}_n=\frac{n\widehat{m}_{n}-p}{\sqrt{2p}}.
	\end{equation}
	The motivation for such a centering and scaling stems from the fact that, for fixed $p$, $n\widehat{m}_{n}$ has an asymptotic $\chi^2_p$ distribution. Such a distribution has mean $p$ and variance $2p$, and it is a well-known fact that $\left(\chi^2_p-p\right)/{\sqrt{2p}}\overset{d}\longrightarrow N(0,1),\text{ as }p\rightarrow \infty$. This motivates our use of (\ref{statistic}) and explains why we aspire to establish a standard normal distribution under the null hypothesis. Intuitively, the test statistic is based on the sample covariance between the residual from the parametric model and the discrepancy between the parametric and nonparametric fitted values, as in \cite{Hong1995}.
	
	\cite{Hong1995} also note that, due to the nonparametric nature of the problem, such a statistic vanishes faster than the parametric ($n^{\frac{1}{2}}$) rate, thus a $n^{\frac{1}{2}}$-normalization leads to degeneracy of the test. A proper normalization as in (\ref{statistic}) will yield a non-degenerate limiting distribution. As \cite{Hong1995} noted, our test is one-sided. This is because asymptotically negative
	values of our test statistic can occur only under the null, while under the alternative it tends to a positive, increasing number. Thus, we reject the
	null if our test statistic is on the right tail.

	\section{Asymptotic theory}\label{sec:asymptotics}
	
	\subsection{Consistency of $\widehat{\protect \gamma }$}
	We first provide conditions under which our estimator $\widehat\gamma$ of $\gamma_0$ is consistent. Such a property is necessary for the results that follow. The following assumption is a rather standard type of asymptotic boundedness and full-rank condition on $\Sigma(\gamma)$.
	\begin{assumption}\label{ass:Sigma_spec_norm}
		\[\varlimsup_{n\rightarrow\infty}\sup_{\gamma\in\Gamma}\bar\varphi\left(\Sigma(\gamma)\right)<\infty \text { and } \varliminf_{n\rightarrow\infty}\inf_{\gamma\in\Gamma}\underline\varphi\left(\Sigma(\gamma)\right)>0.\]
	\end{assumption}
	\begin{assumption}
		\label{ass:errors_epsilon_3} The $u_i, i=1,\ldots,n,$ satisfy the representation (\ref{linpro_gamma}). The $\varepsilon _{s}$, $s\geq 1$, have zero mean, finite third and
		fourth moments $\mu _{3}$ and $\mu _{4}$ respectively and, denoting by $\sigma _{ij}(\gamma )$ the $(i,j)$-th element of $\Sigma
		(\gamma )$ and defining
		$
		b_{is}^{\ast }={b_{is}}/{\sigma _{ii}^{\frac{1}{2}}},\;i=1,\ldots ,n,\;n\geq
		1,s\geq 1,
		$
		we have
		\begin{equation}
			\underset{n\rightarrow \infty }{\overline{\lim }}\sup_{i=1,\ldots
				,n}\sum_{s=1}^{\infty }\left \vert b_{is}^{\ast }\right \vert +\sup_{s\geq 1}%
			\underset{n\rightarrow \infty }{\overline{\lim }}\sum_{i=1}^{n}\left \vert
			b_{is}^{\ast }\right \vert <\infty .  \label{cstar}
		\end{equation}
	\end{assumption}
	\noindent  By Assumption \ref{ass:Sigma_spec_norm}, $\sigma_{ii}$ is bounded and bounded away from zero, so the normalization of the $b_{is}$ in Assumption \ref{ass:errors_epsilon_3} is well defined. The summability conditions in (\ref{cstar}) are typical conditions on linear process coefficients that are needed to control dependence; for instance in the case of stationary time series $b^*_{is}=b^*_{i-s}$. The infinite linear process assumed in (\ref{linpro_gamma}) is further discussed by \cite{Robinson2011}, who introduced it, and also by \cite{Delgado2015}. These assumptions imply an increasing-domain asymptotic setup and preclude infill asymptotics.
	
	Because we often need to consider the difference between values of the matrix-valued function $\Sigma(\cdot)$ at distinct points, it is useful to introduce an appropriate concept of `smoothness'. This concept has been employed before in economics, see e.g. \cite{Chen2007}, and is defined below.
	\begin{definition} Let $\left(X,\left\Vert\cdot\right\Vert_X\right)$ and $\left(Y,\left\Vert\cdot\right\Vert_Y\right)$ be Banach spaces, $\mathscr{L}(X,Y)$ be the Banach space of linear continuous maps from $X$ to $Y$ with norm $\left\Vert T\right\Vert_{\mathscr{L}(X,Y)}=\sup_{\left\Vert x\right\Vert_X\leq 1}\left\Vert T(x)\right\Vert_Y$ and $U$ be an open subset of $X$. A map $F:U\rightarrow Y$ is said to be Fr\'echet-differentiable at $u\in U$ if there exists $L\in\mathscr{L}(X,Y)$ such that
		\begin{equation}\label{frechet_def}
			\lim_{\left\Vert h\right\Vert_X\rightarrow 0}\frac{F(u+h)-F(u)-L(h)}{\left\Vert h\right\Vert_X}=0.
		\end{equation}
		$L$ is called the Fr\'echet-derivative of $F$ at $u$. The map $F$ is said to be Fr\'echet-differentiable on $U$ if it is Fr\'echet-differentiable for all $u\in U$.
	\end{definition}
	The above definition extends the notion of a derivative that is familiar from real analysis to the functional spaces and allows us to check high-level assumptions that past literature has imposed. To the best of our knowledge, this is the first use of such a concept in the literature on spatial/network models.
	Denote by $\mathcal{M}^{n\times n}$ the set of real, symmetric and positive semi-definite $n\times n$ matrices.  Let $\Gamma^o$ be an open subset of $\Gamma$ and consider the Banach spaces $\left(\Gamma,\left\Vert\cdot\right\Vert_g\right)$ and $\left(\mathcal{M}^{n\times n},\left\Vert\cdot\right\Vert\right)$, where $\left\Vert \cdot\right\Vert_g$ is a generic $\ell_p$ norm, $p\geq 1$.  The following assumption ensures that $\Sigma(\cdot)$ is a `smooth' function, in the sense of Fr\'echet-smoothness.
	\begin{assumption}\label{ass:Sigma_frech_der}
		The map $\Sigma:\Gamma^o\rightarrow \mathcal{M}^{n\times n}$ is Fr\'echet-differentiable on $\Gamma^o$ with Fr\'echet-derivative denoted $D\Sigma\in\mathscr{L}\left(\Gamma^o,\mathcal{M}^{n\times n}\right)$. Furthermore, the map $D\Sigma$ satisfies
		\begin{equation}\label{Sigma_fre_der_bdd}
			\sup_{\gamma\in\Gamma^o}\left\Vert D\Sigma(\gamma)\right\Vert_{\mathscr{L}\left(\Gamma^o,\mathcal{M}^{n\times n}\right)}\leq C.
		\end{equation}
	\end{assumption}
	Assumption \ref{ass:Sigma_frech_der} is a functional smoothness condition on spatial dependence. It has the advantage of being checkable for a variety of commonly employed models. For example, a first-order SEM has $\Sigma(\gamma)=A^{-1}(\gamma)A'^{-1}(\gamma)$ with $A=I_n-\gamma W$. Corollary \ref{cor:sem_SAR_frech_der} in the supplementary appendix shows $\left(D\Sigma(\gamma)\right)\left(\gamma^\dag\right)=\gamma^\dag A^{-1}(\gamma)\left(G'(\gamma)+G(\gamma)\right)A'^{-1}(\gamma)$,  at a given point $\gamma\in\Gamma^o$, where $G(\gamma)=WA^{-1}(\gamma)$. Then, taking \begin{equation}\label{SEM_frech}
		\left\Vert W\right\Vert+\sup_{\gamma\in\Gamma}\left\Vert A^{-1}(\gamma)\right\Vert<C
	\end{equation}
	yields Assumption \ref{ass:Sigma_frech_der}. Condition (\ref{SEM_frech}) limits the extent of spatial dependence and is very standard in the spatial literature; see e.g. \cite{lee2004asymptotic} and numerous subsequent papers employing similar conditions.
	
	Fr\'echet derivatives for higher-order SAR, SMA, SARMA and MESS error structures are computed in supplementary appendix \ref{app:lemmas}, in Lemmas \ref{lemma:sem_SARMA_frech_der}-\ref{lemma:sem_MESS_frech_der} and Corollaries \ref{cor:sem_SAR_frech_der}-\ref{cor:sem_SMA_frech_der}. Strictly speaking, Gateaux differentiability might suffice for the type of results that we target. We opt for Fr\'echet differentiability because this derivative map is linear and continuous or, equivalently, a bounded linear operator, a property that makes Assumption \ref{ass:Sigma_frech_der} more reasonable.
	
	The following proposition is very useful in `linearizing' perturbations in the $\Sigma(\cdot)$.
	
	\begin{proposition}\label{prop:Sigma_diff_bound}
		If Assumption \ref{ass:Sigma_frech_der} holds, then for any $\gamma_1,\gamma_2\in\Gamma^o$,
		\begin{equation}\label{Sigma_diff}
			\left\Vert\Sigma\left(\gamma_1\right)-\Sigma\left(\gamma_2\right)\right\Vert\leq C \left\Vert\gamma_1-\gamma_2\right\Vert.
		\end{equation}
	\end{proposition}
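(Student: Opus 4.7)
The plan is to apply the standard mean value inequality for Fréchet-differentiable maps between Banach spaces. Fix $\gamma_1,\gamma_2 \in \Gamma^o$ and, assuming $\Gamma^o$ is convex (or that the segment $[\gamma_2,\gamma_1]$ lies in $\Gamma^o$, which can be arranged by shrinking $\Gamma^o$ if necessary), define the path $\gamma(t) = \gamma_2 + t(\gamma_1 - \gamma_2)$ for $t \in [0,1]$. The auxiliary map $g:[0,1] \to \mathcal{M}^{n\times n}$ given by $g(t) = \Sigma(\gamma(t))$ is then the composition of a smooth affine map with $\Sigma$, so by the chain rule for Fréchet derivatives it is differentiable with $g'(t) = \bigl(D\Sigma(\gamma(t))\bigr)(\gamma_1 - \gamma_2)$.

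Next, because $\mathcal{M}^{n\times n}$ is finite-dimensional, the fundamental theorem of calculus applies componentwise, giving
\[
\Sigma(\gamma_1) - \Sigma(\gamma_2) = g(1) - g(0) = \int_0^1 \bigl(D\Sigma(\gamma(t))\bigr)(\gamma_1 - \gamma_2)\,dt.
\]
Taking spectral norms, using subadditivity of the integral and the definition of the operator norm $\|\cdot\|_{\mathscr{L}(\Gamma^o,\mathcal{M}^{n\times n})}$,
\[
\bigl\|\Sigma(\gamma_1) - \Sigma(\gamma_2)\bigr\| \leq \int_0^1 \bigl\|D\Sigma(\gamma(t))\bigr\|_{\mathscr{L}(\Gamma^o,\mathcal{M}^{n\times n})}\,\|\gamma_1 - \gamma_2\|_g\,dt \leq C\,\|\gamma_1 - \gamma_2\|_g,
\]
where the last step uses the uniform bound (\ref{Sigma_fre_der_bdd}) from Assumption \ref{ass:Sigma_frech_der}. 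Since $\Gamma^o \subset \mathbb{R}^{d_\gamma}$ is finite-dimensional, all $\ell_p$ norms are equivalent, so the result holds in the Euclidean norm $\|\cdot\|$ after absorbing an equivalence constant into $C$.

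The main obstacle is conceptual rather than computational: the classical mean value equality $F(b)-F(a) = DF(\xi)(b-a)$ fails for vector- or matrix-valued maps, so one must use the mean value \emph{inequality} form. In the present setup this is harmless because the codomain $\mathcal{M}^{n\times n}$ is finite-dimensional, making the integral $\int_0^1 g'(t)\,dt$ entirely elementary; alternatively one could apply the Hahn--Banach theorem, taking any unit-norm linear functional $\ell$ on $\mathcal{M}^{n\times n}$, applying the scalar mean value theorem to $t \mapsto \ell(g(t))$, and then supremizing over $\ell$. A secondary subtlety is ensuring the segment $[\gamma_2,\gamma_1]$ lies in $\Gamma^o$, but this is implicit in the framework since one ultimately applies the proposition at points where the QMLE operates and where $\Gamma^o$ is taken to be convex.
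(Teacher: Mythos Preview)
Your proof is correct and follows essentially the same route as the paper: both invoke the mean value inequality for Fr\'echet-differentiable maps along the segment $\ell[\gamma_1,\gamma_2]$ and then apply the uniform bound (\ref{Sigma_fre_der_bdd}). The paper simply cites this inequality as Theorem~1.8 of Ambrosetti and Prodi (1995), whereas you spell it out via the integral representation $g(1)-g(0)=\int_0^1 g'(t)\,dt$; both arguments implicitly require the segment to lie in $\Gamma^o$, and your remark about norm equivalence on $\mathbb{R}^{d_\gamma}$ makes explicit a point the paper leaves tacit.
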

	To illustrate how the concept of Fr\'echet-differentiability allows us to check high-level assumptions extant in the literature, a consequence of Proposition \ref{prop:Sigma_diff_bound} is the following corollary, a version of which  appears as an assumption in \cite{Delgado2015}.
	\begin{corollary}
		\label{cor:Sigma_equic} For any $\gamma ^{* }\in \Gamma^o $ and any $\eta
		>0 $,
		\begin{equation}
			\underset{n\rightarrow \infty }{\overline{\lim }}\sup_{\gamma \in \left \{
				\gamma :\left \Vert \gamma -\gamma ^{*}\right \Vert <\eta \right
				\} \cap \Gamma^o }\left \Vert \Sigma (\gamma )-\Sigma \left( \gamma ^{*
			}\right) \right \Vert <C\eta .  \label{Sigma_equic_cond}
		\end{equation}
	\end{corollary}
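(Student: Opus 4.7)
The plan is simply to apply Proposition \ref{prop:Sigma_diff_bound} as a Lipschitz-type estimate pointwise in $\gamma$ over the open ball around $\gamma^{*}$, and then pass to the supremum and the limsup.

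First, I would fix $\gamma^{*} \in \Gamma^o$ and $\eta > 0$, and pick any $\gamma$ in the set $\{\gamma : \left\Vert\gamma - \gamma^{*}\right\Vert < \eta\} \cap \Gamma^o$. Because both $\gamma$ and $\gamma^{*}$ lie in $\Gamma^o$, Proposition \ref{prop:Sigma_diff_bound} applies directly and yields
\[
\left\Vert\Sigma(\gamma) - \Sigma(\gamma^{*})\right\Vert \leq C\left\Vert\gamma - \gamma^{*}\right\Vert < C\eta.
\]

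Second, I would stress that the constant $C$ appearing in Proposition \ref{prop:Sigma_diff_bound} inherits directly from the Fr\'echet-derivative bound \eqref{Sigma_fre_der_bdd} imposed in Assumption \ref{ass:Sigma_frech_der}, which is uniform in $n$ and in $\gamma \in \Gamma^o$. Consequently the pointwise estimate above holds with a single constant $C$ that does not depend on $n$ or on the particular $\gamma$ chosen within the ball, and taking the supremum of the left-hand side over that ball preserves the bound $C\eta$.

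Third, passing to $\varlimsup_{n \rightarrow \infty}$ on both sides then gives \eqref{Sigma_equic_cond}; the strict inequality in the statement of the corollary is interpreted via the paper's convention that $C$ is a generic positive constant which can absorb any infinitesimal enlargement needed to turn $\leq$ into $<$.

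The main obstacle is essentially nonexistent: the entire content is that Proposition \ref{prop:Sigma_diff_bound}, which itself already does the work of converting the Fr\'echet-smoothness Assumption \ref{ass:Sigma_frech_der} into a Lipschitz bound, is uniform in $n$. The only mild subtlety to verify is that the norm on $\Gamma$ used when specifying the operator norm in Assumption \ref{ass:Sigma_frech_der} is compatible with the norm used in \eqref{Sigma_equic_cond}, but since the assumption permits any $\ell_p$ choice one may simply take $\ell_2$, rendering Proposition \ref{prop:Sigma_diff_bound} and hence the corollary immediate.
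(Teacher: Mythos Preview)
Your proposal is correct and matches the paper's own reasoning: the paper does not give a separate proof of this corollary but simply introduces it as ``a consequence of Proposition \ref{prop:Sigma_diff_bound}'', which is exactly the application you carry out. Your observation that the constant $C$ from \eqref{Sigma_fre_der_bdd} is uniform in $n$ and $\gamma$ is the only point requiring comment, and you handle it correctly.
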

	
	We now introduce  regularity conditions needed to establish the consistency of $\hat\gamma$. Define \[
	\sigma ^{2}\left(
	\gamma \right) =n^{-1}\sigma ^{2}tr\left( \Sigma (\gamma )^{-1}\Sigma
	\right) =n^{-1}\sigma ^{2}\left \Vert E(\gamma )E^{-1}\right \Vert _{F}^{2},
	\]
	which is nonnegative by definition and bounded by Assumption \ref{ass:Sigma_spec_norm}, red with the matrix $E(\gamma)$ defined after (\ref{sigmapmleconc}).
	
	\begin{assumption}
		\label{ass:sigma_unif}  $c\leq \sigma ^{2}\left( \gamma \right)\leq C$ for all $\gamma \in
		\Gamma$.
	\end{assumption}
	
	\begin{assumption}
		\label{ass:gamma_ident} $\gamma _{0}\in \Gamma $ and, for any $\eta >0$,
		\begin{equation}
			\varliminf_{n\rightarrow \infty }\inf_{\gamma \in \overline{\mathcal{N}}%
				^{\gamma }(\eta )}\frac{n^{-1}tr\left( \Sigma (\gamma )^{-1}\Sigma \right) }{%
				\left \vert \Sigma (\gamma )^{-1}\Sigma \right \vert ^{1/n}}>1,
			\label{gamma_ident}
		\end{equation}%
		where $\overline{\mathcal{N}}^{\gamma }(\eta )=\Gamma \setminus \mathcal{N}%
		^{\gamma }(\eta )$ and $\mathcal{N}^{\gamma }(\eta )=\left \{ \gamma
		:\left
		\Vert \gamma -\gamma _{0}\right \Vert <\eta \right \} \cap \Gamma $.
	\end{assumption}

	\begin{assumption}
		\label{ass:Psi_GLS_type} $\left \{ \underline{\varphi }\left( n^{-1}\Psi ^{\prime}\Psi \right) \right \} ^{-1}+\overline{\varphi }\left( n^{-1}\Psi ^{\prime}\Psi \right)=O_{p}(1)$%
		.
	\end{assumption}
	\noindent Assumption \ref{ass:sigma_unif} is a boundedness condition originally considered in \cite{Gupta2018}, while Assumptions \ref{ass:gamma_ident} and \ref{ass:Psi_GLS_type} are identification conditions. Indeed, Assumption \ref{ass:gamma_ident} requires that $\Sigma(\gamma)$ be identifiable in a small neighborhood around $\gamma_0$. This is apparent on noticing that the ratio in (\ref{gamma_ident}) is at least one by the inequality between arithmetic and geometric means, and equals one when $\Sigma(\gamma)=\Sigma$. Similar assumptions arise frequently in related literature, see e.g.  \cite{lee2004asymptotic}, \cite{Delgado2015}. Assumption \ref{ass:Psi_GLS_type} is a typical asymptotic boundedness and non-multicollinearity condition, see e.g. \cite{Newey1997} and much other literature on series estimation. Primitive conditions for this assumption to hold require the convergence (in matrix norm) of $n^{-1}\Psi'\Psi$ to its expectation, and this entails restrictions on the extent of spatial dependence in the $x_i$. A reference is \cite{Lee2016}, wherein consider Assumption A.4 and the proof of Theorem 1. By Assumption \ref{ass:Sigma_spec_norm}, \ref{ass:Psi_GLS_type} implies $\sup_{\gamma\in\Gamma}\left\{\underline{\varphi}\left(n^{-1}\Psi'\Sigma(\gamma)^{-1}\Psi\right)\right\}^{-1}=O_p(1)$.
	\begin{theorem}\label{thm:consistency}
		\sloppy Under either $H_0$ or $H_1$, Assumptions \ref{ass:approx_error}-\ref{ass:Psi_GLS_type} and $p^{-1}+\left(d_\gamma+p\right)/n\rightarrow 0$ as $n\rightarrow\infty$, $\left \Vert \left(\widehat{\gamma},\hat{\sigma}^2\right)-\left(\gamma_0,\sigma_0^2\right) \right \Vert \overset{p}{
			\longrightarrow }0.$
	\end{theorem}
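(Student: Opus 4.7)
The plan is a classical QMLE consistency argument adapted to the increasing-dimension $\gamma_0$ setting. I would compare the concentrated quasi-log-likelihood $\mathcal{L}(\gamma)$ in \eqref{conc_likelihood} to the deterministic analogue
\[
\mathcal{L}^{*}(\gamma) \;=\; \ln(2\pi)+1 + \ln\sigma^{2}(\gamma) + n^{-1}\ln|\Sigma(\gamma)|,
\]
show $\sup_{\gamma\in\Gamma}|\mathcal{L}(\gamma)-\mathcal{L}^{*}(\gamma)|\overset{p}{\longrightarrow}0$, and invoke Assumption \ref{ass:gamma_ident} to deduce $\widehat{\gamma}\overset{p}{\longrightarrow}\gamma_0$. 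The identification reduction is clean: since $\sigma^{2}(\gamma_0)=\sigma_0^{2}$,
\[
\mathcal{L}^{*}(\gamma) - \mathcal{L}^{*}(\gamma_0) \;=\; \ln\!\left[\frac{n^{-1}tr(\Sigma(\gamma)^{-1}\Sigma)}{|\Sigma(\gamma)^{-1}\Sigma|^{1/n}}\right],
\]
which by the arithmetic-geometric mean inequality applied to eigenvalues of $\Sigma(\gamma)^{-1}\Sigma$ is nonnegative, and by Assumption \ref{ass:gamma_ident} is bounded away from zero uniformly on $\overline{\mathcal{N}}^{\gamma}(\eta)$ for any $\eta>0$.

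The main computational step is analyzing $\bar{\sigma}^{2}(\gamma)$. Writing $y=\Psi\beta_0 + e + u$ with $e=(e(x_1),\ldots,e(x_n))'$, valid under both $H_0$ and $H_1$ via \eqref{theta_series_appr}, and exploiting the identity $M(\gamma)E(\gamma)\Psi=0$, I obtain
\[
\bar{\sigma}^{2}(\gamma) \;=\; \frac{1}{n}(e+u)'\Sigma(\gamma)^{-1}(e+u) - \frac{1}{n}(e+u)'\Sigma(\gamma)^{-1}\Psi\bigl(\Psi'\Sigma(\gamma)^{-1}\Psi\bigr)^{-1}\Psi'\Sigma(\gamma)^{-1}(e+u).
\]
Terms containing $e$ are $o_p(1)$ by Assumption \ref{ass:approx_error} and the spectral-norm bounds in Assumption \ref{ass:Sigma_spec_norm}. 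The leading stochastic term $n^{-1}u'\Sigma(\gamma)^{-1}u$ has mean $\sigma^{2}(\gamma)$; the linear-process summability \eqref{cstar} together with the fourth-moment condition in Assumption \ref{ass:errors_epsilon_3} gives the standard quadratic-form variance bound $\operatorname{Var}(n^{-1}u'\Sigma(\gamma)^{-1}u) = O(1/n)$, so this term equals $\sigma^{2}(\gamma)+O_p(n^{-1/2})$ pointwise. The projection correction is $O_p(p/n)=o_p(1)$ under Assumption \ref{ass:Psi_GLS_type} and $p/n\to 0$.

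The main obstacle is upgrading pointwise to uniform convergence on $\Gamma$, because $d_\gamma$ may diverge. The key enabler is Proposition \ref{prop:Sigma_diff_bound}, which together with the resolvent identity $\Sigma(\gamma_1)^{-1} - \Sigma(\gamma_2)^{-1} = \Sigma(\gamma_1)^{-1}(\Sigma(\gamma_2)-\Sigma(\gamma_1))\Sigma(\gamma_2)^{-1}$ and Assumption \ref{ass:Sigma_spec_norm} delivers the uniform Lipschitz bound $\|\Sigma(\gamma_1)^{-1}-\Sigma(\gamma_2)^{-1}\|\leq C\|\gamma_1-\gamma_2\|$. Since $\Gamma$ is compact with volume controlled in the manner of \cite{Gupta2018}, a $\delta$-net for $\Gamma$ has cardinality at most $(C/\delta)^{d_\gamma}$. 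A union bound over the net, combined with the pointwise $O_p(n^{-1/2})$ rate and Lipschitz discretization, controls the supremum deviation provided $(d_\gamma+p)/n\to 0$, which is precisely the rate condition assumed. The nonstochastic log-determinant piece $n^{-1}\ln|\Sigma(\gamma)|$ is handled by the same Lipschitz argument via Jacobi's formula and Assumption \ref{ass:Sigma_spec_norm}.

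Once $\sup_{\gamma\in\Gamma}|\mathcal{L}(\gamma)-\mathcal{L}^{*}(\gamma)|\overset{p}{\longrightarrow}0$ is in hand, the standard M-estimator argument combined with the identification deficit above yields $\widehat{\gamma}\overset{p}{\longrightarrow}\gamma_0$. Consistency of $\widehat{\sigma}^{2}=\bar{\sigma}^{2}(\widehat{\gamma})$ then follows from pointwise convergence $\bar{\sigma}^{2}(\gamma_0)\overset{p}{\longrightarrow}\sigma_0^{2}$ combined with stochastic equicontinuity of $\bar{\sigma}^{2}(\cdot)$ at $\gamma_0$, which is already subsumed in the uniform convergence step.
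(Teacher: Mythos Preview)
Your identification step and your decomposition of $\bar\sigma^2(\gamma)$ via $M(\gamma)E(\gamma)\Psi=0$ match the paper exactly (the paper obtains Theorem~\ref{thm:consistency} as the $\lambda=0$ special case of Theorem~\ref{thm:cons_SAR}, where the $c_1$ term vanishes for precisely this reason). The handling of the approximation-error terms and the pointwise variance calculation for $n^{-1}u'\Sigma(\gamma)^{-1}u$ are also fine.

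The gap is in your uniformity step. You propose a $\delta$-net of cardinality $(C/\delta)^{d_\gamma}$ followed by a union bound. But Assumption~\ref{ass:errors_epsilon_3} gives you only a fourth moment on $\varepsilon_s$, hence only a Chebyshev-type polynomial tail on the centered quadratic form $n^{-1}u'\Sigma(\gamma)^{-1}u-\sigma^2(\gamma)$. A union bound over $(C/\delta)^{d_\gamma}$ points then requires $(C/\delta)^{d_\gamma}/n\to 0$, i.e.\ essentially $d_\gamma=o(\log n)$, which is far stronger than the stated condition $d_\gamma/n\to 0$. Without sub-Gaussian or sub-exponential tails on the quadratic forms (unavailable here), the covering-number route does not close.

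The paper avoids any union bound by establishing stochastic equicontinuity directly. The key is that the Lipschitz bound from Proposition~\ref{prop:Sigma_diff_bound} lets you \emph{factor the randomness out of the supremum}: for instance,
\[
\bigl|n^{-1}u'\bigl(\Sigma(\gamma)^{-1}-\Sigma(\gamma_*)^{-1}\bigr)u\bigr|
\;\le\; n^{-1}\|u\|^2\,\bigl\|\Sigma(\gamma)^{-1}-\Sigma(\gamma_*)^{-1}\bigr\|
\;\le\; C\,n^{-1}\|u\|^2\,\|\gamma-\gamma_*\|,
\]
so $\sup_{\|\gamma-\gamma_*\|<\varepsilon}$ of the left side is bounded by the single random variable $Cn^{-1}\|u\|^2=O_p(1)$ times the deterministic $\varepsilon$. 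The same device (using $\|\mathfrak{B}(\phi)-\mathfrak{B}(\phi_*)\|=O_p(\varepsilon)$ in the paper's notation) handles all pieces of the criterion, and compactness of $\Gamma$ then delivers uniform convergence without ever invoking the size of a covering net. Replace your union-bound paragraph with this factorization argument and the proof goes through under the stated rate $(d_\gamma+p)/n\to 0$.
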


	\subsection{Asymptotic properties of the test statistic}
	Write $\Sigma_j(\gamma)=\partial\Sigma(\gamma)/\partial\gamma_j$, $j=1,\ldots,d_\gamma$,  the matrix differentiated element-wise. While Assumption \ref{ass:Sigma_frech_der} guarantees that these partial derivatives exist, the next assumption imposes a uniform bound on their spectral norms.
	\begin{assumption}\label{ass:Sigma_der_spec}
		$\varlimsup_{n\rightarrow\infty}\sup_{j=1,\ldots,d_\gamma}\left\Vert \Sigma_j(\gamma)\right\Vert<C$.
	\end{assumption}
	We will later consider the sequence of local alternatives %
	\begin{equation}\label{local_alternatives}
		H_{\ell n}\equiv H_{\ell}:f(x_{i},\alpha _{n}^{\ast })=\theta
		_{0}(x_{i})+(p^{1/4}/n^{1/2})h(x_{i}),a.s.,
	\end{equation}%
	where $h$ is square integrable on the support $\mathcal{X}$ of the $x_i$. Under the null $H_0$, we have $h(x_i)=0$, a.s..
	\begin{assumption}\label{ass:alpha_der}
		\sloppy For each $n\in\mathbb{N}$ and $i=1,\ldots,n$, the function $f:\mathcal{X}\times \mathcal{A}\rightarrow\mathbb{R}$ is such that $f\left(x_i,\alpha\right)$ is measurable for each $\alpha\in \mathcal{A}$, $f\left(x_i,\cdot\right)$ is a.s. continuous on $\mathcal{A}$, with $\sup_{\alpha\in \mathcal{A}}f^2\left(x_i,\alpha\right)\leq D_n\left(x_i\right)$, where $\sup_{n\in\mathbb{N}}D_n\left(x_i\right)$ is integrable and $\sup_{\alpha\in \mathcal{A}}\left\Vert\partial f\left(x_i,\alpha\right)/\partial\alpha\right\Vert^2\leq D_n\left(x_i\right)$, $\sup_{\alpha\in \mathcal{A}}\left\Vert\partial^2 f\left(x_i,\alpha\right)/\partial\alpha\partial\alpha'\right\Vert\leq D_n\left(x_i\right)$, all holding a.s..
	\end{assumption}
	Define the infinite-dimensional matrix $\fancyv=B^{\prime }\Sigma ^{-1}\Psi \left(
	\Psi ^{\prime }\Sigma ^{-1}\Psi \right) ^{-1}\Psi ^{\prime }\Sigma ^{-1}B$,
	which is symmetric, idempotent and has rank $p$. We now show that our test statistic is approximated by a quadratic form in $\varepsilon$, weighted by $\fancyv$.

	\begin{theorem}\label{thm:stat_appr}
		Under Assumptions \ref{ass:approx_error}-\ref{ass:alpha_der}, $p^{-1}+p\left(p+d_\gamma^2\right)/n+\sqrt{n}/p^{\mu+1/4}\rightarrow 0$, as $n\rightarrow\infty$, and $H_0$, $
		\fancyt-{\left(\sigma_0^{-2}\varepsilon'\fancyv\varepsilon-p\right)}/{\sqrt{2p}}=o_p(1).
		$
	\end{theorem}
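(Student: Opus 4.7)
Under $H_0$, $\alpha^* = \alpha_0$ and $y = f(x,\alpha_0) + u = \Psi\beta_0 + e + u$ with $u = B\varepsilon$ and $e$ the series-approximation residual vector. The closed form $\widehat\beta = (\Psi'\Sigma(\widehat\gamma)^{-1}\Psi)^{-1}\Psi'\Sigma(\widehat\gamma)^{-1} y$ then yields
\[
\widehat v = (P(\widehat\gamma) - I_n) e + P(\widehat\gamma) u - \delta, \qquad \widehat u = u - \delta,
\]
where $P(\gamma) = \Psi(\Psi'\Sigma(\gamma)^{-1}\Psi)^{-1}\Psi'\Sigma(\gamma)^{-1}$ and $\delta = f(x,\widehat\alpha) - f(x,\alpha_0)$. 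Expanding $\widehat v'\Sigma(\widehat\gamma)^{-1}\widehat u$ exposes the leading term $u'P(\widehat\gamma)'\Sigma(\widehat\gamma)^{-1} u$, which at $\gamma_0$ collapses to $\varepsilon'\fancyv\varepsilon$ upon substituting $u = B\varepsilon$, plus five remainders: one linear in $e$ and $u$, two linear in $u$ and $\delta$, one linear in $e$ and $\delta$, one quadratic in $\delta$. My plan is to (i) replace $\widehat\gamma$ by $\gamma_0$ in the leading term at cost $o_p(\sqrt p)$, (ii) show each remainder is $o_p(\sqrt p)$, and (iii) replace $\widehat\sigma^{-2}$ by $\sigma_0^{-2}$ at cost $o_p(\sqrt p)$; dividing by $\sqrt{2p}$ then gives the theorem.

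For (i), Theorem \ref{thm:consistency} with Proposition \ref{prop:Sigma_diff_bound} and the resolvent identity give $\|\Sigma(\widehat\gamma)^{-1} - \Sigma^{-1}\| = O_p(\|\widehat\gamma - \gamma_0\|)$, and this perturbation is propagated through the matrix sandwich $P(\widehat\gamma)'\Sigma(\widehat\gamma)^{-1} = \Sigma(\widehat\gamma)^{-1}\Psi(\Psi'\Sigma(\widehat\gamma)^{-1}\Psi)^{-1}\Psi'\Sigma(\widehat\gamma)^{-1}$ using Assumptions \ref{ass:Sigma_spec_norm}, \ref{ass:Psi_GLS_type}, and \ref{ass:Sigma_der_spec}. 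Because the leading quadratic form is $O_p(p)$, the perturbation contributes $O_p(p\|\widehat\gamma - \gamma_0\|)$, which becomes $o_p(\sqrt p)$ once a preliminary rate $\|\widehat\gamma - \gamma_0\| = O_p(\sqrt{d_\gamma/n})$ (obtained by a separate QMLE score/Hessian analysis) is combined with $p(p + d_\gamma^2)/n \to 0$. For (iii), an analogous expansion of $\widehat\sigma^2 = n^{-1}y'E(\widehat\gamma)'M(\widehat\gamma)E(\widehat\gamma) y$ around $(\gamma_0,\beta_0)$ plus a CLT for $n^{-1}\varepsilon'\varepsilon$ yields $\widehat\sigma^2 - \sigma_0^2 = O_p(1/\sqrt n)$, so $(\widehat\sigma^{-2} - \sigma_0^{-2})\varepsilon'\fancyv\varepsilon = O_p(p/\sqrt n) = o_p(\sqrt p)$ under $p^2/n \to 0$.

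For (ii), the $e$-with-$u$ term is controlled by conditional-second-moment arguments: $\mathcal{E}[(e'\Sigma^{-1} u)^2 \mid x] = \sigma_0^2 e'\Sigma^{-1} e = O_p(np^{-2\mu})$ by Assumption \ref{ass:approx_error}, and the PSD symmetric form $P(\widehat\gamma)'\Sigma(\widehat\gamma)^{-1}$ gives the same order for the piece $e'P(\widehat\gamma)'\Sigma(\widehat\gamma)^{-1}u$; hence $|e'(P(\widehat\gamma) - I_n)'\Sigma(\widehat\gamma)^{-1} u| = O_p(\sqrt n p^{-\mu})$, which is $o_p(\sqrt p)$ exactly under $\sqrt n / p^{\mu + 1/4} \to 0$ via $\sqrt n\,p^{-\mu - 1/4} = o(1)$ and an extra $p^{-1/4}$. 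The $\delta$-terms use Taylor expansion $\delta = F(\alpha_0)(\widehat\alpha - \alpha_0) + O_p(1/\sqrt n)$ in Euclidean norm (by Assumption \ref{ass:alpha_der} and $\|\widehat\alpha - \alpha_0\| = O_p(1/\sqrt n)$), giving $\|\delta\| = O_p(1)$ and bounds like $|\delta'\Sigma^{-1}u| = O_p(1)$ (from $\mathcal{E}\|F(\alpha_0)'\Sigma^{-1}u\|^2 = O(n)$). The trickiest remainder, $u'P(\widehat\gamma)'\Sigma(\widehat\gamma)^{-1}\delta$, is sharpened from the naive Cauchy--Schwarz bound $O_p(\sqrt p)$ to $O_p(1)$ by exploiting the identity $SP\Sigma SP = SP$ (at $\gamma_0$, with $S = \Sigma^{-1}$), which shows $\mathcal{E}\|F(\alpha_0)'SPu\|^2 = O(n)$ rather than $O(np)$.

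The main obstacle will be step (i): obtaining the auxiliary rate $\|\widehat\gamma - \gamma_0\| = O_p(\sqrt{d_\gamma/n})$ with $d_\gamma \to \infty$ requires careful entrywise control of the QMLE score and Hessian via Assumption \ref{ass:Sigma_der_spec}, and the resulting $O_p(p\sqrt{d_\gamma/n}) = o_p(\sqrt p)$ budget binds precisely at $p(p + d_\gamma^2)/n \to 0$, leaving little slack. A secondary technical point is the identity $SP\Sigma SP = SP$ used to sharpen $u'P(\widehat\gamma)'\Sigma(\widehat\gamma)^{-1}\delta$: while exact at $\gamma_0$, its use at $\widehat\gamma$ must be combined with the perturbation analysis of step (i) so that the sharpened bound $O_p(1)$ survives the replacement $\widehat\gamma \mapsto \gamma_0$.
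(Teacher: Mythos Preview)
Your proposal is correct and mirrors the paper's proof: the same decomposition of $n\widehat m_n$ into the leading quadratic form $u'\Sigma(\widehat\gamma)^{-1}\Psi(\Psi'\Sigma(\widehat\gamma)^{-1}\Psi)^{-1}\Psi'\Sigma(\widehat\gamma)^{-1}u$ plus remainder terms (the paper labels them $A_1$--$A_4$), the same replacement of $\widehat\gamma$ by $\gamma_0$ via the auxiliary rate $\|\widehat\gamma-\gamma_0\|=O_p(\sqrt{d_\gamma/n})$ (the paper's Lemma~LS.2), and the same conditional second-moment bounds for the remainders. Two refinements worth noting: the paper's step (i) bound is actually $O_p(pd_\gamma/\sqrt n)$, obtained from a coordinate-wise mean-value expansion over $j=1,\ldots,d_\gamma$, not the heuristic $O_p(p\|\widehat\gamma-\gamma_0\|)$ you state---this is precisely where the $pd_\gamma^2/n\to 0$ condition you correctly invoke enters; and for the $u'P(\widehat\gamma)'\Sigma(\widehat\gamma)^{-1}\delta$ term the paper bounds the conditional variance by $O_p(\|g\|^2)$ \emph{uniformly in $\gamma$} via spectral-norm inequalities, so your idempotent identity $SP\Sigma SP=SP$ at $\gamma_0$ and the accompanying perturbation worry are unnecessary.
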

	\begin{assumption}
		\label{ass:rsums_no_coll}
		$\underset{n\rightarrow \infty }{\overline{\lim }}\left \Vert \Sigma
		^{-1}\right \Vert _{R}<\infty.$
	\end{assumption}
	Because $\left \Vert \Sigma
	^{-1}\right \Vert\leq \left \Vert \Sigma
	^{-1}\right \Vert _{R}$, this restriction on spatial dependence is somewhat stronger than a restriction on spectral norm but is typically imposed for central limit theorems in this type of setting, cf. \cite{lee2004asymptotic}, \cite{Delgado2015}, \cite{Gupta2018}. The next assumption  is needed in our proofs to check a Lyapunov condition. A typical approach would be assume moments of order $4+\epsilon$, for some $\epsilon>0$. Due to the linear process structure under consideration, taking $\epsilon=4$ makes the proof tractable, see for example \cite{Delgado2015}.
	\begin{assumption}\label{ass:err_8th}
		The $\varepsilon_s$, $s\geq 1$, have  finite eighth moment.
	\end{assumption}
	The next assumption is strong if the basis functions $\psi_{ij}(\cdot)$ are polynomials, requiring all moments to exist in that case.
	\begin{assumption}\label{ass:psi_mom}
		$\mathcal{E}\left\vert\psi_{ij}\left(x\right)\right\vert<C$, $i=1,\ldots,n$ and $j=1,\ldots,p$.
	\end{assumption}
	The next theorem establishes the asymptotic normality of the approximating quadratic form introduced above.
	
	\begin{theorem}\label{thm:appr_clt}
		Under Assumptions \ref{ass:Sigma_spec_norm}, \ref{ass:errors_epsilon_3}, \ref{ass:Psi_GLS_type}, \ref{ass:rsums_no_coll}-\ref{ass:psi_mom}  and $p^{-1}+p^3/n\rightarrow 0$, as $n\rightarrow\infty$, $
		{\left(\sigma _{0}^{-2}\varepsilon ^{\prime }\fancyv\varepsilon -p\right)}/{\sqrt{2p}}\overset{d}{\longrightarrow}N(0,1).
		$
	\end{theorem}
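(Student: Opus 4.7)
The target is a central limit theorem for the quadratic form $\sigma_0^{-2}\varepsilon'\mathscr{V}\varepsilon$ in the innovations $\{\varepsilon_s\}_{s\geq 1}$, where $\mathscr{V}$ is symmetric, idempotent and of rank $p=\operatorname{tr}(\mathscr{V})$, so the centering $p$ equals the mean. Since $\mathscr{V}$ depends only on $\Psi$ and on the deterministic $\gamma_0$, and $\Psi$ is independent of $\{\varepsilon_s\}$ under strict exogeneity, the entire argument is carried out conditionally on $\Psi$ and transferred to the unconditional statement by dominated convergence (the limit is continuous). With $\mathcal{F}_{n,s}=\sigma(\varepsilon_1,\ldots,\varepsilon_s)$, write
\begin{equation*}
\sigma_0^{-2}\varepsilon'\mathscr{V}\varepsilon-p=\sum_{s\geq 1}Z_{ns},\qquad Z_{ns}=\sigma_0^{-2}\mathscr{V}_{ss}\bigl(\varepsilon_s^2-\sigma_0^2\bigr)+2\sigma_0^{-2}\varepsilon_s\sum_{r<s}\mathscr{V}_{sr}\varepsilon_r,
\end{equation*}
which is an $\mathcal{F}_{n,s}$-martingale difference sequence by construction; I would then invoke a martingale CLT of Hall--Heyde type.

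\textbf{Conditional variance.} A direct moment calculation that uses the idempotency $\mathscr{V}^2=\mathscr{V}$ (so $\operatorname{tr}(\mathscr{V}^2)=p$) yields
\begin{equation*}
\mathcal{E}\sum_s\mathcal{E}\bigl[Z_{ns}^2\,\bigl|\,\mathcal{F}_{n,s-1}\bigr]=2p+\bigl(\sigma_0^{-4}\mu_4-3\bigr)\sum_s\mathscr{V}_{ss}^2.
\end{equation*}
After the $\sqrt{2p}$-normalisation the asymptotic variance is $1$ provided $\sum_s\mathscr{V}_{ss}^2=o(p)$, and since $\mathscr{V}_{ss}\in[0,1]$ (symmetric idempotent) with $\sum_s\mathscr{V}_{ss}=p$, this reduces to establishing $\max_s\mathscr{V}_{ss}=o(1)$. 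The random fluctuations of $\sum_s\mathcal{E}[Z_{ns}^2\mid\mathcal{F}_{n,s-1}]$ about its mean are controlled by Chebyshev; the dominant term is $\operatorname{Var}\!\bigl(\sum_s(\sum_{r<s}\mathscr{V}_{sr}\varepsilon_r)^2\bigr)$, which, expanded via Assumption \ref{ass:err_8th}, collapses through $\mathscr{V}^2=\mathscr{V}$ to a multiple of $\operatorname{tr}(\mathscr{V}^4)=p=o(p^2)$.

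\textbf{Lyapunov condition.} For the Lyapunov step $\sum_s\mathcal{E}(Z_{ns}^4)/(2p)^2\to 0$, Rosenthal's inequality combined with $\mathcal{E}\varepsilon^8<\infty$ bounds $\mathcal{E}(Z_{ns}^4)$ by a constant multiple of $\mathscr{V}_{ss}^4+(\sum_{r<s}\mathscr{V}_{sr}^2)^2+\sum_{r<s}\mathscr{V}_{sr}^4$. Using $\sum_{r<s}\mathscr{V}_{sr}^2\leq(\mathscr{V}^2)_{ss}=\mathscr{V}_{ss}$ and $\max_{s,r}\mathscr{V}_{sr}^2\leq\mathscr{V}_{ss}\mathscr{V}_{rr}\leq(\max_s\mathscr{V}_{ss})^2$, summation over $s$ produces an overall bound of order $\max_s\mathscr{V}_{ss}\cdot p$; dividing by $(2p)^2$ leaves $O(\max_s\mathscr{V}_{ss}/p)=o(1)$. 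The martingale CLT therefore applies, and unconditioning on $\Psi$ by dominated convergence delivers $N(0,1)$.

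\textbf{Main obstacle.} Every step above funnels into one estimate: the leverage-type statement $\max_s\mathscr{V}_{ss}=o_p(1)$. Set $\Phi=\Sigma^{-1/2}B$, so that $\Phi\Phi'=\Sigma^{-1/2}BB'\Sigma^{-1/2}=I_n$ (the rows of $\Phi$ are orthonormal), and let $P$ be the orthogonal projection of rank $p$ onto $\operatorname{col}(\Sigma^{-1/2}\Psi)$; then $\mathscr{V}_{ss}=\|P\Phi_{\cdot s}\|^2$. The column norms $\|\Phi_{\cdot s}\|^2=B'_{\cdot s}\Sigma^{-1}B_{\cdot s}$ are uniformly bounded by Assumption \ref{ass:Sigma_spec_norm} together with the summability (\ref{cstar}) in Assumption \ref{ass:errors_epsilon_3}, but the delicate task is upgrading boundedness to negligibility: orthogonality of the rows of $\Phi$ is combined with $\underline{\varphi}(n^{-1}\Psi'\Psi)\geq c>0$ (Assumption \ref{ass:Psi_GLS_type}), the row-sum bound $\|\Sigma^{-1}\|_R<\infty$ (Assumption \ref{ass:rsums_no_coll}), and the first-moment bound $\mathcal{E}|\psi_{ij}|\leq C$ (Assumption \ref{ass:psi_mom}) to furnish an effective $O(p/n)$ leverage estimate, which under the rate $p^3/n\to 0$ is $o(1)$. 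The infinite-dimensional row index of $B$ is what makes this the technically demanding part of the proof.
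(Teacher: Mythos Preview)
Your high-level strategy matches the paper's: write $\sigma_0^{-2}\varepsilon'\mathscr{V}\varepsilon - p$ as a martingale-difference sum in the $\varepsilon_s$ and apply a martingale CLT (the paper cites Scott (1973)). Your identification of the leverage bound $\max_s\mathscr{V}_{ss}=o_p(1)$ as the crux is correct, and the paper obtains it precisely the way you sketch, via the elementwise estimate $|m_{ij}|=O_p(p/n)$ on the entries of $\mathscr{M}=\Sigma^{-1}\Psi(\Psi'\Sigma^{-1}\Psi)^{-1}\Psi'\Sigma^{-1}$, which in turn comes from $h_{il}=\sum_j\Sigma^{-1}_{ij}\psi_{jl}=O_p(1)$ (Assumptions \ref{ass:rsums_no_coll} and \ref{ass:psi_mom}) and $\|h_i\|=O_p(\sqrt{p})$. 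Your Lyapunov argument via idempotency is cleaner than the paper's direct computation and is correct.

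Two points where your proposal diverges or is incomplete. First, the paper inserts a truncation step: it sets $w_S=\sum_{s=1}^{S}w_s$ with $S=S_n\to\infty$ chosen from the tail summability in (\ref{cstar}), proves the remainder $w-w_S\overset{p}{\to}0$, and only then applies the CLT to the finite martingale array $w_S$. You work directly with the infinite sum; to make that rigorous you would need either to insert the same truncation or to invoke a martingale CLT that explicitly accommodates an infinite row index.

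Second, and more substantively, your claim that $\operatorname{Var}\bigl(\sum_s(\sum_{r<s}\mathscr{V}_{sr}\varepsilon_r)^2\bigr)$ ``collapses through $\mathscr{V}^2=\mathscr{V}$ to a multiple of $\operatorname{tr}(\mathscr{V}^4)=p$'' is not justified by idempotency alone. Writing $L$ for the strictly lower-triangular part of $\mathscr{V}$, the relevant object is $\operatorname{tr}((L'L)^2)$, and the triangular truncation of a projection need not inherit the operator-norm control that would let you bound this by $\operatorname{tr}(\mathscr{V}^4)$. The paper does not attempt an operator-theoretic shortcut here; it handles the conditional-variance stability by brute-force elementwise calculus, repeatedly invoking $|m_{ij}|=O_p(p/n)$, $\sum_j m_{ij}^2=O_p(p/n)$ and the summability in Assumption \ref{ass:errors_epsilon_3} to obtain $\sum_t(\sum_{s>t}v_{st}^2)^2=O_p(p^2/n)$ and $\sum_{t}\sum_{r<t}(\sum_{s>t}v_{st}v_{sr})^2=O_p(p/n)$. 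So the variance-stability step ultimately rests on the same elementwise machinery that produces the leverage bound, not on idempotency; your sketch should route through those estimates rather than through $\operatorname{tr}(\mathscr{V}^4)$.
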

	This is a new type of CLT, integrating both a linear process framework as well as an increasing dimension element. A linear-quadratic form in iid disturbances is treated by \cite{Kelejian2001}, while a quadratic form in a linear process framework is treated by \cite{Delgado2015}. However both results are established in a parametric framework, entailing no increasing dimension aspect of the type we face with $p\rightarrow\infty$.
	
	Next, we summarize the properties of our test statistic in a theorem that records its asymptotic normality under the null, consistency and ability to detect local alternatives at $p^{1/4}/n^{1/2}$ rate. This rate has been found also by \cite{DeJong1994} and \cite{Gupta2018c}. Introduce the quantity $\varkappa=\left({\sqrt{2}\sigma_0^2}\right)^{-1}\plim_{n\rightarrow\infty} {n^{-1}h'\Sigma^{-1}h}$, where $h=\left(h\left(x_1\right),\ldots,h\left(x_n\right)\right)'$ and $h\left(x_i\right)$ is from (\ref{local_alternatives}).
	\begin{theorem}\label{thm:stat_properties}
		Under the conditions of Theorems \ref{thm:stat_appr} and \ref{thm:appr_clt},
		(1) $\fancyt\overset{d}{\rightarrow} N(0,1)$ under $H_0$, (2)
		$\fancyt$ is a consistent test statistic,
		(3) $\fancyt\overset{d}{\rightarrow}N \left(\varkappa,1\right)$ under local alternatives $H_{\ell}$.
	\end{theorem}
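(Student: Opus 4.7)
All three claims reduce to analyzing $\fancyt = (n\widehat m_n - p)/\sqrt{2p}$, building directly on Theorems \ref{thm:stat_appr} and \ref{thm:appr_clt}. Part (1) is immediate: Theorem \ref{thm:stat_appr} yields $\fancyt - (\sigma_0^{-2}\varepsilon'\fancyv\varepsilon - p)/\sqrt{2p} = o_p(1)$ under $H_0$, and Theorem \ref{thm:appr_clt} supplies the $N(0,1)$ limit of the right-hand piece, so Slutsky's theorem closes the argument.

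For Part (2), my plan is to argue that under $H_1$ the quantity $n\widehat m_n$ grows at rate $n$, which dominates both the centering $p$ and the scale $\sqrt{2p}$ because $p/n \rightarrow 0$. Writing the specification error $v_i = \theta_0(x_i) - f(x_i,\alpha^*)$, the hypothesis $H_1$ ensures $P[v_i \neq 0] > 0$. I would decompose $\widehat v = v + (\widehat\theta - \theta_0) + (f(x,\alpha^*) - f(x,\widehat\alpha))$ and similarly for $\widehat u$, so that the dominant term of $\widehat v'\Sigma(\widehat\gamma)^{-1}\widehat u$ is $v'\Sigma^{-1}v$. Assumption \ref{ass:Sigma_spec_norm} bounds $\underline\varphi(\Sigma^{-1})$ away from zero, and a law of large numbers for $n^{-1}\sum_i v_i^2$ (with positive mean under $H_1$) yields $n^{-1}v'\Sigma^{-1}v \rightarrow c_0 > 0$ in probability. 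Theorem \ref{thm:consistency} and Proposition \ref{prop:Sigma_diff_bound} justify replacing $\Sigma(\widehat\gamma)$ by $\Sigma$ with negligible cost, and all remaining terms are of strictly smaller order by arguments parallel to those in Theorem \ref{thm:stat_appr}. Consequently $\fancyt \overset{p}\rightarrow +\infty$, which delivers consistency.

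For Part (3), under $H_\ell$ we have $v = -(p^{1/4}/n^{1/2})h$ with $h = (h(x_1),\ldots,h(x_n))'$. Revisiting the decomposition underlying the proof of Theorem \ref{thm:stat_appr} with this non-zero $v$, the target identity is
\begin{equation*}
n\widehat m_n = \sigma_0^{-2}\varepsilon'\fancyv\varepsilon + \sigma_0^{-2} v'\Sigma^{-1} v + o_p(\sqrt{p}).
\end{equation*}
The first term, centered and scaled by $\sqrt{2p}$, contributes the $N(0,1)$ piece via Theorem \ref{thm:appr_clt}. The second yields the drift $\sigma_0^{-2}v'\Sigma^{-1}v/\sqrt{2p} = (\sqrt{2}\sigma_0^2)^{-1} n^{-1} h'\Sigma^{-1} h \rightarrow \varkappa$ by definition of $\varkappa$. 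Slutsky's theorem then delivers $\fancyt \overset{d}\rightarrow N(\varkappa, 1)$.

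The main obstacle is verifying the $o_p(\sqrt{p})$ remainder under $H_\ell$, that is, checking that the extra cross terms generated by the non-zero $v$ are negligible at this scale. The principal new cross terms are $(B\varepsilon)'\Pi v$ and $(B\varepsilon)'\Sigma^{-1} v$, with $\Pi = \Sigma^{-1}\Psi(\Psi'\Sigma^{-1}\Psi)^{-1}\Psi'\Sigma^{-1}$. Because $\Pi\Sigma\Pi = \Pi$ and $\overline\varphi(\Pi) \leq \|\Sigma^{-1}\| = O(1)$ by Assumption \ref{ass:Sigma_spec_norm}, both have variance bounded by a constant times $\|v\|^2 = (p^{1/2}/n)\|h\|^2 = O(\sqrt{p})$, giving each cross term as $O_p(p^{1/4}) = o_p(\sqrt{p})$. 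The interaction of $v$ with the series bias is of order $O_p(\sqrt{n}\,p^{-\mu}\cdot p^{1/4}) = O_p(\sqrt{n}/p^{\mu - 1/4})$, which is $o_p(\sqrt{p})$ under the rate condition $\sqrt{n}/p^{\mu + 1/4} \rightarrow 0$; the interaction with the parametric estimation error $f(x,\alpha^*) - f(x,\widehat\alpha)$ is $O_p(\|v\|) = O_p(p^{1/4})$ by Assumptions \ref{ass:alpha_order} and \ref{ass:alpha_der}. The cross terms not involving $v$ are already controlled by the proof of Theorem \ref{thm:stat_appr}, and the plug-in error from using $\Sigma(\widehat\gamma)$ in place of $\Sigma$ is handled by Proposition \ref{prop:Sigma_diff_bound} together with Theorem \ref{thm:consistency}. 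The $p^{1/4}/n^{1/2}$ scaling in $H_\ell$ is precisely calibrated so that these cross contributions sit at the threshold $O_p(p^{1/4})$ without spilling over into the $N(0,1)$ limit.
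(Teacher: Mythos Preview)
Your proposal is correct and follows the same route as the paper: Part (1) combines Theorems \ref{thm:stat_appr} and \ref{thm:appr_clt} via Slutsky; Part (2) isolates the dominant quadratic $v'\Sigma^{-1}v$ of order $n$ under $H_1$; and Part (3) rests on the approximation $n\widehat m_n = \sigma_0^{-2}\varepsilon'\fancyv\varepsilon + \sigma_0^{-2}(p^{1/2}/n)h'\Sigma^{-1}h + o_p(\sqrt p)$ together with the CLT. The only difference is organizational---the $H_\ell$ remainder analysis you spell out for Part (3) is in the paper carried out inside the proof of Theorem \ref{thm:stat_appr} (the $A_1$--$A_4$ bounds there are derived under $H_\ell$, with $H_0$ the special case $h=0$), so the paper's proof of Part (3) is a one-line reference back to that computation.
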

	\section{Models with SAR structure in responses}\label{sec:SAR_ext}
	We now introduce the SAR model
	\begin{equation}
		y_{i}=\sum_{j=1}^{d_{\lambda}}\lambda_{0j} w_{i,j}'y+\theta _{0}\left( x_{i}\right) +u_{i},i=1,\ldots ,n,  \label{sar_model}
	\end{equation}%
	where $W_j$, $j=1,\ldots,d_{\lambda}$, are known spatial weight matrices with $i$-th rows denoted $w_{i,j}'$, as discussed earlier, and $\lambda_{0j}$ are unknown parameters measuring the strength of spatial dependence. We take $d_\lambda$ to be fixed for convenience of exposition. The error structure remains the same as in (\ref{linpro_gamma}). Here spatial dependence arises not only in errors but also responses. For example, this corresponds to a situation where agents in a network influence each other both in their observed and unobserved actions. Note that the error term $u_i$ can be generated by the same $W_j$, or different ones.
	
	While the model in (\ref{sar_model}) is new in the literature, some related ones are discussed here. Models such as (\ref{sar_model}) but without dependence in the error structure are considered by \cite{Su2010} and \cite{Gupta2013,Gupta2018}, but the former consider only $d_{\lambda}=1$ and the latter only parametric $\theta_0(\cdot)$. Linear $\theta_0(\cdot)$ and $d_{\lambda}>1$ are permitted by \cite{Lee2010}, but the dependence structure in errors differs from what we allow in (\ref{sar_model}). Using the same setup as \cite{Su2010} and independent disturbances, a specification test for the linearity of $\theta_0(\cdot)$ is proposed by \cite{Su2017}. In comparison, our model is much more general and our test can handle more general parametric null hypotheses. We thank a referee for pointing out that (\ref{sar_model}) is a particular case of \cite{Sun2016} when $u_i$ are iid and of \cite{Malikov2017} when $d_\lambda=1$.
	
	Denoting $S(\lambda)=I_n-\sum_{j=1}^{d_{\lambda}} \lambda_j W_j$, the quasi likelihood function based on Gaussianity and conditional on covariates is
	\begin{eqnarray}
		L(\beta,\sigma^2,\phi)=\log{(2\pi\sigma^2)}-\frac{2}{n}\log {\left\vert{S\left(\lambda\right)}\right\vert} +\frac{1}{n}\log {\left\vert{\Sigma\left(\gamma\right)}\right\vert}\nonumber\\+ \frac{1}{\sigma^{2}{n}}\left( S\left(\lambda\right)y-\Psi\beta\right)'\Sigma(\gamma)^{-1}\left( S\left(\lambda\right)y-\Psi\beta\right),\label{likelihood_SAR}
	\end{eqnarray}
	at any admissible point $\left(\beta',\phi',\sigma^2\right)'$ with $\phi=\left(\lambda',\gamma'\right)'$, for nonsingular $S(\lambda)$ and $\Sigma(\gamma)$.
	For given $\phi=\left(\lambda',\gamma'\right)'$, (\ref{likelihood_SAR}) is minimised with respect to $\beta$ and $\sigma^2$ by
	\begin{eqnarray}
		\bar{\beta}\left(\phi\right) & = & \left(\Psi'\Sigma(\gamma)^{-1}\Psi\right)^{-1}\Psi'\Sigma(\gamma)^{-1}S\left(\lambda\right)y \label{betapmle_SAR},\\
		\bar{\sigma}^{2}\left(\phi\right) & = & {n^{-1}}y'S'\left(\lambda\right) E(\gamma)'M(\gamma) E(\gamma) S\left(\lambda\right)y \label{sigmapmleconc_SAR}.
	\end{eqnarray}
	The QMLE of $\phi_0$ is $\widehat\phi=\operatorname*{arg\,min}_{\phi\in\Phi}\mathcal{L}\left(\phi\right)$,
	where
	\begin{equation}\label{conclik}
		\mathcal{L}\left(\phi\right)=\log\bar{\sigma}^{2}\left(\phi\right)+{n^{-1}}\log\left\vert S'^{-1}\left(\lambda\right)\Sigma(\gamma) S^{-1}\left(\lambda\right)\right\vert,
	\end{equation}
	and $\Phi=\Lambda\times\Gamma$ is taken to be a compact subset of $\mathbb{R}^{d_{\lambda}+d\gamma}$.  The QMLEs of $\beta_{0}$ and $\sigma_0^2$ are defined as $\bar{\beta}\left(\widehat{\phi}\right)\equiv\widehat{\beta}$ and  $\bar{\sigma}^{2}\left(\widehat{\phi}\right)\equiv\widehat{\sigma}^2$ respectively. The following assumption controls spatial dependence and is discussed below equation (\ref{SEM_frech}).
	\setcounter{assumption}{0}
	\renewcommand\theassumption{SAR.\arabic{assumption}}
	\begin{assumption}\label{ass:spec_norm_G}
		$\max_{j=1,\ldots,d_{\lambda}}\left\Vert W_j\right\Vert+\left\Vert S^{-1}\right\Vert<C$.
	\end{assumption}
	Writing $T(\lambda)=S(\lambda)S^{-1}$ and $\phi=\left(\lambda',\gamma'\right)'$, define the quantity
	\[\sigma ^{2}\left( \phi \right) =n^{-1}\sigma_0^2tr\left(T'(\lambda)\Sigma(\gamma)^{-1}T(\lambda)\Sigma\right)=n^{-1}\sigma_0^2\left\Vert E(\gamma)T(\lambda)E^{-1}\right\Vert_F^2,
	\] which is nonnegative  by definition and bounded by Assumptions \ref{ass:Sigma_spec_norm} and \ref{ass:spec_norm_G}.
	The assumptions below directly extend Assumptions \ref{ass:sigma_unif} and \ref{ass:gamma_ident} to the present setup.
	\begin{assumption}\label{ass:sigma_unif_SAR}
		$c\leq\sigma ^{2}\left( \phi \right)\leq C$, for all $\phi\in\Phi$.
	\end{assumption}
	\begin{assumption}\label{ass:tau_lambda_ident}
		$\phi_0\in\Phi$ and, for any $\eta>0$,
		\begin{equation}\label{tau_lambda_ident}
			\varliminf_{n\rightarrow\infty}\inf_{\phi\in\overline{\mathcal{N}}^{\phi}(\eta)}\frac{n^{-1}tr\left(T'(\lambda)\Sigma(\gamma)^{-1}T(\lambda)\Sigma\right)}{\left\vert T'(\lambda)\Sigma(\gamma)^{-1}T(\lambda)\Sigma\right\vert ^{1/n}}>1,
		\end{equation}
		where $\overline{\mathcal{N}}^{\phi}(\eta)=\Phi\setminus \mathcal{N}^{\phi}(\eta)$ and $\mathcal{N}^{\phi}(\eta)=\left\{\phi:\left\Vert\phi-\phi_0\right\Vert<\eta\right\}\cap\Phi$.
	\end{assumption}
	We now introduce an identification condition that is required in the setup of this section.
	\begin{assumption}\label{ass:reg_ident_SAR} $\beta_0\neq 0$ and for any $\eta>0$,
		\begin{equation}\label{reg_ident_SAR}
			{P\left(\varliminf_{n\rightarrow\infty}\inf_{\left(\lambda',\gamma'\right)'\in\Lambda\times \overline{\mathcal{N}}^{\gamma}(\eta)}n^{-1}\beta_0'\Psi^{\prime }T^{\prime}(\lambda ) E(\gamma)'M\left( \gamma \right) E(\gamma) T(\lambda )\Psi\beta_0/\left\Vert \beta_0\right\Vert^2>0\right)=1.}
		\end{equation}
	\end{assumption}
	Upon performing minimization with respect to $\beta$, the event inside the probability in (\ref{reg_ident_SAR}) is equivalent to the event
	\[{\varliminf_{n\rightarrow\infty}\min_{\beta\in\mathbb{R}^p}\inf_{\left(\lambda',\gamma'\right)'\in\Lambda\times\overline{\mathcal{N}}^{\gamma}(\eta)}n^{-1}\left( \Psi\beta-T(\lambda)\Psi\beta_0\right)'\Sigma(\gamma)^{-1}\left( \Psi\beta-T(\lambda)\Psi\beta_0\right)/\left\Vert \beta_0\right\Vert^2>0,}
	\]
	which is analogous to the identification condition for the nonlinear
	regression model with a
	parametric linear factor in \cite{robinson1972non}, weighted by the inverse of the error covariance matrix. This reduces the condition to a scalar form of a rank condition, making the identifying nature of the assumption transparent. A similar identifying assumption is used by \cite{Gupta2018}.
	\begin{theorem}\label{thm:cons_SAR}
		Under either $H_0$ or $H_1$,  Assumptions \ref{ass:approx_error}-\ref{ass:Sigma_frech_der}, \ref{ass:Psi_GLS_type}, \ref{ass:spec_norm_G}-\ref{ass:reg_ident_SAR} and \[p^{-1}+\left(d_\gamma+p\right)/n\rightarrow 0, \text{ as }n\rightarrow\infty,\] $\left\Vert\left(\widehat\phi,\widehat{\sigma}^2\right)-\left(\phi_0,{\sigma_0}^2\right)\right\Vert\overset{p}\longrightarrow 0$ as $n\rightarrow\infty$.
	\end{theorem}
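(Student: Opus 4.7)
The plan is to port the extremum-consistency argument used for Theorem~\ref{thm:consistency} to the enlarged parameter space $\Phi=\Lambda\times\Gamma$. Under either $H_0$ or $H_1$, combining (\ref{sar_model}) with (\ref{theta_series_appr}) gives $Sy=\Psi\beta_0+e+u$, and hence $S(\lambda)y=T(\lambda)(\Psi\beta_0+e+u)$ with $T(\lambda)=S(\lambda)S^{-1}$. Substituting in (\ref{sigmapmleconc_SAR}) yields
\[n\bar\sigma^2(\phi)=(\Psi\beta_0+e+u)'C(\phi)(\Psi\beta_0+e+u),\qquad C(\phi)=T'(\lambda)E'(\gamma)M(\gamma)E(\gamma)T(\lambda).\]
The goal is to show $\mathcal{L}(\phi)$ converges uniformly over $\phi\in\Phi$ to a deterministic function uniquely minimised at $\phi_0$; consistency of $\widehat\phi$ then follows from the standard extremum argument, and $\widehat\sigma^2\overset{p}{\longrightarrow}\sigma_0^2$ is obtained by continuous mapping applied to the uniform expansion derived below, evaluated at $\widehat\phi$.

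The technical core is to establish, uniformly in $\phi\in\Phi$,
\[\bar\sigma^2(\phi)=\sigma^2(\phi)+q_n(\phi)+o_p(1),\qquad q_n(\phi)=n^{-1}\left\Vert M(\gamma)E(\gamma)T(\lambda)\Psi\beta_0\right\Vert^2\geq 0,\]
the last equality holding because $M(\gamma)$ is a symmetric idempotent. Pointwise in $\phi$, $u'C(\phi)u/n$ converges in mean square to $\sigma^2(\phi)$: its expectation equals $\sigma^2(\phi)-n^{-1}\sigma_0^2\,\mathrm{tr}((\Psi'\Sigma(\gamma)^{-1}\Psi)^{-1}\Psi'\Sigma(\gamma)^{-1}T(\lambda)\Sigma T'(\lambda)\Sigma(\gamma)^{-1}\Psi)=\sigma^2(\phi)+O(p/n)$ by Assumptions~\ref{ass:Sigma_spec_norm}, \ref{ass:errors_epsilon_3}, \ref{ass:Psi_GLS_type} and \ref{ass:spec_norm_G}, while the variance is handled by the standard moment bound for quadratic forms in linear processes. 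The $e$-quadratic piece is $O_p(p^{-2\mu})$ by Assumption~\ref{ass:approx_error} together with (\ref{appr_error}) and the spectral bounds on $C(\phi)$; cross terms involving $e$ or $u$ with $\Psi\beta_0$ are $o_p(1)$ by Cauchy--Schwarz. Uniformity is achieved through a finite $\eta$-mesh on the compact $\Phi$: Proposition~\ref{prop:Sigma_diff_bound} (which exploits the Fréchet smoothness in Assumption~\ref{ass:Sigma_frech_der}) together with Assumptions~\ref{ass:Sigma_spec_norm} and \ref{ass:spec_norm_G} delivers Lipschitz-in-$\phi$ control of $C(\phi)$, so the maximal-inequality penalty over the mesh is absorbed by $(d_\gamma+p)/n\to 0$.

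For identification, observe that at $\phi_0$, $T(\lambda_0)=I_n$ and $M(\gamma_0)E(\gamma_0)\Psi=0$, so $q_n(\phi_0)=0$ and $\sigma^2(\phi_0)=\sigma_0^2$. Using $\log|S'^{-1}(\lambda)\Sigma(\gamma)S^{-1}(\lambda)|=\log|\Sigma(\gamma)|-2\log|T(\lambda)|-2\log|S|$, the nonnegativity of $q_n$, and the uniform expansion above,
\[\mathcal{L}(\phi)-\mathcal{L}(\phi_0)\geq \log\frac{n^{-1}\mathrm{tr}(T'(\lambda)\Sigma(\gamma)^{-1}T(\lambda)\Sigma)}{|T'(\lambda)\Sigma(\gamma)^{-1}T(\lambda)\Sigma|^{1/n}}+o_p(1)\]
uniformly on $\overline{\mathcal{N}}^{\phi}(\eta)$. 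The leading term is bounded below by some $\delta(\eta)>0$ by Assumption~\ref{ass:tau_lambda_ident} and the arithmetic--geometric-mean inequality; Assumption~\ref{ass:reg_ident_SAR} reinforces the gap on the subregion where $\gamma$ is bounded away from $\gamma_0$ via the extra strictly positive contribution from $q_n(\phi)$, which survives the $\log$ because $\sigma^2(\phi)$ is bounded by Assumption~\ref{ass:sigma_unif_SAR}. Hence $P(\widehat\phi\in\mathcal{N}^{\phi}(\eta))\to 1$ for every $\eta>0$, and $\widehat\sigma^2\overset{p}{\longrightarrow}\sigma_0^2$ follows. The hard part is the uniform expansion in the second paragraph: with both $d_\gamma$ and $p$ potentially diverging, the Fréchet-derivative bound of Assumption~\ref{ass:Sigma_frech_der}, operationalised through Proposition~\ref{prop:Sigma_diff_bound}, is the essential device that converts pointwise spectral control into the Lipschitz-in-$\phi$ bound needed to keep the mesh cardinality manageable under the stated rate condition.
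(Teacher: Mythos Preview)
Your proposal is correct and follows essentially the same route as the paper's proof: decompose $\bar\sigma^2(\phi)$ into the regression quadratic $q_n(\phi)=c_1(\phi)$, the trace part $\sigma^2(\phi)$ (up to the $O(p/n)$ projection correction the paper calls $f(\phi)$), and a mean-zero remainder; establish pointwise convergence of the remainder; lift to uniformity on compact $\Phi$ via a finite cover plus the Lipschitz control delivered by Proposition~\ref{prop:Sigma_diff_bound}; and conclude via the AM--GM identification in Assumption~\ref{ass:tau_lambda_ident}.

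Two minor remarks on presentation. First, your phrase ``maximal-inequality penalty over the mesh is absorbed by $(d_\gamma+p)/n\to 0$'' slightly misdescribes the mechanism: the paper does not use a union bound over mesh points but rather the equicontinuity bound $\sup_{\|\phi-\phi_*\|<\varepsilon}\|C(\phi)-C(\phi_*)\|=O_p(\varepsilon)$, so the number of cover points never enters the rate. Second, your identification step, which drops the nonnegative $q_n(\phi)$ and invokes Assumption~\ref{ass:tau_lambda_ident} directly on $\overline{\mathcal N}^\phi(\eta)$, is in fact cleaner than the paper's argument. The paper instead works with the ratio $c_3/(c_1+c_2)$ and splits $\overline{\mathcal N}^\phi(\eta)\subseteq\{\Lambda\times\overline{\mathcal N}^\gamma(\eta/2)\}\cup\{\overline{\mathcal N}^\lambda(\eta/2)\times\Gamma\}$, using Assumption~\ref{ass:reg_ident_SAR} to bound $c_1$ away from zero on the first region; this is where SAR.4 enters essentially in the paper's version, whereas in your direct $o_p(1)$ control of the remainder it is not strictly needed for identification.
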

	The test statistic $\fancyt$ can be constructed as before but with the null residuals redefined to incorporate the spatially lagged terms, i.e. $\hat u=S(\hat\lambda)y-f(x,\hat\alpha)$. Then we have the following theorem.
	\begin{theorem}\label{thm:stat_appr_SAR}
		Under Assumptions \ref{ass:approx_error}-\ref{ass:Sigma_frech_der}, \ref{ass:Psi_GLS_type}-\ref{ass:alpha_der}, \ref{ass:spec_norm_G}-\ref{ass:reg_ident_SAR}, \[p^{-1}+p\left(p+d_\gamma^2\right)/n+\sqrt{n}/p^{\mu+1/4}+d^2_\gamma/p\rightarrow 0, \text{ as }n\rightarrow\infty,\]
		and $H_0$, $
		\fancyt-{\left(\sigma_0^{-2}\varepsilon'\fancyv\varepsilon-p\right)}/{\sqrt{2p}}=o_p(1).
		$
	\end{theorem}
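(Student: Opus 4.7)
The plan is to reduce the SAR statistic to the representation established in Theorem \ref{thm:stat_appr}, showing that the spatial lag in responses contributes only an $o_p(\sqrt{p})$ perturbation to the quadratic form $\sigma_0^{-2}\varepsilon'\fancyv\varepsilon$. The starting point is to upgrade Theorem \ref{thm:cons_SAR} to rates of convergence for $\hat\phi=(\hat\lambda',\hat\gamma')'$, via the standard QMLE route of Taylor-expanding the score of (\ref{conclik}) around $\phi_0$ and invoking Proposition \ref{prop:Sigma_diff_bound} together with Assumptions \ref{ass:Sigma_frech_der}, \ref{ass:tau_lambda_ident}, \ref{ass:reg_ident_SAR}. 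Writing $S(\hat\lambda)-S(\lambda_0)=-\sum_{j=1}^{d_\lambda}(\hat\lambda_j-\lambda_{0j})W_j\equiv\Delta$ and using $S(\lambda_0)y=f(x,\alpha_0)+u$ under $H_0$, the null residual decomposes as $\hat u=u+\Delta y-[f(x,\hat\alpha)-f(x,\alpha_0)]$, while with $P(\gamma)=\Psi(\Psi'\Sigma(\gamma)^{-1}\Psi)^{-1}\Psi'\Sigma(\gamma)^{-1}$ one has $\hat\theta=\Psi\beta_0+P(\hat\gamma)[e(x)+u+\Delta y]$ and hence $\hat v=\hat\theta-f(x,\hat\alpha)$ splits into an $e(x)$-component, a $P(\hat\gamma)u$ piece, a $P(\hat\gamma)\Delta y$ piece, and a $[f(x,\alpha_0)-f(x,\hat\alpha)]$ piece.

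Substituting into $n\hat m_n=\hat\sigma^{-2}\hat v'\Sigma(\hat\gamma)^{-1}\hat u$ and expanding, one obtains the `non-SAR' contribution (identical to the object analysed in Theorem \ref{thm:stat_appr}) plus cross terms involving $\Delta y$. The leading piece reduces to $\sigma_0^{-2}\varepsilon'\fancyv\varepsilon$ up to $o_p(\sqrt{p})$ exactly as in that theorem: Proposition \ref{prop:Sigma_diff_bound} and the $\hat\gamma,\hat\sigma^2$ consistency from Theorem \ref{thm:cons_SAR} replace $\Sigma(\hat\gamma)^{-1}$ and $\hat\sigma^{-2}$ by their limits, Assumptions \ref{ass:alpha_der} and \ref{ass:alpha_order} linearize $f(x,\hat\alpha)-f(x,\alpha_0)$ at the $O_p(1/\sqrt n)$ rate, and Assumption \ref{ass:approx_error} absorbs the series bias. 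For the new cross terms, the essential ingredients are the bounds $y'W_j'\Sigma^{-1}W_j y=O_p(n)$ (from $y=S^{-1}[f(x,\alpha_0)+u]$ using Assumptions \ref{ass:Sigma_spec_norm}, \ref{ass:errors_epsilon_3}, \ref{ass:spec_norm_G}) and $u'\Sigma^{-1}W_j y=O_p(\sqrt{n})$ (from mean-zero quadratic-form moment calculations together with Assumptions \ref{ass:rsums_no_coll} and \ref{ass:err_8th}); combined with $\|\hat\lambda-\lambda_0\|=O_p(1/\sqrt n)$ these force the $\Delta y$-driven terms to be $o_p(\sqrt p)$ once $p\to\infty$.

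The main technical obstacle will be uniform control of the terms in which $\hat\gamma$ enters the projection $P(\hat\gamma)$ simultaneously with the $\Delta y$ perturbation and of the joint score linearization in which $\hat\lambda$ and $\hat\gamma$ enter interactively. A typical such term, $(P(\hat\gamma)\Delta y)'\Sigma(\hat\gamma)^{-1}u$, requires tracking separately the rank-$p$ projection, the $d_\lambda$ spatial-lag summation, and the $d_\gamma$-dimensional $\hat\gamma$ perturbation threaded through $P(\cdot)$ and $\Sigma(\cdot)^{-1}$ via Proposition \ref{prop:Sigma_diff_bound} and Assumption \ref{ass:Sigma_der_spec}; it is precisely when the $d_\gamma$-perturbation couples with the rank-$p$ projection of $\Delta y$ that the new condition $d_\gamma^2/p\to 0$ arises, beyond those already present in Theorem \ref{thm:stat_appr}. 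Assumptions \ref{ass:psi_mom} and \ref{ass:rsums_no_coll} supply the moment and spectral bounds needed to complete these calculations, after which the remainder of the argument tracks the proof of Theorem \ref{thm:stat_appr} verbatim.
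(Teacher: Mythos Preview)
Your overall architecture---isolate the non-SAR part handled by Theorem \ref{thm:stat_appr} and show the $\Delta y$ cross terms are $o_p(\sqrt p)$---is exactly what the paper does (it labels the extra pieces $A_5,A_6,A_7$). But two of your key quantitative inputs are off, and this misleads you about where the condition $d_\gamma^2/p\to 0$ actually enters.

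First, you assert $\|\hat\lambda-\lambda_0\|=O_p(n^{-1/2})$. The paper does not establish this; Lemma \ref{lemma:gamma_order} (extended to the SAR setting) only yields $\|\hat\phi-\phi_0\|=O_p(\sqrt{d_\gamma/n})$, and without a block-diagonality or orthogonality argument for the Hessian you cannot decouple $\hat\lambda$ from the $d_\gamma$-dimensional $\hat\gamma$ score to get the faster rate. The paper therefore uses $\|\hat\lambda-\lambda_0\|=O_p(\sqrt{d_\gamma/n})$ throughout. With this rate the term quadratic in $\Delta y$ (your $A_5$-analogue) is $O_p(d_\gamma/n)\cdot O_p(n)=O_p(d_\gamma)$, and \emph{this} is precisely where $d_\gamma=o(\sqrt p)$, i.e.\ $d_\gamma^2/p\to 0$, is needed---not the projection/$\hat\gamma$-coupling mechanism you describe.

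Second, the claim $u'\Sigma^{-1}W_j y=O_p(\sqrt n)$ ``from mean-zero quadratic-form moment calculations'' is wrong as stated: writing $y=S^{-1}(\theta_0+u)$, the piece $u'\Sigma^{-1}W_jS^{-1}u$ has mean $\sigma_0^2\,\mathrm{tr}(\Sigma^{-1}W_jS^{-1}\Sigma)$, which is $O(n)$, not zero. The paper never uses the raw bilinear form; every $u$--$\Delta y$ cross term passes through the rank-$p$ projection $\Psi(\Psi'\Sigma(\gamma)^{-1}\Psi)^{-1}\Psi'\Sigma(\gamma)^{-1}$, and the trace bound then gives $O_p(p)$ rather than $O_p(n)$. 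Combined with the $\sqrt{d_\gamma/n}$ rate this yields $O_p(\sqrt{d_\gamma})=o_p(\sqrt p)$ for the $A_6$-type terms, using only $d_\gamma/p\to 0$. Your route would work if you could secure the sharper $n^{-1/2}$ rate for $\hat\lambda$, but that is an additional, nontrivial step not supplied here.
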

	\begin{theorem}\label{thm:stat_properties_SAR}
		Under the conditions of Theorems \ref{thm:appr_clt}, \ref{thm:cons_SAR} and \ref{thm:stat_appr_SAR},
		(1) $\fancyt\overset{d}{\rightarrow} N(0,1)$ under $H_0$, (2)
		$\fancyt$ is a consistent test statistic,
		(3) $\fancyt\overset{d}{\rightarrow}N \left(\varkappa,1\right)$ under local alternatives $H_{\ell}$.
	\end{theorem}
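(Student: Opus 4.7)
Part (1) is immediate: Theorem \ref{thm:stat_appr_SAR} reduces $\fancyt$ to $\left(\sigma_{0}^{-2}\varepsilon'\fancyv\varepsilon-p\right)/\sqrt{2p}$ up to $o_p(1)$, while Theorem \ref{thm:appr_clt} establishes that this quadratic form is asymptotically $N(0,1)$. Slutsky's theorem then delivers $\fancyt\stackrel{d}{\to}N(0,1)$ under $H_0$.

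For parts (2) and (3) the plan is to expand $n\widehat{m}_n=\widehat\sigma^{-2}\widehat v'\Sigma(\widehat\gamma)^{-1}\widehat u$ with $\widehat v=\widehat\theta-f(x,\widehat\alpha)$ and $\widehat u=S(\widehat\lambda)y-f(x,\widehat\alpha)$. Setting $v=\theta_0(x)-f(x,\alpha^*)$, the identity $S(\lambda_0)y=\theta_0(x)+u$ yields $S(\lambda_0)y-f(x,\alpha^*)=u+v$, where $u$ is the linear process of (\ref{linpro_gamma}). Using $\widehat\gamma\stackrel{p}{\to}\gamma_0$ and $\widehat\lambda\stackrel{p}{\to}\lambda_0$ from Theorem \ref{thm:cons_SAR}, $\widehat\alpha-\alpha^*=O_p(n^{-1/2})$ from Assumption \ref{ass:alpha_order}, the smoothness in Assumptions \ref{ass:alpha_der}, \ref{ass:Sigma_frech_der} and \ref{ass:spec_norm_G}, and the series approximation from Assumption \ref{ass:approx_error}, I would obtain
\[
n\widehat{m}_n=\widehat\sigma^{-2}\bigl\{v'\Sigma^{-1}v+v'\Sigma^{-1}u+\varepsilon'\fancyv\varepsilon\bigr\}+\mathcal{R}_n,
\]
where $\mathcal{R}_n=o_p(\sqrt{p})$ by reuse of the bounds already developed for Theorem \ref{thm:stat_appr_SAR}, invoking the bandwidth conditions $p(p+d_\gamma^2)/n\to 0$, $\sqrt{n}/p^{\mu+1/4}\to 0$, and $d_\gamma^2/p\to 0$.

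For part (2), under $H_1$ the vector $v$ does not vanish, so Assumption \ref{ass:Sigma_spec_norm} gives $n^{-1}v'\Sigma^{-1}v\geq c>0$ with probability approaching one. The cross term satisfies $\mathcal{E}[(v'\Sigma^{-1}u)^2]=\sigma_0^2\,v'\Sigma^{-1}v=O(n)$, hence $v'\Sigma^{-1}u=O_p(\sqrt{n})$, and the quadratic form part is $O_p(p)$. Therefore $n\widehat{m}_n\geq cn$ in probability, whence $\fancyt\geq cn/\sqrt{p}-\sqrt{p/2}\to\infty$ since $p/n\to 0$. For part (3), under $H_{\ell}$ we have $v=-(p^{1/4}/n^{1/2})h$, so $v'\Sigma^{-1}v=(\sqrt{p}/n)\,h'\Sigma^{-1}h$ and
\[
\widehat\sigma^{-2}\,v'\Sigma^{-1}v/\sqrt{2p}=\frac{1}{\sqrt{2}\,\sigma_0^2}\cdot\frac{h'\Sigma^{-1}h}{n}(1+o_p(1))\stackrel{p}{\to}\varkappa,
\]
using $\widehat\sigma^2\stackrel{p}{\to}\sigma_0^2$ and the definition of $\varkappa$. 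The cross term contributes $O_p(p^{-1/4})$, since $\mathcal{E}[(v'\Sigma^{-1}u)^2]=\sigma_0^2\,v'\Sigma^{-1}v=O(\sqrt{p})$, and $(\sigma_0^{-2}\varepsilon'\fancyv\varepsilon-p)/\sqrt{2p}$ remains $N(0,1)$ by Theorem \ref{thm:appr_clt}. Slutsky's theorem then yields $\fancyt\stackrel{d}{\to}N(\varkappa,1)$.

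The main obstacle is verifying that $\mathcal{R}_n=o_p(\sqrt{p})$ uniformly in the relevant perturbations, particularly under $H_{\ell}$, where $v$ itself has norm of order $p^{1/4}$ so that the normalization $\sqrt{p}$ is exactly matched by the deterministic drift, leaving no slack. Cross terms such as $v'\Sigma^{-1}(S(\widehat\lambda)-S(\lambda_0))y$ and $v'\Sigma^{-1}(f(x,\widehat\alpha)-f(x,\alpha^*))$ must be handled carefully; Cauchy--Schwarz combined with Proposition \ref{prop:Sigma_diff_bound}, Assumption \ref{ass:spec_norm_G}, and the $n^{-1/2}$ parametric rates yields contributions of order $p^{1/4}\cdot O_p(n^{-1/2})\cdot O_p(\sqrt{n})=O_p(p^{1/4})=o(\sqrt{p})$, which together with the series-approximation and plug-in-$\widehat\gamma$ bounds already developed for Theorem \ref{thm:stat_appr_SAR} closes the argument.
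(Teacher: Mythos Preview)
Your proposal is correct and follows essentially the same route as the paper: part (1) is immediate from Theorems \ref{thm:stat_appr_SAR} and \ref{thm:appr_clt}, while parts (2) and (3) come from the expansion of $n\widehat m_n$ already carried out (under $H_\ell$, with $H_0$ as the special case $h=0$) in the proofs of Theorems \ref{thm:stat_appr} and \ref{thm:stat_appr_SAR}, where the drift term arises from $A_4=(\theta_0-\widehat f)'\Sigma(\widehat\gamma)^{-1}(\theta_0-\widehat f)$. The paper's own proof simply reads ``omitted as similar to the proof of Theorem \ref{thm:stat_properties}'', so your write-up is in fact more detailed than the original while using the same decomposition and the same order bounds.
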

	\section{Nonparametric spatial weights}\label{sec:nonpar_ext}
	In this section we are motivated by settings where spatial dependence occurs through nonparametric functions of raw distances (this may be geographic, social, economic, or any other type of distance), as is the case in \cite{pinkse2002}, for example. In their kind of setup, $d_{ij}$ is a raw distance between units $i$ and $j$  and the corresponding element of the spatial weight matrix is given by $w_{ij}=\zeta_0\left(d_{ij}\right)$, where $\zeta_0(\cdot)$ is an unknown nonparametric function. \cite{pinkse2002} use such a setup in a SAR model like (\ref{sar_model}), but with a linear regression function. In contrast, in keeping with the focus of this paper we instead model dependence in the errors in this manner.  Our formulation is rather general, covering, for example, a specification like $w_{ij}=f\left(\gamma_0,\zeta_0\left(d_{ij}\right)\right)$, with $f(\cdot)$ a \emph{known} function, $\gamma_0$ an \emph{unknown} parameter of possibly increasing dimension, and $\zeta_0(\cdot)$ an \emph{unknown} nonparametric function. For the sake of simplicity, we do not permit the $x_i$ in this section to be generated by such nonparametric weight matrices although they can be generated from other, known weight matrices.
	
	Let $\Xi$ be a compact space of functions, on which we will specify more conditions later. For notational simplicity we abstract away from the SAR dependence in the responses. Thus we consider (\ref{model}), but with
	\begin{equation}
		u_{i}=\sum_{s=1}^{\infty }b_{is}\left(\gamma_0,\zeta_0\left({z_i}\right)\right)\varepsilon _{s},  \label{linpro_np}
	\end{equation}
	where $\zeta_0(\cdot)=\left(\zeta_{01}(\cdot),\ldots,\zeta_{0d_\zeta}(\cdot)\right)'$ is a fixed-dimensional vector of real-valued nonparametric functions with $\zeta_{0\ell}\in\Xi$ for each $\ell=1,\ldots,d_\zeta$, and ${z}_i$ a fixed-dimensional vector of data, independent of the $\varepsilon_s$, $s\geq 1$, with support $\mathcal{Z}$. One can also take $z_i$ to be a fixed distance measure. We base our estimation on approximating each $\zeta_{0\ell}({z_i})$, $\ell=1,\ldots,d_\zeta$, with the series representation $\delta_{0\ell}'\varphi_\ell({z_i})$, where $\varphi_\ell\left({z_i}\right)\equiv\varphi_{\ell}$ is an $r_\ell\times 1$ ($r_\ell\rightarrow\infty$ as $n\rightarrow\infty$) vector of basis functions with typical function $\varphi_{\ell k}$, $k=1,\ldots,r_\ell$. The set of linear combinations $\delta_\ell'\varphi_\ell({z_i})$ forms the sequence of sieve spaces $\Phi_{r_\ell}\subset \Xi$ as $r_\ell\rightarrow \infty$, for any $\ell=1,\ldots,d_\zeta$, and
	\begin{equation}\label{zeta_series_appr}
		\zeta_{0\ell}\left({z}\right)=\delta_{0\ell}'\varphi_{\ell}+\nu_{\ell},
	\end{equation}
	with the following restriction on the function space $\Xi$:
	\setcounter{assumption}{0}
	\renewcommand\theassumption{NPN.\arabic{assumption}}
	\begin{assumption}\label{ass:approx_error_order_np}
		For some scalars $\kappa_\ell>0$, $\left \Vert \nu_\ell \right \Vert_{w_z} =O\left(r_\ell^{-\kappa_\ell}\right),$ as $r_\ell\rightarrow\infty$, $\ell=1,\ldots,d_\zeta$, where $w_z\geq 0$ is the largest value such that $\sup_{z\in\mathcal{Z}} \mathcal{E}\left\Vert z\right\Vert^{w_z}<\infty$
	\end{assumption}
	Just as Assumption \ref{ass:approx_error} implied (\ref{appr_error}), by Lemma 1 of \cite{Lee2016}, we obtain
	\begin{equation}\label{approx_error_zeta}
		\sup_{z\in\mathcal{Z}}\mathcal{E}\left( \nu^2_{\ell}\right)=O\left(r_\ell^{-2\kappa_\ell}\right), \ell=1,\ldots,d_\zeta.
	\end{equation}
	Thus we now have an infinite-dimensional nuisance parameter $\zeta_0(\cdot)$ and increasing-dimensional nuisance parameter $\gamma$. Writing $\sum_{\ell=1}^{d_\zeta}r_\ell=r$ and $\tau=(\gamma',\delta'_1,\ldots,\delta'_{d_\zeta})'$, which has increasing dimension $d_\tau=d_\gamma+r $, define
	$
	\varsigma(r)=\sup_{z\in\mathcal{Z}; \ell=1,\ldots,d_\zeta}\left\Vert\varphi_{\ell}\right\Vert.$ Write $\Sigma(\tau)$ for the covariance matrix of the $n\times 1$ vector of $u_i$ in (\ref{linpro_np}), with $\delta_{\ell}'\varphi_\ell$ replacing each admissible function $\zeta_{\ell}(\cdot)$. This is analogous to the definition of $\Sigma(\gamma)$ in earlier sections, and indeed after conditioning on $z$ it can be treated in a similar way because $d_\gamma\rightarrow\infty$ was already permitted. For example, suppose that $u=(I_n-W)^{-1}\varepsilon$, where $\left\Vert W\right\Vert<1$ and the elements satisfy $w_{ij}=\zeta_0\left(d_{ij}\right)$, $i,j=1,\ldots,n$, for some fixed distances $d_{ij}$ and unknown function $\zeta_0(\cdot)$, see e.g. \cite{Pinkse1999}. Approximating $\zeta_0(z)= \tau_0'\varphi(z)+\nu$, for some $r\times 1$ basis function vector $\varphi(z)$ and approximation error $\nu$, we define $W(\tau)$ as the $n\times n$ matrix with elements $w_{ij}(\tau)=\tau_0'\varphi\left(d_{ij}\right)$, and set $\Sigma(\tau)=\text{var}\left((I_n-W(\tau))^{-1}\varepsilon\right)=\sigma_0^2(I_n-W(\tau))^{-1}(I_n-W'(\tau))^{-1}$.
	
	For
	any admissible values $\beta $, $\sigma ^{2}$ and $\tau $, the redefined (multiplied
	by $2/n$) negative quasi log likelihood function based on using the
	approximations (\ref{theta_series_appr}) and (\ref{zeta_series_appr}) is
	\begin{equation}
		{L}(\beta,\sigma^2 ,\tau )=\ln \left( 2\pi \sigma ^{2}\right) +\frac{1}{n}%
		\ln \left \vert \Sigma \left( \tau \right) \right \vert +\frac{1}{n\sigma
			^{2}}(y-\Psi \beta )^{\prime }\Sigma \left( \tau \right) ^{-1}(y-\Psi
		\beta ),  \label{likelihood_zeta}
	\end{equation}%
	which is minimised with respect to $\beta $ and $\sigma ^{2}$ by
	\begin{eqnarray}
		\bar{\beta}\left( \tau \right) &=&\left( \Psi ^{\prime }\Sigma \left(
		\tau \right) ^{-1}\Psi \right) ^{-1}\Psi ^{\prime }\Sigma \left( \tau
		\right) ^{-1}y,  \label{betapmle_zeta} \\
		\bar{\sigma}^{2}\left( \tau \right) &=&{n^{-1}}y^{\prime }E(\tau
		)^{\prime }M(\tau )E(\tau )y,  \label{sigmapmleconc_zeta}
	\end{eqnarray}%
	where $M(\tau )=I_{n}-E(\tau )\Psi \left( \Psi ^{\prime }\Sigma (\tau
	)^{-1}\Psi \right) ^{-1}\Psi ^{\prime }E(\tau )^{\prime }$ and $E(\tau )$
	is the $n\times n$ symmetric matrix such that $E(\tau )E(\tau )^{\prime }=\Sigma
	(\tau )^{-1}$. Thus the concentrated likelihood function is
	\begin{equation}
		\mathcal{L}(\tau )=\ln (2\pi )+\ln \bar{\sigma}^{2}(\tau )+\frac{1}{n}%
		\ln \left \vert \Sigma \left( \tau \right) \right \vert.
		\label{conc_likelihood_zeta}
	\end{equation}%
	Again, for compact $\Gamma$ and sieve coefficient space $\Delta$, the QMLE of $\tau _{0}$ is $\widehat{\tau}=\text{arg min}_{\tau
		\in \Gamma\times\Delta }\mathcal{L}(\tau )$ and the QMLEs of $\beta $ and $\sigma ^{2}$
	are $\widehat{\beta}=\bar{\beta}\left( \widehat{\tau}\right) $ and $\widehat{\sigma}%
	^{2}=\bar{\sigma}^{2}\left( \widehat{\tau}\right) $. The
	series estimate of $\theta _{0} $ is defined as in (\ref{theta_estimate}). Define also the product Banach space $\mathcal{T}=\Gamma\times \Xi^{d_\zeta}$ with norm $\left\Vert \left(\gamma',\zeta'\right)'\right\Vert_{\mathcal{T}_w}=\left\Vert \gamma\right\Vert+\sum_{\ell=1}^{d_\zeta}\left\Vert \zeta_\ell\right\Vert_{w}$, and consider the map $\Sigma:\mathcal{T}^{o}\rightarrow \mathcal{M}^{n\times n}$, where $\mathcal{T}^{o}$ is an open subset of $\mathcal{T}$.
	\begin{assumption}\label{ass:Sigma_frech_der_np}
		The map $\Sigma:\mathcal{T}^o\rightarrow \mathcal{M}^{n\times n}$ is Fr\'echet-differentiable on $\mathcal{T}^o$ with Fr\'echet-derivative denoted $D\Sigma\in\mathscr{L}\left(\mathcal{T}^o,\mathcal{M}^{n\times n}\right)$. Furthermore, conditional on ${z}$, the map $D\Sigma$ satisfies
		\begin{equation}\label{Sigma_semi_fre_der_bdd}
			\sup_{t\in\mathcal{T}^o}\left\Vert D\Sigma(t)\right\Vert_{\mathscr{L}\left(\mathcal{T}^o,\mathcal{M}^{n\times n}\right)}\leq C,
		\end{equation}
		on its domain $\mathcal{T}^o$.
	\end{assumption}
	\noindent This assumption can be checked in a similar way to how we checked Assumption \ref{ass:Sigma_frech_der}, where a diverging dimension for the argument was already permitted.
	\begin{proposition}\label{prop:Sigma_diff_bound_np}
		If Assumption \ref{ass:Sigma_frech_der_np} holds, then for any $t_1,t_2\in\mathcal{T}^o$, conditional on $z$,
		\begin{equation}\label{Sigma_diff_np}
			\left\Vert\Sigma\left(t_1\right)-\Sigma\left(t_2\right)\right\Vert\leq C \varsigma(r)\left\Vert t_1-t_2\right\Vert.
		\end{equation}
	\end{proposition}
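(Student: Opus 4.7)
The plan is to follow closely the proof of Proposition \ref{prop:Sigma_diff_bound}, but with an additional step that accounts for how the sieve approximation translates the function-space norm on $\mathcal{T}$ into the Euclidean norm on the sieve coefficients; the conditioning on $z$ is what makes this translation well-defined.

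First, I would apply the Fr\'echet mean value inequality. Since $\mathcal{T}^o$ is open and we may assume without loss of generality that the line segment $\{t_2 + s(t_1-t_2): s\in[0,1]\}$ lies in $\mathcal{T}^o$ (otherwise the argument can be run on pieces), the fundamental theorem of calculus for Fr\'echet-differentiable Banach-space-valued maps gives
\begin{equation*}
\Sigma(t_1) - \Sigma(t_2) = \int_0^1 (D\Sigma)\bigl(t_2 + s(t_1 - t_2)\bigr)\bigl[t_1 - t_2\bigr]\,ds.
\end{equation*}
Passing to spectral norms, pulling the supremum outside, and invoking the uniform operator-norm bound \eqref{Sigma_semi_fre_der_bdd} from Assumption \ref{ass:Sigma_frech_der_np}, I obtain (conditional on $z$)
\begin{equation*}
\left\Vert\Sigma(t_1) - \Sigma(t_2)\right\Vert \leq \sup_{t\in\mathcal{T}^o}\left\Vert D\Sigma(t)\right\Vert_{\mathscr{L}(\mathcal{T}^o,\mathcal{M}^{n\times n})}\cdot \left\Vert t_1-t_2\right\Vert_{\mathcal{T}_w} \leq C\,\left\Vert t_1-t_2\right\Vert_{\mathcal{T}_w}.
\end{equation*}

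Next, I would translate the product Banach-space norm $\left\Vert \cdot\right\Vert_{\mathcal{T}_w}$ into the Euclidean norm on the sieve coefficients. For $t_j=(\gamma_j',\delta_{j,1}'\varphi_1,\ldots,\delta_{j,d_\zeta}'\varphi_{d_\zeta})'$, the nonparametric components of $t_1-t_2$ are the sieve differences $(\delta_{1,\ell}-\delta_{2,\ell})'\varphi_\ell$. By Cauchy-Schwarz and the definition of $\varsigma(r)=\sup_{z\in\mathcal{Z};\,\ell}\left\Vert\varphi_\ell\right\Vert$,
\begin{equation*}
\left\Vert(\delta_{1,\ell}-\delta_{2,\ell})'\varphi_\ell\right\Vert_{w}\leq \left\Vert\delta_{1,\ell}-\delta_{2,\ell}\right\Vert \cdot \sup_{z\in\mathcal{Z}}\left\Vert\varphi_\ell(z)\right\Vert(1+\left\Vert z\right\Vert^2)^{-w/2}\leq \varsigma(r)\,\left\Vert\delta_{1,\ell}-\delta_{2,\ell}\right\Vert.
\end{equation*}
Summing over the fixed number $d_\zeta$ of functional components, and combining with the $\left\Vert\gamma_1-\gamma_2\right\Vert$ contribution (using that $\varsigma(r)\geq 1$ for the basis systems routinely employed in the sieve literature, and Cauchy-Schwarz to pass from absolute sums to the Euclidean norm on the stacked vector), I arrive at $\left\Vert t_1-t_2\right\Vert_{\mathcal{T}_w}\leq C\,\varsigma(r)\,\left\Vert t_1-t_2\right\Vert$. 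Chaining this with the display above yields the stated inequality.

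The Fr\'echet mean value inequality and the weighted-sup-norm computation are standard; the only nontrivial point is where and how the factor $\varsigma(r)$ enters, and this is pinned down entirely by the step in which the linear sieve $(\delta_{1,\ell}-\delta_{2,\ell})'\varphi_\ell$ is bounded uniformly in $z$. This is the main obstacle in the sense that it determines the rate in $r$ at which the Lipschitz bound deteriorates, and it explains the appearance of $\varsigma(r)$ which does not arise in the purely finite-dimensional Proposition \ref{prop:Sigma_diff_bound}.
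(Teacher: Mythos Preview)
Your proof is correct and follows essentially the same approach as the paper: a Fr\'echet mean value inequality (the paper invokes Theorem 1.8 of \cite{Ambrosetti1995} rather than the integral form, but the content is identical) followed by the same Cauchy--Schwarz bound $\|(\delta_{1,\ell}-\delta_{2,\ell})'\varphi_\ell\|_w\leq\varsigma(r)\|\delta_{1,\ell}-\delta_{2,\ell}\|$ to pass from the $\mathcal{T}_w$-norm to the Euclidean norm on the stacked sieve coefficients. The paper handles the $\|\gamma_1-\gamma_2\|$ term by absorbing it into ``a suitably chosen $C$'' rather than your explicit $\varsigma(r)\geq 1$, but this is cosmetic.
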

	\begin{corollary}
		\label{cor:Sigma_equic_np} For any $t ^{* }\in \mathcal{T}^o $ and any $\eta
		>0 $, conditional on $z$,
		\begin{equation}
			\underset{n\rightarrow \infty }{\overline{\lim }}\sup_{t \in \left \{
				t :\left \Vert t -t ^{*}\right \Vert <\eta \right
				\} \cap \mathcal{T}^o }\left \Vert \Sigma (t )-\Sigma \left( t ^{*
			}\right) \right \Vert <C\varsigma(r)\eta .  \label{Sigma_equic_cond_np}
		\end{equation}
	\end{corollary}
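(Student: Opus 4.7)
My plan is to derive this corollary as an essentially immediate consequence of Proposition \ref{prop:Sigma_diff_bound_np}, mirroring the parametric reduction of Corollary \ref{cor:Sigma_equic} from Proposition \ref{prop:Sigma_diff_bound}. Since any $t$ and $t^{*}$ in the set $\{t:\|t-t^{*}\|<\eta\}\cap\mathcal{T}^{o}$ both lie in $\mathcal{T}^{o}$, I would apply the proposition directly, conditional on $z$, with $t_{1}=t$ and $t_{2}=t^{*}$, to obtain
\[
\|\Sigma(t)-\Sigma(t^{*})\|\leq C\varsigma(r)\|t-t^{*}\|.
\]
The neighborhood constraint $\|t-t^{*}\|<\eta$ then immediately bounds the right-hand side by $C\varsigma(r)\eta$.

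The next step is to take the supremum over $t$ in the specified neighborhood. Because the constant $C$ appearing in Proposition \ref{prop:Sigma_diff_bound_np} comes, via Assumption \ref{ass:Sigma_frech_der_np}, from the uniform bound $\sup_{t\in\mathcal{T}^{o}}\|D\Sigma(t)\|_{\mathscr{L}(\mathcal{T}^{o},\mathcal{M}^{n\times n})}\leq C$ and does not depend on the particular $t$, the inequality is uniform in $t$ and is preserved under the supremum. Any boundary issue in passing from a strict to a non-strict inequality is absorbed by the paper's convention that the generic constant $C$ may be enlarged slightly. Finally, I would take $\overline{\lim}_{n\to\infty}$ on the left-hand side; the right-hand side $C\varsigma(r)\eta$ remains a valid upper bound (with $n$-dependence entering only through $r$), delivering the stated conclusion (\ref{Sigma_equic_cond_np}).

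I do not anticipate any genuine obstacle: the real work is entirely contained in Proposition \ref{prop:Sigma_diff_bound_np}, and this corollary is merely a uniform-neighborhood repackaging of its pointwise Lipschitz-type bound. The one item to keep track of is the distinction between the Banach-space norm $\|\cdot\|_{\mathcal{T}_{w}}$ on $\mathcal{T}$ that governs the argument side and the spectral norm on the matrix side, but both are fixed in the setup preceding Assumption \ref{ass:Sigma_frech_der_np}, so no confusion should arise.
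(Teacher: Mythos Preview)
Your proposal is correct and matches the paper's approach: the corollary is presented as an immediate consequence of Proposition \ref{prop:Sigma_diff_bound_np}, and the paper gives no separate proof beyond that implication. Your argument---apply the proposition with $t_1=t$, $t_2=t^*$, use $\|t-t^*\|<\eta$, then take the supremum and $\overline{\lim}$---is exactly the intended derivation.
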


	\begin{assumption}
		\label{ass:sigma_unif_np}  $c\leq \sigma ^{2}\left( \tau \right)\leq C$ for $\tau \in
		\Gamma\times \Delta$, conditional on $z$.
	\end{assumption}
	Denote $\Sigma\left(\tau_0\right)=\Sigma_0$. Note that this is not the true covariance matrix, which is $\Sigma\equiv \Sigma\left(\gamma_0,\zeta_0\right)$.
	\begin{assumption}
		\label{ass:gamma_ident_np} $\tau _{0}\in \Gamma\times\Delta $ and, for any $\eta >0$, conditional on $z$,
		\begin{equation}
			\varliminf_{n\rightarrow \infty }\inf_{\tau \in \overline{\mathcal{N}}%
				^{\tau }(\eta )}\frac{n^{-1}tr\left( \Sigma (\tau )^{-1}\Sigma_0 \right) }{%
				\left \vert \Sigma (\tau )^{-1}\Sigma_0 \right \vert ^{1/n}}>1,
			\label{gamma_ident_np}
		\end{equation}%
		where $\overline{\mathcal{N}}^{\tau }(\eta )=(\Gamma\times\Delta) \setminus \mathcal{N}%
		^{\tau }(\eta )$ and $\mathcal{N}^{\tau }(\eta )=\left \{ \tau
		:\left
		\Vert \tau -\tau _{0}\right \Vert <\eta \right \} \cap (\Gamma\times \Delta) $.
	\end{assumption}

	\begin{remark}
		Expressing the identification condition in Assumption \ref{ass:gamma_ident_np} in terms of $\tau$ implies that identification is guaranteed via the sieve spaces $\Phi_{r_\ell}$, $\ell=1,\ldots,d_\zeta$. This approach is common in the sieve estimation literature, see e.g.  \cite{Chen2007}, p. 5589, Condition 3.1.
	\end{remark}
	\begin{theorem}\label{thm:consistency_np}
		Under either $H_0$ or $H_1$,  Assumptions \ref{ass:approx_error}-\ref{ass:errors_epsilon_3} (with \ref{ass:Sigma_spec_norm} and \ref{ass:errors_epsilon_3} holding for $t\in\mathcal{T}$ rather than $\gamma\in\Gamma$), \ref{ass:Psi_GLS_type}, \ref{ass:approx_error_order_np}-\ref{ass:gamma_ident_np}  and $p^{-1}+\left(\min_{\ell=1,\ldots,d_\zeta}r_\ell\right)^{-1}+\left(d_\gamma+p+\max_{\ell=1,\ldots,d_\zeta}r_\ell\right)/n\rightarrow 0$ as $n\rightarrow\infty$, $\left \Vert \left(\widehat{\tau},\hat\sigma^2\right)-\left(\tau _{0},\sigma^2_0\right)\right \Vert \overset{p}{
			\longrightarrow }0.
		$
	\end{theorem}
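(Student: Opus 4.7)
The plan is to follow the same template as Theorem \ref{thm:consistency}, adapting it to the enlarged parameter space $\Gamma\times\Delta$ of dimension $d_\tau=d_\gamma+r$. Write the criterion as $\mathcal{L}(\tau)=\ln(2\pi)+\ln\bar{\sigma}^2(\tau)+n^{-1}\ln|\Sigma(\tau)|$, and let $\mathcal{L}^*(\tau)=\ln(2\pi)+\ln\sigma^2_*(\tau)+n^{-1}\ln|\Sigma(\tau)|$ be its deterministic analogue, where $\sigma^2_*(\tau):=n^{-1}\sigma_0^2\,tr(\Sigma(\tau)^{-1}\Sigma)$. The standard argmin argument delivers $\widehat\tau\overset{p}{\to}\tau_0$ once we show (i) $\sup_{\tau\in\Gamma\times\Delta}|\mathcal{L}(\tau)-\mathcal{L}^*(\tau)|\overset{p}{\to}0$, and (ii) for any $\eta>0$, $\varliminf_{n\to\infty}\inf_{\tau\in\overline{\mathcal N}^\tau(\eta)}\{\mathcal{L}^*(\tau)-\mathcal{L}^*(\tau_0)\}>0$.

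For step (i), I would substitute $y=\Psi\beta_0+e+u$ in (\ref{sigmapmleconc_zeta}) and expand $\bar{\sigma}^2(\tau)$ into a main trace term in $u$, two cross terms in $e$ and $u$, and a quadratic form in $e$. The approximation terms are handled by (\ref{approx_error}) together with (\ref{approx_error_zeta}), exactly as in Theorem \ref{thm:consistency}; the main quadratic form in $u=B\varepsilon$ is controlled via Markov's inequality and $\text{Var}(\varepsilon'A\varepsilon)\leq C\|A\|_F^2$ for symmetric $A$, yielding a pointwise bound of order $(d_\gamma+p+r)/n$ under Assumptions \ref{ass:Sigma_spec_norm}, \ref{ass:errors_epsilon_3} and \ref{ass:Psi_GLS_type}. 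Uniform convergence over the growing parameter space is obtained by a stochastic equicontinuity argument built from Proposition \ref{prop:Sigma_diff_bound_np} and Corollary \ref{cor:Sigma_equic_np}, combined with a covering argument on the compact set $\Gamma\times\Delta$. For step (ii), the inequality between arithmetic and geometric means applied to the eigenvalues of $\Sigma(\tau)^{-1}\Sigma_0$, combined with Assumption \ref{ass:gamma_ident_np}, identifies $\tau_0$ at the population level; to bridge the gap between $\Sigma_0=\Sigma(\tau_0)$ and the actual true covariance $\Sigma=\Sigma(\gamma_0,\zeta_0)$, I invoke Proposition \ref{prop:Sigma_diff_bound_np} together with (\ref{approx_error_zeta}) to conclude $\|\Sigma-\Sigma_0\|\to 0$, so that $tr(\Sigma(\tau)^{-1}\Sigma)=tr(\Sigma(\tau)^{-1}\Sigma_0)+o(1)$ uniformly over $\Gamma\times\Delta$.

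The main obstacle is controlling the interplay between the $\varsigma(r)$ factor appearing in the Lipschitz bound of Proposition \ref{prop:Sigma_diff_bound_np} and the sieve approximation error from Assumption \ref{ass:approx_error_order_np}: reconciling the two representations of the true covariance requires $\varsigma(r)\cdot\min_\ell r_\ell^{-\kappa_\ell}\to 0$, which must be absorbed by the rate conditions on $r_\ell$ together with the dimensional rate $(d_\gamma+p+\max_\ell r_\ell)/n\to 0$. Secondary technical difficulties are maintaining uniform convergence as the effective dimension $d_\tau$ diverges, and ensuring that the identification condition in Assumption \ref{ass:gamma_ident_np}, stated at a single point $\tau_0$, transfers to a proper minorization over the shrinking neighborhood $\mathcal{N}^\tau(\eta)$; the Fr\'echet-based Lipschitz control again supplies what is needed. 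Finally, consistency of $\widehat\sigma^2$ is obtained by plugging $\widehat\tau$ into (\ref{sigmapmleconc_zeta}) and combining step (i) evaluated along the sequence $\widehat\tau$ with the continuity of $\sigma^2_*(\cdot)$ and $\|\Sigma-\Sigma_0\|\to 0$, so that $\widehat\sigma^2\overset{p}{\to}\sigma^2_*(\tau_0)=\sigma_0^2+o(1)$.
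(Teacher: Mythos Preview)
Your overall template---uniform convergence of $\mathcal{L}(\tau)-\mathcal{L}^*(\tau)$ plus identification via the AM--GM inequality and Assumption \ref{ass:gamma_ident_np}---is exactly what the paper does; it simply refers back to the proof of Theorem \ref{thm:cons_SAR} (with $T(\lambda)=I_n$) and notes the one place where the nonparametric setting differs, namely the equicontinuity step.

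Where you diverge from the paper is in how you handle the $\varsigma(r)$ factor, and this matters because the theorem's hypotheses do not include any condition of the form $\varsigma(r)\cdot\min_\ell r_\ell^{-\kappa_\ell}\to 0$. You invoke Proposition \ref{prop:Sigma_diff_bound_np}, which gives a Lipschitz bound in the \emph{coefficient} norm $\|t_1-t_2\|$ with constant $C\varsigma(r)$, and then you are forced to absorb $\varsigma(r)$ via extra rate assumptions. The paper's point is that this is unnecessary: Assumption \ref{ass:Sigma_frech_der_np} bounds the Fr\'echet derivative directly in the \emph{function-space} norm $\|\cdot\|_{\mathcal{T}_w}$ by a constant $C$ that does not depend on $r$. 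Since the sieve spaces $\Phi_{r_\ell}$ sit inside $\Xi$, the optimization set can be viewed as a subset of the compact space $\mathcal{T}$; the equicontinuity covering argument then uses finitely many $\mathcal{T}_w$-balls, and on each ball $\|\Sigma(\tau)-\Sigma(\tau_*)\|\leq C\varepsilon$ with no $\varsigma(r)$. The same observation handles your bridging step: $\|\Sigma-\Sigma_0\|\leq C\sum_\ell\|\nu_\ell\|_{w_z}=O(\sum_\ell r_\ell^{-\kappa_\ell})\to 0$ directly from Assumption \ref{ass:Sigma_frech_der_np} and Assumption \ref{ass:approx_error_order_np}, again without $\varsigma(r)$.

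So your argument is not wrong, but it imposes a rate condition the theorem does not state; the paper's route---working in the function-space metric where the Lipschitz constant is uniformly bounded and exploiting compactness of $\mathcal{T}$---delivers the result under the hypotheses as written.
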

	
	\begin{theorem}\label{thm:stat_appr_np}
		Under the conditions of Theorems \ref{thm:stat_appr} and \ref{thm:consistency_np}, but with $\tau$ and $\mathcal{T}$ replacing $\gamma$ and $\Gamma$ in assumptions prefixed with R and $p\rightarrow\infty$, \[\left(\min_{\ell=1,\ldots,d_\zeta}r_\ell\right)^{-1}+
		\frac{p^2}{n}+\frac{\sqrt{n}}{p^{\mu+1/4}} +p^{1/2}\varsigma (r) \left(\frac{ d_{\gamma }+\displaystyle\max_{\ell=1,\ldots,d_\zeta}r_\ell}{\sqrt{n}}+\sqrt{\sum_{\ell=1}^{d_\zeta}r_\ell^{-2\kappa_\ell}}\right )\rightarrow 0,
		\] as $n\rightarrow\infty$, and $H_0$, $
		\fancyt-\left({\sigma_0^{-2}\varepsilon'\fancyv\varepsilon-p}\right)/{\sqrt{2p}}=o_p(1).
		$
	\end{theorem}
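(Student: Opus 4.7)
The plan is to mirror the proof of Theorem \ref{thm:stat_appr}, with the essential new work being the control of perturbations that arise from replacing the true error covariance $\Sigma=\Sigma(\gamma_0,\zeta_0)$ by $\Sigma(\widehat{\tau})$, where $\widehat{\tau}=(\widehat{\gamma}',\widehat{\delta}_1',\ldots,\widehat{\delta}'_{d_\zeta})'$ is the sieve-based QMLE. Since Theorem \ref{thm:consistency_np} already delivers consistency of $(\widehat{\tau},\widehat{\sigma}^2)$ and Proposition \ref{prop:Sigma_diff_bound_np} gives a Lipschitz-type bound for $\Sigma(\cdot)$ on the Banach space $\mathcal{T}$, much of the Hong-style decomposition in Theorem \ref{thm:stat_appr} transfers directly. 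Under $H_0$ I would decompose
\[y-f(x,\widehat{\alpha}) = u + \{f(x,\alpha_0)-f(x,\widehat{\alpha})\},\qquad \widehat{\theta}-f(x,\widehat{\alpha})=\Psi\widehat{\beta}-\Psi\beta_0 - e - \{f(x,\widehat{\alpha})-f(x,\alpha_0)\},\]
and use Assumptions \ref{ass:alpha_order} and \ref{ass:alpha_der} to discard the $\widehat{\alpha}$-perturbation, and Assumption \ref{ass:approx_error} (with (\ref{appr_error})) to discard the sieve error $e$, just as in Theorem \ref{thm:stat_appr}.

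The genuinely new ingredient is bounding $\|\Sigma(\widehat{\tau})^{-1}-\Sigma^{-1}\|$. Writing $\tau_0=(\gamma_0,\delta_{01},\ldots,\delta_{0d_\zeta})$ for the sieve projection of the true parameters, the triangle inequality and Proposition \ref{prop:Sigma_diff_bound_np} give
\[\|\Sigma(\widehat{\tau})-\Sigma\| \;\leq\; \|\Sigma(\widehat{\tau})-\Sigma(\tau_0)\|+\|\Sigma(\tau_0)-\Sigma\| \;\leq\; C\varsigma(r)\Bigl(\|\widehat{\tau}-\tau_0\|+\textstyle\sqrt{\sum_{\ell=1}^{d_\zeta} r_\ell^{-2\kappa_\ell}}\Bigr),\]
where the second term uses the sieve error bound (\ref{approx_error_zeta}) evaluated in the Banach norm on $\mathcal{T}$. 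A rate of the form $\|\widehat{\tau}-\tau_0\|=O_p\bigl((d_\gamma+\max_\ell r_\ell)/\sqrt{n}\bigr)$ would then be obtained via a score-Hessian expansion of the concentrated likelihood $\mathcal{L}(\tau)$ around $\tau_0$, with curvature coming from Assumption \ref{ass:gamma_ident_np} and score control from Assumptions \ref{ass:Sigma_spec_norm} and \ref{ass:errors_epsilon_3}. Combining through the identity $\Sigma(\widehat{\tau})^{-1}-\Sigma^{-1}=\Sigma(\widehat{\tau})^{-1}(\Sigma-\Sigma(\widehat{\tau}))\Sigma^{-1}$ and Assumption \ref{ass:Sigma_spec_norm} yields the same bound for $\|\Sigma(\widehat{\tau})^{-1}-\Sigma^{-1}\|$.

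The rate hypothesis $p^{1/2}\varsigma(r)\bigl\{(d_\gamma+\max_\ell r_\ell)/\sqrt{n}+\sqrt{\sum_\ell r_\ell^{-2\kappa_\ell}}\bigr\}\to 0$ in the statement is tailored precisely so that the quadratic form involving $\Sigma(\widehat{\tau})^{-1}-\Sigma^{-1}$ sandwiched between vectors of size $\|u\|$ and $\|\widehat{\theta}-f(x,\alpha_0)\|$ is $o_p(\sqrt{p})$ by spectral-norm and Cauchy-Schwarz arguments of the kind already used in Theorem \ref{thm:stat_appr}; after also replacing $\widehat{\sigma}^2$ by $\sigma_0^2$ via Theorem \ref{thm:consistency_np}, the remainder of the argument is identical to Theorem \ref{thm:stat_appr} and delivers the stated $o_p(1)$ equivalence with $(\sigma_0^{-2}\varepsilon'\mathscr{V}\varepsilon-p)/\sqrt{2p}$. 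The main obstacle is the rate-of-convergence step for $\widehat{\tau}$: Theorem \ref{thm:consistency_np} supplies only consistency, and the Hessian of $\mathcal{L}$ involves $\Sigma(\tau)^{-1}\Sigma_j(\tau)$ for each coordinate of the growing-dimensional $\tau$, so uniform control of the score, the Hessian, and their fluctuations under the sieve basis — where the $\varsigma(r)$ factor arises through $\|\varphi_\ell\|$ — requires careful operator-norm bookkeeping and is where the bulk of the technical work lies.
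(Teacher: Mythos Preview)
Your proposal is correct and follows essentially the same route as the paper: the paper supplies the rate $\|\widehat{\tau}-\tau_0\|=O_p(\sqrt{d_\tau/n})$ via a separate lemma (Lemma \ref{lemma:tau_order}, proved exactly like Lemma \ref{lemma:gamma_order}), applies Proposition \ref{prop:Sigma_diff_bound_np} to bound $\|\Sigma(\widehat{\tau})-\Sigma\|$ as you describe, and then repeats the mean-value-theorem decomposition from the proof of Theorem \ref{thm:stat_appr} to control the leading quadratic form and the remainder terms $A_1,\ldots,A_4$. The only nuance is that the $(d_\gamma+\max_\ell r_\ell)/\sqrt{n}$ factor in the theorem's rate condition is not the $\ell_2$-rate of $\widehat{\tau}-\tau_0$ itself but emerges from the $\ell_1$-summation over the $d_\tau$ coordinate-wise derivative terms in the MVT step---so the key device is that expansion (exploiting $\sup_t\|n^{-1/2}u'\Sigma(t)^{-1}\Psi\|=O_p(\sqrt{p})$ on both sides) rather than a bare spectral-norm/Cauchy--Schwarz bound, which would lose a factor of $n$ through $\|u\|^2$.
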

	\begin{theorem}\label{thm:stat_properties_np}
		Let the conditions of Theorems \ref{thm:appr_clt} and \ref{thm:stat_appr_np} hold, but with $\tau$ and $\mathcal{T}$ replacing $\gamma$ and $\Gamma$ in assumptions prefixed with R. Then
		(1) $\fancyt\overset{d}{\rightarrow} N(0,1)$ under $H_0$, (2)
		$\fancyt$ is a consistent test statistic,
		(3) $\fancyt\overset{d}{\rightarrow}N \left(\varkappa,1\right)$ under local alternatives $H_{\ell}$.
	\end{theorem}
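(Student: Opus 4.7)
The plan is to mirror the proof of Theorem \ref{thm:stat_properties}, using Theorem \ref{thm:stat_appr_np} in place of Theorem \ref{thm:stat_appr} to reduce $\fancyt$ to a centered quadratic form in $\varepsilon$ under $H_0$, and extending the decomposition to the alternative regimes for the remaining two parts. All three claims will then follow by combining this reduction with the quadratic-form CLT in Theorem \ref{thm:appr_clt} and a drift calculation.

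For part (1), Theorem \ref{thm:stat_appr_np} gives, under $H_0$,
\[\fancyt = \frac{\sigma_0^{-2}\varepsilon'\fancyv\varepsilon-p}{\sqrt{2p}}+o_p(1),\]
and Theorem \ref{thm:appr_clt}, whose hypotheses are now imposed with $\tau$ and $\mathcal{T}$ in place of $\gamma$ and $\Gamma$, sends the right-hand side to $N(0,1)$; Slutsky's theorem closes this case.

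For parts (2) and (3), I would decompose $\widehat v = \widehat\theta - f(x,\widehat\alpha)$ and $\widehat u = y - f(x,\widehat\alpha)$ relative to their null and alternative components. Writing $v = \theta_0 - f(x,\alpha^*)$ and using a Taylor expansion of $f(x,\cdot)$ around $\alpha^*$ (justified by Assumption \ref{ass:alpha_der}), one obtains
\[\widehat v'\Sigma(\widehat\tau)^{-1}\widehat u = v'\Sigma^{-1}v + v'\Sigma^{-1}u + R_n,\]
where $R_n$ collects the errors arising from estimating $\alpha^*$, $\gamma_0$ and $\zeta_0(\cdot)$, the sieve approximation of $\theta_0(\cdot)$, and the replacement of $\Sigma$ by $\Sigma(\widehat\tau)$. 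Under $H_1$, $n^{-1}v'\Sigma^{-1}v$ converges in probability to a strictly positive constant since $P[\theta_0(x_i)\neq f(x_i,\alpha^*)]>0$ and $\Sigma^{-1}$ is uniformly well-conditioned by Assumption \ref{ass:Sigma_spec_norm}; therefore $n\widehat m_n/\sqrt{2p} \to \infty$ in probability, yielding consistency. Under $H_\ell$, $v = p^{1/4}n^{-1/2}h$ gives $\sigma_0^{-2}v'\Sigma^{-1}v = \sigma_0^{-2}p^{1/2}n^{-1}h'\Sigma^{-1}h$, which after division by $\sqrt{2p}$ produces the deterministic drift
\[\varkappa = \frac{1}{\sqrt{2}\,\sigma_0^2}\plim_{n\to\infty}\frac{h'\Sigma^{-1}h}{n},\]
while the cross term $v'\Sigma^{-1}u$ and the null-type sampling fluctuation of $\widehat v'\Sigma(\widehat\tau)^{-1}\widehat u$ combine, via Theorems \ref{thm:stat_appr_np} and \ref{thm:appr_clt}, into an independent $N(0,1)$ summand.

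The main obstacle will be controlling $R_n$ at order $o_p(\sqrt{p})$ in both the consistency and local-alternative regimes, so that neither the drift nor the limit distribution is contaminated; this is delicate because $R_n$ mixes sieve approximation error in $\zeta_0(\cdot)$ with QMLE estimation error in $\tau_0$ and $\alpha_0$, all feeding through $\Sigma(\widehat\tau)^{-1}$. The tools are Theorem \ref{thm:consistency_np} for $\widehat\tau$, Assumption \ref{ass:alpha_order} for the $n^{-1/2}$-rate of $\widehat\alpha$, the Fr\'echet-smoothness bound Proposition \ref{prop:Sigma_diff_bound_np} to linearize $\Sigma(\widehat\tau)^{-1}-\Sigma^{-1}$, and the approximation bounds Assumptions \ref{ass:approx_error} and \ref{ass:approx_error_order_np}. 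The rate conditions already imposed in Theorem \ref{thm:stat_appr_np} are precisely those that make these cross terms $o_p(\sqrt{p})$; under $H_\ell$ the additional $\sqrt{p}$-order contribution sits in the leading $v$-quadratic part rather than in the remainder, so the same rate budget suffices without further restriction.
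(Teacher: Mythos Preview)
Your approach is correct and coincides with the paper's, which simply omits the proof as identical in structure to that of Theorem \ref{thm:stat_properties}: parts (1) and (3) follow by combining Theorem \ref{thm:stat_appr_np} (whose proof already delivers the drift $(p^{1/2}/n)h'\Sigma_0^{-1}h$ under $H_\ell$) with Theorem \ref{thm:appr_clt}, and part (2) by showing $\widehat m_n$ converges to a strictly positive limit driven by $n^{-1}v'\Sigma^{-1}v$. One correction to your displayed decomposition for parts (2)--(3): the contribution $(\widehat\theta-\theta_0)'\Sigma(\widehat\tau)^{-1}u$, which reduces to $\varepsilon'\fancyv\varepsilon$ and is precisely the ``null-type sampling fluctuation'' you invoke afterward, is $O_p(p)$ and must appear as an explicit leading term alongside $v'\Sigma^{-1}v$ rather than be absorbed into an $R_n=o_p(\sqrt{p})$ remainder.
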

	
	\section{Fixed-regressor residual-based bootstrap test}\label{sec:bootstrap}
	The performance of nonparametric tests based on asymptotic distributions often leaves something to be desired in finite samples. An alternative approach is to use the
	bootstrap approximation. In this section, we propose a bootstrap version of our test, focusing on the setting of Section \ref{sec:SAR_ext}. In our simulations and empirical studies, we consider test
	statistics based on both $\widehat{m}_{n}=\widehat{\sigma }^{-2}\widehat{v}^{\prime }\Sigma \left( \widehat{\gamma }\right) ^{-1}\widehat{u}/n$ and $\widetilde{m}_{n}=\widehat{\sigma }^{-2}(\widehat{u}^{\prime }\Sigma \left(
	\widehat{\gamma }\right) ^{-1}\widehat{u}-\widehat{\eta }^{\prime }\Sigma
	\left( \widehat{\gamma }\right) ^{-1}\widehat{\eta })/n$, where $\widehat{\eta
	}=S(\hat\lambda)y-\widehat{\theta }$, i.e., the residual from nonparametric estimation, $\hat u=S(\hat\lambda)y-f(x,\hat \alpha)$, and $\hat v=\hat\theta-f(x,\hat \alpha)$.
	Analogous to the definition of $\mathscr{T}_n$, define the statistic $
	\mathscr{T}_{n}^{a}={\left(n\widetilde{m}_{n}-p\right)}/{\sqrt{2p}}.$
	In the case of no spatial autoregressive term, and under the power series, $\mathscr{T}_{n}^{a}$ and $\mathscr{T}_{n}$ are numerically identical, as was
	observed by \cite{Hong1995}. However, in the SARSE setting a
	difference arises due to the spatial structure in the response $y$. We show
	that $\mathscr{T}_{n}^{a}-\mathscr{T}_{n}=o_{p}(1)$ under the null or local alternatives in Theorem \ref{thm:tilde_hat_equiv} in the online supplementary appendix.
	
	The bootstrap versions of the
	test statistics $\fancyt$ and $\fancyt^{a}$ are
	\begin{eqnarray*}
		\fancyt^* &=&\frac{n\widehat{m}_{n}^{\ast }-p}{\sqrt{2p}}=\frac{\widehat{%
				\sigma }^{\ast -2}\widehat{{v}}^{\ast \prime }\Sigma \left( \widehat{%
				\gamma }^{\ast }\right) ^{-1}\widehat{{u}}^{\ast }-p}{\sqrt{2p}} \\
		\fancyt^{a\ast } &=&\frac{n\widetilde{m}_{n}^{\ast }-p}{\sqrt{2p}}=\frac{%
			\widehat{\sigma }^{\ast -2}(\widehat{{u}}^{\ast \prime }\Sigma \left(
			\widehat{\gamma }^{\ast }\right) ^{-1}\widehat{{u}}^{\ast }-\widehat{%
				\eta }^{\ast \prime }\Sigma \left( \widehat{\gamma }^{\ast }\right) ^{-1}%
			\widehat{\eta }^{\ast })-p}{\sqrt{2p}},
	\end{eqnarray*}%
	respectively, where $\widehat{{u}}^{\ast }$ is the bootstrap residual vector under the
	null, $\widehat{\eta }^*$ is the bootstrap residual vector under the alternative, $%
	\widehat{{v}}^{\ast }=\widehat{\theta }^{\ast }(x)-f(x,\widehat{%
		\alpha }^{\ast })$, and $\left(\widehat{%
		\gamma }^{\ast },\lambda^*,\widehat{\sigma }^{\ast 2},\widehat{\theta }^{\ast },\widehat{\alpha }^{\ast }\right)$ is the estimator using the
	bootstrap sample. We elaborate on the bootstrap statistics using the SARARMA($m_{1}$,$%
	m_{2},m_{3}$) model as an example:
	\begin{equation*}
		y=\sum_{k=1}^{m_{1}}\lambda _{k}W_{1k}y+\theta (x)+u\text{, }%
		u=\sum_{l=1}^{m_{2}}\gamma _{2l}W_{2l}u+\sum_{l=1}^{m_{3}}\gamma
		_{3l}W_{3l}\xi +\xi .
	\end{equation*}%
	Following \cite{Jin2015}, we first deduct the empirical mean of the
	residual vector from
	\begin{equation*}
		\widehat{{\xi }}=\left( \sum_{l=1}^{m_{3}}\widehat{\gamma }%
		_{3l}W_{3l}+I_{n}\right) ^{-1}\left( I_{n}-\sum_{l=1}^{m_{2}}\widehat{\gamma
		}_{2l}W_{2l}\right) \left( y-\sum_{k=1}^{m_{1}}\widehat{\lambda }_{k}W_{1k}y-%
		\widehat{{\theta }}_{n}\right)
	\end{equation*}%
	to obtain $\widetilde{{\xi }}=(I_{n}-\frac{1}{n}l_{n}l_{n}^{\prime })%
	\widehat{{\xi }}$. Next, we sample randomly with replacement $n$
	times from elements of $\widetilde{{\xi }}$ to obtain a vector of $%
	\mathbf{\xi }^{\ast }.$ After this, we generate the bootstrap sample $%
	y^{\ast }$ by treating $\widehat{{f}}=f(x,\widehat{\alpha })$, $\hat\lambda$ and
	$\widehat{\gamma }$ as the true parameter:%
	\begin{equation*}
		y^{\ast }=\left( I_{n}-\sum_{k=1}^{m_{1}}\widehat{\lambda }_{k}W_{1k}\right)
		^{-1}\left( \widehat{{f}}+\left( I_{n}-\sum_{l=1}^{m_{2}}\widehat{%
			\gamma }_{2l}W_{2l}\right) ^{-1}\left( \sum_{l=1}^{m_{3}}\widehat{\gamma }%
		_{3l}W_{3l}+I_{n}\right) {\xi }^{\ast }\right) .
	\end{equation*}%
	We estimate the model based on the bootstrap sample $y^{\ast }$ using QMLE
	to obtain the estimator $\widehat{{\theta }}^{\ast }=\psi
	^{\prime }\widehat{\beta }^{\ast },$ $\widehat{\lambda }^{\ast }$, and $
	\widehat{\gamma }^{\ast }$ under the alternative hypothesis and $\widehat{
		\alpha }^{\ast }$ under the null hypothesis of $\theta (x)=f(x,\alpha
	_{0}).$ Then, $\widehat{{\eta}}^{\ast }=y^{\ast }-\sum_{k=1}^{m_{1}}
	\widehat{\lambda }_{k}^{\ast }W_{1k}y^{\ast }-\widehat{{\theta }}^{\ast }$, $\widehat{{u }}^{\ast }=y^{\ast }-\sum_{k=1}^{m_{1}}%
	\widehat{\lambda }_{k}^{\ast }W_{1k}y^{\ast }-f(x,\widehat{\alpha }^{\ast }).$
	
	This procedure is repeated $B$ times to obtain the sequence $\left \{
	\mathscr{T}_{nj}^{\ast }\right \} _{j=1}^{B}$. We reject the null when $p^{\ast
	}=B^{-1}\sum_{j=1}^{B}\mathbf{1(}\fancyt<\mathscr{T}_{nj}^{\ast })$ is smaller than the
	given level of significance. An identical procedure holds for the test based on $\fancyt^{a\ast }.$ The asymptotic validity of the bootstrap method can be shown as in Theorem 4 of \cite{Su2017} and Lemma 2 in \cite{Jin2015}, and detailed analysis can be found in the supplementary appendix, see proof of Theorem TS.1.
	
	\section{Finite sample performance}\label{sec:mc}

	\subsection{Parametric error spatial structure}

	\sloppy Taking $n=60,100,200$, we choose two specifications to generate $y$ from the
	SARARMA($m_{1}$,$m_{2},m_{3}$) models:
	\begin{eqnarray*}
		\text{SARARMA(}0\text{,1,0): } &&y=\theta (x)+u,\text{ }u=\gamma
		_{2}W_{2}u+\xi \\
		\text{SARARMA(}1\text{,0,1): } &&y=\lambda _{1}W_{1}y+\theta (x)+u,\text{ }%
		u=\gamma _{3}W_{3}\xi +\xi ,
	\end{eqnarray*}%
	where $\xi$ is $N(0,I_n)$. The DGP of $\theta (x)$ is
	\begin{equation*}
		\theta (x_{i})=x_{i}^{\prime }\alpha +cp^{1/4}n^{-1/2}  \sin
		(x_{i}^{\prime }\alpha ),
	\end{equation*}%
	where $x_{i}^{\prime }\alpha =1+x_{1i}+x_{2i}$, with $%
	x_{1i}=(z_{i}+z_{1i})/2 $, $x_{2i}=(z_{i}+z_{2i})/2$. We choose two
	settings: compactly supported regressors where $z_{i},z_{1i}$ and $z_{2i}$ are i.i.d., $U[0,2\pi ]$ and unboundedly supported
	regressors where $z_{i}, z_{1i}$ and $z_{2i}$ are i.i.d. $N(0,1).$ We report the compact support setting in the main text, while the results for unbounded support are reported in the online supplement.
	
	\sloppy We use three series bases for our experiments: power
	(polynomial) series of third and fourth order ($p=10,p=15$),
	trigonometric series $trig_1=\left(1, \sin\left(x_1\right), \sin\left(x_1/2\right), \sin\left(x_2\right), \sin\left(x_2/2\right),\cos\left(x_1\right),\cos\left(x_1/2\right),\cos\left(x_2\right),\cos\left(x_2/2\right)\right)'$ and $trig_2=\left(trig_1',\sin\left(x_1^2\right),\cos\left(x_1^2\right),\sin\left(x_2^2\right),\cos\left(x_2^2\right)\right)'$, and the B-spline bases of fourth and seventh order
	($p=9,p=14$),  We also set $\gamma _{2}=0.3$, $\lambda _{1}=0.3$
	and $\gamma _{3}=0.4$; the value $c=0,3,6$ indicates the null hypothesis and
	the local alternatives. The spatial weight matrices are generated using
	LeSage's code make\_neighborsw from http://www.spatial-econometrics.com/,
	where the row-normalized sparse matrices are generated by choosing a specific
	number of the closest locations from randomly generated coordinates and we
	set the number of neighbors to be $n/20$. We employ 100 bootstrap
	replications in each of 500 Monte Carlo replications except for the
	SARARMA(1,0,1) design with $n=200$, where we set 50 bootstrap replications
	in view of the computation time. We report the rejection frequencies
	of tests based on bootstrap critical values in the main text, while tests
	based on asymptotic critical values are reported in the online supplement.
	
	Tables \ref{table:newsims1}-\ref{table:newsims4} report the empirical rejection frequencies using the
	bootstrap test statistics $\fancyt^\ast$ (Tables \ref{table:newsims1}, \ref{table:newsims3}) and $\fancyt^{a \ast}$ (Tables \ref{table:newsims2}, \ref{table:newsims4}), when nominal levels are given by 1\%, 5\% and
	10\%. To see how the choice of $p$ and the basis functions affect
	small sample outcomes, we report two sets of results for each basis function family:
	the first row for each value of $c$ is from the smaller $p$ ($p=9$ or $10$), while the second row is from
	the larger $p$ ($p=14$ or $15$). We summarize some
	important findings. First, we see that for most DGPs, our bootstrap test is
	closer to the nominal level than the asymptotic test (reported in
	the online supplement) although the sizes of both types of tests improve
	generally as the sample size increases. Second, both bootstrap and
	asymptotic tests are powerful in detecting any deviations from linearity in
	the local alternatives. The patterns are similar across all cases:
	the bootstrap generally affords better size control, albeit not always.
	
	All three types of bases give qualitatively similar results, but we
	note that $\fancyt^\ast=\fancyt^{ \ast a }$ when using polynomial series under
	the SARARMA(0,1,0) model, as observed in \cite{Hong1995}. When using
	trigonometric and B-spline series, tests based on these two
	statistics give slightly different rejection rates. However, under the
	SARARMA(1,0,1) model, all series give quantitatively different results, as
	illustrated in Tables \ref{table:newsims3} and \ref{table:newsims4}. When using B-spline bases, $p=14$ does not perform well compared to $p=9$. In the
	other cases, both choices of $p$ work well.

	\subsection{Nonparametric error spatial structure}
	
	Now we examine finite sample performance in the setting of Section \ref{sec:nonpar_ext}. The
	three DGPs of $\theta (x)$\ are the same as the parametric setting but we
	generate the $n\times n$ matrix $W^*$ as $w^*_{ij}=\Phi (-d_{ij}) I(c_{ij}<0.05)$ if $%
	i\neq j$, and $w^*_{ii}=0$, where $\Phi (\cdot )$ is the standard
	normal cdf, $d_{ij}\sim$iid $U[-3,3]$, and $c_{ij}\sim$iid $U[0,1]$. From this construction, we ensure
	that $W^*$ is sparse with no more than $5\%$ elements being nonzero. Then, $y$
	is generated from $
	y=\theta (x)+u,\text{ }u=Wu+\xi ,
	$
	where $\xi\sim N(0,I_n)$ and $W=W^*/{1.2\overline{\varphi }\left(W^*\right)}$, ensuring the
	existence of $(I-W)^{-1}$. In estimation,
	we know the distance $d_{ij}$ and the indicator $I(c_{ij}<0.05)$, but we do not
	know the functional form of $w_{ij}$, so we approximate elements
	in $W$ by $
	\widehat{w}_{ij}=\sum_{l=0}^{r}a_{l}d_{ij}^{l}I(c_{ij}<0.05)\text{ if }%
	i\neq j\text{; }\widehat{w}_{ii}=0.$
	
	Table \ref{table:sims5} reports the rejection rates using 500 Monte Carlo simulation at the
	5\% asymptotic level 1.645 using polynomial bases with $r=2,3,4, 5$ and $p=10,15, 20$. We take $n=150, 300, 500, 600, 700$, larger sample sizes than earlier because two nonparametric functions must be estimated in this spatial setting. The two largest bandwidths ($r=5, p=20$) are only employed for the largest sample size $n=700$. We observe a clear pattern of rejection rates approaching the theoretical level as sample size increases. Power improves as $c$ increases for all designs and is non-trivial in all cases even for $c=3$. Sizes are acceptable for $n=500$, particularly when $p=15$. Size performance improves further as $n=600$, indicating asymptotic stability. Note that with two diverging bandwidths ($p$ and $r$), we expect sizes to improve in a diagonal pattern going from top left corner to bottom right corner in Table \ref{table:sims5}. This is indeed the case. For $n=700$, we observe that the pairs $(r,p)=(5,15),(5,20)$ deliver acceptable sizes.

	\section{Empirical applications}\label{sec:apps}
	
	{ In this section, we illustrate the specification test presented in
		previous sections using several empirical examples. }
	\subsection{\protect  Conflict alliances}
	
	This example is based on a study of how a network of military alliances and
	enmities affects the intensity of a conflict, conducted by \cite{konig2017networks}.
	They stress that understanding the role of informal networks of military alliances and
	enmities is important not only for predicting outcomes, but also for
	designing and implementing policies to contain or put an end to violence.
	\cite{konig2017networks} obtain a closed-form characterization of the Nash
	equilibrium and perform an empirical analysis using data on the Second Congo
	War, a conflict that involves many groups in a complex network of informal
	alliances and rivalries.
	
	To study the fighting effort of each group the authors use a panel data model with individual fixed
	effects, where key regressors include total fighting effort of allies and enemies. They further correct the
	potential spatial correlation in the error term by using a spatial heteroskedasticity and autocorrelation robust standard error. We use their data and the main structure of the
	specification and build a cross-sectional SAR(2) model
	with two weight matrices, $W^A$ ($W^A_{ij}=1$ if group $i$
	and $j$ are allies, and $W^A_{ij}=0$ otherwise) and $W^E$ ($W^E_{ij}=1$ if
	group $i$ and $j$ are enemies, and $W^E_{ij}=0$ otherwise):
	\begin{equation*}
		y=\lambda _{1}W^A y+\lambda _{2}W^E y+\mathbf{1}%
		_{n}\beta _{0}+X\beta +u,
	\end{equation*}%
	where $y$ is a vector of fighting efforts of each group and $X$ includes the current rainfall, rainfall from the last year, and
	their squares.\footnote{We follow the analysis in the original paper and do not row normalize. This is because the economic content of the weight matrices is defined by total fights of allies or enemies.} To consider the spatial correlation in the error term, we
	consider both the Error SARMA(1,0) and Error SARMA(0,1) structures. For these, we employ a spatial weight matrix $W^d$,
	based on the inverse distance between group locations and set to be 0 after 150 km, following \cite{konig2017networks}. The idea is that geographical spatial correlation dies out as groups become further apart. We also report results using a nonparametric estimator of the spatial weights, as described in Section \ref{sec:nonpar_ext} and studied in simulations in Section \ref{sec:mc}. For the nonparametric estimator we take $r=2$.
	
	In the original dataset, there are 80 groups, but groups 62 and 63 have the
	same variables and the same locations, so we drop one group and end up with
	a sample of 79 groups. We use data from 1998 as an example and further use
	the pooled data from all years as a robustness check. $H_{0}$ stands for restricted model
	where the linear functional form of the regression is imposed, while
	$H_{1}$ stands for the unrestricted
	model where we use basis functions comprising of power series with $p=10$. In
	all our specifications, the test statistics are negative, so we cannot
	reject the null hypothesis that the model is correctly specified. As Table \ref{table:conflict} indicates, this failure to reject the null persists when we use pooled data from 13 years, yielding 1027 observations. Thus we conclude that a linear
	specification is not inappropriate for this setting. One possible reason is that the original regression, though linear, has already included the squared terms of the rainfall as regressors. This finding is robust to using the bootstrap tests of Section \ref{sec:bootstrap}, which generally yield smaller p-values but unchanged conclusions.
	
	\subsection{\protect  Innovation spillovers}
	
	This example is based on the study of the impact of R\&D on growth from \cite{Bloom2013}.
	They develop a general framework incorporating two types of spillovers: a
	positive effect from technology (knowledge) spillovers and a negative
	`business stealing' effect from product market rivals. They implement this
	model using panel data on U.S. firms.
	
	We consider the Productivity Equation in \cite{Bloom2013}:
	\begin{equation}\label{prod_eqn}
		\ln y=\varphi _{1}\ln (R\&D)+\varphi _{2}\ln (Sptec)+\varphi _{3}\ln
		(Spsic)+\varphi _{4}X+error,
	\end{equation}%
	where $y$ is a vector of sales, $R\&D$ is a vector of R\&D stocks, and regressors in $X$ include the
	log of capital ($Capital$), log of labor ($Labor$), $R\&D$, a dummy for missing values in $R\&D$, a price
	index, and two spillover terms constructed as the log of $W_{SIC} R\&D$ ($Spsic$)
	and the log of $W_{TEC} R\&D$ ($Sptec$), where $W_{SIC}$ measures the product
	market proximity and $W_{TEC}$ measures the technological proximity. Specifically, they define
	\[W_{SIC,ij}={S_{i}S_{j}^{\prime }}/{(S_{i}S_{i}^{\prime
		})^{1/2}(S_{j}S_{j}^{\prime })^{1/2}},W_{TEC,ij}={
		T_{i}T_{j}^{\prime }}/{(T_{i}T_{i}^{\prime })^{1/2}(T_{j}T_{j}^{\prime
		})^{1/2}},\]
	where  $S_{i}=(S_{i1},S_{i2},\ldots,S_{i597})'$, with $S_{ik}$
	being the share of patents of firm $i$ in the four digit industry $k$ and
	$T_{i}=(T_{i1},T_{i2},\ldots,T_{i426})'$, with $T_{i\tau }$
	being the share of patents of firm $i$ in technology class $\tau $.  Focusing on a cross-sectional analysis, we use observations from
	the year 2000 and obtain a sample size of
	577. Both weight matrices are row normalized.

	The column FE of Table \ref{table:R&D1} is from Table 5 of \cite{Bloom2013} based on
	their panel fixed effects estimation and we use it as a baseline for comparison. This table reports results for SARARMA(0,1,0) models using $W_{SIC}$ and $W_{TEC}$ separately. We use
	both $W_{SIC}$ and $W_{TEC}$ simultaneously in SARARMA(0,2,0), SARARMA(0,2,0), and Error MESS(2)
	models, reported in Table \ref{table:R&D2}. In all of these specifications, the test statistics are larger than
	1.645, so we reject the null hypothesis of the linear specification. This rejection also persists with the bootstrap tests, albeit the p-values go up compared to the asymptotic ones. However, we can say even more as our estimation also sheds light on spatial effects in the disturbances in (\ref{prod_eqn}). As before $H_{0}$ imposes linear functional form of the regressors, while
	$H_{1}$ uses the nonparametric series estimate employing power series with $p=10$. Regardless of the specification of the regression function, the disturbances suggest a strong spatial effect as the coefficients on $W_{TEC}$ and $W_{SIC}$ are large in magnitude.

	\subsection{\protect  Economic growth}
	
	{ The final example is based on the study of economic growth rate in \cite{Ertur2007}. Knowledge accumulated in one area might depend on knowledge
		accumulated in other areas, especially in its neighborhoods, implying the
		possible existence of spatial spillover effects. These questions are of interest to both economists as well as regional scientists. For example, \cite{Autant-Bernard2011} examine spatial spillovers associated with research
		expenditures for French
		regions, while \cite{Ho2013} examine the
		international spillover of economic growth through bilateral trade amongst OECD countries, Cuaresma and
		Feldkircher (2013) study spatially correlated growth spillovers in the
		income convergence process of Europe, and \cite{Evans2014}
		study the spatial dynamics of growth and convergence in Korean regional
		incomes. }
	
	{ In this section, we want to test the linear SAR
		model specification in \cite{Ertur2007}. Their dataset covers a sample
		of 91 countries over the period 1960-1995, originally from \cite{Heston2002}, obtained from the Penn World Tables (PWT version 6.1).
		The variables in use include per worker income in 1960 ($y60$) and 1995 ($%
		y95 $), average rate of growth between 1960 and 1995 $(gy)$, average
		investment rate of this period ($s$), and average rate of growth of
		working-age population ($n_{p}$).
		
		{ \cite{Ertur2007} consider
			the model%
			\begin{equation}\label{Ertur_model}
				y=\lambda Wy+X\beta
				+WX\theta +\varepsilon,
			\end{equation}%
			where the dependent variable is log real income per worker $\ln (y95)$,
			elements of the explanatory variable $X=(x_1',x_2')$ include log investment rate $%
			\ln (s)=x_1$ and log physical capital effective rate of depreciation\textbf{\ }$%
			\ln (n_{p}+0.05)=x_2$, with corresponding subscripted coefficients $\beta_1,\beta_2,\theta_1,\theta_2$. A restricted regression based on the joint constraints
			$\beta _{1}=-\beta _{2}$\ and $\theta _{1}=-\theta _{2}$ (these constraints are implied by economic theory) is also considered
			in \cite{Ertur2007}. The model (\ref{Ertur_model}) has regressors $(X,WX)$ and iid errors, so the test derived in Section \ref{sec:SAR_ext}
			can be directly applied here. Denoting by $d_{ij}$ the great-circle distance between the capital cities of countries $i$ and $j$, one construction of $W$ takes $w_{ij}=d_{ij}^{-2}$ while the other takes $w_{ij}=e^{-2d_{ij}}$, following \cite{Ertur2007}.
			
			Table \ref{table:growth} presents the estimation and testing results based on using linear and quadratic power series basis functions with $p=10$ and a sample size of $n=91$. We impose additive structure in our estimation to at least alleviate the curse of dimensionality, always a concern in nonparametric estimation. We also use only linear and quadratic basis functions to reduce the number of terms for series estimation. 	
			
			We cannot reject linearity of the regression function for the unrestricted model. On the other hand, linearity is rejected for the restricted model, which is the preferred specification of \cite{Ertur2007}, with $w_{ij}=e^{-2d_{ij}}$. Thus, not only can we conclude that the specification of the model is under suspicion we can also infer this is due to constraints from economic theory. The findings are supported by the bootstrap tests of Section \ref{sec:bootstrap}.
			
			\section{Conclusion}\label{sec:conc}
			
			This paper justifies a specification test for the regression function in a model where data are spatially dependent. The test is based on a nonparametric series approximation and is consistent. The paper also allows for some robustness in error spatial dependence by permitting this to be a nonparametric function of an underlying economic distance. On the other hand, our Section \ref{sec:SAR_ext} imposes correct specification of the spatial weight matrices $W_j$ in the SAR context, while \cite{Sun2020} allows these to be nonparametric functions as well. Thus our work acts as a complement to existing results in the literature and future work might combine both aspects.
			
			\bigskip \pagebreak
			
			\small{
				\begin{table}
					\centering{
						\begin{tabular}{cccccccccccc}
							\hline\hline
							$\fancyt^{\ast }$ & \multicolumn{11}{c}{SARARMA(0,1,0)} \\
							\hline
							&  & \textbf{PS} &  &  &  & \textbf{Trig} &  &  &  & \textbf{B-s} &
							\\
							\hline
							{\small $n=60$} & {\small 0.01} & {\small 0.05} & {\small 0.10} &  & {\small %
								0.01} & {\small 0.05} & {\small 0.10} &  & {\small 0.01} & {\small 0.05} &
							{\small 0.1} \\
							\hline
							${\small c=0}$ & ${\small 0.008}$ & ${\small 0.032}$ & ${\small 0.084}$ &  &
							${\small 0.004}$ & ${\small 0.04}$ & ${\small 0.092}$ &  & ${\small 0.006}$
							& ${\small 0.048}$ & ${\small 0.104}$ \\
							& ${\small 0.004}$ & ${\small 0.038}$ & ${\small 0.096}$ &  & ${\small 0.004}
							$ & ${\small 0.038}$ & ${\small 0.094}$ &  & ${\small 0.006}$ & ${\small %
								0.034}$ & ${\small 0.098}$ \\
							${\small c=3}$ & ${\small 0.036}$ & ${\small 0.154}$ & ${\small 0.296}$ &  &
							${\small 0.092}$ & ${\small 0.276}$ & ${\small 0.39}$ &  & ${\small 0.098}$
							& ${\small 0.292}$ & ${\small 0.470}$ \\
							& ${\small 0.154}$ & ${\small 0.414}$ & ${\small 0.62}$ &  & ${\small 0.056}$
							& ${\small 0.22}$ & ${\small 0.374}$ &  & ${\small 0.036}$ & ${\small 0.150}$
							& ${\small 0.292}$ \\
							${\small c=6}$ & ${\small 0.22}$ & ${\small 0.544}$ & ${\small 0.748}$ &  & $%
							{\small 0.454}$ & ${\small 0.794}$ & ${\small 0.908}$ &  & ${\small 0.432}$
							& ${\small 0.814}$ & ${\small 0.938}$ \\
							& ${\small 0.844}$ & ${\small 0.992}$ & ${\small 1}$ &  & ${\small 0.314}$ &
							${\small 0.714}$ & ${\small 0.872}$ &  & ${\small 0.174}$ & ${\small 0.542}$
							& ${\small 0.732}$ \\
							\hline
							${\small n=100}$ &  &  &  &  &  &  &  &  &  &  &  \\
							\hline
							${\small c=0}$ & ${\small 0.006}$ & ${\small 0.044}$ & ${\small 0.098}$ &  &
							${\small 0.002}$ & ${\small 0.04}$ & ${\small 0.09}$ &  & ${\small 0.008}$ &
							${\small 0.038}$ & ${\small 0.110}$ \\
							& ${\small 0.012}$ & ${\small 0.046}$ & ${\small 0.096}$ &  & ${\small 0.006}
							$ & ${\small 0.036}$ & ${\small 0.102}$ &  & ${\small 0.01}$ & ${\small 0.056%
							}$ & ${\small 0.108}$ \\
							${\small c=3}$ & ${\small 0.294}$ & ${\small 0.578}$ & ${\small 0.72}$ &  & $%
							{\small 0.214}$ & ${\small 0.508}$ & ${\small 0.626}$ &  & ${\small 0.272}$
							& ${\small 0.572}$ & ${\small 0.712}$ \\
							& ${\small 0.37}$ & ${\small 0.662}$ & ${\small 0.824}$ &  & ${\small 0.194}$
							& ${\small 0.45}$ & ${\small 0.632}$ &  & ${\small 0.188}$ & ${\small 0.46}$
							& ${\small 0.63}$ \\
							${\small c=6}$ & ${\small 0.95}$ & ${\small 0.99}$ & ${\small 0.996}$ &  & $%
							{\small 0.902}$ & ${\small 0.99}$ & ${\small 0.998}$ &  & ${\small 0.922}$ &
							${\small 0.994}$ & ${\small 1}$ \\
							& ${\small 0.992}$ & ${\small 0.998}$ & ${\small 1}$ &  & ${\small 0.856}$ &
							${\small 0.988}$ & ${\small 1}$ &  & ${\small 0.852}$ & ${\small 0.98}$ & $%
							{\small 0.998}$ \\
							\hline
							$\small n=200$ &  &  &  &  &  &  &  &  &  &  &  \\
							\hline
							${\small c=0}$ & ${\small 0.006}$ & ${\small 0.038}$ & ${\small 0.104}$ &  &
							${\small 0.008}$ & ${\small 0.042}$ & ${\small 0.112}$ &  & ${\small 0.024}$
							& ${\small 0.074}$ & ${\small 0.132}$ \\
							& ${\small 0.006}$ & ${\small 0.048}$ & ${\small 0.088}$ &  & ${\small 0.016}
							$ & ${\small 0.038}$ & ${\small 0.082}$ &  & ${\small 0.022}$ & ${\small %
								0.074}$ & ${\small 0.144}$ \\
							${\small c=3}$ & ${\small 0.178}$ & ${\small 0.402}$ & ${\small 0.55}$ &  & $%
							{\small 0.162}$ & ${\small 0.374}$ & ${\small 0.532}$ &  & ${\small 0.314}$
							& ${\small 0.516}$ & ${\small 0.654}$ \\
							& ${\small 0.282}$ & ${\small 0.56}$ & ${\small 0.694}$ &  & ${\small 0.136}$
							& ${\small 0.346}$ & ${\small 0.468}$ &  & ${\small 0.19}$ & ${\small 0.37}$
							& ${\small 0.542}$ \\
							${\small c=6}$ & ${\small 0.846}$ & ${\small 0.968}$ & ${\small 0.984}$ &  &
							${\small 0.796}$ & ${\small 0.95}$ & ${\small 0.98}$ &  & ${\small 0.89}$ & $%
							{\small 0.976}$ & ${\small 0.986}$ \\
							& ${\small 0.982}$ & ${\small 0.998}$ & ${\small 1}$ &  & ${\small 0.776}$ &
							${\small 0.934}$ & ${\small 0.974}$ &  & ${\small 0.852}$ & ${\small 0.946}$
							& ${\small 0.982}$\\
							\hline\hline
						\end{tabular}
						\caption{Rejection probabilities of SARARMA(0,1,0) using bootstrap
							test $\fancyt^{\ast }$ at 1, 5, 10\% levels, power series (\textbf{PS}), trigonometric (\textbf{Trig}) and B-spline (\textbf{B-s}) bases.}\label{table:newsims1}}
				\end{table}

				\begin{table}
					\begin{tabular}{cccccccccccc}
						\hline\hline
						$\fancyt^{a \ast }$ & \multicolumn{11}{c}{SARARMA(0,1,0)} \\
						\hline
						&  & \textbf{PS} &  &  &  & \textbf{Trig} &  &  &  & \textbf{B-s} &
						\\
						\hline
						{\small $n=60$} & {\small 0.01} & {\small 0.05} & {\small 0.10} &  & {\small %
							0.01} & {\small 0.05} & {\small 0.10} &  & {\small 0.01} & {\small 0.05} &
						{\small 0.1} \\
						\hline
						${\small c=0}$ & ${\small 0.008}$ & ${\small 0.032}$ & ${\small 0.084}$ &  &
						${\small 0.004}$ & ${\small 0.04}$ & ${\small 0.092}$ &  & ${\small 0.01}$ &
						${\small 0.07}$ & ${\small 0.132}$ \\
						& ${\small 0.004}$ & ${\small 0.038}$ & ${\small 0.096}$ &  & ${\small 0.004}
						$ & ${\small 0.038}$ & ${\small 0.094}$ &  & ${\small 0.004}$ & ${\small %
							0.038}$ & ${\small 0.096}$ \\
						${\small c=3}$ & ${\small 0.036}$ & ${\small 0.154}$ & ${\small 0.296}$ &  &
						${\small 0.09}$ & ${\small 0.274}$ & ${\small 0.384}$ &  & ${\small 0.164}$
						& ${\small 0.376}$ & ${\small 0.558}$ \\
						& ${\small 0.154}$ & ${\small 0.414}$ & ${\small 0.62}$ &  & ${\small 0.056}$
						& ${\small 0.22}$ & ${\small 0.376}$ &  & ${\small 0.036}$ & ${\small 0.152}$
						& ${\small 0.288}$ \\
						${\small c=6}$ & ${\small 0.22}$ & ${\small 0.544}$ & ${\small 0.748}$ &  & $%
						{\small 0.444}$ & ${\small 0.794}$ & ${\small 0.906}$ &  & ${\small 0.56}$ &
						${\small 0.892}$ & ${\small 0.956}$ \\
						& ${\small 0.844}$ & ${\small 0.992}$ & ${\small 1}$ &  & ${\small 0.312}$ &
						${\small 0.714}$ & ${\small 0.87}$ &  & ${\small 0.174}$ & ${\small 0.532}$
						& ${\small 0.732}$ \\
						\hline
						${\small n=100}$ &  &  &  &  &  &  &  &  &  &  &  \\
						\hline
						${\small c=0}$ & ${\small 0.006}$ & ${\small 0.044}$ & ${\small 0.098}$ &  &
						${\small 0.004}$ & ${\small 0.038}$ & ${\small 0.092}$ &  & ${\small 0.012}$
						& ${\small 0.048}$ & ${\small 0.112}$ \\
						& ${\small 0.012}$ & ${\small 0.046}$ & ${\small 0.096}$ &  & ${\small 0.006}
						$ & ${\small 0.036}$ & ${\small 0.106}$ &  & ${\small 0.01}$ & ${\small 0.056%
						}$ & ${\small 0.106}$ \\
						${\small c=3}$ & ${\small 0.294}$ & ${\small 0.578}$ & ${\small 0.72}$ &  & $%
						{\small 0.214}$ & ${\small 0.504}$ & ${\small 0.63}$ &  & ${\small 0.28}$ & $%
						{\small 0.564}$ & ${\small 0.72}$ \\
						& ${\small 0.37}$ & ${\small 0.662}$ & ${\small 0.824}$ &  & ${\small 0.194}$
						& ${\small 0.45}$ & ${\small 0.632}$ &  & ${\small 0.196}$ & ${\small 0.466}$
						& ${\small 0.64}$ \\
						${\small c=6}$ & ${\small 0.95}$ & ${\small 0.99}$ & ${\small 0.996}$ &  & $%
						{\small 0.900}$ & ${\small 0.99}$ & ${\small 0.998}$ &  & ${\small 0.932}$ &
						${\small 0.992}$ & ${\small 1}$ \\
						& ${\small 0.992}$ & ${\small 0.998}$ & ${\small 1}$ &  & ${\small 0.856}$ &
						${\small 0.988}$ & ${\small 1}$ &  & ${\small 0.86}$ & ${\small 0.984}$ & $%
						{\small 0.998}$ \\
						\hline
						$\small n=200$ &  &  &  &  &  &  &  &  &  &  &  \\
						\hline
						${\small c=0}$ & ${\small 0.006}$ & ${\small 0.038}$ & ${\small 0.104}$ &  &
						${\small 0.012}$ & ${\small 0.046}$ & ${\small 0.114}$ &  & ${\small 0.014}$
						& ${\small 0.048}$ & ${\small 0.132}$ \\
						& ${\small 0.006}$ & ${\small 0.048}$ & ${\small 0.088}$ &  & ${\small 0.016}
						$ & ${\small 0.042}$ & ${\small 0.08}$ &  & ${\small 0.022}$ & ${\small 0.07}
						$ & ${\small 0.14}$ \\
						${\small c=3}$ & ${\small 0.178}$ & ${\small 0.402}$ & ${\small 0.55}$ &  & $%
						{\small 0.162}$ & ${\small 0.38}$ & ${\small 0.524}$ &  & ${\small 0.282}$ &
						${\small 0.476}$ & ${\small 0.608}$ \\
						& ${\small 0.282}$ & ${\small 0.56}$ & ${\small 0.694}$ &  & ${\small 0.134}$
						& ${\small 0.35}$ & ${\small 0.466}$ &  & ${\small 0.198}$ & ${\small 0.37}$
						& ${\small 0.514}$ \\
						${\small c=6}$ & ${\small 0.846}$ & ${\small 0.968}$ & ${\small 0.984}$ &  &
						${\small 0.802}$ & ${\small 0.952}$ & ${\small 0.978}$ &  & ${\small 0.848}$
						& ${\small 0.95}$ & ${\small 0.982}$ \\
						& ${\small 0.982}$ & ${\small 0.998}$ & ${\small 1}$ &  & ${\small 0.774}$ &
						${\small 0.934}$ & ${\small 0.972}$ &  & ${\small 0.84}$ & ${\small 0.932}$
						& ${\small 0.97}$\\
						\hline\hline
					\end{tabular}
					\caption{Rejection probabilities of {SARARMA(0,1,0)} using bootstrap
						test $\fancyt^{a \ast}$ at 1, 5, 10\% levels, power series (\textbf{PS}), trigonometric (\textbf{Trig}) and B-spline (\textbf{B-s}) bases.}\label{table:newsims2}
				\end{table}
				
				\begin{table}
					\begin{tabular}{cccccccccccc}
						\hline\hline
						$\fancyt^{\ast }$ & \multicolumn{11}{c}{SARARMA(1,0,1)} \\
						\hline
						&  & \textbf{PS} &  &  &  & \textbf{Trig} &  &  &  & \textbf{B-s} &
						\\
						\hline
						{\small $n=60$} & {\small 0.01} & {\small 0.05} & {\small 0.10} &  & {\small %
							0.01} & {\small 0.05} & {\small 0.10} &  & {\small 0.01} & {\small 0.05} &
						{\small 0.1} \\
						\hline
						${\small c=0}$ & ${\small 0.006}$ & ${\small 0.054}$ & ${\small 0.08}$ &  & $%
						{\small 0.012}$ & ${\small 0.062}$ & ${\small 0.106}$ &  & ${\small 0.016}$
						& ${\small 0.044}$ & ${\small 0.086}$ \\
						& ${\small 0.016}$ & ${\small 0.062}$ & ${\small 0.118}$ &  & ${\small 0.026}
						$ & ${\small 0.09}$ & ${\small 0.138}$ &  & ${\small 0.016}$ & ${\small 0.048%
						}$ & ${\small 0.088}$ \\
						${\small c=3}$ & ${\small 0.08}$ & ${\small 0.264}$ & ${\small 0.402}$ &  & $%
						{\small 0.082}$ & ${\small 0.256}$ & ${\small 0.406}$ &  & ${\small 0.08}$ &
						${\small 0.288}$ & ${\small 0.475}$ \\
						& ${\small 0.132}$ & ${\small 0.41}$ & ${\small 0.578}$ &  & ${\small 0.096}$
						& ${\small 0.222}$ & ${\small 0.354}$ &  & ${\small 0.048}$ & ${\small 0.192}
						$ & ${\small 0.282}$ \\
						${\small c=6}$ & ${\small 0.266}$ & ${\small 0.588}$ & ${\small 0.748}$ &  &
						${\small 0.266}$ & ${\small 0.616}$ & ${\small 0.782}$ &  & ${\small 0.218}$
						& ${\small 0.604}$ & ${\small 0.772}$ \\
						& ${\small 0.444}$ & ${\small 0.804}$ & ${\small 0.894}$ &  & ${\small 0.204}
						$ & ${\small 0.474}$ & ${\small 0.658}$ &  & ${\small 0.198}$ & ${\small %
							0.496}$ & ${\small 0.612}$ \\
						\hline
						${\small n=100}$ &  &  &  &  &  &  &  &  &  &  &  \\
						\hline
						${\small c=0}$ & ${\small 0.006}$ & ${\small 0.054}$ & ${\small 0.116}$ &  &
						${\small 0.012}$ & ${\small 0.046}$ & ${\small 0.114}$ &  & ${\small 0.014}$
						& ${\small 0.042}$ & ${\small 0.09}$ \\
						& ${\small 0.02}$ & ${\small 0.056}$ & ${\small 0.112}$ &  & ${\small 0.012}$
						& ${\small 0.044}$ & ${\small 0.088}$ &  & ${\small 0.034}$ & ${\small 0.058}
						$ & ${\small 0.118}$ \\
						${\small c=3}$ & ${\small 0.134}$ & ${\small 0.366}$ & ${\small 0.496}$ &  &
						${\small 0.132}$ & ${\small 0.346}$ & ${\small 0.514}$ &  & ${\small 0.162}$
						& ${\small 0.46}$ & ${\small 0.59}$ \\
						& ${\small 0.222}$ & ${\small 0.556}$ & ${\small 0.732}$ &  & ${\small 0.242}
						$ & ${\small 0.542}$ & ${\small 0.698}$ &  & ${\small 0.08}$ & ${\small 0.234%
						}$ & ${\small 0.372}$ \\
						${\small c=6}$ & ${\small 0.566}$ & ${\small 0.832}$ & ${\small 0.916}$ &  &
						${\small 0.59}$ & ${\small 0.888}$ & ${\small 0.96}$ &  & ${\small 0.548}$ &
						${\small 0.898}$ & ${\small 0.952}$ \\
						& ${\small 0.732}$ & ${\small 0.964}$ & ${\small 0.986}$ &  & ${\small 0.476}
						$ & ${\small 0.846}$ & ${\small 0.918}$ &  & ${\small 0.432}$ & ${\small %
							0.796}$ & ${\small 0.874}$ \\
						\hline
						${\small n=200}$ &  &  &  &  &  &  &  &  &  &  &  \\
						\hline
						${\small c=0}$ & ${\small 0.04}$ & ${\small 0.086}$ & ${\small 0.11}$ &  & $%
						{\small 0.026}$ & ${\small 0.076}$ & ${\small 0.108}$ &  & ${\small 0.02}$ &
						${\small 0.06}$ & ${\small 0.09}$ \\
						& ${\small 0.03}$ & ${\small 0.078}$ & ${\small 0.114}$ &  & ${\small 0.032}$
						& ${\small 0.074}$ & ${\small 0.118}$ &  & ${\small 0.038}$ & ${\small 0.086}
						$ & ${\small 0.112}$ \\
						${\small c=3}$ & ${\small 0.186}$ & ${\small 0.4}$ & ${\small 0.524}$ &  & $%
						{\small 0.242}$ & ${\small 0.432}$ & ${\small 0.526}$ &  & ${\small 0.29}$ &
						${\small 0.516}$ & ${\small 0.626}$ \\
						& ${\small 0.402}$ & ${\small 0.636}$ & ${\small 0.754}$ &  & ${\small 0.244}
						$ & ${\small 0.42}$ & ${\small 0.542}$ &  & ${\small 0.184}$ & ${\small 0.36}
						$ & ${\small 0.458}$ \\
						${\small c=6}$ & ${\small 0.718}$ & ${\small 0.904}$ & ${\small 0.962}$ &  &
						${\small 0.78}$ & ${\small 0.942}$ & ${\small 0.982}$ &  & ${\small 0.73}$ &
						${\small 0.948}$ & ${\small 0.978}$ \\
						& ${\small 0.872}$ & ${\small 0.98}$ & ${\small 0.998}$ &  & ${\small 0.794}$
						& ${\small 0.948}$ & ${\small 0.98}$ &  & ${\small 0.772}$ & ${\small 0.914}$
						& ${\small 0.94}$\\
						\hline\hline
					\end{tabular}
					\caption{Rejection probabilities of {SARARMA(1,0,1)} using bootstrap
						test $\fancyt^{\ast }$ at 1, 5, 10\% levels, power series (\textbf{PS}), trigonometric (\textbf{Trig}) and B-spline (\textbf{B-s}) bases.}\label{table:newsims3}
				\end{table}
				
				\begin{table}
					\begin{tabular}{cccccccccccc}
						\hline\hline
						$\fancyt^{a \ast }$ & \multicolumn{11}{c}{SARARMA(1,0,1)} \\
						\hline
						&  & \textbf{PS} &  &  &  & \textbf{Trig} &  &  &  & \textbf{B-s} &
						\\
						\hline
						{\small $n=60$} & {\small 0.01} & {\small 0.05} & {\small 0.10} &  & {\small %
							0.01} & {\small 0.05} & {\small 0.10} &  & {\small 0.01} & {\small 0.05} &
						{\small 0.1} \\
						\hline
						${\small c=0}$ & ${\small 0.006}$ & ${\small 0.052}$ & ${\small 0.084}$ &  &
						${\small 0.014}$ & ${\small 0.064}$ & ${\small 0.096}$ &  & ${\small 0.012}$
						& ${\small 0.044}$ & ${\small 0.104}$ \\
						& ${\small 0.012}$ & ${\small 0.068}$ & ${\small 0.114}$ &  & ${\small 0.024}
						$ & ${\small 0.088}$ & ${\small 0.13}$ &  & ${\small 0.018}$ & ${\small 0.038%
						}$ & ${\small 0.068}$ \\
						${\small c=3}$ & ${\small 0.092}$ & ${\small 0.27}$ & ${\small 0.396}$ &  & $%
						{\small 0.08}$ & ${\small 0.25}$ & ${\small 0.406}$ &  & ${\small 0.118}$ & $%
						{\small 0.382}$ & ${\small 0.56}$ \\
						& ${\small 0.164}$ & ${\small 0.408}$ & ${\small 0.596}$ &  & ${\small 0.102}
						$ & ${\small 0.242}$ & ${\small 0.37}$ &  & ${\small 0.046}$ & ${\small 0.15}
						$ & ${\small 0.23}$ \\
						${\small c=6}$ & ${\small 0.268}$ & ${\small 0.596}$ & ${\small 0.752}$ &  &
						${\small 0.248}$ & ${\small 0.61}$ & ${\small 0.792}$ &  & ${\small 0.23}$ &
						${\small 0.56}$ & ${\small 0.808}$ \\
						& ${\small 0.518}$ & ${\small 0.824}$ & ${\small 0.9}$ &  & ${\small 0.206}$
						& ${\small 0.484}$ & ${\small 0.658}$ &  & ${\small 0.176}$ & ${\small 0.43}$
						& ${\small 0.56}$ \\
						\hline
						${\small n=100}$ &  &  &  &  &  &  &  &  &  &  &  \\
						\hline
						${\small c=0}$ & ${\small 0.008}$ & ${\small 0.058}$ & ${\small 0.122}$ &  &
						${\small 0.01}$ & ${\small 0.046}$ & ${\small 0.116}$ &  & ${\small 0.004}$
						& ${\small 0.04}$ & ${\small 0.82}$ \\
						& ${\small 0.024}$ & ${\small 0.062}$ & ${\small 0.118}$ &  & ${\small 0.014}
						$ & ${\small 0.044}$ & ${\small 0.096}$ &  & ${\small 0.028}$ & ${\small %
							0.056}$ & ${\small 0.074}$ \\
						${\small c=3}$ & ${\small 0.14}$ & ${\small 0.36}$ & ${\small 0.494}$ &  & $%
						{\small 0.122}$ & ${\small 0.354}$ & ${\small 0.52}$ &  & ${\small 0.186}$ &
						${\small 0.4}$ & ${\small 0.524}$ \\
						& ${\small 0.252}$ & ${\small 0.566}$ & ${\small 0.73}$ &  & ${\small 0.272}$
						& ${\small 0.568}$ & ${\small 0.696}$ &  & ${\small 0.04}$ & ${\small 0.148}$
						& ${\small 0.214}$ \\
						${\small c=6}$ & ${\small 0.536}$ & ${\small 0.818}$ & ${\small 0.914}$ &  &
						${\small 0.554}$ & ${\small 0.884}$ & ${\small 0.948}$ &  & ${\small 0.58}$
						& ${\small 0.914}$ & ${\small 0.95}$ \\
						& ${\small 0.786}$ & ${\small 0.958}$ & ${\small 0.974}$ &  & ${\small 0.478}
						$ & ${\small 0.834}$ & ${\small 0.916}$ &  & ${\small 0.328}$ & ${\small %
							0.586}$ & ${\small 0.678}$ \\
						\hline
						${\small n=200}$ &  &  &  &  &  &  &  &  &  &  &  \\
						\hline
						${\small c=0}$ & ${\small 0.04}$ & ${\small 0.08}$ & ${\small 0.116}$ &  & $%
						{\small 0.03}$ & ${\small 0.076}$ & ${\small 0.102}$ &  & ${\small 0.016}$ &
						${\small 0.036}$ & ${\small 0.072}$ \\
						& ${\small 0.026}$ & ${\small 0.064}$ & ${\small 0.108}$ &  & ${\small 0.028}
						$ & ${\small 0.06}$ & ${\small 0.122}$ &  & ${\small 0.008}$ & ${\small 0.014%
						}$ & ${\small 0.02}$ \\
						${\small c=3}$ & ${\small 0.176}$ & ${\small 0.382}$ & ${\small 0.516}$ &  &
						${\small 0.22}$ & ${\small 0.438}$ & ${\small 0.526}$ &  & ${\small 0.262}$
						& ${\small 0.45}$ & ${\small 0.55}$ \\
						& ${\small 0.41}$ & ${\small 0.632}$ & ${\small 0.738}$ &  & ${\small 0.256}$
						& ${\small 0.428}$ & ${\small 0.538}$ &  & ${\small 0.06}$ & ${\small 0.124}$
						& ${\small 0.164}$ \\
						${\small c=6}$ & ${\small 0.704}$ & ${\small 0.894}$ & ${\small 0.948}$ &  &
						${\small 0.746}$ & ${\small 0.934}$ & ${\small 0.976}$ &  & ${\small 0.69}$
						& ${\small 0.916}$ & ${\small 0.974}$ \\
						& ${\small 0.914}$ & ${\small 0.986}$ & ${\small 0.996}$ &  & ${\small 0.776}
						$ & ${\small 0.93}$ & ${\small 0.976}$ &  & ${\small 0.482}$ & ${\small 0.612%
						}$ & ${\small 0.66}$\\
						\hline\hline
					\end{tabular}
					\caption{Rejection probabilities of {SARARMA(1,0,1)} using bootstrap
						test $\fancyt^{a \ast}$ at 1, 5, 10\% levels, power series (\textbf{PS}), trigonometric (\textbf{Trig}) and B-spline (\textbf{B-s}) bases.}\label{table:newsims4}
				\end{table}
			}
			\begin{table}
				\centering{
					\begin{tabular}{l|ll|ll|ll|lll}
						\hline \hline
						& \multicolumn{2}{|c}{$r=2$} & \multicolumn{2}{|c}{$r=3$} &
						\multicolumn{2}{|c}{$r=4$}&
						\multicolumn{3}{|c}{$r=5$} \\ \hline
						$n=150$ & $p=10$ & \multicolumn{1}{|l|}{$p=15$} & $p=10$ & \multicolumn{1}{|l|}{$p=15$} & $p=10$ &
						\multicolumn{1}{|l}{$p=15$}& $p=10$ &
						\multicolumn{1}{|l}{$p=15$}&
						\multicolumn{1}{|l}{$p=20$} \\ \hline
						$c=0$ & 0.0860 & 0.2020 & 0.1180 & 0.2060 & 0.1420 & 0.2240 \\
						$c=3$ & 0.3320 & 0.6340 & 0.3700 & 0.6380 & 0.3760 & 0.6700 \\
						$c=6$ & 0.9060 & 0.9920 & 0.9180 & 0.9940 & 0.9220 & 0.9960 \\
						\cmidrule{1-7}
						$n=300$ &  &  &  &  &  & \\
						\cmidrule{1-7}
						$c=0$ & 0.0820 & 0.0960 & 0.0880 & 0.1080 & 0.1060 & 0.1100 \\
						$c=3$ & 0.2680 & 0.5980 & 0.2600 & 0.6120 & 0.2780 & 0.6220 \\
						$c=6$ & 0.8140 & 0.9980 & 0.8160 & 0.9980 & 0.8220 & 0.9980 \\
						\cmidrule{1-7}
						$n=500$ &  &  &  &  &  & \\
						\cmidrule{1-7}
						$c=0$ & 0.0280 & 0.0420 & 0.0260 & 0.0400 & 0.0360 & 0.0480 \\
						$c=3$ & 0.2320 & 0.6660 & 0.2400 & 0.6620 & 0.2460 & 0.6680 \\
						$c=6$ & 0.8920 & 1 & 0.9040 & 1 & 0.9000 & 1 \\
						\cmidrule{1-7}
						$n=600$ &  &  &  &  &  & \\
						\cmidrule{1-7}
						$c=0$ & 0.0320 & 0.0500 & 0.0340 & 0.0540 & 0.0360 & 0.0540 \\
						$c=3$ & 0.3140 & 0.6480 & 0.3080 & 0.6280 & 0.3120 & 0.6460 \\
						$c=6$ & 0.9220 & 1 & 0.9180 & 1 & 0.9180 & 1 \\
						\hline
						$n=700$ &  &  &  &  &  & \\
						\hline
						$c=0$ &  0.0260 & 0.0300 & 0.0280 & 0.0380 & 0.0280 & 0.0380 & 0.0280 & 0.0420 & 0.0580 \\
						$c=3$ &  0.2420 & 0.5540 & 0.2400 & 0.5480 & 0.2520 & 0.5500 & 0.2420 & 0.5600 & 0.6920 \\
						$c=6$ &  0.9580 & 0.9980 & 0.9560 & 0.9980 & 0.9600 & 0.9980 & 0.9500 & 0.9980 & 1 \\
						\hline \hline
					\end{tabular}%
				}
				\caption{Rejection probabilities of $\fancyt$ at 5\% asymptotic level, nonparametric spatial error structure.}\label{table:sims5}
			\end{table}
			
			\begin{table}
				\centering{
					{\small
						\begin{tabular}{l|llll|llll}
							\hline
							& \multicolumn{4}{|l}{1998} & \multicolumn{4}{|l}{Pooled} \\ \hline
							& $H_{0}$ & p-value & $H_{1}$ & p-value & $H_{0}$ & p-value & $H_{1}$ &  \\
							\hline
							& \multicolumn{8}{|c}{SARARMA(2,1,0)} \\ \hline
							$W^A y$ & \multicolumn{1}{|c}{-0.005} & $<$0.001 & -0.003 &
							$<$0.001 & 0.013 & $<$0.001 & 0.013 & $<$%
							0.001 \\
							$W^E y$ & \multicolumn{1}{|c}{0.130} & $<$0.001 & 0.129 &
							$<$0.001 & 0.121 & $<$0.001 & 0.121 & $<$%
							0.001 \\
							$W^d$ & \multicolumn{1}{|c}{-0.159} & 0.281 & -0.225 & $<$0.001
							& -0.086 & 0.033 & -0.086 & 0.033 \\
							$\fancyt$ &  &  & -1.921 & 0.973 &  &  & -2.531 & 0.994 \\
							$\fancyt^{\ast }$ &  &  &  & 0.840 &  &  &  & 0.940 \\
							$\fancyt^{a}$ & \multicolumn{1}{|c}{} &  & -1.918 & 0.972 &  &  & -2.547 &
							0.995 \\
							$\fancyt^{a\ast }$ &  &  &  & 0.870 &  &  &  & 0.730 \\ \hline
							& \multicolumn{8}{|c}{SARARMA(2,0,1)} \\ \hline
							$W^A y$ & {\small 0.001} & {\small $<$0.01} & {\small 0.011}
							& {\small $<$0.01} & {\small 0.013} & {\small $<$0.01}
							& {\small 0.013} & {\small $<$0.01} \\
							$W^E y$ & {\small 0.127} & {\small $<$0.01} & {\small 0.122}
							& {\small $<$0.01} & {\small 0.121} & {\small $<$0.01}
							& {\small 0.121} & {\small $<$0.01} \\
							$W^d$ & {\small -0.153} & {\small $<$0.01} & {\small -0.050} &
							{\small $<$0.01} & {\small -0.086} & {\small $<$0.01}
							& {\small -0.086} & {\small 0.025} \\
							$\fancyt$ &  &  & {\small -1.763} & {\small 0.961} &  &  & {\small -2.421} &
							{\small 0.992} \\
							$\fancyt^{\ast }$ &  &  &  & {\small 0.900} &  &  &  & {\small 0.990} \\
							$\fancyt^{a}$ &  &  & {\small -2.349} & {\small 0.991} &  &  & {\small -2.423}
							& {\small 0.992} \\
							\multicolumn{1}{l|}{$\fancyt^{a\ast }$} &  &  &  & {\small 0.850} &  &  &  &
							{\small 0.790} \\ \hline
							& \multicolumn{8}{|c}{Nonparametric} \\ \hline
							$W^A y$ & {\small -0.052} & {\small $<$0.001} & {\small -0.011%
							} & {\small $<$0.001} & {\small 0.033} & {\small $<$%
								0.001} & {\small 0.033} & {\small $<$0.001} \\
							$W^E y$ & {\small 0.149} & {\small $<$0.001} & {\small 0.133}
							& {\small $<$0.001} & {\small 0.110} & {\small $<$0.001%
							} & {\small 0.109} & {\small $<$0.001} \\
							$W^d$ &  &  &  &  &  &  &  &  \\
							$\fancyt$ &  &  & {\small -1.294} & {\small 0.902} &  &  & {\small -2.314} &
							{\small 0.990} \\
							$\fancyt^{\ast }$ &  &  &  & {\small 0.830} &  &  &  & {\small 0.850} \\
							$\fancyt^{a}$ &  &  & {\small -1.898} & {\small 0.971} &  &  & {\small -2.530}
							& {\small 0.994} \\
							\multicolumn{1}{l|}{$\fancyt^{a\ast }$} &  &  &  & {\small 0.660} &  &  &  &
							{\small 0.910} \\ \hline \hline
						\end{tabular}%
					}
					
					\caption{The estimates and test statistics for the conflict data. $^*$ denotes the bootstrap p-value.}\label{table:conflict}
				}
			\end{table}

			\small{
				\begin{table}
					\centering{

						\begin{tabular}{l|cl|llll}
							\hline \hline
							Variables & \multicolumn{2}{|l}{FE} & \multicolumn{4}{|l}{SARARMA(0,1,0), $%
								W_{TEC}$}  \\ \hline
							& \multicolumn{1}{c}{} & p-value & $H_{0}$ & p-value & $H_{1}$ & p-value \\ \hline
							$\ln (Spsic)$ & \multicolumn{1}{c}{-0.005} & 0.649 & \multicolumn{1}{|c}{
								0.007} & 0.574 & 0.015 & 0.166  \\
							$\ln (Sptec)$ & \multicolumn{1}{c}{0.191} & $<$0.001 &
							\multicolumn{1}{|c}{0.006} & 0.850 & -0.001 & 0.998  \\
							$\ln (Lab.)$ & \multicolumn{1}{c}{0.636} & $<$0.001 &
							\multicolumn{1}{|c}{0.572} & $<$0.001 &  &    \\
							$\ln (Cap.)$ & \multicolumn{1}{c}{0.154} & $<$0.001 & 0.336 &
							$<$0.001 &  &    \\
							$\ln (R\&D)$ & \multicolumn{1}{c}{0.043} & $<$0.001 & 0.081 &
							$<$0.001 &  &    \\
							$W_{TEC}$ & \multicolumn{1}{|l}{} &  & 0.835 & $<$0.001 & 0.829 &
							$<$0.001 \\
							$\fancyt$ &  &  & \multicolumn{1}{|c}{} &  & 15.528 & $<$0.001
							\\
							$\fancyt^{*}$  & \multicolumn{1}{|l}{} &  &  &  &  & 0.050 \\ \hline \hline
						\end{tabular}
						
						\bigskip
						
						\begin{tabular}{l|llll}
							\hline \hline
							Variables &  \multicolumn{4}{|l}{SARARMA(0,1,0), $W_{SIC}$} \\ \hline
							&  $H_{0}$ & p-value & $H_{1}$ & p-value \\ \hline
							$\ln (Spsic)$ &  0.008 & 0.620 & 0.017 & 0.193 \\
							$\ln (Sptec)$ &  0.039 & 0.157 & 0.020
							& 0.336 \\
							$\ln (Lab.)$ &  0.571 & $
							<$0.001 &  &  \\
							$\ln (Cap.)$ &  0.318 & $<$0.001 &  &  \\
							$\ln (R\&D)$ &  0.082 & $<$0.001 &  &  \\
							$W_{SIC}$ &  0.722 & $<$%
							0.001 & 0.724 & $<$0.001 \\
							$\fancyt$ &    & & 10.451 & $<$0.001 \\
							$\fancyt^{*}$  &   &  &  & {$<$}0.001 \\ \hline \hline
						\end{tabular}
						\caption{The estimates and test statistics for the R\&D data, SARARMA(0,1,0). $^*$ denotes the bootstrap p-value. The price index as well as a dummy variable for missing value in R\&D are included, but we only report the coefficients reported in \cite{Bloom2013}.}\label{table:R&D1}}
				\end{table}
				
				\begin{table}
					\centering{
						\begin{tabular}{l|llll}
							\hline \hline
							{\small Variables} & \multicolumn{4}{|l}{SARARMA(0,2,0)} \\ \hline
							& ${\small H}_{0}$ & {\small p-value} & ${\small H}_{1}$ & {\small p-value}   \\ \hline
							$\ln {\small (Spsic)}$ & {\small 0.009} & {\small 0.587} & {\small 0.018} &
							{\small 0.170} \\
							$\ln {\small (Sptec)}$ & \multicolumn{1}{|c}{\small 0.044} & {\small 0.112} &
							{\small 0.026} & {\small 0.236}  \\
							$\ln {\small (Lab.)}$ & \multicolumn{1}{|c}{\small 0.573} & {\small
								$<$0.001}   \\
							$\ln {\small (Cap.)}$ & \multicolumn{1}{|c}{\small 0.315} & {\small
								$<$0.001}   \\
							$\ln {\small (R\&D)}$ & {\small 0.082} & {\small $<$0.001}  \\
							${\small W}_{SIC}$ & {\small 0.696} & {\small $<$0.001} & {\small %
								0.693} & {\small $<$0.001}
							\\
							${\small W}_{TEC}$ & {\small 0.157} & {\small 0.092} & {\small 0.164} &
							{\small 0.079}
							\\
							$\fancyt$ & \multicolumn{1}{|c}{} &  & {\small 10.485} & {\small
								$<$0.001}  \\
							$\fancyt^{*}$  &  &  &  & {\small 0.060} \\
							\hline
							\hline
						\end{tabular}
						
						\bigskip
						
						\begin{tabular}{l|llll}
							\hline \hline
							{\small Variables} &
							\multicolumn{4}{|l}{SARARMA(0,0,2)} \\ \hline
							& ${\small H}_{0}$ & {\small p-value} & ${\small H}_{1}$ & {\small p-value}   \\ \hline
							$\ln {\small (Spsic)}$ &  {\small -0.0002} & {\small 0.991} & {\small 0.013} &
							{\small 0.266}  \\
							$\ln {\small (Sptec)}$ &  \multicolumn{1}{|c}{\small 0.033} &
							{\small 0.200} & {\small 0.017} & {\small 0.434} \\
							$\ln {\small (Lab.)}$ &  \multicolumn{1}{|c}{\small 0.565} & {\small
								$<$0.01} \\
							$\ln {\small (Cap.)}$ & \multicolumn{1}{|c}{\small 0.334} & {\small
								$<$0.01}   \\
							$\ln {\small (R\&D)}$ &  {\small 0.076} & {\small $<$0.01}   \\
							${\small W}_{SIC}$ &  {\small 0.624} & {\small $<$0.01} & {\small 0.728} & {\small $<$0.001} \\
							${\small W}_{TEC}$ & {\small 0.312} & {\small 0.123} & {\small 0.321} & {\small
								0.112}
							\\
							$\fancyt$ &  &  & {\small 15.144} & {\small
								$<$0.001} \\
							$\fancyt^{*}$   &  &  && {\small 0.020} \\
							\hline
							\hline
						\end{tabular}
						
						\bigskip
						\pagebreak
						\newpage
						\begin{tabular}{l|llll}
							\hline \hline
							{\small Variables} &  \multicolumn{4}{|l}{\small %
								Error MESS(2)} \\ \hline
							& ${\small H}_{0}$ & {\small p-value} & ${\small H}_{1}$ & {\small p-value}  \\ \hline
							$\ln {\small (Spsic)}$ &  {\small 0.002} & {\small 0.788} & {\small 0.014} & {\small %
								0.040} \\
							$\ln {\small (Sptec)}$ & \multicolumn{1}{|c}%
							{\small 0.045} & {\small 0.025} & {\small 0.027} & {\small 0.088} \\
							$\ln {\small (Lab.)}$ &  \multicolumn{1}{|c}{\small 0.569} & {\small
								$<$0.001} &  &  \\
							$\ln {\small (Cap.)}$ & \multicolumn{1}{|c}{\small 0.323} & {\small
								$<$0.001} &  &  \\
							$\ln {\small (R\&D)}$  & {\small 0.077} &
							{\small $<$0.001} &  &  \\
							${\small W}_{SIC}$ &  {\small 0.775} &
							{\small $<$0.001} & {\small 0.836} & {\small $<$0.001}
							\\
							${\small W}_{TEC}$ & {\small 0.338} & {\small 0.010} & {\small 0.380} & {\small 0.004}
							\\
							$\fancyt$ &  & & {\small 12.776} & {\small
								$<$0.001} \\
							$\fancyt^{*}$   &  &
							&  & {\small 0.050}\\
							\hline
							\hline
						\end{tabular}
						
						\caption{The estimates and test statistics for the R\&D data, SARARMA(0,2,0) and Error MESS(2). $^*$ denotes the bootstrap p-value. The price index as well as a dummy variable for missing value in R\&D are included, but we only report the coefficients reported in \cite{Bloom2013}.}\label{table:R&D2}}
			\end{table}}
			
			\begin{table}
				\centering{
					\begin{tabular}{lcc|cc}
						\hline \hline
						Variable & \multicolumn{2}{c}{$w_{ij}^{\ast }=d_{ij}^{-2}$\ for $i\neq j$} &
						\multicolumn{2}{|c}{$w_{ij}^{\ast }=e^{-2d_{ij}}$\ for $i\neq j$} \\ \hline
						& estimate & p-value & estimate & p-value \\ \hline
						Constant & $1.0711$ & $0.608$ & $0.5989$ & $0.798$ \\
						$\ln (s)$ & $0.8256$ & $<0.001$ & $0.7938$ & $<0.001$ \\
						$\ln (n_{p}+0.05)$ & $-1.4984$ & $0.008$ & $-1.4512$ & $0.009$ \\
						$W\ln (s)$ & $-0.3159$ & $0.075$ & $-0.3595$ & $0.020$ \\
						$W\ln (n_{p}+0.05)$ & $0.5633$ & $0.498$ & $0.1283$ & $0.856$ \\
						$Wy$ & $0.7360$ & $<0.001$ & $0.6510$ & $<0.001$ \\
						\hline
						$\fancyt$ & $-1.88$ & 0.970 & $-2.08$ & 0.981 \\
						$\fancyt^{*}$  &  & 0.850 &  & 0.900 \\
						$\fancyt^{a}$ & $-1.90$ & 0.971 & $-2.05$ & 0.980 \\
						$\fancyt^{a*}$  &  & 0.820 &  & 0.810 \\ \hline
						Restricted regression &  &  &  &  \\
						\hline
						Constant & $2.1411$ & $<0.001$ & $2.9890$ & $<0.001$ \\
						$\ln (s)-\ln (n+0.05)$ & $0.8426$ & $<0.001$ & $0.8195$ & $<0.001$ \\
						$W[\ln (s)-\ln (n_{p}+0.05)]$ & $-0.2675$ & $0.122$ & $-0.2589$ & $0.098$ \\
						$W\ln (y)$ & $0.7320$ & $<0.001$ & $0.6380$ & $<0.001$ \\ \hline
						$\fancyt$ & $0.30$ & 0.382 & $4.04$ & $<0.001$ \\
						$\fancyt^{*}$  &  & 0.500 &  & $<0.001$ \\
						$\fancyt^{a}$ & $0.10$ & 0.460 & $4.50$ & $<0.001$ \\
						$\fancyt^{a*}$  &  & 0.560 &  & $0.040$ \\ \hline \hline
					\end{tabular}
					
					\caption{The estimates and test statistics of the linear SAR model for the growth data. $^*$ denotes the bootstrap p-value.}\label{table:growth}
				}
			\end{table}

			\clearpage
			\begin{center}
				{\textbf{\Large{Appendix}}}
			\end{center}
			\appendix
			\section{Proofs of theorems and propositions}
			\begin{proof}[Proof of Proposition \ref{prop:Sigma_diff_bound}:] Because the map $\Sigma:\Gamma^o\rightarrow \mathcal{M}^{n\times n}$ is Fr\'echet-differentiable on $\Gamma^o$, it is also G\^ateaux-differentiable and the two derivative maps coincide. Thus by Theorem 1.8 of \cite{Ambrosetti1995},  $\left\Vert\Sigma\left(\gamma_1\right)-\Sigma\left(\gamma_2\right)\right\Vert\leq \sup_{\gamma\in \ell\left[\gamma_1,\gamma_2\right]}\left\Vert D\Sigma(\gamma)\right\Vert\left\Vert\gamma_1-\gamma_2\right\Vert,$
				where $\ell\left[\gamma_1,\gamma_2\right]=\left\{t\gamma_1+(1-t)\gamma_2:t\in[0,1]\right\}$. The claim now follows by (\ref{Sigma_fre_der_bdd}) in Assumption 8.
			\end{proof}
			
			\begin{proof}[Proof of Theorem \ref{thm:consistency}]
				This a particular case of the proof of Theorem \ref{thm:cons_SAR} with $\lambda=0$, and so $S(\lambda)=I_n$.
			\end{proof}
			\begin{proof}[Proof of Theorem \ref{thm:stat_appr}]
				In the supplementary appendix.
			\end{proof}
			\begin{proof}[Proof of Theorem \ref{thm:appr_clt}]
				We would like to establish the asymptotic unit normality of
				\begin{equation}
					\frac{\sigma _{0}^{-2}\varepsilon ^{\prime }\fancyv\varepsilon -p}{\sqrt{2p}}.  \label{clt_tgt2}
				\end{equation}%
				Writing $q=\sqrt{2p}$, the ratio in (\ref{clt_tgt2}) has zero mean and variance equal to one, and
				may be written as $\sum_{s=1}^{\infty }w_{s}$, where
				$w_{s}=\sigma _{0}^{-2}q^{-1}v_{ss}\left( \varepsilon _{s}^{2}-\sigma
				_{0}^{2}\right) +2\sigma _{0}^{-2}q^{-1}\mathbf{1}(s\geq 2)\varepsilon
				_{s}\sum_{t<s}v_{st}\varepsilon _{t},$
				with $v_{st}$ the typical element of $\fancyv$, with $s,t=1,2,\ldots ,$. We first
				show that
				\begin{equation}
					w_{\ast }\overset{p}{\longrightarrow }0,  \label{clt_first_target}
				\end{equation}%
				where $w_{\ast }=w-w_{S}$, $w_{S}=\sum_{s=1}^{S}w_{s}$ and $S=S_{n}$ is a
				positive integer sequence that is increasing in $n$. All expectations in the
				sequel are taken conditional on $X$. By Chebyshev's inequality proving
				\begin{equation}
					\mathcal{E}w_{\ast }^{2}\overset{p}{\rightarrow }0  \label{pmle_clt_tail_neg}
				\end{equation}%
				is sufficient to establish (\ref{clt_first_target}). Notice that
				$\mathcal{E}w_{s}^{2}\leq Cq^{-2}v_{ss}^{2}+Cq^{-2}\mathbf{1}(s\geq
				2)\sum_{t<s}v_{st}^{2}\leq Cq^{-2}\sum_{t\leq s}v_{st}^{2},
				$
				so that, writing $\mathscr{M}=\Sigma ^{-{1}}\Psi \lbrack \Psi ^{\prime }\Sigma
				^{-1}\Psi ]^{-1}\Psi ^{\prime }\Sigma ^{-{1}}$,
				\begin{eqnarray}
					&&\sum_{s=S+1}^{\infty }\mathcal{E}w_{s}^{2}\leq Cq^{-2}\sum_{s=S+1}^{\infty
					}\sum_{t\leq s}v_{st}^{2}\leq Cq^{-2}\sum_{s=S+1}^{\infty }b_{s}^{\prime
					}M\sum_{t\leq s}b_{t}b_{t}^{\prime }\mathscr{M}b_{s}  \notag \\
					&\leq &Cq^{-2}\left \Vert \Sigma \right \Vert \sum_{s=S+1}^{\infty
					}b_{s}^{\prime }\mathscr{M}^{2}b_{s}\leq Cq^{-2}\sum_{s=S+1}^{\infty
					}\sum_{i,j,k=1}^{n}b_{is}b_{kt}m_{ij}m_{kj}  \notag \\
					&\leq &Cq^{-2}\sum_{s=S+1}^{\infty }\sum_{i,k=1}^{n} \left \vert b_{is}^{\ast }\right
					\vert \left \vert b_{ks}^{\ast }\right \vert  \sum_{j=1}^{n}\left
					( m_{kj}^2+m_{ij}^2\right )  ,  \label{pmle_clt_tail_neg_2}
				\end{eqnarray}%
				where $m_{ij}$ is the $(i,j)$-th element of $\mathscr{M}$ and we have used the inequality $|ab|\leq \left(a^2+b^2\right)/2$ in the last step. Now, denote by $h_i'$ the $i$-th row of the $n\times p$ matrix $\Sigma^{-1}\Psi$. Denoting the elements of $\Sigma^{-1}$ by $\Sigma^{-1}_{ij}$ and $\psi_{jl}=\psi\left(x_{jl}\right)$, $h_i$ has entries $h_{il}=\sum_{j=1}^n\Sigma^{-1}_{ij}\psi_{jl}$, $l=1,\ldots,p$. We have $\left\vert h_{il}\right\vert= O_p\left(\sum_{j=1}^n\left\vert\Sigma^{-1}_{ij}\right\vert\right)=O_p\left(\left\Vert \Sigma^{-1}\right\Vert_R\right)=O_p(1)$, uniformly, by Assumptions \ref{ass:rsums_no_coll} and \ref{ass:psi_mom}. Thus, we have $\left\Vert h_i\right\Vert=O_p\left(\sqrt{p}\right)$, uniformly in $i$. As a result,
				\begin{equation}
					\left \vert m_{ij}\right \vert =n^{-1}\left\vert h_i'\left(n^{-1}\Psi ^{\prime }\Sigma^{-1}\Psi \right)^{-1}h_j\right\vert =O_{p}\left( n^{-1}\left\Vert h_i\right\Vert\left\Vert h_j\right\Vert\right)= O_{p}\left( pn^{-1}\right),  \label{pbound}
				\end{equation}%
				uniformly in $i,j$, by Assumption \ref{ass:rsums_no_coll}. Similarly, note that \begin{eqnarray}
					\sum_{j=1}^n m_{ij}^2&=&n^{-1}h_i'\left(n^{-1}\Psi ^{\prime }\Sigma^{-1}\Psi \right)^{-1}\left(n^{-1}\Psi ^{\prime }\Sigma^{-2}\Psi \right)\left(n^{-1}\Psi ^{\prime }\Sigma^{-1}\Psi \right)^{-1}h_i\nonumber\\
					&\leq& n^{-1}\left\Vert h_i\right\Vert^2\left\Vert\left(n^{-1}\Psi ^{\prime }\Sigma^{-1}\Psi \right)^{-1} \right\Vert^2\left\Vert n^{-1}\Psi ^{\prime }\Sigma^{-2}\Psi \right\Vert\nonumber\\
					&=&O_p\left(pn^{-2}\left\Vert \Psi\right\Vert^2\left\Vert \Sigma^{-1}\right\Vert^2\right)=O_p\left(pn^{-1}\right)\label{m_square_sums},
				\end{eqnarray}
				uniformly in $i$. Thus (\ref{pmle_clt_tail_neg_2}) is
				\begin{equation}
					O_{p}\left( q^{-2}pn^{-1}\sum_{i=1}^{n}\sum_{s=S+1}^{\infty }\left \vert
					b_{is}^{\ast }\right \vert \sum_{t=1}^{n}\left \vert b_{ks}^{\ast }\right
					\vert \right) =O_{p}\left( q^{-2}p\sup_{i=1,\ldots
						,n}\sum_{s=S+1}^{\infty }\left \vert b_{is}^{\ast }\right \vert \right) ,
					\label{pmle_clte_tail_neg_3}
				\end{equation}%
				by Assumption \ref{ass:errors_epsilon_3}. By the same assumption, there
				exists $S_{in}$ such that $\sum_{s=S_{in}+1}^{\infty }\left \vert
				b_{is}^{\ast }\right \vert \leq \epsilon_{n}$ for any decreasing
				sequence $\epsilon_{n}\rightarrow 0$ as $n\rightarrow \infty $. Choosing
				$S=\max_{i=1,\ldots ,n}S_{in}$ in $w_{S}$, we deduce that (\ref%
				{pmle_clte_tail_neg_3}) is $O_{p}\left( q^{-2}p\epsilon
				_{n}\right) =O_{p}\left( \epsilon _{n}\right) =o_{p}(1)$, proving (\ref%
				{pmle_clt_tail_neg}). Thus we need only focus on $w_{S}$, and seek to
				establish that
				\begin{equation}
					w_{S}\longrightarrow _{d}N(0,1),\text{ as }n\rightarrow \infty .
					\label{CLT_trunc_dist_target}
				\end{equation}%
				From \cite{Scott1973}, (\ref{CLT_trunc_dist_target}) follows if
				\begin{equation}
					\sum_{s=1}^{S}\mathcal{E}w_{s}^{4}\overset{p}{\longrightarrow }0,\text{ as }%
					n\rightarrow \infty ,  \label{pmle_lyap}
				\end{equation}%
				and
				\begin{equation}
					\sum_{s=1}^{S}\left[ \mathcal{E}\left( w_{s}^{2}\left. {}\right \vert
					\varepsilon _{t},t<s\right) -\mathcal{E}\left( w_{s}^{2}\right) \right] \overset{%
						p}{\longrightarrow }0,\text{ as }n\rightarrow \infty .  \label{pmle_var}
				\end{equation}%
				We show (\ref{pmle_lyap}) first. Evaluating the expectation and using (\ref{m_square_sums}) yields
				\begin{eqnarray*}
					\mathcal{E}w_{s}^{4} &\leq &Cq^{-4}v_{ss}^{4}+Cq^{-4}\sum_{t<s}v_{st}^{4}\leq
					Cq^{-4}\left( \sum_{t\leq s}v_{st}^{2}\right) ^{2} \leq Cq^{-4}\left( b_{s}^{\prime }\mathscr{M}\sum_{t\leq s}b_{t}b_{t}^{\prime
					}\mathscr{M}b_{s}\right) ^{2}\\
					&\leq& Cq^{-4}\left( b_{s}^{\prime }\mathscr{M}^{2}b_{s}\right)
					^{2}= Cq^{-4}\sum_{i,j,k=1}^n b_{is}b_{ks}m_{ij}m_{kj}\leq Cq^{-4}\sum_{i,k=1}^n \left \vert b^*_{is}\right\vert \left\vert b^*_{ks}\right\vert\sum_{j=1}^n \left(m^2_{ij}+m^2_{kj}\right) \\
					&=&O_{p}\left( q^{-4}pn^{-1}\left(\sum_{i=1}^n \left \vert b^*_{is}\right\vert\right)^2\right) ,
				\end{eqnarray*}%
				whence
				\begin{eqnarray*}
					\sum_{s=1}^{S}\mathcal{E}w_{s}^{4} &=&O_{p}\left(
					q^{-4}pn^{-1}\sum_{s=1}^{S}\left(\sum_{i=1}^n \left \vert b^*_{is}\right\vert\right)^2\right)
					=O_{p}\left( q^{-4}pn^{-1}\sum_{s=1}^{S}\left( \sum_{i=1}^{n}\left\vert b_{is}^{\ast}\right\vert\right) \right) =O_{p}\left( q^{-4}p\right) ,
				\end{eqnarray*}%
				by Assumption \ref{ass:errors_epsilon_3}. Thus (\ref{pmle_lyap}) is established. Notice that $\mathcal{E}\left( \left. w_{s}^{2}\right \vert \epsilon
				_{t},t<s\right) $ equals
				\begin{equation*}
					4q^{-2}\sigma _{0}^{-4}\left \{ \left( \mu _{4}-\sigma _{0}^{4}\right)
					v_{ss}^{2}+2\mu _{3}\mathbf{1}(s\geq 2)\sum_{t<s}v_{st}v_{ss}\varepsilon
					_{t}\right \} +4q^{-2}\sigma _{0}^{-2}\mathbf{1}(s\geq 2)\left(
					\sum_{t<s}v_{st}\varepsilon _{t}\right) ^{2},
				\end{equation*}%
				and $
				\mathcal{E}w_{s}^{2}=4q^{-2}\sigma _{0}^{-4}\left( \mu _{4}-\sigma
				_{0}^{4}\right) v_{ss}^{2}+4q^{-2}\mathbf{1}(s\geq 2)\sum_{t<s}v_{st}^{2},
				$
				so that (\ref{pmle_var}) is bounded by a constant times%
				\begin{equation}
					q^{-2}\sum_{s=2}^{S}\sum_{t<s}v_{st}v_{ss}\varepsilon _{t}+\left \{
					\sum_{s=2}^{S}\left( \sum_{t<s}v_{st}\varepsilon _{t}\right) ^{2}-\sigma
					_{0}^{2}\sum_{t<s}v_{st}^{2}\right \} .  \label{pmle_var_11_1}
				\end{equation}%
				By transforming the range of summation, the square of the first term in (\ref%
				{pmle_var_11_1}) has expectation bounded by
				\begin{equation}
					Cq^{-4}\mathcal{E}\left( \sum_{t=1}^{S-1}\sum_{s=t+1}^{S}v_{st}v_{ss}\varepsilon
					_{t}\right) ^{2}\leq Cq^{-4}\sum_{t=1}^{S-1}\left(
					\sum_{s=t+1}^{S}v_{st}v_{ss}\right) ^{2},  \label{pmle_var_11_2}
				\end{equation}%
				where the factor in parentheses on the RHS of (\ref{pmle_var_11_2}) is
				\begin{eqnarray*}
					&&\sum_{s,r=t+1}^{S}b_{s}^{\prime }\mathscr{M}b_{s}b_{s}^{\prime }\mathscr{M}b_{t}b_{r}^{\prime
					}\mathscr{M}b_{r}b_{r}^{\prime }\mathscr{M}b_{t} \leq\sum_{s,r=t+1}^{S}\left \vert b_{s}^{\prime }\mathscr{M}b_{s}b_{r}^{\prime
					}\mathscr{M}b_{r}\right \vert \left \vert b_{s}^{\prime }\mathscr{M}b_{t}\right \vert \left
					\vert b_{r}^{\prime }\mathscr{M}b_{t}\right \vert \\
					&\leq & C\sum_{s,r=t+1}^{S}\sum_{i,j,k,l=1}^{n}\left \vert b_{is}\right \vert
					\left \vert m_{ij}\right \vert \left \vert b_{jr}\right \vert \left \vert
					b_{ks}\right \vert \left \vert m_{lk}\right \vert \left \vert b_{kr}\right
					\vert \left \vert b_{s}^{\prime }\mathscr{M}b_{t}\right \vert \left \vert
					b_{r}^{\prime }\mathscr{M}b_{t}\right \vert \\
					&\leq &C\left( \sup_{i,j}\left \vert m_{ij}\right \vert \right) ^{2}\left(
					\sup_{s\geq 1}\sum_{i=1}^{n}\left \vert b_{is}^{\ast }\right \vert \right)
					^{4}\sum_{s,r=t+1}^{S}\left \vert b_{s}^{\prime }\mathscr{M}b_{t}\right \vert \left
					\vert b_{r}^{\prime }\mathscr{M}b_{t}\right \vert \\
					&=&O_{p}\left( p^{2}n^{-2}\left( \sum_{s=t+1}^{S}\left \vert b_{t}^{\prime
					}\mathscr{M}b_{s}\right \vert \right) ^{2}\right) =O_{p}\left( p^{2}n^{-2}\left(
					\sum_{s=t+1}^{S}\sum_{i,j=1}^{n}\left \vert b_{it}^{\ast }\right \vert \left
					\vert m_{ij}\right \vert \left \vert b_{js}^{\ast }\right \vert \right)
					^{2}\right) ,
				\end{eqnarray*}%
				where we used Assumptions \ref{ass:errors_epsilon_3} and (\ref{pbound}). Now
				Assumptions \ref{ass:errors_epsilon_3}, \ref{ass:rsums_no_coll} and (\ref%
				{pbound}) imply that
				\begin{eqnarray*}
					\sum_{s=t+1}^{S}\sum_{i,j=1}^{n}\left \vert b_{it}^{\ast }\right \vert \left
					\vert m_{ij}\right \vert \left \vert b_{js}^{\ast }\right \vert
					&=&O_p\left(\sup_{i,j}\left \vert m_{ij}\right \vert\sup_t\sum_{i=1}^{n}\left \vert b_{it}^{\ast }\right \vert \sum_{j=1}^{n}\sum_{s=t+1}^{S}\left \vert b_{js}^{\ast }\right
					\vert \right)=O_{p}\left( p\sup_t\sum_{i=1}^{n}\left \vert b_{it}^{\ast }\right \vert \right) ,
				\end{eqnarray*}
				so (\ref{pmle_var_11_2}) is $O_{p}\left( q^{-4}p^{4}n^{-2}\sup_{t}\left( \sum_{i=1}^{n}\left \vert b_{it}^{\ast }\right \vert \right) \left(
				\sum_{i=1}^{n}\left( \sum_{t=1}^{S-1}\left \vert b_{it}^{\ast }\right \vert
				\right) \right) \right)$. By Assumption \ref{ass:errors_epsilon_3} the
				latter is $O_{p}\left( q^{-4}p^{4}n^{-1}\right) $ and therefore the first term in (\ref{pmle_var_11_1}) is $O_{p}\left( p^2n^{-1}\right) $, which is negligible.
				
				Once again transforming the  summation range and using the inequality $%
				|a+b|^2\leq C\left(a^2+b^2\right)$, we can bound the square of the second
				term in (\ref{pmle_var_11_1}) by a constant times
				\begin{equation}  \label{pmle_var_11_3}
					\left(\sum_{t=1}^{S-1}\sum_{s=t+1}^S v_{st}^2
					\left(\varepsilon_t^2-\sigma_0^2\right)\right)^2+\left(2\sum_{t=1}^{S-1}%
					\sum_{r=1}^{t-1}\sum_{s=t+1}^S v_{st} v_{sr} \varepsilon_t\varepsilon_r\right)^2.
				\end{equation}
				
				Using Assumption \ref{ass:errors_epsilon_3}, the expectations of the two terms in (\ref%
				{pmle_var_11_3}) are bounded by a constant times $\alpha _{1}$ and a
				constant times $\alpha _{2}$, respectively, where
				$
				\alpha _{1} =\sum_{t=1}^{S-1}\left( \sum_{s=t+1}^{S}v_{st}^{2}\right) ^{2},
				\alpha _{2} =\sum_{t=1}^{S-1}\sum_{r=1}^{t-1}\left(
				\sum_{s=t+1}^{S}v_{st}v_{sr}\right) ^{2}.  $
				Thus (\ref{pmle_var_11_3}) is $O_{p}\left( \alpha _{1}+\alpha _{2}\right) $.
				Now by (\ref{pbound}), Assumptions \ref{ass:errors_epsilon_3}, \ref%
				{ass:rsums_no_coll} and elementary inequalities $\alpha _{2}$ is bounded by
				\begin{eqnarray*}
					&&\sum_{t=1}^{S-1}\sum_{r=1}^{t-1}\sum_{s=t+1}^{S}\sum_{u=t+1}^{S}b_{s}^{%
						\prime }\mathscr{M}b_{t}b_{s}^{\prime }\mathscr{M}b_{r}b_{u}^{\prime }\mathscr{M}b_{t}b_{u}^{\prime }\mathscr{M}b_{r}
					\\
					&=&O_{p}\left( q^{-4}\sum_{s,r,t,u=1}^{S}\sum_{i,j=1}^{n}\left \vert
					b_{ir}^{\ast }\right \vert \left \vert m_{ij}\right \vert \left \vert
					b_{js}^{\ast }\right \vert \sum_{i,j=1}^{n}\left \vert b_{ir}^{\ast }\right
					\vert \left \vert m_{ij}\right \vert \left \vert b_{ju}^{\ast }\right \vert
					\sum_{i,j=1}^{n}\left \vert b_{it}^{\ast }\right \vert \left \vert
					m_{ij}\right \vert \left \vert b_{js}^{\ast }\right \vert\sum_{i,j=1}^{n}\left \vert b_{it}^{\ast }\right \vert \left \vert
					m_{ij}\right \vert \left \vert b_{ju}^{\ast }\right \vert \right) \\
					&=&O_{p}\left( q^{-4}pn^{-1}\sum_{s,r,t=1}^{S}\left(
					\sum_{i,j=1}^{n}\left \vert b_{ir}^{\ast }\right \vert \left \vert
					m_{ij}\right \vert \left \vert b_{js}^{\ast }\right \vert \right) \left(
					\sum_{i,j=1}^{n}\left \vert b_{ir}^{\ast }\right \vert \left \vert
					m_{ij}\right \vert \sum_{u=1}^{S}\left \vert b_{ju}^{\ast }\right \vert
					\right) \right. \\
					&\times &\left. \sum_{i,j=1}^{n}\left \vert b_{it}^{\ast }\right \vert \left
					\vert m_{ij}\right \vert \left \vert b_{js}^{\ast }\right \vert
					\sum_{i=1}^{n}\left \vert b_{it}^{\ast }\right \vert\sup_u \sum_{j=1}^{n}\left
					\vert b_{ju}^{\ast }\right \vert \right) \\
					&=&O_{p}\left( q^{-4}p^{2}n^{-2}\sum_{s,r=1}^{S}\left( \sum_{i,j=1}^{n}\left
					\vert b_{ir}^{\ast }\right \vert \left \vert m_{ij}\right \vert \left \vert
					b_{js}^{\ast }\right \vert \right) \sum_{i=1}^{n}\left \vert b_{ir}^{\ast
					}\right \vert\sum_{j=1}^{n}\left(\sum_{u=1}^S\left
					\vert b_{ju}^{\ast }\right \vert\right)\left( \sum_{i,j=1}^{n}\sum_{t=1}^{S}\left \vert b_{it}^{\ast
					}\right \vert \left \vert m_{ij}\right \vert \left \vert b_{js}^{\ast
					}\right \vert \right) \right) \\
					&=&O_{p}\left( q^{-4}p^{2}n^{-1}\sum_{i,j=1}^{n}\left( \sum_{r=1}^{S}\left
					\vert b_{ir}^{\ast }\right \vert \right) \left \vert m_{ij}\right \vert
					\left( \sum_{s=1}^{S}\left \vert b_{js}^{\ast }\right \vert \right)
					\left(\sup_{j} \sum_{i=1}^{n}\left \vert m_{ij}\right \vert \right)
					\sum_{j=1}^{n}\left \vert b_{js}^{\ast }\right \vert \right) \\
					&=&O_p\left(q^{-4}p^{2}n^{-1}\sup_{k}\sum_{i,j=1}^{n}\left \vert m_{ij}\right \vert \sum_{i=1}^{n}\left \vert m_{ik}\right \vert\right)=O_p\left(q^{-4}p^{2}n^{-1}\sup_{k}\sum_{i,j,\ell=1}^{n}\left \vert m_{ij}\right \vert \left \vert m_{\ell k}\right \vert\right)\\
					&=&O_p\left(q^{-4}p^{2}n^{-1}\sup_{k}\sum_{i,j,\ell=1}^{n}\left(m_{ij}^2+ m_{\ell k}^2\right)\right)=O_p\left(q^{-4}p^{2}n^{-1}\sum_{i,j,\ell=1}^{n}\left(m_{ij}^2+ m_{\ell j}^2\right)\right)\\&=&O_p\left(q^{-4}p^{2}n^{-1}\sum_{i,j=1}^{n}m_{ij}^2\right)=O_p\left(q^{-4}p^{2}\sup_{j}\sum_{i=1}^{n}m_{ij}^2\right)=O_p\left(pn^{-1}\right),
				\end{eqnarray*}%
				where we used (\ref{m_square_sums}) in the last step. A similar use of the conditions of the theorem and (\ref{pbound}) implies that $\alpha
				_{1}$ is \allowdisplaybreaks%
				\begin{eqnarray*}
					&&O_{p}\left( q^{-4}\sum_{t=1}^{S-1}\left \{ \sum_{s=t+1}^{S}\left(
					\sum_{i,j=1}^{n}\left \vert m_{ij}\right \vert \left \vert b_{jt}^{\ast
					}\right \vert \left \vert b_{is}^{\ast }\right \vert \right) ^{2}\right \}
					^{2}\right) \\
					&=&O_{p}\left( q^{-4}\left(\sup_{i,j}\left\vert m_{ij}\right\vert\right)^4\sum_{t=1}^{S-1}\left
					\{ \sum_{s=t+1}^{S}\left(\sum_{i=1}^{n}\left\vert b_{is}^{\ast }\right\vert \sum_{j=1}^{n}\left \vert
					b_{jt}^{\ast }\right \vert \right) ^{2}\right \} ^{2}\right) \\
					&=&O_{p}\left( q^{-4}p^{4}n^{-4}\sum_{t=1}^{S-1}\left
					\{ \sum_{s=t+1}^{S}\left(\sum_{i=1}^{n}\left\vert b_{is}^{\ast }\right\vert\right)^2\left( \sum_{j=1}^{n}\left \vert
					b_{jt}^{\ast }\right \vert \right) ^{2}\right \} ^{2}\right)\\
					&=&O_{p}\left( q^{-4}p^{4}n^{-4}\sum_{t=1}^{S-1}\left( \sum_{s=t+1}^{S}\left(\sum_{i=1}^{n}\left\vert b_{is}^{\ast }\right\vert\right)^2\right)^2\left( \sum_{j=1}^{n}\left \vert
					b_{jt}^{\ast }\right \vert \right) ^{4}\right)\\
					&=&O_{p}\left( q^{-4}p^{4}n^{-4}\left( \sum_{t=1}^{S-1}\sum_{j=1}^{n}\left \vert
					b_{jt}^{\ast }\right \vert \right) \left(\sum_{s=t+1}^{S}\sum_{i=1}^{n}\left\vert b_{is}^{\ast }\right\vert\right)^2\sup_s\left( \sum_{i=1}^{n}\left \vert
					b_{is}^{\ast }\right \vert \right) ^{2}\sup_t\left( \sum_{j=1}^{n}\left \vert
					b_{jt}^{\ast }\right \vert \right) ^{3}\right)\\
					&=&O_{p}\left( q^{-4}p^{4}n^{-1}\right)=O_{p}\left( p^{2}n^{-1}\right)
				\end{eqnarray*}%
				proving (\ref{pmle_var}), as $p^{2}/n\rightarrow 0$ by the conditions of the theorem.
			\end{proof}
			
			\begin{proof}[Proof of Theorem \ref{thm:stat_properties}]
				In supplementary appendix.
			\end{proof}
			\begin{proof}[Proof of Theorem \ref{thm:cons_SAR}]
				Due to the similarity with proofs in \cite{Delgado2015} and \cite{Gupta2018}, the details are in the supplementary appendix.
			\end{proof}
			\begin{proof}[Proof of Theorem \ref{thm:stat_appr_SAR}]Denote $\theta ^{\ast }$ as the solution of $\min_{\theta
				}\mathcal{E}\left(y_{i}-\sum_{j=1}^{d_{\lambda }}\lambda
				_{j}w_{i,j}'y-\theta (x_{i})\right)^{2}$. Put $\theta _{i}^{\ast }=\theta
				^{\ast }(x_{i})$, $\theta _{0i}=\theta _{0}(x_{i})$, $\widehat{\theta }%
				_{i}=\psi _{i}^{\prime }\widehat{\beta }$
				, $\widehat{f}_{i}=f(x_{i},\widehat{\alpha })$, $f_{i}^{\ast
				}=f(x_{i},\alpha ^{\ast })$. Then $\widehat{u}_{i}=y_{i}-\sum_{j=1}^{d_{%
						\lambda }}\widehat{\lambda }_{j}w_{i,j}'y-f(x_{i},\widehat{\alpha }%
				)=u_{i}+\theta _{0i}+\sum_{j=1}^{d_{\lambda }}(\lambda _{j_{0}}-\widehat{%
					\lambda }_{j})w_{i,j}'y-\widehat{f}_{i}$. Proceeding as in the proof of Theorem \ref{thm:stat_appr}, we obtain $n\widehat{m}_{n}
				=\widehat{\sigma }^{-2}u^{\prime }\Sigma \left( \widehat{\gamma }\right)
				^{-1}\Psi \lbrack \Psi ^{\prime }\Sigma \left( \widehat{\gamma }\right)
				^{-1}\Psi ]^{-1}\Psi ^{\prime }\Sigma \left( \widehat{\gamma }\right) ^{-1}u+\widehat\sigma^{-2}\sum_{j=1}^7 A_{j}$.
				Thus, compared to the test statistic with no spatial lag, cf. the proof of Theorem \ref{thm:stat_appr}, we have the additional terms%
				\begin{eqnarray*}
					A_{5}&=&\sum_{j=1}^{d_{\lambda }}(\lambda _{j_{0}}-\widehat{\lambda }%
					_{j})y^{\prime }W_{j}^{\prime }\Sigma \left( \widehat{\gamma }\right)
					^{-1}\Psi \lbrack \Psi ^{\prime }\Sigma \left( \widehat{\gamma }\right)
					^{-1}\Psi ]^{-1}\Psi ^{\prime }\Sigma \left( \widehat{\gamma }\right)
					^{-1}\sum_{j=1}^{d_{\lambda }}(\lambda _{j_{0}}-\widehat{\lambda }%
					_{j})W_{j}y, \\
					A_{6}&=&\sum_{j=1}^{d_{\lambda }}(\lambda _{j_{0}}-\widehat{\lambda }%
					_{j})y^{\prime }W_{j}^{\prime }\Sigma \left( \widehat{\gamma }\right)
					^{-1}\Psi \lbrack \Psi ^{\prime }\Sigma \left( \widehat{\gamma }\right)
					^{-1}\Psi ]^{-1}\Psi ^{\prime }\Sigma \left( \widehat{\gamma }\right)
					^{-1}(u+\theta _{0}-\widehat{f}), \\
					A_{7}&=&\left( \Psi \left( \Psi ^{\prime }\Sigma \left( \widehat{\gamma }\right)
					^{-1}\Psi \right) ^{-1}\Psi ^{\prime }\Sigma \left( \widehat{\gamma }\right)
					^{-1}\left( u\mathbf{+}e\right) -e+\theta _{0}-\widehat{f}\right)
					^{\prime }\Sigma \left( \widehat{\gamma }\right) ^{-1}\sum_{j=1}^{d_{\lambda
					}}(\lambda _{j_{0}}-\widehat{\lambda }_{j})W_{j}y.
				\end{eqnarray*}%
				We now show that $A_{\ell}=o_{p}(\sqrt{p}), \ell>4$, so the leading
				term in $n\widehat{m}_{n}$ is the same as before. First $\left\Vert y\right\Vert=O_{p}(\sqrt{n}%
				)$ from $y=(I_{n}-\sum_{j=1}^{d_{\lambda }}\lambda _{j_{0}}W_{j})^{-1}\left(
				\theta _{0}+u\right) $. Then, with $\left\Vert\lambda _{_{0}}-\widehat{\lambda }\right\Vert=O_{p}\left(\sqrt{d_\gamma/n}\right)$ by Lemma \ref{lemma:gamma_order}, we have%
				\begin{eqnarray*}
					\left\vert A_{5}\right\vert
					&\leq &\left \Vert \lambda _{_{0}}-\widehat{\lambda }%
					\right \Vert ^{2}\sum_{j=1}^{d_{\lambda }}\left\Vert W_j\right\Vert^2\sup_{\gamma ,j}\left \Vert \Sigma \left(
					\gamma \right) ^{-1}\frac{1}{n}\Psi \left( \frac{1}{n}\Psi ^{\prime }\Sigma
					\left( {\gamma }\right) ^{-1}\Psi \right) ^{-1}\Psi ^{\prime }\Sigma
					\left( \gamma \right) ^{-1}\right \Vert \left \Vert y\right \Vert ^{2} \\
					&=&O_{p}\left(d_\gamma/n\right) O_{p}(1) O_{p}(n)=O_{p}\left(d_\gamma\right)=o_{p}(\sqrt{p}).
				\end{eqnarray*}%
				Uniformly in $\gamma$ and $j$,
				\begin{eqnarray*}
					&&\mathcal{E}\left ( u^{\prime }S^{-1\prime }W_{j}^{\prime }\Sigma \left( \gamma
					\right) ^{-1}\Psi \lbrack \Psi ^{\prime }\Sigma \left( \gamma \right)
					^{-1}\Psi ]^{-1}\Psi ^{\prime }\Sigma \left( \gamma \right)
					^{-1}u\right )  \\
					&=&\mathcal{E}tr\left( \left( \frac{1}{n}\Psi ^{\prime }\Sigma \left( {\gamma }%
					\right) ^{-1}\Psi \right) ^{-1}\frac{1}{n}\Psi ^{\prime }\Sigma \left(
					\gamma \right) ^{-1}\Sigma S^{-1\prime }W_{j}^{\prime }\Sigma \left( \gamma
					\right) ^{-1}\Psi \right) =O_p(p)
				\end{eqnarray*}%
				and%
				\begin{eqnarray*}
					&&\mathcal{E}\left (\theta _{0}^{\prime }S^{-1\prime }W_{j}^{\prime }\Sigma \left(
					\gamma \right) ^{-1}\Psi \lbrack \Psi ^{\prime }\Sigma \left( \gamma \right)
					^{-1}\Psi ]^{-1}\Psi ^{\prime }\Sigma \left( \gamma \right)
					^{-1}u\right ) ^{2} \\
					&= &O_p\left( \left \Vert S^{-1}\right \Vert ^{2}\sup_{\gamma }\left \Vert
					\Sigma \left( \gamma \right) ^{-1}\right \Vert ^{4}\left \Vert \frac{1}{n}\Psi
					\left( \frac{1}{n}\Psi ^{\prime }\Sigma \left( \gamma \right) ^{-1}\Psi
					\right) ^{-1}\Psi ^{\prime }\right \Vert ^{2}\sup_{j}\left \Vert
					W_{j}\right \Vert ^{2}\left \Vert \Sigma \right \Vert\left\Vert \theta _{0}\right\Vert^2\right) =O_p(n).
				\end{eqnarray*}%
				Similarly, $\theta _{0}^{\prime }S^{-1\prime }W_{j}^{\prime }\Sigma \left(\gamma \right) ^{-1}\Psi \lbrack \Psi ^{\prime }\Sigma \left( \gamma \right)
				^{-1}\Psi ]^{-1}\Psi ^{\prime }\Sigma \left( \gamma \right) ^{-1}W_{j}\theta
				_{0} =O_{p}(n),$ uniformly.
				Therefore,%
				\begin{eqnarray*}
					&&\left \vert \sum_{j=1}^{d_{\lambda }}(\lambda _{j_{0}}-\widehat{\lambda }%
					_{j})y^{\prime }W_{j}^{\prime }\Sigma \left( \widehat{\gamma }\right)
					^{-1}\Psi \lbrack \Psi ^{\prime }\Sigma \left( \widehat{\gamma }\right)
					^{-1}\Psi ]^{-1}\Psi ^{\prime }\Sigma \left( \widehat{\gamma }\right)
					^{-1}u\right \vert  \\
					&=&\left \vert \sum_{j=1}^{d_{\lambda }}(\lambda _{j_{0}}-\widehat{\lambda }%
					_{j})\left( \theta _{0}+u\right) ^{\prime }S^{-1\prime }W_{j}^{\prime
					}\Sigma \left( \widehat{\gamma }\right) ^{-1}\Psi \lbrack \Psi ^{\prime
					}\Sigma \left( \widehat{\gamma }\right) ^{-1}\Psi ]^{-1}\Psi ^{\prime
					}\Sigma \left( \widehat{\gamma }\right) ^{-1}u\right \vert  \\
					&\leq &d_{\lambda }\left \Vert \lambda _{_{0}}-\widehat{\lambda }\right \Vert
					\sup_{\gamma ,j}\left \vert \theta _{0}^{\prime }S^{-1\prime }W_{j}^{\prime
					}\Sigma \left( \gamma \right) ^{-1}\Psi \lbrack \Psi ^{\prime }\Sigma \left(
					\gamma \right) ^{-1}\Psi ]^{-1}\Psi ^{\prime }\Sigma \left( \gamma \right)
					^{-1}u\right \vert  \\
					&&+d_{\lambda }\left \Vert \lambda _{_{0}}-\widehat{\lambda }\right \Vert
					\sup_{\gamma ,j}\left \vert u^{\prime }S^{-1\prime }W_{j}^{\prime }\Sigma
					\left( \gamma \right) ^{-1}\Psi \lbrack \Psi ^{\prime }\Sigma \left( \gamma
					\right) ^{-1}\Psi ]^{-1}\Psi ^{\prime }\Sigma \left( \gamma \right)
					^{-1}u\right \vert  \\
					&=&O_{p}\left(\sqrt{d_\gamma/n}\right) O_{p}(\sqrt{n})+O_{p}\left(\sqrt{d_\gamma/n}\right)
					O_{p}(p)=O_{p}\left(\sqrt{d_\gamma}\right)=o_{p}\left(\sqrt{p}\right),
				\end{eqnarray*}
				and
				\begin{eqnarray*}
					&&\left \vert \sum_{j=1}^{d_{\lambda }}(\lambda _{j_{0}}-\widehat{\lambda }%
					_{j})y^{\prime }W_{j}^{\prime }\Sigma \left( \widehat{\gamma }\right)
					^{-1}\Psi \lbrack \Psi ^{\prime }\Sigma \left( \widehat{\gamma }\right)
					^{-1}\Psi ]^{-1}\Psi ^{\prime }\Sigma \left( \widehat{\gamma }\right)
					^{-1}(\theta _{0}-\widehat{f})\right \vert  \\
					&\leq &d_{\lambda }\left \Vert \lambda _{_{0}}-\widehat{\lambda }\right \Vert
					\left \Vert y\right \Vert \sup_{j}\left \Vert W_{j}\right \Vert \sup_{\gamma
					}\left \Vert \frac{1}{n}\Psi \left( \frac{1}{n}\Psi ^{\prime }\Sigma \left(
					\gamma \right) ^{-1}\Psi \right) ^{-1}\Psi \right \Vert \sup_{\gamma
					}\left \Vert \Sigma \left( \gamma \right) ^{-1}\right \Vert ^{2}\left\Vert\theta _{0}-%
					\widehat{f}\right\Vert \\
					&=&O_{p}\left(\sqrt{d_\gamma/n}\right) O_{p}\left( \sqrt{n}\right)  O_{p}\left(
					p^{1/4}\right) =O_{p}\left( \sqrt{d_\gamma}p^{1/4}\right) =o_{p}(\sqrt{p}),
				\end{eqnarray*}%
				so that $A_{6}=o_p(\sqrt{p})$. Finally,
				\begin{eqnarray*}
					&&\left \vert \sum_{j=1}^{d_{\lambda }}(\lambda _{j_{0}}-\widehat{\lambda }%
					_{j})y^{\prime }W_{j}^{\prime }\Sigma \left( \widehat{\gamma }\right)
					^{-1}\Psi \lbrack \Psi ^{\prime }\Sigma \left( \widehat{\gamma }\right)
					^{-1}\Psi ]^{-1}\Psi ^{\prime }\Sigma \left( \widehat{\gamma }\right)
					^{-1}e\right \vert  \\
					&\leq &d_{\lambda }\left \Vert \lambda _{_{0}}-\widehat{\lambda }\right \Vert
					\left \Vert y\right \Vert \sup_{j}\left \Vert W_{j}\right \Vert \sup_{\gamma
					}\left \Vert \frac{1}{n}\Psi \left( \frac{1}{n}\Psi ^{\prime }\Sigma \left(
					\gamma \right) ^{-1}\Psi \right) ^{-1}\Psi \right \Vert \sup_{\gamma
					}\left \Vert \Sigma \left( \gamma \right) ^{-1}\right \Vert ^{2}\left\Vert e\right\Vert \\
					&=&O_{p}\left(\sqrt{d_\gamma/n}\right) O_{p}\left( \sqrt{n}\right)  O_{p}\left(p^{-\mu
					}\sqrt{n}\right)=O_{p}\left(\sqrt{d_\gamma}p^{-\mu }\sqrt{n}\right)=o_{p}(\sqrt{p}),
				\end{eqnarray*}%
				and%
				\begin{eqnarray*}
					&&\left \vert (e+\theta _{0}-\widehat{f})^{\prime }\Sigma \left( \widehat{%
						\gamma }\right) ^{-1}\sum_{j=1}^{d_{\lambda }}(\lambda _{j_{0}}-\widehat{%
						\lambda }_{j})W_{j}y\right \vert  \\
					&\leq &d_{\lambda }\left \Vert \lambda _{_{0}}-\widehat{\lambda }\right \Vert
					\left( \left \Vert e\right \Vert +\left\Vert\theta _{0}-\widehat{f}\right\Vert\right)
					\sup_{\gamma }\left \Vert \Sigma \left( \gamma \right) ^{-1}\right \Vert
					\sup_{j}\left \Vert W_{j}\right \Vert \left \Vert y\right \Vert  \\
					&=&O_{p}\left(\sqrt{d_\gamma/n}\right) O_{p}\left(p^{-\mu }\sqrt{n}+p^{1/4}\right) O_{p}\left(
					\sqrt{n}\right) =O_{p}\left(\sqrt{d_\gamma}p^{-\mu }\sqrt{n}+\sqrt{d_\gamma}p^{1/4}\right)=o_{p}(\sqrt{p}),
				\end{eqnarray*}
				implying that $A_{7}=o_p(\sqrt{p}).$
			\end{proof}
			\begin{proof}[Proof of Theorem \ref{thm:stat_properties_SAR}]
				Omitted as it is similar to the proof of Theorem \ref{thm:stat_properties}.
			\end{proof}
			\begin{proof}[Proof of Proposition \ref{prop:Sigma_diff_bound_np}:] Because the map $\Sigma:\mathcal{T}^o\rightarrow \mathcal{M}^{n\times n}$ is Fr\'echet-differentiable on $\mathcal{T}^o$, it is also G\^ateaux-differentiable and the two derivative maps coincide. Thus by Theorem 1.8 of \cite{Ambrosetti1995},
				\begin{equation}\label{frech_MVT_tau}
					\left\Vert\Sigma(t_1)-\Sigma(t_1)\right\Vert\leq  \sup_{t\in\mathcal{T}^o}\left\Vert D\Sigma(t)\right\Vert_{\mathscr{L}\left(\mathcal{T}^o,\mathcal{M}^{n\times n}\right)}\left(\left\Vert\gamma_1-\gamma_2\right\Vert+\sum_{\ell=1}^{d_\zeta}\left\Vert \left(\delta_{\ell1}-\delta_{\ell2}\right)'\varphi_{\ell}\right\Vert_{w}\right),
				\end{equation}
				where
				\begin{eqnarray*}
					\sum_{\ell=1}^{d_\zeta}\left\Vert \left(\delta_{\ell1}-\delta_{\ell2}\right)'\varphi_{\ell}\right\Vert_{w}&=&\sum_{\ell=1}^{d_\zeta}\sup_{z\in\mathcal{Z}}\left\vert \left(\delta_{\ell1}-\delta_{\ell2}\right)'\varphi_{\ell}\right\vert\left(1+\left\Vert z\right\Vert^2\right)^{-w/2}\\
					&\leq&\sum_{\ell=1}^{d_\zeta}\left\Vert \delta_{\ell1}-\delta_{\ell2}\right\Vert\sup_{z\in\mathcal{Z}}\left\Vert\varphi_{\ell}\right\Vert\left(1+\left\Vert z\right\Vert^2\right)^{-w/2}\\
					&\leq&C\varsigma(r)\sum_{\ell=1}^{d_\zeta}\left\Vert \delta_{\ell1}-\delta_{\ell2}\right\Vert\leq C\varsigma(r)\left\Vert t_1-t_2\right\Vert.
				\end{eqnarray*}
				The claim now follows by (\ref{Sigma_semi_fre_der_bdd}) in Assumption \ref{ass:Sigma_frech_der_np}, because $\left\Vert\gamma_1-\gamma_2\right\Vert\leq C\varsigma(r) \left\Vert t_1-t_2\right\Vert$ for some suitably chosen $C$.
			\end{proof}
			\begin{proof}[Proof of Theorem \ref{thm:consistency_np}]
				The proof is omitted as it is entirely analogous to that of Theorem \ref{thm:cons_SAR}, with the exception of one difference when proving equicontinuity. In the setting of Section \ref{sec:nonpar_ext}, we obtain
				via Proposition \ref{prop:Sigma_diff_bound_np} that $\left\Vert \Sigma(\tau)-\Sigma\left(\tau^*\right) \right\Vert =O_p\left(\varepsilon\right)$, the $\varsigma(r)$ factor being omitted because only finitely many neighborhoods contribute due to compactness of $\mathcal{T}$.
			\end{proof}
			\begin{proof}[Proof of Theorem \ref{thm:stat_appr_np}]
				Writing, $\delta(z)=\left(\widehat\delta_1'\varphi_1(z),\ldots,\widehat\delta_{d_\zeta}'\varphi_{d_\zeta}(z)\right)'$ and taking $t_1=\left(\widehat\gamma',\hat\delta(z)'\right)'$ and $t_2=\left(\gamma_0',\zeta_0(z)'\right)'$ in Proposition \ref{prop:Sigma_diff_bound_np} implies (we suppress the argument $z$)
				\begin{eqnarray*}
					\left \Vert \Sigma \left( \widehat{\tau }\right) -\Sigma
					\right \Vert =O_{p}\left(\varsigma(r)\left(\left \Vert \widehat{\gamma }-\gamma
					_{0}\right \Vert +\left \Vert \widehat{\delta }-\zeta
					_{0}\right \Vert\right) \right)&=&O_{p}\left( \varsigma(r)\left(\left \Vert \widehat{\tau }-\tau
					_{0}\right \Vert+\left\Vert \nu\right\Vert\right) \right)\\
					&=&O_p\left(\varsigma(r)\max\left\{\sqrt{d_\tau/n}, \sqrt{\sum_{\ell=1}^{d_\zeta}r_\ell^{-2\kappa_\ell}}\right\}\right),
				\end{eqnarray*}
				uniformly on $\mathcal{Z}$.
				Thus we have
				\begin{equation*}
					\left \Vert \Sigma \left( \widehat{\tau }\right) ^{-1}-\Sigma
					^{-1}\right \Vert \leq \left \Vert \Sigma \left( \widehat{\tau }\right)
					^{-1}\right \Vert \left \Vert \Sigma \left( \widehat{\tau }\right) -\Sigma
					\right \Vert \left \Vert \Sigma ^{-1}\right \Vert =O_p\left(\varsigma(r)\max\left\{\sqrt{d_\tau/n}, \sqrt{\sum_{\ell=1}^{d_\zeta}r_\ell^{-2\kappa_\ell}}\right\}\right).
				\end{equation*}%
				
				And similarly,%
				\begin{eqnarray*}
					&&\left \Vert \left( \frac{1}{n}\Psi ^{\prime }\Sigma \left( \widehat{\tau }%
					\right) ^{-1}\Psi \right) ^{-1}-\left( \frac{1}{n}\Psi ^{\prime }\Sigma
					^{-1}\Psi \right) ^{-1}\right \Vert  \\
					&\leq &\left \Vert \left( \frac{1}{n}\Psi ^{\prime }\Sigma \left( \widehat{%
						\tau }\right) ^{-1}\Psi \right) ^{-1}\right \Vert \left \Vert \frac{1}{n}\Psi
					^{\prime }\left( \Sigma \left( \widehat{\tau }\right) ^{-1}-\Sigma
					^{-1}\right) \Psi \right \Vert \left \Vert \left( \frac{1}{n}\Psi ^{\prime
					}\Sigma ^{-1}\Psi \right) ^{-1}\right \Vert  \\
					&=&O_{p}\left( \left \Vert \Sigma \left( \widehat{\tau }\right) ^{-1}-\Sigma
					^{-1}\right \Vert \right) =O_{p}\left( \varsigma (r)\max \left \{ \sqrt{%
						d_{\tau }/n},\sqrt{\sum_{\ell=1}^{d_\zeta}r_\ell^{-2\kappa_\ell}}\right \}
					\right) .
				\end{eqnarray*}%
				As in the proof of Theorem \ref{thm:stat_appr},
				$n\widehat{m}_{n}=\widehat{\sigma }^{-2}{u}^{\prime }\Sigma \left( \widehat{%
					\tau }\right) ^{-1}\Psi \lbrack \Psi ^{\prime }\Sigma \left( \widehat{\tau }
				\right) ^{-1}\Psi ]^{-1}\Psi ^{\prime }\Sigma \left( \widehat{\tau }\right)
				^{-1}{u}+\widehat{\sigma }^{-2}\sum_{k=1}^4 A_{k},
				$
				where $\gamma $ in the parametric setting is changed to $\tau $ in this
				nonparametric setting. Then, by the MVT,
				\begin{eqnarray*}
					&&\left \vert u^{\prime }\left( \Sigma \left( \widehat{\tau }\right)
					^{-1}\Psi \lbrack \Psi ^{\prime }\Sigma \left( \widehat{\tau }\right)
					^{-1}\Psi ]^{-1}\Psi ^{\prime }\Sigma \left( \widehat{\tau }\right)
					^{-1}-\Sigma^{-1}\Psi \lbrack \Psi ^{\prime }\Sigma^{-1}\Psi
					]^{-1}\Psi ^{\prime }\Sigma^{-1}\right) u\right \vert  \\
					&\leq &2\left(\sup_{t }\left \Vert \frac{1}{\sqrt{n}}u^{\prime }\Sigma \left(
					t \right) ^{-1}\Psi \right \Vert \left \Vert \left( \frac{1}{n}\Psi
					^{\prime }\Sigma \left( t \right) ^{-1}\Psi \right) ^{-1}\right \Vert\right)
					\sum_{j=1}^{d_{\tau }}\left \Vert \frac{1}{\sqrt{n}}\Psi ^{\prime }\left(
					\Sigma \left( \widetilde\tau \right) ^{-1}\Sigma _{j}\left( \widetilde\tau \right) \Sigma \left(\widetilde
					\tau \right) ^{-1}\right) u\right \Vert\\
					&\times& \left \vert \widetilde{\tau }_{j}-\tau
					_{j0}\right \vert
					+2\sup_{t }\left \Vert \frac{1}{\sqrt{n}}u^{\prime }\Sigma \left( t
					\right) ^{-1}\Psi \right \Vert \left \Vert \left( \frac{1}{n}\Psi ^{\prime
					}\Sigma \left( t \right) ^{-1}\Psi \right) ^{-1}\right \Vert \left \Vert
					\frac{1}{\sqrt{n}}\Psi ^{\prime }\left( \Sigma_0-\Sigma\right) u\right \Vert  \\
					&&+\left \Vert \frac{1}{\sqrt{n}}u^{\prime }\Sigma^{-1}\Psi \right \Vert
					^{2}\left \Vert \left( \frac{1}{n}\Psi ^{\prime }\Sigma \left( \widehat{\tau }%
					\right) ^{-1}\Psi \right) ^{-1}-\left( \frac{1}{n}\Psi ^{\prime }\Sigma^{-1}\Psi \right) ^{-1}\right \Vert  \\
					&=&O_{p}(\sqrt{p})O_{p}(d_{\tau }\sqrt{p}\varsigma (r)/\sqrt{n})+O_{p}(\sqrt{%
						p})O_{p}\left(\sqrt{p}\varsigma (r)\sqrt{\sum_{\ell=1}^{d_\zeta}r_\ell^{-2\kappa_\ell}}\right)\\
					&+&O_{p}(p)O_{p}\left( \varsigma (r)\max \left \{ \sqrt{%
						d_{\tau }/n},\sqrt{\sum_{\ell=1}^{d_\zeta}r_\ell^{-2\kappa_\ell}}\right \}
					\right)  \\
					&=&O_{p}\left( p\varsigma (r)\max \left \{ d_{\tau }/\sqrt{n},\sqrt{\sum_{\ell=1}^{d_\zeta}r_\ell^{-2\kappa_\ell}}\right \} \right) =o_{p}(\sqrt{%
						p}),
				\end{eqnarray*}%
				where the last equality holds under the conditions of the theorem. Next, it remains to show $A_{k}=o_{p}(p^{1/2}), k=1,\ldots,4$. The order of $A_{k}$, $k\leq 3$, is the same as
				the parametric case:%
				\begin{eqnarray*}
					\left \vert A_{1}\right \vert  &=&\left \vert {u}^{\prime }\Sigma \left(
					\widehat{\tau }\right) ^{-1}\left( {\theta }_{0}-\widehat{{f}}\right)
					\right \vert  \leq\sup_{\alpha ,t }\left \Vert u^{\prime }\Sigma \left( t\right) ^{-1}\frac{\partial {f}(x,{\alpha })}{\partial \alpha _{j}}%
					\right \Vert \left \vert \alpha _{j}^{\ast }-\widetilde{\alpha }%
					_{j}\right \vert +\frac{p^{1/4}}{n^{1/2}}\sup_{t }\left \Vert u^{\prime
					}\Sigma \left( t \right) ^{-1}h\right \Vert  \\
					&=&O_{p}(\sqrt{n})O_{p}(\frac{1}{\sqrt{n}})+O(\frac{p^{1/4}}{n^{1/2}})O_{p}(%
					\sqrt{n})=O_{p}(p^{1/4})=o_{p}(p^{1/2}),\\
					|A_{2}| &=&\left \vert (u\mathbf{+}\theta _{0}-\widehat{f})^{\prime
					}\left( \Sigma \left( \widehat{\tau }\right) ^{-1}-\Sigma \left( \widehat{%
						\tau }\right) ^{-1}\Psi \lbrack \Psi ^{\prime }\Sigma \left( \widehat{\tau }%
					\right) ^{-1}\Psi ]^{-1}\Psi ^{\prime }\Sigma \left( \widehat{\tau }\right)
					^{-1}\right) e\right \vert  \\
					&\leq &\sup_{t }|u^{\prime }\Sigma \left( t \right) ^{-1}e|+\sup_{t
					}\left \vert u^{\prime }\Sigma \left( t \right) ^{-1}\Psi \lbrack \Psi
					^{\prime }\Sigma \left( t \right) ^{-1}\Psi ]^{-1}\Psi ^{\prime }\Sigma
					\left( t \right) ^{-1}e\right \vert  \\
					&&+\left \Vert {\theta }_{0}-\widehat{{f}}\right \Vert \sup_{t }\left(
					\left \Vert \Sigma \left( t \right) ^{-1}\right \Vert +\left \Vert \Sigma
					\left( t \right) ^{-1}\Psi \lbrack \Psi ^{\prime }\Sigma \left( t
					\right) ^{-1}\Psi ]^{-1}\Psi ^{\prime }\Sigma \left( t \right)
					^{-1}\right \Vert \right) \left\Vert e\right\Vert \\
					&=&O_{p}(p^{-\mu }n^{1/2})+O_{p}(p^{-\mu +1/4}n^{1/2})=O_{p}(p^{-\mu
						+1/4}n^{1/2})=o_{p}(\sqrt{p}),\\
					\left \vert A_{3}\right \vert  &=&\left \vert {u}^{\prime }\Sigma \left(
					\widehat{\tau }\right) ^{-1}\Psi \left( \Psi ^{\prime }\Sigma \left(
					\widehat{\tau }\right) ^{-1}\Psi \right) ^{-1}\Psi ^{\prime }\Sigma \left(
					\widehat{\tau }\right) ^{-1}({\theta }_{0}-\widehat{{f}})\right \vert  \\
					&\leq &\sup_{\alpha,t }\sum_{j=1}^{d_{\alpha }}\left \Vert u^{\prime
					}\Sigma \left( t \right) ^{-1}\Psi \left( \Psi ^{\prime }\Sigma \left(
					t \right) ^{-1}\Psi \right) ^{-1}\Psi ^{\prime }\Sigma \left( t
					\right) ^{-1}\frac{\partial {f}(x,{\alpha })}{\partial \alpha _{j}}%
					\right \Vert \left \vert \alpha _{j}^{\ast }-\widetilde{\alpha }%
					_{j}\right \vert  \\
					&&+\frac{p^{1/4}}{n^{1/2}}\sup_{t }\left \Vert u^{\prime }\Sigma \left(
					t \right) ^{-1}\Psi \left( \Psi ^{\prime }\Sigma \left( t\right)
					^{-1}\Psi \right) ^{-1}\Psi ^{\prime }\Sigma \left( t \right)
					^{-1}h\right \Vert  \\
					&=&O_{p}(1)+O_{p}(p^{1/4})=O_{p}(p^{1/4})=o_{p}(p^{1/2}).
				\end{eqnarray*}%
				However, $A_{4}$ has a different order. Under $H_{\ell }$,
				\begin{eqnarray*}
					A_{4} &=&\left( {\theta }_{0}-\widehat{{f}}\right) ^{\prime }\Sigma
					\left( \widehat{\gamma }\right) ^{-1}\left( {\theta }_{0}-\widehat{{f}}%
					\right)  \\
					&=&\left( {\theta }_{0}-\widehat{{f}}\right) ^{\prime }\Sigma_0
					^{-1}\left( {\theta }_{0}-\widehat{{f}}\right) +\left( {\theta }_{0}-%
					\widehat{{f}}\right) ^{\prime }\left( \Sigma \left( \widehat{\tau }%
					\right) ^{-1}-\Sigma ^{-1}\right) \left( {\theta }_{0}-\widehat{{f}}%
					\right)  \\
					&=&\frac{p^{1/2}}{n}h^{\prime }\Sigma _{0}^{-1}h+o_{p}(1)+O_{p}\left(
					p^{1/2}\right) O_{p}\left( \varsigma (r)\max \left \{ \sqrt{d_{\tau }/n},%
					\sqrt{\sum_{\ell=1}^{d_\zeta}r_\ell^{-2\kappa_\ell}}\right \} \right)  \\
					&=&\frac{p^{1/2}}{n}h^{\prime }\Sigma _{0}^{-1}h+o_{p}(\sqrt{p}),
				\end{eqnarray*}
				where the last equality holds under the conditions of the theorem. Combining these together, we have $
				n\widehat{m}_{n}=\widehat{\sigma }^{-2}\widehat{{v}}^{\prime }\Sigma \left(
				\widehat{\tau }\right) ^{-1}\widehat{{u}}={\sigma _{0}^{-2}}%
				\varepsilon ^{\prime }\fancyv \varepsilon +\left({p^{1/2}}/{n}\right){h}^{\prime
				}\Sigma _{0}^{-1}{h}+o_{p}(\sqrt{p}),$
				under $H_{\ell }$ and the same expression holds with $h=0$ under $H_{0}$.
			\end{proof}
			\begin{proof}[Proof of Theorem \ref{thm:stat_properties_np}]
				Omitted as it is similar to the proof of Theorem \ref{thm:stat_properties}.
			\end{proof}

\allowdisplaybreaks
\setcounter{section}{0}
\renewcommand\thesection{S.\Alph{section}}
\setcounter{equation}{0}
\renewcommand\theequation{S.\Alph{section}.\arabic{equation}}
\setcounter{table}{0}
\renewcommand\thetable{OT.\arabic{table}}
\setcounter{lemma}{0}
\renewcommand\thelemma{LS.\arabic{lemma}}
\setcounter{theorem}{0}
\renewcommand\thetheorem{TS.\arabic{theorem}}
\setcounter{corollary}{0}
\renewcommand\thecorollary{CS.\arabic{corollary}}
\begin{center}
	\Huge {Supplementary online appendix to `Consistent specification testing under spatial dependence'}\\
	\bigskip
	\Large{Abhimanyu Gupta and Xi Qu}
	\\\today
\end{center}
\section{Additional simulation results: Unboundedly supported regressors and asymptotic critical values}
This section provides additional simulation results using the same design as in Section \ref{sec:mc} of the main body of the paper. Recall that the paper reports only bootstrap results for the compactly supported regressors case.  Here we include results using asymptotic critical values for both the compactly and unbounded supported regressor cases, as well as bootstrap results for the latter, focusing on the SARARMA(0,1,0) model. The results are in Tables \ref{table:newsimsapp1}-\ref{table:newsimsapp4} and our findings match those in the main text, with the bootstrap typically offering better size control.

\section{Proofs of Theorems \ref{thm:stat_appr} and \ref{thm:stat_properties}}
\begin{proof}[Proof of Theorem \ref{thm:stat_appr}]
	
	From Corollary \ref{cor:Sigma_equic} and Lemma \ref{lemma:gamma_order}, $\left \Vert \Sigma \left( \widehat{%
		\gamma }\right) -\Sigma \right \Vert =O_{p}\left( \left \Vert \widehat{\gamma}%
	-\gamma _{0}\right \Vert \right) =\sqrt{d_{\gamma }/n}$, so we have, from
	Assumption \ref{ass:Sigma_spec_norm},
	\begin{equation}
		\left \Vert \Sigma \left( \widehat{\gamma }\right) ^{-1}-\Sigma ^{-1}\right
		\Vert \leq \left \Vert \Sigma \left( \widehat{\gamma }\right) ^{-1}\right
		\Vert \left \Vert \Sigma \left( \widehat{\gamma }\right) -\Sigma \right
		\Vert \left \Vert \Sigma ^{-1}\right \Vert =O_{p}\left( \left \Vert \widehat{%
			\gamma}-\gamma _{0}\right \Vert \right) =\sqrt{d_{\gamma }/n}.
		\label{sigma_inv_diff_bound}
	\end{equation}%
	Similarly,
	\begin{eqnarray*}
		&&\left \Vert \left( \frac{1}{n}\Psi ^{\prime }\Sigma \left( \widehat{\gamma
		}\right) ^{-1}\Psi \right) ^{-1}-\left( \frac{1}{n}\Psi ^{\prime }\Sigma
		^{-1}\Psi \right) ^{-1}\right \Vert \\
		&\leq &\left \Vert \left( \frac{1}{n}\Psi ^{\prime }\Sigma \left( \widehat{%
			\gamma }\right) ^{-1}\Psi \right) ^{-1}\right \Vert \left \Vert \frac{1}{n}%
		\Psi ^{\prime }\left( \Sigma \left( \widehat{\gamma }\right) ^{-1}-\Sigma
		^{-1}\right) \Psi \right \Vert \left \Vert \left( \frac{1}{n}\Psi ^{\prime
		}\Sigma ^{-1}\Psi \right) ^{-1}\right \Vert \\
		&\leq &\sup_{\gamma \in \Gamma }\left \Vert \left( \frac{1}{n}\Psi ^{\prime
		}\Sigma \left( \gamma \right) ^{-1}\Psi \right) ^{-1}\right \Vert \left
		\Vert \Sigma \left( \widehat{\gamma }\right) ^{-1}-\Sigma ^{-1}\right \Vert
		\left \Vert \frac{1}{\sqrt{n}}\Psi \right \Vert ^{2}=O_{p}\left( \left \Vert
		\widehat{\gamma}-\gamma _{0}\right \Vert \right) =\sqrt{d_{\gamma }/n}.
	\end{eqnarray*}
	
	By Assumption \ref{ass:alpha_order}, we have $\widehat{\alpha }-\alpha
	^{\ast }=O_{p}(1/\sqrt{n})$. Denote by $\theta^{\ast }(x)=\psi
	(x)^{\prime }\beta^{\ast }$, where $\beta ^{\ast }=\operatorname*{arg\,min}
	_{\beta }\mathcal{E}[y_{i}-\psi (x_{i})^{\prime }\beta )]^{2}$, and set $\theta
	_{ni}=\theta (x_{i})$, $\theta _{0i}=\theta _{0}(x_{i})$, $\widehat{%
		\theta }_{i}=\psi _{i}^{\prime }\widehat{\beta }$, $\widehat{f}%
	_{i}=f(x_{i},\widehat{\alpha })$, $f_{i}^{\ast }=f(x_{i},\alpha
	^{\ast })$. Then $\widehat{u}_{i}=y_{i}-f(x_{i},\widehat{\alpha }%
	)=u_{i}+\theta _{0i}-\widehat{f}_{i}$. Let ${\theta _{0}}=(\theta
	_{0}\left( x_{1}\right) ,\ldots,\theta _{0}\left( x_{n}\right) )^{\prime }$ as
	before, with similar component-wise notation for the $n$-dimensional vectors ${\theta ^{\ast }}$, $\widehat{f}$, and $u$. As the approximation
	error is ${e}={\theta }_{0}-{\theta }^{\ast }={\theta }_{0}-\Psi \beta
	^{\ast }$,%
	\begin{eqnarray*}
		\widehat{{\theta }}-{\theta }^{\ast } &=&\Psi (\widehat{\beta}-\beta
		^{\ast })=\Psi \left( \Psi ^{\prime }\Sigma \left( \widehat{\gamma }%
		\right) ^{-1}\Psi \right) ^{-1}\Psi ^{\prime }\Sigma \left( \widehat{\gamma }%
		\right) ^{-1}({u+\theta }_{0}-\Psi \beta ^{\ast }) \\
		&=&\Psi \left( \Psi ^{\prime }\Sigma \left( \widehat{\gamma }\right)
		^{-1}\Psi \right) ^{-1}\Psi ^{\prime }\Sigma \left( \widehat{\gamma }\right)
		^{-1}({u+e}),
	\end{eqnarray*}%
	so that
	\begin{eqnarray*}
		n\widehat{m}_{n} &=&\widehat{\sigma }^{-2}\widehat{{v}}^{\prime }\Sigma
		\left( \widehat{\gamma }\right) ^{-1}\widehat{{u}}=\widehat{\sigma }^{-2}\left(
		\widehat{{\theta }}-\widehat{f}\right)^{\prime }\Sigma \left( \widehat{%
			\gamma }\right) ^{-1}\left(y-\widehat{f}\right) \\
		&=&\widehat{\sigma }^{-2}\left( \widehat{{\theta }}-{\theta }^{\ast
		}+{\theta }^{\ast }-{\theta }_{0}+{\theta }_{0}-\widehat{{f}}\right)
		^{\prime }\Sigma \left( \widehat{\gamma }\right) ^{-1}\left( {u+\theta }_{0}-%
		\widehat{{f}}\right)  \\
		&=&\widehat{\sigma }^{-2}\left[ \Psi \left( \Psi ^{\prime }\Sigma \left(
		\widehat{\gamma }\right) ^{-1}\Psi \right) ^{-1}\Psi ^{\prime }\Sigma \left(
		\widehat{\gamma }\right) ^{-1}({u+e}{)}-{e}+{\theta }_{0}-\widehat{{f}}%
		\right] ^{\prime }\Sigma \left( \widehat{\gamma }\right) ^{-1}\left( {%
			u+\theta }_{0}-\widehat{{f}}\right)  \\
		&=&\widehat{\sigma }^{-2}{u}^{\prime }\Sigma \left( \widehat{\gamma }\right)
		^{-1}\Psi \lbrack \Psi ^{\prime }\Sigma \left( \widehat{\gamma }\right)
		^{-1}\Psi ]^{-1}\Psi ^{\prime }\Sigma \left( \widehat{\gamma }\right) ^{-1}{u%
		}+\widehat{\sigma }^{-2}{{u}^{\prime }}\Sigma \left( \widehat{\gamma }%
		\right) ^{-1}{\left({\theta }_{0}-\widehat{{f}}\right)} \\
		&{-}&\widehat{\sigma }^{-2}\left({u+\theta }_{0}-\widehat{{f}}\right)^{\prime
		}\Sigma \left( \widehat{\gamma }\right) ^{-1}\left( I-\Psi \lbrack \Psi
		^{\prime }\Sigma \left( \widehat{\gamma }\right) ^{-1}\Psi ]^{-1}\Psi
		^{\prime }\Sigma \left( \widehat{\gamma }\right) ^{-1}\right) {e} \\
		&+&\widehat{\sigma }^{-2}\left({\theta }_{0}-\widehat{{f}}\right)^{\prime }\Sigma
		\left( \widehat{\gamma }\right) ^{-1}\Psi \lbrack \Psi ^{\prime }\Sigma
		\left( \widehat{\gamma }\right) ^{-1}\Psi ]^{-1}\Psi ^{\prime }\Sigma \left(
		\widehat{\gamma }\right) ^{-1}{u} \\
		&+&\widehat{\sigma }^{-2}({\theta }_{0}-\widehat{{f}})^{\prime }\Sigma
		\left( \widehat{\gamma }\right) ^{-1}({\theta }_{0}-\widehat{{f}}) \\
		&=&\widehat{\sigma }^{-2}{u}^{\prime }\Sigma \left( \widehat{\gamma }\right)
		^{-1}\Psi \lbrack \Psi ^{\prime }\Sigma \left( \widehat{\gamma }\right)
		^{-1}\Psi ]^{-1}\Psi ^{\prime }\Sigma \left( \widehat{\gamma }\right) ^{-1}u+\widehat\sigma^{-2}\left(
		A_{1}+A_{2}+A_{3}+A_{4}\right),
	\end{eqnarray*}%
	say. First, for any vector $g$ comprising of conditioned random variables,
	\begin{equation*}
		\mathcal{E}\left[ \left( u^{\prime }\Sigma (\gamma )^{-1}{g}\right) ^{2}\right]
		=g^{\prime }\Sigma (\gamma )^{-1}\Sigma \Sigma (\gamma )^{-1}{g} \leq \sup_{\gamma \in \Gamma }\left \Vert \Sigma (\gamma
		)^{-1}\right \Vert ^{2}\left \Vert \Sigma \right \Vert \left \Vert g\right \Vert
		^{2}=O_p\left(\left \Vert g\right \Vert ^{2}\right),
	\end{equation*}%
	uniformly in $\gamma\in\Gamma$, where the expectation is taken conditional on $g$.
	Similarly,%
	\begin{eqnarray*}
		&&\mathcal{E}\left[ \left( u^{\prime }\Sigma (\gamma )^{-1}\Psi \left( \Psi ^{\prime
		}\Sigma (\gamma )^{-1}\Psi \right) ^{-1}\Psi ^{\prime }\Sigma (\gamma )^{-1}{%
			g}\right) ^{2}\right]  \\
		&=&g^{\prime }\Sigma (\gamma )^{-1}\Psi \left( \Psi ^{\prime }\Sigma (\gamma
		)^{-1}\Psi \right) ^{-1}\Psi ^{\prime }\Sigma (\gamma )^{-1}\Sigma \Sigma
		(\gamma )^{-1}\Psi \left( \Psi ^{\prime }\Sigma (\gamma )^{-1}\Psi \right)
		^{-1}\Psi ^{\prime }\Sigma (\gamma )^{-1}{g} \\
		&\leq &\sup_{\gamma \in \Gamma }\left \Vert \Sigma (\gamma )^{-1}\right \Vert
		^{4}\left \Vert \Sigma \right \Vert \left \Vert \frac{1}{n}\Psi \left( \frac{1}{%
			n}\Psi ^{\prime }\Sigma (\gamma )^{-1}\Psi \right) ^{-1}\Psi ^{\prime
		}\right \Vert ^{2}\left \Vert g\right \Vert ^{2}=O_p\left(\left \Vert g\right \Vert ^{2}\right),
	\end{eqnarray*}%
	uniformly and, for any $j=1$, \ldots, $d_{\gamma }$,%
	\begin{eqnarray*}
		\mathcal{E}\left[ \left( u^{\prime }\Sigma (\gamma )^{-1}\Sigma _{j}\left( \gamma
		\right) \Sigma \left( \gamma \right) ^{-1}{g}\right) ^{2}\right] &=&g^{\prime }\Sigma (\gamma )^{-1}\Sigma _{j}\left( \gamma \right) \Sigma
		\left( \gamma \right) ^{-1}\Sigma \Sigma (\gamma )^{-1}\Sigma _{j}\left(
		\gamma \right) \Sigma \left( \gamma \right) ^{-1}{g}\\
		&\leq& \sup_{\gamma \in
			\Gamma }\left \Vert \Sigma (\gamma )^{-1}\right \Vert ^{4}\left \Vert \Sigma
		_{j}\left( \gamma \right) \right \Vert ^{2}\left \Vert \Sigma \right \Vert
		\left \Vert g\right \Vert ^{2}=O_p\left(\left \Vert g\right \Vert ^{2}\right).
	\end{eqnarray*}%
	Let $\Psi _{k}$ be the $k$-th column of $%
	\Psi $, $k=1,\ldots,p$. Then, we have $\left\Vert\Psi _{k}/\sqrt{n}\right\Vert=O_p(1)$
	and for any $\gamma \in \Gamma $,%
	\begin{eqnarray*}
		\mathcal{E}\left \Vert \frac{1}{\sqrt{n}}u^{\prime }\Sigma \left( \gamma \right)
		^{-1}\Psi \right \Vert^2  &\leq &{\sum_{k=1}^{p}\mathcal{E}\left( u^{\prime }\Sigma
			\left( \gamma \right) ^{-1}\frac{1}{\sqrt{n}}\Psi _{k}\right) ^{2}}=O_p\left({p
		}\right), \\
		\mathcal{E}\left \Vert \frac{1}{\sqrt{n}}u^{\prime }\Sigma \left( \gamma \right)
		^{-1}\Sigma _{j}\left( \gamma \right) \Sigma \left( \gamma \right) ^{-1}\Psi
		\right \Vert^2  &\leq &{\sum_{k=1}^{p}\mathcal{E}\left( u^{\prime }\Sigma \left(
			\gamma \right) ^{-1}\Sigma _{j}\left( \gamma \right) \Sigma \left( \gamma
			\right) ^{-1}\frac{1}{\sqrt{n}}\Psi _{k}\right) ^{2}}=O({p}).
	\end{eqnarray*}%
	Therefore, by Chebyshev's inequality,
	\begin{equation*}
		\sup_{\gamma \in \Gamma }\left \Vert \frac{1}{\sqrt{n}}u^{\prime }\Sigma
		\left( \gamma \right) ^{-1}\Psi \right \Vert =O_{p}(\sqrt{p})\text{ \  \ and \
			\ }\sup_{\gamma \in \Gamma }\left \Vert \frac{1}{\sqrt{n}}u^{\prime }\Sigma
		\left( \gamma \right) ^{-1}\Sigma _{j}\left( \gamma \right) \Sigma \left(
		\gamma \right) ^{-1}\Psi \right \Vert =O_{p}(\sqrt{p}).
	\end{equation*}%
	By the decomposition%
	\begin{eqnarray*}
		&&u^{\prime }\left( \Sigma \left( \widehat{\gamma }\right) ^{-1}\Psi \lbrack
		\Psi ^{\prime }\Sigma \left( \widehat{\gamma }\right) ^{-1}\Psi ]^{-1}\Psi
		^{\prime }\Sigma \left( \widehat{\gamma }\right) ^{-1}-\Sigma ^{-1}\Psi
		\lbrack \Psi ^{\prime }\Sigma ^{-1}\Psi ]^{-1}\Psi ^{\prime }\Sigma
		^{-1}\right) u \\
		&=&u^{\prime }\left( \Sigma \left( \widehat{\gamma }\right) ^{-1}+\Sigma
		^{-1}\right) \Psi \lbrack \Psi ^{\prime }\Sigma \left( \widehat{\gamma }%
		\right) ^{-1}\Psi ]^{-1}\Psi ^{\prime }\left(
		\sum_{i=1}^{n}e_{in}e_{in}^{\prime }\right) \left( \Sigma \left( \widehat{%
			\gamma }\right) ^{-1}-\Sigma ^{-1}\right) u \\
		&&+u^{\prime }\Sigma ^{-1}\Psi \left( \lbrack \Psi ^{\prime }\Sigma \left(
		\widehat{\gamma }\right) ^{-1}\Psi ]^{-1}-[\Psi ^{\prime }\Sigma ^{-1}\Psi
		]^{-1}\right) \Psi ^{\prime }\Sigma ^{-1}u \\
		&=&u^{\prime }\left( \Sigma \left( \widehat{\gamma }\right) ^{-1}+\Sigma
		^{-1}\right) \Psi \lbrack \Psi ^{\prime }\Sigma \left( \widehat{\gamma }%
		\right) ^{-1}\Psi ]^{-1}\Psi ^{\prime }\left(
		\sum_{i=1}^{n}e_{in}e_{in}^{\prime }\right) \sum_{j=1}^{d_{\gamma }}\left(
		\Sigma \left( \widetilde{\gamma }\right) ^{-1}\Sigma _{j}\left( \widetilde{%
			\gamma }\right) \Sigma \left( \widetilde{\gamma }\right) ^{-1}\right)\\
		&\times& u (%
		\widetilde{\gamma }_{j}-\gamma _{j0})
		+u^{\prime }\Sigma ^{-1}\Psi \left( \lbrack \Psi ^{\prime }\Sigma \left(
		\widehat{\gamma }\right) ^{-1}\Psi ]^{-1}-[\Psi ^{\prime }\Sigma ^{-1}\Psi
		]^{-1}\right) \Psi ^{\prime }\Sigma ^{-1}u,
	\end{eqnarray*}%
	where $e_{in}$ is an $n\times 1$ vector with $i$-th entry one and zeros elsewhere, so $\sum_{i=1}^{n}e_{in}e_{in}^{\prime }=I_{n}$,
	and%
	\begin{eqnarray*}
		e_{in}^{\prime }\left( \Sigma \left( \widehat{\gamma }\right) ^{-1}-\Sigma
		^{-1}\right) u &=&\sum_{j=1}^{d_{\gamma }}e_{in}^{\prime }\left( \Sigma
		\left( \widetilde{\gamma }\right) ^{-1}\Sigma _{j}\left( \widetilde{\gamma }%
		\right) \Sigma \left( \widetilde{\gamma }\right) ^{-1}\right) u(\widetilde{%
			\gamma }_{j}-\gamma _{j0}) \\
		&=&e_{in}^{\prime }\sum_{j=1}^{d_{\gamma }}\left( \Sigma \left( \widetilde{%
			\gamma }\right) ^{-1}\Sigma _{j}\left( \widetilde{\gamma }\right) \Sigma
		\left( \widetilde{\gamma }\right) ^{-1}\right) u(\widetilde{\gamma }%
		_{j}-\gamma _{j0})
	\end{eqnarray*}%
	where $\widetilde{\gamma }$ is a value between $\widehat{\gamma }$ and $%
	\gamma _{0}$ due to the mean value theorem. We have
	\begin{eqnarray*}
		&&\left \vert u^{\prime }\left( \Sigma \left( \widehat{\gamma }\right)
		^{-1}\Psi \lbrack \Psi ^{\prime }\Sigma \left( \widehat{\gamma }\right)
		^{-1}\Psi ]^{-1}\Psi ^{\prime }\Sigma \left( \widehat{\gamma }\right)
		^{-1}-\Sigma ^{-1}\Psi \lbrack \Psi ^{\prime }\Sigma ^{-1}\Psi ]^{-1}\Psi
		^{\prime }\Sigma ^{-1}\right) u\right \vert  \\
		&\leq &2\sup_{\gamma \in \Gamma }\left \Vert \frac{1}{\sqrt{n}}u^{\prime
		}\Sigma \left( \gamma \right) ^{-1}\Psi \right \Vert \left \Vert \left( \frac{1%
		}{n}\Psi ^{\prime }\Sigma \left( \gamma \right) ^{-1}\Psi \right)
		^{-1}\right \Vert \sum_{j=1}^{d_{\gamma }}\left \Vert \frac{1}{\sqrt{n}}\Psi
		^{\prime }\left( \Sigma \left( \gamma \right) ^{-1}\Sigma _{j}\left( \gamma
		\right) \Sigma \left( \gamma \right) ^{-1}\right) u\right \Vert\\
		&\times& \left \vert
		\widetilde{\gamma }_{j}-\gamma _{j0}\right \vert
		+\left \Vert \frac{1}{\sqrt{n}}u^{\prime }\Sigma ^{-1}\Psi \right \Vert
		^{2}\left \Vert \left( \frac{1}{n}\Psi ^{\prime }\Sigma \left( \widehat{%
			\gamma }\right) ^{-1}\Psi \right) ^{-1}-\left( \frac{1}{n}\Psi ^{\prime
		}\Sigma ^{-1}\Psi \right) ^{-1}\right \Vert  \\
		&=&O_{p}(\sqrt{p}) O_{p}(d_{\gamma }\sqrt{p}/\sqrt{n})+O_{p}(p)
		O_{p}(\sqrt{d_{\gamma }}/\sqrt{n})=O_{p}(d_{\gamma }p/\sqrt{n})=o_{p}(\sqrt{p%
		}),
	\end{eqnarray*}%
	where the last equality holds under the conditions of the theorem.
	
	It remains to show that
	\begin{equation}
		A_{i}=o_{p}\left( {p^{1/2}}\right) ,i=1,\ldots ,4.  \label{As_neg}
	\end{equation}%
	It is convenient to perform the calculations under $H_{\ell }$, which covers
	$H_{0}$ as a particular case.
	Using the mean value theorem and either $H_{0}$ or $H_{\ell }$, we can
	express%
	\begin{equation}
		{\theta }_{0i}-\widehat{{f}}_{i}={f}_{i}^{\ast }-\widehat{{f}}%
		_{i}-(p^{1/4}/n^{1/2}){h_{i}}=\sum_{j=1}^{d_{\alpha }}\frac{\partial {f}%
			(x_{i},\widetilde{\alpha })}{\partial \alpha _{j}}(\alpha _{j}^{\ast }-%
		\widetilde{\alpha }_{j})-\frac{p^{1/4}}{n^{1/2}}{h_{i},}
		\label{theta_f_diff}
	\end{equation}%
	where $\widetilde{\alpha }_{j}$ is a value between $\alpha _{j}^{\ast }$
	and $\widehat{\alpha }_{j}$. Then, for any $j=1,\ldots,d_{\alpha }$, $%
	\left
	\vert \alpha _{j}^{\ast }-\widetilde{\alpha }_{j}\right \vert {=}%
	O_{p}(1/\sqrt{n})$. Based on
	\begin{equation*}
		\sup_{\gamma \in \Gamma }\left \vert u^{\prime }\Sigma (\gamma )^{-1}\Psi
		\left( \Psi ^{\prime }\Sigma (\gamma )^{-1}\Psi \right) ^{-1}\Psi ^{\prime
		}\Sigma (\gamma )^{-1}{g}\right \vert {=}O_{p}\left( \left \Vert g\right
		\Vert \right) \text{ and }\sup_{\gamma \in \Gamma }\left \vert u^{\prime
		}\Sigma (\gamma )^{-1}g\right \vert =O_{p}\left( \left \Vert g\right \Vert
		\right)
	\end{equation*}%
	for any $\gamma \in \Gamma $ and any conditioned vector $g$, if we take $g={\partial {f}(x,{\alpha })}/{\partial \alpha _{j}}$ or $g=h$, then both satisfy $O_p\left(\left \Vert g\right \Vert\right )=O_p\left(\sqrt{n}\right)$ and it follows that
	\begin{eqnarray*}
		\left \vert A_{1}\right \vert &=&\left \vert {u}^{\prime }\Sigma \left(
		\widehat{\gamma }\right) ^{-1}\left( {\theta }_{0}-\widehat{{f}}\right)
		\right \vert \leq\sup_{\gamma ,\alpha }\sum_{j=1}^{d_{\alpha }}\left \Vert u^{\prime
		}\Sigma (\gamma )^{-1}\frac{\partial {f}(x,{\alpha })}{\partial \alpha _{j}}%
		\right \Vert \left \vert \alpha _{j}^{\ast }-\widetilde{\alpha }_{j}\right
		\vert +\frac{p^{1/4}}{n^{1/2}}\sup_{\gamma }\left \Vert u^{\prime }\Sigma
		(\gamma )^{-1}h\right \Vert \\
		&=&O_{p}(\sqrt{n})O_{p}\left(\frac{1}{\sqrt{n}}\right)+O\left(\frac{p^{1/4}}{n^{1/2}}\right)O_{p}(%
		\sqrt{n})=O_{p}(p^{1/4})=o_{p}(p^{1/2}).
	\end{eqnarray*}%
	Similarly,%
	\begin{eqnarray*}
		\left \vert A_{3}\right \vert &=&\left \vert {u}^{\prime }\Sigma \left(
		\widehat{\gamma }\right) ^{-1}\Psi \left( \Psi ^{\prime }\Sigma \left(
		\widehat{\gamma }\right) ^{-1}\Psi \right) ^{-1}\Psi ^{\prime }\Sigma \left(
		\widehat{\gamma }\right) ^{-1}({\theta }_{0}-\widehat{{f}})\right \vert
		\\
		&\leq &\sup_{\gamma ,\alpha }\sum_{j=1}^{d_{\alpha }}\left \Vert u^{\prime
		}\Sigma \left( \widehat{\gamma }\right) ^{-1}\Psi \left( \Psi ^{\prime
		}\Sigma \left( \widehat{\gamma }\right) ^{-1}\Psi \right) ^{-1}\Psi ^{\prime
		}\Sigma \left( \widehat{\gamma }\right) ^{-1}\frac{\partial {f}(x,{\alpha })%
		}{\partial \alpha _{j}}\right \Vert \left \vert \alpha _{j}^{\ast }-%
		\widetilde{\alpha }_{j}\right \vert \\
		&&+\frac{p^{1/4}}{n^{1/2}}\sup_{\gamma }\left \Vert u^{\prime }\Sigma \left(
		\widehat{\gamma }\right) ^{-1}\Psi \left( \Psi ^{\prime }\Sigma \left(
		\widehat{\gamma }\right) ^{-1}\Psi \right) ^{-1}\Psi ^{\prime }\Sigma \left(
		\widehat{\gamma }\right) ^{-1}h\right \Vert \\
		&=&O_{p}(1)+O_{p}(p^{1/4})=O_{p}(p^{1/4})=o_{p}(p^{1/2}).
	\end{eqnarray*}%
	Also, by Assumptions \ref{ass:alpha_order} and \ref{ass:alpha_der}, we have
	\begin{equation}
		\left \Vert {\theta }_{0}-\widehat{{f}}\right \Vert \leq \sup_{\alpha
		}\sum_{j=1}^{d_{\alpha }}\left \Vert \frac{\partial {f}(x,{\alpha })}{%
			\partial \alpha _{j}}\right \Vert \left \vert \alpha _{j}^{\ast }-%
		\widetilde{\alpha }_{j}\right \vert +\left \Vert h\right \Vert \frac{p^{1/4}%
		}{n^{1/2}}=O_{p}(p^{1/4}).  \label{theta_f_diff_norm}
	\end{equation}%
	By (\ref{appr_error}), we have $\left\Vert e\right\Vert=O(p^{-\mu }n^{1/2})$ and%
	\begin{eqnarray*}
		|A_{2}| &=&\left \vert (u\mathbf{+}\theta _{0}-\widehat{f})^{\prime
		}\left( \Sigma \left( \widehat{\gamma }\right) ^{-1}-\Sigma \left( \widehat{%
			\gamma }\right) ^{-1}\Psi \lbrack \Psi ^{\prime }\Sigma \left( \widehat{%
			\gamma }\right) ^{-1}\Psi ]^{-1}\Psi ^{\prime }\Sigma \left( \widehat{\gamma
		}\right) ^{-1}\right) e\right \vert \\
		&\leq &\sup_{\gamma }|u^{\prime }\Sigma \left( \gamma \right)
		^{-1}e|+\sup_{\gamma }\left \vert u^{\prime }\Sigma \left( \gamma \right)
		^{-1}\Psi \lbrack \Psi ^{\prime }\Sigma \left( \gamma \right) ^{-1}\Psi
		]^{-1}\Psi ^{\prime }\Sigma \left( \gamma \right) ^{-1}e\right \vert \\
		&&+\left \Vert {\theta }_{0}-\widehat{{f}}\right \Vert \sup_{\gamma
		}\left( \left \Vert \Sigma \left( \gamma \right) ^{-1}\right \Vert +\left
		\Vert \Sigma \left( \gamma \right) ^{-1}\Psi \lbrack \Psi ^{\prime }\Sigma
		\left( \gamma \right) ^{-1}\Psi ]^{-1}\Psi ^{\prime }\Sigma \left( \gamma
		\right) ^{-1}\right \Vert \right) \left\Vert e\right\Vert \\
		&=&O_{p}(p^{-\mu }n^{1/2})+O_{p}(p^{-\mu +1/4}n^{1/2})=O_{p}(p^{-\mu
			+1/4}n^{1/2})=o_{p}(\sqrt{p}).
	\end{eqnarray*}%
	where the last equality holds under the conditions of the theorem. Finally, under $H_{\ell }$,
	\begin{eqnarray*}
		A_{4} &=&\left( {\theta }_{0}-\widehat{{f}}\right) ^{\prime }\Sigma
		\left( \widehat{\gamma }\right) ^{-1}\left( {\theta }_{0}-\widehat{{f}}%
		\right) \\
		&=&\left( {\theta }_{0}-\widehat{{f}}\right) ^{\prime }\Sigma
		^{-1}\left( {\theta }_{0}-\widehat{{f}}\right) +\left( {\theta }_{0}-%
		\widehat{{f}}\right) ^{\prime }\left( \Sigma \left( \widehat{\gamma }%
		\right) ^{-1}-\Sigma ^{-1}\right) \left( {\theta }_{0}-\widehat{{f}}%
		\right) \\
		&=&\frac{p^{1/2}}{n}h^{\prime }\Sigma^{-1}h+o_{p}(1)+O_{p}\left( p^{1/2}d_{\gamma
		}^{1/2}/n^{1/2}\right) =\frac{p^{1/2}}{n}h^{\prime}\Sigma^{-1}h+o_{p}(\sqrt{p}).
	\end{eqnarray*}
	
	Combining these together, we have%
	\begin{equation*}
		n\widehat{m}_{n}=\widehat{\sigma }^{-2}\widehat{{v}}^{\prime }\Sigma \left(
		\widehat{\gamma }\right) ^{-1}\widehat{{u}}=\frac{1}{\sigma _{0}^{2}}%
		\varepsilon ^{\prime }\fancyv \varepsilon {+}\frac{p^{1/2}}{n}{h}^{\prime
		}\Sigma ^{-1}{h}+o_{p}(\sqrt{p}),
	\end{equation*}%
	under $H_{\ell }$ and the same expression holds with $h=0$ under $H_{0}$.

\end{proof}
\begin{proof}[Proof of Theorem \ref{thm:stat_properties}]
	(1) Follows from Theorems \ref{thm:stat_appr} and \ref{thm:appr_clt}.
	(2)
	Following reasoning analogous to the proofs of Theorems \ref{thm:stat_appr} and \ref{thm:appr_clt}, it can be
	shown that under $H_{1}$,
	$\widehat{m}_{n}=n^{-1}{\sigma }^{*-2}(\theta _{0}-f_{}^{\ast })^{\prime }\Sigma \left(
	\gamma ^{\ast }\right) ^{-1}(\theta _{0}-f^{\ast })+o_{p}(1).$ Then,
	\[
	\fancyt=\left(n\widehat{m}_{n}-p\right)/{\sqrt{2p}}=\left({n}/{\sqrt{p}}\right){(\theta _{0}-f^{\ast })^{\prime }\Sigma \left(
		\gamma ^{\ast }\right) ^{-1}(\theta _{0}-f^{\ast })}/\left({\sqrt{2}n\sigma
		^{\ast 2}}\right)+o_{p}\left({n}/{\sqrt{p}}\right)
	\]
	and for any nonstochastic sequence $\{C_{n}\}$, $C_{n}=o(n/p^{1/2})$, $P(\fancyt>C_{n})\rightarrow 1,$ so that consistency follows.
	(3) Follows from Theorems \ref{thm:stat_appr} and \ref{thm:appr_clt}.
\end{proof}
\section{Proof of Theorem \ref{thm:cons_SAR}}
\begin{proof} We prove the result under $H_1$, which is the more challenging case as it involves nonparametric estimation. The proof under $H_0$ is similar. We will show $\widehat\phi\overset{p}\rightarrow \phi_0$, whence $\widehat{\beta}\overset{p}\rightarrow \beta_0$ and $\widehat{\sigma}^2\overset{p}\rightarrow \sigma^2_0$ follow from (\ref{betapmle_SAR}) and (\ref{sigmapmleconc_SAR}) respectively. First note that
	\begin{equation}\label{cons1}
		\mathcal{L}\left( \phi \right) -\mathcal{L} =\log \overline{\sigma }^{2}\left( \phi
		\right) /\overline{\sigma }^{2}-n^{-1}\log \left\vert T'(\lambda)\Sigma(\gamma)^{-1}T(\lambda)\Sigma\right\vert =\log \overline{\sigma }^{2}\left( \phi \right) /\sigma ^{2}\left(
		\phi \right) -\log \overline{\sigma }^{2}/\sigma _{0}^{2}+\log r(\phi
		),
	\end{equation}
	where recall that
	$
	\sigma ^{2}\left( \phi \right) =n^{-1}\sigma_0^2tr\left(T'(\lambda)\Sigma(\gamma)^{-1}T(\lambda)\Sigma\right),\text{ }\overline{\sigma }^{2}=\overline{\sigma }^{2}\left( \phi
	_{0}\right) =n^{-1}u^{\prime } E'M Eu,
	$
	using (\ref{sigmapmleconc_SAR}) and also  $r(\phi )=n^{-1}tr\left(T'(\lambda)\Sigma(\gamma)^{-1}T(\lambda)\Sigma\right)/\left\vert T'(\lambda)\Sigma(\gamma)^{-1}T(\lambda)\Sigma\right\vert ^{1/n}$.
	
	We have
	$\overline{\sigma }^{2}\left( \phi \right) =n^{-1}\left\{ S^{-1^{\prime
	}}\left( \Psi\beta _{0}+u\right) \right\} ^{\prime }S^{\prime }(\lambda
	) E(\gamma)'M\left( \gamma \right)  E(\gamma)S(\lambda )S^{-1}\left( \Psi\beta _{0}+u\right)=c_1\left( \phi \right) +c_2\left( \phi \right) +c_3\left( \phi \right)$,
	where
	\begin{eqnarray*}
		c_1\left( \phi \right) &=&n^{-1}\beta _{0}^{\prime }\Psi^{\prime }T^{\prime
		}(\lambda ) E(\gamma)'M\left( \gamma \right) E(\gamma) T(\lambda )\Psi\beta _{0}, \\
		\text{\ }c_2\left( \phi \right) &=&n^{-1}\sigma _{0}^{2}tr\left( T^{\prime
		}(\lambda ) E(\gamma)'M\left( \gamma \right) E(\gamma) T(\lambda )\Sigma\right) , \\
		c_3\left( \phi \right) &=&n^{-1}tr\left( T^{\prime }(\lambda ) E(\gamma)'M\left( \gamma
		\right)  E(\gamma)T(\lambda )\left(uu^{\prime }-\sigma _{0}^{2}\Sigma\right)\right)\\
		&+&2n^{-1}\beta
		_{0}^{\prime }\Psi^{\prime }T^{\prime }(\lambda ) E(\gamma)'M\left( \gamma \right)
		E(\gamma)T(\lambda )u.
	\end{eqnarray*}
	Note that in the particular cases of Theorems \ref{thm:consistency} and \ref{thm:consistency_np}, where $T(\lambda)=I_n$, the $c_1$ term vanishes because $M\left( \gamma \right) E(\gamma) \Psi=0$ and $M\left( \tau \right) E(\tau) \Psi=0$. Proceeding with the current, more general proof
	\begin{eqnarray*}
		\log \frac{\overline{\sigma }^{2}\left( \phi \right) }{\sigma ^{2}\left(
			\phi \right) } &=&\log \frac{\overline{\sigma }^{2}\left( \phi \right)
		}{\left( c_1\left( \phi \right) +c_2\left( \phi \right) \right) }+\log \frac{%
			c_1\left( \phi \right) +c_2\left( \phi \right) }{\sigma ^{2}\left( \phi
			\right) } \\
		&=&\log \left( 1+\frac{c_3\left( \phi \right) }{c_1\left( \phi \right)
			+c_2\left( \phi \right) }\right) +\log \left( 1+\frac{c_1\left( \phi \right)
			-f\left( \phi \right) }{\sigma^{2}\left( \phi \right) }\right) ,
	\end{eqnarray*}%
	where $f\left( \phi \right) =n^{-1}\sigma _{0}^{2}tr\left( E'^{-1}T^{\prime }(\lambda
	) E(\gamma)' \left( I_n-M\left( \gamma \right) \right) E(\gamma) T(\lambda ) E^{-1}\right).$
	Then (\ref{cons1}) implies
	\begin{eqnarray*}
		P\left( \left\Vert \widehat{\phi }-\phi _{0}\right\Vert \in \overline{%
			\mathcal{N}}^{\;\phi }\left( \eta \right) \right) &=&P\left( \inf_{\phi
			\in \text{ }\overline{\mathcal{N}}^{\;\phi }\left( \eta \right) }\mathcal{L}\left(
		\phi \right) -\mathcal{L}\leq 0\right) \\
		&\leq &P\left( \log \left( 1+\underset{\phi \in \text{ }\overline{\mathcal{%
					N}}^{\;\phi }\left( \eta \right) }{\sup }\left\vert \frac{c_3\left( \phi
			\right) }{c_1\left( \phi \right) +c_2\left( \phi \right) }\right\vert
		\right) +\left\vert \log \left(\overline{\sigma }^{2}/\sigma _{0}^{2}\right)\right\vert
		\right. \\
		&&\left. \geq \inf_{\phi \in \text{ }\overline{\mathcal{N}}^{\;\phi
			}\left( \eta \right) }\left( \log \left( 1+\frac{c_1\left( \phi \right)
			-f\left( \phi \right) }{\sigma ^{2}\left( \phi \right) }\right) +\log
		r(\phi )\right) \right) ,
	\end{eqnarray*}%
	where recall that $\overline{\mathcal{N}}^{\;\phi }\left( \eta \right) =\Phi
	\backslash \mathcal{N}^{\phi }\left( \eta \right) ,$ $%
	\mathcal{N}^{\phi }\left( \eta \right) =\left\{ \phi :\left\Vert
	\phi -\phi _{0}\right\Vert <\eta\right\}\cap\Phi .$
	Because $\overline{\sigma }^{2}/\sigma
	_{0}^{2}\overset{p}\rightarrow 1,$ the property $\log \left( 1+x\right) =x+o\left(
	x\right) $ as $x\rightarrow 0$ implies that it is sufficient to show that
	\begin{eqnarray}
		\underset{\phi \in \text{ }\overline{\mathcal{N}}^{\;\phi }\left( \eta
			\right) }{\sup }\left\vert \frac{c_3\left( \phi \right) }{c_1\left( \phi
			\right) +c_2\left( \phi \right) }\right\vert &\overset{p}\longrightarrow&\text{ }0, \label{cons2}\\
		\underset{\phi \in \text{ }\overline{\mathcal{N}}^{\;\phi }\left( \eta
			\right) }{\sup }\left\vert \frac{f\left( \phi \right) }{\sigma ^{2}\left(
			\phi \right) }\right\vert &\overset{p}\longrightarrow &\text{ }0\label{cons3}, \\
		P\left( \inf_{\phi \in \text{ }\overline{\mathcal{N}}^{\;\phi }\left(
			\eta \right) }\left\{ \frac{c_1\left( \phi \right) }{\sigma ^{2}\left(
			\phi \right) }+\log r(\phi )\right\}>0\right)  &\longrightarrow &\text{ }1\label{cons4}.
	\end{eqnarray}%
	Because $\overline{\mathcal{N}}^{\;\phi }\left( \eta \right) \subseteq \left\{
	\Lambda \times \overline{\mathcal{N}}^{\;\gamma }\left( \eta /2\right)
	\right\} \cup \left\{ \overline{\mathcal{N}}^{\;\lambda }\left( \eta /2\right)
	\times \Gamma\right\} $, we have
	\begin{eqnarray*}
		P\left(\inf_{\phi \in \text{ }\overline{\mathcal{N}}^{\;\phi }\left( \eta \right)
		}\left\{ \frac{c_1\left( \phi \right) }{\sigma ^{2}\left( \phi \right) }%
		+\log r(\phi )\right\}>0\right) &\geq &P\left(\min \left\{ \underset{\Lambda \times
			\overline{\mathcal{N}}^{\;\gamma }\left( \eta /2\right) }{\inf }\frac{c_1\left(
			\phi \right) }{\sigma ^{2}\left( \phi \right) },\underset{\overline{%
				\mathcal{N}}^{\;\lambda }\left( \eta /2\right) }{\inf }\log r(\phi )\right\}>0\right)
		\\
		&\geq &P\left(\min \left\{ \underset{\Lambda \times \overline{\mathcal{N}}^{\;\gamma
			}\left( \eta /2\right) }{\inf }\frac{c_1\left( \phi \right) }{C},\underset{%
			\overline{\mathcal{N}}^{\lambda }\left( \eta /2\right) }{\inf }\log
		r(\phi )\right\}>0\right) ,
	\end{eqnarray*}%
	from Assumption \ref{ass:sigma_unif_SAR}, whence Assumptions \ref{ass:tau_lambda_ident} and \ref{ass:reg_ident_SAR} imply (\ref{cons4}). Again using Assumption \ref{ass:sigma_unif_SAR}, uniformly in $\phi $, $\left\vert f\left( \phi \right) /\sigma ^{2}\left( \phi
	\right) \right\vert =O_p\left( \left\vert f\left( \phi \right) \right\vert\right)$
	and
	\begin{eqnarray}
		\left\vert f\left( \phi \right) \right\vert &= &O_p\left(tr\left( E'^{-1}T^{\prime
		}(\lambda )\Sigma(\gamma)^{-1}\Psi \left( \Psi^{\prime }\Sigma(\gamma)^{-1}\Psi \right) ^{-1}\Psi^{\prime }\Sigma(\gamma)^{-1}
		T(\lambda ) E^{-1}\right) /n \right)\nonumber\\
		&=&O_p\left( tr\left(  E'^{-1} T^{\prime
		}(\lambda )\Sigma(\gamma)^{-1}\Psi
		\Psi^{\prime }\Sigma(\gamma)^{-1}
		T(\lambda ) E^{-1}\right) /n^{2}\right)=O_p\left(\left\Vert \Psi^{\prime }\Sigma(\gamma)^{-1}
		T(\lambda ) E^{-1}/n\right\Vert_F^2\right)\nonumber\\
		&=& O_p\left(\left\Vert \Psi/n\right\Vert_F^2\overline\varphi^2\left(\Sigma(\gamma)^{-1}\right)\left\Vert
		T(\lambda )\right\Vert^2\left\Vert
		E^{-1}\right\Vert^2\right)= O_p\left(\left\Vert \Psi/n\right\Vert_F^2\left\Vert
		T(\lambda )\right\Vert^2\overline\varphi\left(\Sigma\right)/\underline\varphi^2\left(\Sigma(\gamma)\right)\right)\nonumber\\
		&=&O_p\left(\left\Vert
		T(\lambda )\right\Vert^2/n\right),\label{cons_neg}
	\end{eqnarray}%
	where we have twice made use of the inequality
	\begin{equation}\label{frob_norm_inequality}
		\left\Vert AB\right\Vert_F\leq\left\Vert A\right\Vert_F\left\Vert B\right\Vert
	\end{equation}
	for generic multiplication compatible matrices $A$ and $B$. (\ref{cons3}) now follows by Assumption \ref{ass:spec_norm_G} and compactness of $\Lambda$ because $T(\lambda)=I_n+\sum_{j=1}^{d_{\lambda}}\left(\lambda_{0j}-\lambda_j\right)G_j$. Finally consider (\ref{cons2}). We first prove pointwise convergence. For any
	fixed $\phi \in \overline{\mathcal{N}}^{\;\phi }\left( \eta \right) $ and large enough $n$, Assumptions \ref{ass:sigma_unif_SAR} and \ref{ass:reg_ident_SAR} imply
	\begin{eqnarray}
		\left\{c_1\left( \phi \right)\right\}^{-1} &=&O_p\left( \left\Vert \beta
		_{0}\right\Vert ^{-2}\right)=O_p(1)\label{c_lower_bound}\\
		\left\{c_2\left( \phi \right)\right\}^{-1} &=&O_p(1),\label{d_lower_bound}
	\end{eqnarray}
	because $\left\{n^{-1}\sigma _{0}^{2}tr\left(  T^{\prime }(\lambda )\Sigma(\gamma)^{-1}T(\lambda
	) E^{-1}\right)\right\}^{-1} =O_p(1)$ and, proceeding like in the bound for $\left\vert f (\phi)\right\vert$, $tE'^{-1}r\left( E'^{-1} T^{\prime }(\lambda ) E(\gamma)'\left( I-M\left(
	\gamma \right) \right) E(\gamma) T(\lambda ) E^{-1}\right) =O_p\left(\left\Vert
	T(\lambda )\right\Vert^2/n\right)=O_p\left(1/n\right)$. In fact it is worth noting for the equicontinuity argument presented later that Assumptions \ref{ass:sigma_unif_SAR} and \ref{ass:reg_ident_SAR} actually imply that (\ref{c_lower_bound}) and (\ref{d_lower_bound}) hold uniformly over $\overline{\mathcal{N}}^\phi(\eta)$, a property not needed for the present pointwise arguments.
	Thus
	$c_3\left( \phi \right) /\left( c_1\left( \phi \right) +c_2\left( \phi
	\right) \right) =O_p\left( \left\vert c_3\left( \phi \right) \right\vert
	\right)$ where, writing $\mathfrak{B}(\phi)=T^{\prime }(\lambda ) E(\gamma)'M\left( \gamma \right)
	E(\gamma)T(\lambda )$ with typical element $\mathfrak{b}_{rs}(\phi)$, $r,s=1,\ldots,n$, $c_3\left( \phi \right) $ has mean $0$ and variance%
	\begin{equation}\label{e_var_bound}
		O_p\left( \frac{\left\Vert \mathfrak{B}(\phi)\Sigma\right\Vert _{F}^{2}}{n^2}+\frac{\sum_{r,s,t,v=1}^n\mathfrak{b}_{rs}(\phi)\mathfrak{b}_{tv}(\phi)\kappa_{rstv}}{n^2}+\frac{\left\Vert \beta _{0}^{\prime }\Psi^{\prime }\mathfrak{B}(\phi) E^{-1}\right\Vert ^{2}}{n^2}\right),
	\end{equation}
	with $\kappa_{rstv}$ denoting the fourth cumulant of $u_r, u_s, u_t, u_v$, $r,s,t,v=1,\ldots,n$. Under the linear process assumed in Assumption \ref{ass:errors_epsilon_3} it is known that
	\begin{equation}\label{cumulant_prop_lin_pro}
		\sum_{r,s,t,v=1}^n\kappa^2_{rstv}=O(n).
	\end{equation}
	Using (\ref{frob_norm_inequality}) and Assumptions \ref{ass:spec_norm_G} and \ref{ass:Sigma_spec_norm}, the first term in parentheses in (\ref{e_var_bound}) is
	\begin{eqnarray}
		O_p\left(\left\Vert \mathfrak{B}(\phi)\right\Vert _{F}^{2}\overline\varphi^2\left(\Sigma\right)/n^2\right)&=&O_p\left(\left\Vert T(\lambda)\right\Vert _{F}^{2}\left\Vert  E(\gamma)\right\Vert^{4}\left\Vert M(\gamma)\right\Vert ^{2}\left\Vert T(\lambda)\right\Vert^{2}/n^2\right)\nonumber\\
		&=&O_p\left(\left\Vert T(\lambda)\right\Vert^{4}/n\underline{\varphi}^2\left(\Sigma(\gamma)\right)\right)=O_p\left(\left\Vert T(\lambda)\right\Vert^{4}/n\right),\label{e_var_bound_1}
	\end{eqnarray}
	while the second is similarly
	\begin{equation}\label{e_var_bound_2}
		O_p\left\{\left(\left\Vert\mathfrak{B}(\phi) \right\Vert_F^2/n\right)\left(\sum_{r,s,t,v=1}^n\kappa^2_{rstv}/n^2\right)^{\frac{1}{2}}\right\}=o_p\left(\left\Vert T(\lambda)\right\Vert^{4}\right),
	\end{equation}
	using (\ref{cumulant_prop_lin_pro}). Finally,  the third term in parentheses in (\ref{e_var_bound}) is
	\begin{equation}\label{e_var_bound_3}
		O_p\left(\left\Vert\mathfrak{B}(\phi) \right\Vert^2/n\right)=O_p\left(\left\Vert T(\lambda)\right\Vert^{4}/n\right).
	\end{equation}
	By compactness of $\Lambda$ and Assumption \ref{ass:spec_norm_G}, (\ref{e_var_bound_1}), (\ref{e_var_bound_2}) and (\ref{e_var_bound_3}) are negligible, thus pointwise convergence is established.
	
	Uniform convergence will follow from an equicontinuity argument. First, for arbitrary $\varepsilon>0$ we can find points $\phi_*=\left(\lambda'_*,\gamma'_*\right)'$, possibly infinitely many, such that the neighborhoods $\left\Vert\phi-\phi^*\right\Vert<\varepsilon$ form an open cover of $\overline{\mathcal{N}}^\phi(\eta)$. Since $\Phi$ is compact any open cover has a finite subcover and thus we may in fact choose finitely many $\phi_*=\left(\lambda'_*,\gamma'_*\right)'$, whence it suffices to prove \begin{equation*}
		\underset{\left\Vert \phi -\phi _{_{\ast }}\right\Vert <\varepsilon }{%
			\sup }\left\vert \frac{c_3\left( \phi \right) }{c_1\left( \phi \right)
			+c_2\left( \phi \right) }-\frac{c_3\left( \phi _{\ast }\right) }{c_1\left(
			\phi _{\ast }\right) +c_2\left( \phi _{\ast }\right) }\right\vert
		\overset{p}\longrightarrow 0.
	\end{equation*}%
	Proceeding as in \cite{Gupta2018}, we denote the two components of $c_3\left( \phi \right) $ by $%
	c_{31}\left( \phi \right) ,$ $c_{32}\left( \phi \right) ,$ and are left with establishing the negligibility of
	\begin{eqnarray}
		&&\frac{\left\vert c_{31}\left( \phi \right) -c_{31}\left( \phi _{\ast
			}\right) \right\vert }{c_2\left( \phi \right) }+\frac{\left\vert c_{32}\left(
			\phi \right) -c_{32}\left( \phi _{\ast }\right) \right\vert }{c_1\left(
			\phi \right) }+\frac{\left\vert c_3\left( \phi _{\ast }\right) \right\vert
		}{c_1\left( \phi \right) c_1\left( \phi _{\ast }\right) }\left\vert c_1\left(
		\phi _{\ast }\right) -c_1\left( \phi \right) \right\vert\nonumber\\ &+&\frac{\left\vert
			c_3\left( \phi _{\ast }\right) \right\vert }{c_2\left( \phi \right) c_2\left(
			\phi _{\ast }\right) }\left\vert c_2\left( \phi _{\ast }\right) -c_2\left(
		\phi \right) \right\vert,\label{cons5}
	\end{eqnarray}
	uniformly on $\left\Vert \phi -\phi _{_{\ast }}\right\Vert <\varepsilon$. By the fact that (\ref{c_lower_bound}) and (\ref{d_lower_bound}) hold uniformly over $\Phi$, we first consider only the numerators in the first two terms in (\ref{cons5}). As in the proof of Theorem 1 of \cite{Delgado2015}, (\ref{frob_norm_inequality}) implies that $\mathcal{E}\left(\sup_{\left\Vert \phi -\phi _{_{\ast }}\right\Vert <\varepsilon}\left\vert c_{31}\left( \phi \right) -c_{31}\left( \phi _{\ast
	}\right) \right\vert\right)$ is bounded by \[
	n^{-1}\left(\mathcal{E}\left\Vert u\right\Vert^2+\sigma_0^2tr\Sigma\right)\sup_{\left\Vert \phi -\phi _{_{\ast }}\right\Vert <\varepsilon}\left\Vert\mathfrak{B}(\phi)-\mathfrak{B}(\phi_*) \right\Vert=O_p\left(\sup_{\left\Vert \phi -\phi _{_{\ast }}\right\Vert <\varepsilon}\left\Vert\mathfrak{B}(\phi)-\mathfrak{B}(\phi_*) \right\Vert\right),\]
	because $\mathcal{E}\left\Vert u\right\Vert^2=O(n)$ and $tr\Sigma=O(n)$.  $\mathfrak{B}(\phi)-\mathfrak{B}(\phi_*)$
	can be written as
	\begin{eqnarray}
		&&\left(T(\lambda)-T(\lambda_*)\right)' E(\gamma)'M(\gamma) E(\gamma)T(\lambda)+T(\lambda_{*})' \Sigma'(\gamma_*)M(\gamma_*) E(\gamma_*)\left(T(\lambda)-T(\lambda_*)\right)\nonumber\\
		&+&T'(\lambda_*)\left( E(\gamma)'M(\gamma) E(\gamma)- E(\gamma_*)'M(\gamma_*) E(\gamma_*)\right)T(\lambda),\label{cons6}
	\end{eqnarray}
	which, by the triangle inequality, has spectral norm bounded by
	\begin{eqnarray}
		&&\left\Vert T(\lambda)-T(\lambda_*)\right\Vert\left(\left\Vert E(\gamma)\right\Vert^2\left\Vert T(\lambda)\right\Vert+\left\Vert E(\gamma_*)\right\Vert^2\left\Vert T(\lambda_*)\right\Vert\right)\nonumber\\
		&+&\left\Vert T(\lambda_*)\right\Vert\left\Vert E(\gamma)'M(\gamma) E(\gamma)- E(\gamma_*)'M(\gamma_*) E(\gamma_*)\right\Vert\left\Vert T(\lambda)\right\Vert\nonumber\\
		&=&O_p\left(\left\Vert T(\lambda)-T(\lambda_*)\right\Vert+\left\Vert E(\gamma)'M(\gamma) E(\gamma)- E(\gamma_*)'M(\gamma_*) E(\gamma_*)\right\Vert\right).\nonumber\\\label{cons7}
	\end{eqnarray}
	By Assumption \ref{ass:spec_norm_G} the first term in parentheses on the right side of (\ref{cons7}) is bounded uniformly on $\left\Vert\phi-\phi_{*}\right\Vert<\varepsilon$ by
	\begin{equation}\label{cons8}
		\sum_{j=1}^{d_{\lambda}}\left\vert\lambda_j-\lambda_{*j}\right\vert\left\Vert G_j\right\Vert\leq\max_{j=1,\ldots,d_{\lambda}}\left\Vert G_j\right\Vert\left\Vert\lambda-\lambda_{*}\right\Vert=O_p(\varepsilon),
	\end{equation}
	while because $E(\gamma)'M(\gamma) E(\gamma)=n^{-1}\Sigma(\gamma)^{-1}\Psi\left(n^{-1} \Psi'\Sigma(\gamma)^{-1}\Psi\right)^{-1}\Psi'\Sigma(\gamma)^{-1}$ for any $\gamma\in\Gamma$, the second one can be decomposed into terms with bounds typified by
	\begin{eqnarray*}
		&&n^{-1}\left\Vert\Sigma(\gamma)^{-1}-\Sigma(\gamma_*)^{-1}\right\Vert\left\Vert \Psi\right\Vert^2\left\Vert\left(n^{-1} \Psi'\Sigma(\gamma)^{-1}\Psi\right)^{-1}\right\Vert\left\Vert\Sigma(\gamma)^{-1}\right\Vert^2\\
		&\leq&n^{-1}\left\Vert\Sigma(\gamma)-\Sigma(\gamma_*)\right\Vert\left\Vert \Psi\right\Vert^2\left\Vert\left(n^{-1} \Psi'\Sigma(\gamma)^{-1}\Psi\right)^{-1}\right\Vert\left\Vert\Sigma(\gamma)^{-1}\right\Vert^3\left\Vert\Sigma(\gamma_*)^{-1}\right\Vert\\
		&=&O_p\left(\left\Vert\Sigma(\gamma)-\Sigma(\gamma_*)\right\Vert\right)=O_p(\varepsilon),
	\end{eqnarray*}
	uniformly on $\left\Vert\phi-\phi_*\right\Vert<\varepsilon$, by Assumptions \ref{ass:Sigma_spec_norm} and \ref{ass:Psi_GLS_type}, Proposition \ref{prop:Sigma_diff_bound}  and the inequality $\left\Vert A\right\Vert\leq \left\Vert A\right\Vert_F$ for a generic matrix $A$, so that
	\begin{equation}\label{Bdiffbound}
		\sup_{\left\Vert\phi-\phi_*\right\Vert<\varepsilon}\left\Vert\mathfrak{B}(\phi)-\mathfrak{B}(\phi_*)\right\Vert=O_p(\varepsilon).
	\end{equation}
	Thus equicontinuity of the first term in (\ref{cons5}) follows because $\varepsilon$ is arbitrary. The equicontinuity of the second term in (\ref{cons5}) follows in much the same way. Indeed $\sup_{\left\Vert\phi-\phi_*\right\Vert<\varepsilon}c_{32}\left(
	\phi \right) -c_{32}\left( \phi _{\ast }\right)=2n^{-1}\beta_0'\Psi'\sup_{\left\Vert\phi-\phi_*\right\Vert<\varepsilon}\left(\mathfrak{B}(\phi)-\mathfrak{B}(\phi_*)\right)u=O_p\left(\sup_{\left\Vert\phi-\phi_*\right\Vert<\varepsilon}\left\Vert\mathfrak{B}(\phi)-\mathfrak{B}(\phi_*)\right\Vert\right)=O_p(\varepsilon)$,
	using earlier arguments and (\ref{Bdiffbound}). Because $c_{1}(\phi)$ is bounded and bounded away from zero in probability (see \ref{c_lower_bound}) for sufficiently large $n$ and all $\phi\in\overline{\mathcal{N}}^\phi(\eta)$, the third term in (\ref{cons5}) may be bounded by ${\left\vert c_{3}(\phi_*)\right\vert}/{c_{1}(\phi_*)}\left(1+{c_{1}(\phi_*)}/{c_{1}(\phi)}\right)\overset{p}\longrightarrow0,$
	convergence being uniform on $\left\Vert\phi-\phi_*\right\Vert<\varepsilon$ by pointwise convergence of $c_{3}(\phi)/\left(c_{1}(\phi)+c_{2}(\phi)\right)$, cf. \cite{Gupta2018}. The uniform convergence to zero of the fourth term in (\ref{cons5}) follows in identical fashion, because $c_{2}(\phi)$ is bounded and bounded away from zero (see (\ref{d_lower_bound})) in probability for sufficiently large $n$ and all $\phi\in\overline{\mathcal{N}}^\phi(\eta)$. This concludes the proof.
\end{proof}

\section{Lemmas}\label{app:lemmas}
\begin{lemma}\label{lemma:c1}
	Under the conditions of Theorem \ref{thm:consistency},
	$c_{1}(\gamma )=n^{-1}\beta ^{\prime }\Psi
	^{\prime }E^{\prime }(\gamma )M(\gamma )E(\gamma)\Psi \beta +o_{p}(1).
	$
\end{lemma}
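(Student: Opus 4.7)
The plan is to interpret $c_1(\gamma)$, in the context of Theorem \ref{thm:consistency}, as the signal piece $n^{-1}\theta_0'E'(\gamma)M(\gamma)E(\gamma)\theta_0$ arising in the decomposition of $\bar\sigma^2(\gamma)$ when one writes it directly in terms of the true regression function $\theta_0$, rather than its series surrogate $\Psi\beta_0$ (the latter is what appears in the analogous quantity $c_1(\phi)$ in the proof of Theorem \ref{thm:cons_SAR} with $T(\lambda)=I_n$). With this reading, the lemma formalises the claim that substituting $\Psi\beta_0$ for $\theta_0$ inside $c_1$ is legitimate up to $o_p(1)$.

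The first step is to invoke the series representation $\theta_0=\Psi\beta_0+e$ from (\ref{theta_series_appr}) and expand:
\[
c_1(\gamma)=n^{-1}\beta_0'\Psi'E'(\gamma)M(\gamma)E(\gamma)\Psi\beta_0+2n^{-1}\beta_0'\Psi'E'(\gamma)M(\gamma)E(\gamma)e+n^{-1}e'E'(\gamma)M(\gamma)E(\gamma)e.
\]
The second step is to kill the cross term. Because $E(\gamma)$ is symmetric with $E(\gamma)'E(\gamma)=\Sigma(\gamma)^{-1}$, the definition $M(\gamma)=I_n-E(\gamma)\Psi(\Psi'\Sigma(\gamma)^{-1}\Psi)^{-1}\Psi'E(\gamma)'$ yields directly
\[
M(\gamma)E(\gamma)\Psi=E(\gamma)\Psi-E(\gamma)\Psi\bigl(\Psi'\Sigma(\gamma)^{-1}\Psi\bigr)^{-1}\Psi'\Sigma(\gamma)^{-1}\Psi=0,
\]
so $\beta_0'\Psi'E'(\gamma)M(\gamma)E(\gamma)e=0$ identically. (The same computation, incidentally, shows that the leading term on the right-hand side above is itself zero, which is consistent with the remark in the proof of Theorem \ref{thm:cons_SAR} that $c_1$ degenerates in the non-SAR case.)

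The third step is to bound the pure approximation-error term. Direct verification from the displayed formula shows $M(\gamma)$ is symmetric and idempotent, hence $\|M(\gamma)\|\leq 1$, and so
\[
\|E'(\gamma)M(\gamma)E(\gamma)\|\leq \|E(\gamma)\|^2=\|\Sigma(\gamma)^{-1}\|\leq C
\]
uniformly in $\gamma\in\Gamma$ by Assumption \ref{ass:Sigma_spec_norm}. Combining Assumption \ref{ass:approx_error} and (\ref{appr_error}) gives $\mathcal{E}\|e\|^2\leq n\sup_i\mathcal{E}e^2(x_i)=O(np^{-2\mu})$, so Markov's inequality yields $\|e\|^2=O_p(np^{-2\mu})$. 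Therefore
\[
n^{-1}|e'E'(\gamma)M(\gamma)E(\gamma)e|\leq n^{-1}\|E'(\gamma)M(\gamma)E(\gamma)\|\,\|e\|^2=O_p(p^{-2\mu})=o_p(1),
\]
since $p\to\infty$ under the conditions of Theorem \ref{thm:consistency}. Assembling the three steps delivers the claimed identity.

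The only genuine subtlety here is conceptual rather than technical: identifying what $c_1(\gamma)$ actually denotes (the $\theta_0$-written signal term, as opposed to the $\Psi\beta_0$-written one used in the SAR proof) so that the lemma has nontrivial content. Once that is pinned down, the argument reduces to a short expand-and-bound using only the projection structure of $M(\gamma)$ together with the standard sieve approximation rate, and no uniformity in $\gamma$ is needed.
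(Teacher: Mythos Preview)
Your proof is correct and follows essentially the same approach as the paper: decompose $c_1(\gamma)$ into the leading term, a cross term $c_{12}(\gamma)=2n^{-1}e'E'(\gamma)M(\gamma)E(\gamma)\Psi\beta$, and a pure approximation-error term $c_{13}(\gamma)=n^{-1}e'E'(\gamma)M(\gamma)E(\gamma)e$, then show the latter two are negligible. You are in fact slightly sharper than the paper's own one-line proof, since you observe that $c_{12}$ vanishes identically via the projection identity $M(\gamma)E(\gamma)\Psi=0$ (a fact the paper invokes elsewhere but does not spell out here), whereas the paper simply asserts both remainders are ``readily seen'' to be negligible.
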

\begin{proof}
	First,
	\[
	c_{1}(\gamma )=n^{-1}\beta ^{\prime }\Psi
	^{\prime }E^{\prime }(\gamma )M(\gamma )E(\gamma)\Psi \beta +c_{12}(\gamma
	)+c_{13}(\gamma ),
	\]
	with $c_{12}(\gamma )=2n^{-1}{e}^{\prime
	}E^{\prime }(\gamma )M(\gamma )E(\gamma)\Psi \beta $ and $c_{13}(\gamma
	)=n^{-1}{e}^{\prime }E^{\prime }(\gamma )M(\gamma )E(\gamma){e}$. It is readily seen that $c_{12}(\gamma )$ and $c_{13}(\gamma )$ are negligible.
\end{proof}
\begin{lemma}\label{lemma:gamma_order}
	Under the conditions of Theorem \ref{thm:stat_appr} or Theorem \ref{thm:stat_appr_SAR}, $
	\left\Vert\widehat\gamma-\gamma_0\right\Vert=O_p\left(\sqrt{d_\gamma/n}\right).
	$
\end{lemma}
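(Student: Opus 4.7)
The plan is to leverage the consistency result from Theorem \ref{thm:consistency} (respectively \ref{thm:cons_SAR}) together with a mean-value expansion of the first-order condition for the concentrated QMLE. Since $\gamma_0 \in \Gamma^o$ and $\widehat\gamma \overset{p}\to \gamma_0$, with probability tending to one $\widehat\gamma$ lies in the interior of $\Gamma$ and satisfies $\partial_\gamma \mathcal{L}(\widehat\gamma) = 0$. A component-wise expansion around $\gamma_0$ yields
$$\widehat\gamma - \gamma_0 = -H(\bar\gamma)^{-1}\, \partial_\gamma \mathcal{L}(\gamma_0),$$
where $H(\bar\gamma)$ is the Hessian of $\mathcal{L}$ with rows evaluated at intermediate points on the segment between $\widehat\gamma$ and $\gamma_0$. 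It therefore suffices to bound the norm of the score at $\gamma_0$ and to control the spectral norm of the inverse Hessian.

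First, I would compute the score at $\gamma_0$ explicitly. Using $\bar\sigma^2(\gamma_0) = n^{-1} u' E'M E u$ and $M\Psi = 0$, the $j$th component $\partial_{\gamma_j} \mathcal{L}(\gamma_0)$ reduces to a centered quadratic form in $u$ weighted by $\Sigma^{-1}\Sigma_j\Sigma^{-1}$, up to projection-correction terms that are asymptotically negligible under $p/n \to 0$. Each such centered quadratic form has mean zero and, by direct variance computations analogous to those used for $c_3(\phi)$ in the proof of Theorem \ref{thm:cons_SAR}, see (\ref{e_var_bound})--(\ref{e_var_bound_3}), it is $O_p(n^{-1/2})$ under Assumptions \ref{ass:Sigma_der_spec} and \ref{ass:errors_epsilon_3}. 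Summing squares across $j = 1, \ldots, d_\gamma$ gives $\|\partial_\gamma \mathcal{L}(\gamma_0)\| = O_p(\sqrt{d_\gamma/n})$.

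Second, I would show that the random Hessian $H(\bar\gamma)$ is close in spectral norm to a deterministic population analogue $H_0$ whose minimum eigenvalue is bounded below by a positive constant. The latter is the curvature counterpart of the identification condition in Assumption \ref{ass:gamma_ident}: combined with Assumption \ref{ass:sigma_unif}, it forces a strict lower bound on $\underline\varphi(H_0)$. On the shrinking neighborhood of $\gamma_0$ provided by consistency, Proposition \ref{prop:Sigma_diff_bound} and Assumption \ref{ass:Sigma_der_spec}, coupled with the equicontinuity argument already developed in the proof of Theorem \ref{thm:cons_SAR}, yield $\|H(\bar\gamma) - H_0\| \overset{p}\to 0$, hence $\|H(\bar\gamma)^{-1}\| = O_p(1)$. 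Plugging into the display above gives the claimed rate.

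The hard part will be the uniform control of the Hessian when $d_\gamma \to \infty$: bounding $d_\gamma^2$ second-order entries individually is not enough, one needs the spectral norm of $H(\bar\gamma) - H_0$ to vanish. I anticipate that this requires combining the Fr\'echet-smoothness built into Assumption \ref{ass:Sigma_frech_der}, the uniform spectral bound in Assumption \ref{ass:Sigma_der_spec}, and the linear-process summability (Assumption \ref{ass:errors_epsilon_3}), in the same spirit as the quadratic-form bounds employed in \cite{Delgado2015} and \cite{Gupta2018}. The rate conditions of Theorem \ref{thm:stat_appr} (in particular $p(p+d_\gamma^2)/n \to 0$, which implies $d_\gamma/\sqrt{n} \to 0$) guarantee that the remainder from the mean-value expansion is absorbed without affecting the leading $O_p(\sqrt{d_\gamma/n})$ bound.
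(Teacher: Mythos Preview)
Your strategy is sound and would work, but the paper takes a more economical route that sidesteps the part you flag as hard. Two differences are worth noting.

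First, the paper works with the score of the \emph{full} likelihood $L(\beta,\sigma^2,\gamma)$ evaluated at the true parameter values, not the concentrated criterion $\mathcal{L}(\gamma)$. With $l_j=n^{-1}\mathrm{tr}(\Sigma^{-1}\Sigma_j)-n^{-1}\sigma_0^{-2}u'\Sigma^{-1}\Sigma_j\Sigma^{-1}u$ one has $\mathcal{E}l_j=0$ exactly, and the variance of the quadratic form $\varepsilon'B'\Sigma^{-1}\Sigma_j\Sigma^{-1}B\varepsilon$ is bounded by $O(n)$ directly from Assumptions \ref{ass:Sigma_spec_norm}, \ref{ass:errors_epsilon_3} and \ref{ass:Sigma_der_spec} via the linear-process summability bounds (\ref{djss_bd})--(\ref{djst_sum_bd}). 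This avoids the projection-correction terms you mention, so no appeal to $p/n\to 0$ is needed at this stage.

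Second, and more importantly, the paper does not carry out the mean-value expansion and Hessian control explicitly. Instead it invokes a high-level result from \cite{Robinson1988} to conclude $\|\widehat\gamma-\gamma_0\|=O_p(\|l\|)$ directly, and then shows $\mathcal{E}\|l\|^2=\sum_j\mathcal{E}l_j^2=O(d_\gamma/n)$ by Markov's inequality. In effect, the curvature argument you identify as the delicate step when $d_\gamma\to\infty$ is outsourced to that reference. Your explicit Hessian route is the standard M-estimator argument underlying such results and is perfectly defensible; it simply requires you to verify spectral control of the $d_\gamma\times d_\gamma$ Hessian under the paper's assumptions, which (as you correctly anticipate) is where the real work lies and would likely need second-order smoothness beyond Assumption \ref{ass:Sigma_frech_der}. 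The paper's citation absorbs that verification.
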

\begin{proof}
	We show the details for the setting of Theorem \ref{thm:stat_appr} and omit the details for the setting of Theorem \ref{thm:stat_appr_SAR}. Write $l=\partial L(\beta_0,\gamma_0)/\partial\gamma$. By \cite{Robinson1988}, we have $\left\Vert\widehat\gamma-\gamma_0\right\Vert=O_p\left(\left\Vert l\right\Vert \right)$. Now $l=\left(l_1,\ldots,l_{d_\gamma}\right)'$, with $l_j=n^{-1}tr\left(\Sigma^{-1}\Sigma_j\right)-n^{-1}\sigma_0^{-2}u'\Sigma^{-1}\Sigma_j\Sigma^{-1}u$. Next, $\mathcal{E}\left\Vert l\right\Vert^2=\sum_{j=1}^{d_\gamma}\mathcal{E}\left(l_j^2\right)$ and
	\begin{equation}\label{gamma_order1}
		\mathcal{E}\left(l_j^2\right)=\frac{1}{n^2\sigma_0^4}var\left(u'\Sigma^{-1}\Sigma_j\Sigma^{-1}u\right)=\frac{1}{n^2\sigma_0^4}var\left(\varepsilon'B'\Sigma^{-1}\Sigma_j\Sigma^{-1}B\varepsilon\right)=\frac{1}{n^2\sigma_0^4}var\left(\varepsilon'D_j\varepsilon\right),
	\end{equation}
	say. But, writing $d_{j,st}$ for a typical element of the infinite dimensional matrix $D_j$, we have
	\begin{equation}\label{gamma_order2}
		var\left(\varepsilon'D_j\varepsilon\right)=\left(\mu_4-3\sigma_0^4\right)\sum_{s=1}^{\infty}d_{j,ss}^2+2\sigma_0^4tr\left(D_j^2\right)=\left(\mu_4-3\sigma_0^4\right)\sum_{s=1}^{\infty}d_{j,ss}^2+2\sigma_0^4\sum_{s,t=1}^{\infty}d_{j,st}^2.
	\end{equation}
	Next, by Assumptions \ref{ass:errors_epsilon_3}, \ref{ass:Sigma_spec_norm} and \ref{ass:Sigma_der_spec}
	\begin{equation}\label{djss_bd}
		\sum_{s=1}^{\infty}d_{j,ss}^2=\sum_{s=1}^{\infty}\left(b_s'\Sigma^{-1}\Sigma_j\Sigma^{-1}b_s\right)^2\leq \left(\sum_{s=1}^{\infty}\left\Vert b_s\right\Vert^2\right)\left\Vert \Sigma^{-1}\right\Vert^2\left\Vert\Sigma_j\right\Vert=O\left(\sum_{j=1}^n\sum_{s=1}^\infty b^{*2}_{js}\right)=O(n).
	\end{equation}
	Similarly,
	\begin{equation}\label{djst_sum_bd}
		\sum_{s,t=1}^{\infty}d_{j,st}^2=\sum_{s=1}^\infty b_s'\Sigma^{-1}\Sigma_j\Sigma^{-1}\left(\sum_{t=1}^\infty b_tb_t'\right)\Sigma^{-1}\Sigma_j\Sigma^{-1}b_s=\sum_{s=1}^\infty b_s'\Sigma^{-1}\Sigma_j\Sigma^{-1}\Sigma_j\Sigma^{-1}b_s=O(n).
	\end{equation}
	Using (\ref{djss_bd}) and (\ref{djst_sum_bd}) in (\ref{gamma_order2}) implies that $\mathcal{E}\left(l_j^2\right)=O\left(n^{-1}\right)$, by (\ref{gamma_order1}). Thus we have $\mathcal{E}\left\Vert l\right\Vert^2=O\left(d_\gamma /n\right)$, and thus $\left\Vert l\right\Vert=O_p\left(\sqrt{d_\gamma/n}\right)$, by Markov's inequality, proving the lemma.
\end{proof}
\begin{lemma}\label{lemma:mean_var}
	Under the conditions of Theorem \ref{thm:appr_clt}, $\mathcal{E}\left( {\sigma _{0}^{-2}}\varepsilon ^{\prime }\fancyv\varepsilon \right)=p$ and $Var\left( {\sigma _{0}^{-2}}\varepsilon ^{\prime }\fancyv\varepsilon \right)/2p\rightarrow 1$.
\end{lemma}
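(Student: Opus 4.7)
The plan is to reduce everything to standard moment formulas for quadratic forms in i.i.d. variables, using the fact (established in the paper right before Theorem \ref{thm:stat_appr}) that $\fancyv$ is symmetric and idempotent with rank $p$. All expectations and variances are taken conditional on $X$, as throughout the surrounding proofs.

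First I would handle the mean. Since the $\varepsilon_s$ are i.i.d.\ with variance $\sigma_0^2$, standard manipulation gives $\mathcal{E}(\varepsilon'\fancyv\varepsilon)=\sigma_0^2\, tr(\fancyv)$. Because $\fancyv$ is symmetric idempotent of rank $p$, $tr(\fancyv)=p$, and the first claim $\mathcal{E}(\sigma_0^{-2}\varepsilon'\fancyv\varepsilon)=p$ follows immediately.

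For the variance, I would invoke the classical formula $\mathrm{Var}(\varepsilon'A\varepsilon)=(\mu_4-3\sigma_0^4)\sum_s a_{ss}^2+2\sigma_0^4\, tr(A^2)$, valid for symmetric $A$ with i.i.d.\ $\varepsilon_s$ having zero mean, variance $\sigma_0^2$, and fourth moment $\mu_4$. Applied to $A=\fancyv$ with $tr(\fancyv^2)=tr(\fancyv)=p$, this yields
\[
\mathrm{Var}\!\left(\sigma_0^{-2}\varepsilon'\fancyv\varepsilon\right)/(2p)\;=\;1+\frac{\mu_4-3\sigma_0^4}{2p\sigma_0^4}\sum_{s=1}^{\infty}v_{ss}^2,
\]
so the conclusion reduces to showing $\sum_{s=1}^{\infty}v_{ss}^2=o_p(p)$.

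This last step is the main obstacle, but it can be settled by reusing the same bounds that appear in the proof of Theorem \ref{thm:appr_clt}. Writing $\mathscr{M}=\Sigma^{-1}\Psi(\Psi'\Sigma^{-1}\Psi)^{-1}\Psi'\Sigma^{-1}$ so that $v_{ss}=b_s'\mathscr{M}b_s$, the uniform bound $\sup_{i,j}|m_{ij}|=O_p(p/n)$ in (\ref{pbound}) together with Assumption \ref{ass:errors_epsilon_3} (which bounds $\sup_s\sum_i|b_{is}^*|$ and hence $\sup_s\sum_i|b_{is}|$ by Assumption \ref{ass:Sigma_spec_norm}) gives
\[
\sup_{s}v_{ss}\;\leq\;\sup_{i,j}|m_{ij}|\Bigl(\sup_s\sum_{i=1}^{n}|b_{is}|\Bigr)^2\;=\;O_p(p/n).
\]
Combining this with the identity $\sum_s v_{ss}=p$ produces
\[
\sum_{s=1}^{\infty}v_{ss}^2\;\leq\;\Bigl(\sup_s v_{ss}\Bigr)\sum_{s=1}^{\infty}v_{ss}\;=\;O_p(p^2/n)\;=\;o_p(p),
\]
where the final equality uses $p^3/n\to 0$ (in particular $p/n\to 0$) from the hypotheses of Theorem \ref{thm:appr_clt}. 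Plugging this into the displayed expression for the variance completes the proof. The only delicate point is that $\sum_s v_{ss}^2\leq\sum_s v_{ss}=p$ is too crude on its own (it would only deliver $\mathrm{Var}/(2p)=O_p(1)$); extracting the extra factor $p/n$ through the entrywise bound on $\mathscr{M}$ is what makes the argument work.
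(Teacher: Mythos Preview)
Your proof is correct and follows essentially the same route as the paper: both compute the mean via $tr(\fancyv)=p$, both apply the standard quadratic-form variance formula to reduce the second claim to $\sum_{s}v_{ss}^{2}=o_p(p)$, and both finish using the entrywise bound (\ref{pbound}) on $\mathscr{M}$ together with Assumption \ref{ass:errors_epsilon_3}. The only cosmetic difference is in the final inequality: the paper expands $v_{ss}^2$ directly and sums to get $O_p(p^2/n)$, whereas you use the neater observation $\sum_s v_{ss}^2\le(\sup_s v_{ss})\sum_s v_{ss}=(\sup_s v_{ss})\,p$, which reaches the same $O_p(p^2/n)$ bound while invoking only one half of the summability condition in Assumption \ref{ass:errors_epsilon_3}.
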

\begin{proof}
	As $
	\mathcal{E}\left( {\sigma _{0}^{-2}}\varepsilon ^{\prime }\fancyv\varepsilon \right)
	=tr\left( \mathcal{E}[B^{\prime }\Sigma ^{-1}\Psi (\Psi ^{\prime }\Sigma ^{-1}\Psi
	)^{-1}\Psi ^{\prime }\Sigma ^{-1}B]\right) =p,
	$
	and%
	\begin{equation}
		Var\left( \frac{1}{\sigma _{0}^{2}}\varepsilon ^{\prime }\fancyv\varepsilon
		\right) =\left( \frac{\mu _{4}}{\sigma _{0}^{4}}-3\right) \sum_{s=1}^{\infty
		}\mathcal{E}(v_{ss}^{2})+\mathcal{E}[tr(\fancyv\fancyv^{\prime })+tr(\fancyv^{2})]=\left( \frac{\mu _{4}}{\sigma
			_{0}^{4}}-3\right) \sum_{s=1}^{\infty }v_{ss}^{2}+2p,  \label{VarQ}
	\end{equation}
	it suffices to show that
	\begin{equation}  \label{square_v_neg}
		(2p)^{-1}\sum_{s=1}^{\infty }v_{ss}^{2}\overset{p}{\rightarrow} 0.
	\end{equation}
	Because $v_{ss}=b_s^{\prime }\mathscr{M}b_s$, we have $v_{ss}^2= \left(\sum_{i,j=1}^n b_{is}b_{js}m_{ij}\right)^2$. Thus, using Assumption \ref{ass:errors_epsilon_3} and (\ref{pbound}),
	we have
	\begin{eqnarray}
		\sum_{s=1}^{\infty }v_{ss}^{2} & \leq & \left(\sup_{i,j}\left\vert m_{ij}\right\vert\right)^2\sum_{s=1}^{\infty } \left(\sum_{i,j=1}^n \left\vert b^*_{is}\right\vert\left\vert b^*_{js}\right\vert\right)^2 =O_{p}\left(p^2n^{-2}\left(\sup_{s}\sum_{i=1}^n\left \vert b^*_{is}\right
		\vert \right)^3\sum_{i=1}^n\sum_{s=1}^{\infty }\left \vert b^*_{is}\right
		\vert \right)  \notag \\
		&=& O_{p}\left(p^2n^{-1}\right),  \label{v_squares_sumbound}
	\end{eqnarray}
	establishing (\ref{square_v_neg}) because $p^2/n\rightarrow 0$.
\end{proof}
\begin{lemma}\label{lemma:tau_order}
	Under the conditions of Theorem \ref{thm:stat_appr_np}, $
	\left\Vert\widehat\tau-\tau_0\right\Vert=O_p\left(\sqrt{d_\tau/n}\right).
	$
\end{lemma}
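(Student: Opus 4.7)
The plan is to mirror the argument of Lemma \ref{lemma:gamma_order}, replacing $\gamma$ by $\tau=(\gamma',\delta_1',\ldots,\delta_{d_\zeta}')'$ with diverging dimension $d_\tau=d_\gamma+r$. Since Theorem \ref{thm:consistency_np} already delivers $\widehat\tau\overset{p}\to\tau_0$, I would invoke a standard Robinson (1988)-type expansion of the concentrated likelihood $\mathcal{L}(\tau)$ in (\ref{conc_likelihood_zeta}) around $\tau_0$ to obtain $\left\Vert\widehat\tau-\tau_0\right\Vert=O_p\left(\left\Vert l_\tau\right\Vert\right)$, where $l_\tau=\partial L(\beta_0,\sigma_0^2,\tau_0)/\partial\tau$ is the score evaluated at the sieve pseudo-true value $\tau_0$. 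This relies on the eigenvalue control provided by Assumption \ref{ass:Sigma_spec_norm} (holding on $\mathcal{T}$) and the identification condition in Assumption \ref{ass:gamma_ident_np}, which together make the expected Hessian well-conditioned in a neighborhood of $\tau_0$.

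Next I would compute $\mathcal{E}\Vert l_\tau\Vert^2=\sum_{j=1}^{d_\tau}\mathcal{E}(l_{\tau j}^2)$, where each coordinate has the familiar form $l_{\tau j}=n^{-1}tr(\Sigma^{-1}\Sigma_{0j})-n^{-1}\sigma_0^{-2}u'\Sigma^{-1}\Sigma_{0j}\Sigma^{-1}u$, with $\Sigma_{0j}=\partial\Sigma(\tau)/\partial\tau_j\vert_{\tau=\tau_0}$. As in (\ref{gamma_order1})--(\ref{gamma_order2}), $\mathcal{E}(l_{\tau j}^2)$ reduces to the variance of a quadratic form $\varepsilon'D_j\varepsilon$ with $D_j=B'\Sigma^{-1}\Sigma_{0j}\Sigma^{-1}B$. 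Provided each $\Sigma_{0j}$ has uniformly bounded spectral norm, the steps (\ref{djss_bd})--(\ref{djst_sum_bd}) carry over verbatim via Assumption \ref{ass:errors_epsilon_3}, yielding $\mathcal{E}(l_{\tau j}^2)=O(n^{-1})$, hence $\mathcal{E}\Vert l_\tau\Vert^2=O(d_\tau/n)$ and the claim by Markov's inequality.

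The main obstacle will be verifying the uniform spectral-norm bound on the coordinate derivatives $\Sigma_{0j}$ along the $r$ sieve directions. Assumption \ref{ass:Sigma_frech_der_np} bounds the operator norm $\Vert D\Sigma(t)\Vert_{\mathscr{L}(\mathcal{T}^o,\mathcal{M}^{n\times n})}$ uniformly, but when this derivative is evaluated in the direction of the $k$-th sieve coordinate of $\delta_\ell$, it produces $(D\Sigma)(\varphi_{\ell k}e_k)$, which in turn picks up a factor $\sup_z|\varphi_{\ell k}(z)|\leq \varsigma(r)$ via Proposition \ref{prop:Sigma_diff_bound_np}. To avoid inflating the bound by $\varsigma(r)^2$, I would instead argue that after pre- and post-multiplication by $\Sigma^{-1}B$, the relevant quantities $\sum_{s,t}d_{j,st}^2$ involve $tr(B'\Sigma^{-1}\Sigma_{0j}\Sigma^{-2}\Sigma_{0j}\Sigma^{-1}B)$ which, when summed over the $r$ sieve directions, collapses to $tr\bigl(B'\Sigma^{-1}(\sum_j\Sigma_{0j}\Sigma_{0j})\Sigma^{-2}\bigr)$-type expressions controllable directly by the Fréchet derivative norm applied to a basis-orthonormal frame, without any $\varsigma(r)$ penalty.

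A secondary subtlety is that $\tau_0$ is the sieve approximation rather than the true infinite-dimensional parameter, so $\Sigma(\tau_0)\neq\Sigma$; this introduces a deterministic bias in $l_\tau$ proportional to $\Vert\Sigma(\tau_0)-\Sigma\Vert=O\bigl(\varsigma(r)\sqrt{\sum_\ell r_\ell^{-2\kappa_\ell}}\bigr)$ via Proposition \ref{prop:Sigma_diff_bound_np} and Assumption \ref{ass:approx_error_order_np}. This bias contributes to $\Vert l_\tau\Vert$ a term of order $p^{1/2}$ times the above bound, but is absorbed into the rate $\sqrt{d_\tau/n}$ by the rate restriction in Theorem \ref{thm:stat_appr_np}.
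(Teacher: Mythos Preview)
Your approach is precisely what the paper intends: its proof states only that the argument ``is similar to that of Lemma \ref{lemma:gamma_order} and is omitted''. The score expansion via \cite{Robinson1988} and the quadratic-form variance bounds (\ref{gamma_order1})--(\ref{djst_sum_bd}) are the intended route.

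Your main concern about a $\varsigma(r)$ penalty in the coordinate-derivative bounds is unnecessary, however. The conditions of Theorem \ref{thm:stat_appr_np} include those of Theorem \ref{thm:stat_appr} ``with $\tau$ and $\mathcal{T}$ replacing $\gamma$ and $\Gamma$ in assumptions prefixed with R''; applied to Assumption \ref{ass:Sigma_der_spec}, this \emph{directly assumes} $\sup_{j=1,\ldots,d_\tau}\left\Vert\Sigma_j(\tau)\right\Vert<C$ uniformly over all $d_\tau$ coordinate directions, including the $r$ sieve directions. Hence (\ref{djss_bd})--(\ref{djst_sum_bd}) carry over verbatim with no $\varsigma(r)$ factor, and the collapsing-trace argument you sketch is not needed. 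Your secondary point on the sieve bias arising from $\Sigma(\tau_0)\neq\Sigma$ is a legitimate detail the paper leaves implicit; handling it through the approximation bound in Assumption \ref{ass:approx_error_order_np} together with the rate conditions of Theorem \ref{thm:stat_appr_np}, as you suggest, is the natural resolution.
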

\begin{proof}
	The proof is similar to that of Lemma \ref{lemma:gamma_order} and is omitted.
\end{proof}
Denote $H(\gamma)=I_n+\sum_{j=m_1+1}^{m_1+m_2}\gamma_jW_j$ and $K(\gamma)=I_n-\sum_{j=1}^{m_1}\gamma_jW_j$. Let $G_j(\gamma)=W_jK^{-1}(\gamma)$, $j=1,\ldots,m_1$, $T_j=H^{-1}(\gamma)W_j$, $j=m_1+1,\ldots,m_1+m_2$  and, for a generic matrix $A$, denote $\overline A=A+A'$. Our final conditions may differ according to whether the $W_j$ are of general form or have
`single nonzero diagonal block structure', see e.g \cite{Gupta2013}. To define these, denote by $V$ an $n\times n$ block diagonal matrix with $i$-th block $V_{i}$, a $s_{i}\times s_{i}$ matrix, where $\sum _{i=1}^{m_1+m_2}s_{i}=n$, and for $i=1,...,m_1+m_2$ obtain $W_{j}$ from $V$ by replacing each $V_{j}$, $j\neq
i$, by a matrix of zeros. Thus $V=\sum _{i=1}^{m_1+m_2}W_{j}$.
\begin{lemma}\label{lemma:sem_SARMA_frech_der}
	For the spatial error model with SARMA$(p,q)$ errors, if \begin{equation}\label{sem_SARMA_bounds}
		\sup_{\gamma\in\Gamma^o}\left(\left\Vert K^{-1}(\gamma)\right\Vert+\left\Vert K'^{-1}(\gamma)\right\Vert+\left\Vert H^{-1}(\gamma)\right\Vert+\left\Vert H'^{-1}(\gamma)\right\Vert\right)+\max_{j=1,\ldots,m_1+m_2}\left\Vert W_j\right\Vert < C,
	\end{equation}
	then
	\[
	\left(D\Sigma(\gamma)\right)\left(\gamma^\dag\right)= A^{-1}(\gamma)\left(\sum_{j=1}^{m_1}\gamma^\dag_j\overline{H^{-1}(\gamma)G_j(\gamma)}
	+\sum_{j=m_1+1}^{m_1+m_2}\gamma^\dag_j\overline{T_j(\gamma)}\right)A'^{-1}(\gamma).
	\]
\end{lemma}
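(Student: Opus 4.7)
The plan is to establish the formula by a direct first-order expansion of $\Sigma(\gamma)$ around $\gamma$, then verify the remainder is $o(\|h\|)$ and that the candidate derivative is a bounded linear operator on $\Gamma^o$, thereby confirming the Fréchet definition (\ref{frechet_def}).

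First I would write $\Sigma(\gamma) = K^{-1}(\gamma)H(\gamma)H'(\gamma)K'^{-1}(\gamma)$ and apply the product rule at the level of Fréchet derivatives to these four matrix-valued factors. The building blocks are standard: $DK^{-1}(\gamma)(h) = K^{-1}(\gamma)\bigl(\sum_{j\le m_1} h_j W_j\bigr)K^{-1}(\gamma) = \sum_{j\le m_1} h_j K^{-1}(\gamma)G_j(\gamma)$, with an analogous transpose expression, while $DH(\gamma)(h) = \sum_{j>m_1} h_j W_j$ and $DH'(\gamma)(h)$ are already linear in $h$. To make this rigorous I would use the Neumann expansion $K^{-1}(\gamma+h) = \bigl[I_n - \sum_{j\le m_1} h_j G_j(\gamma)\bigr]^{-1} K^{-1}(\gamma)$, whose validity for small $\|h\|$ is guaranteed by (\ref{sem_SARMA_bounds}): the bound $\|G_j(\gamma)\| \le \|W_j\|\|K^{-1}(\gamma)\| \le C$ ensures convergence of the series and yields quadratic tail control. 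An analogous expansion holds for $H^{-1}(\gamma+h)$ when differentiating through the $T_j$ part.

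Once the four first-order contributions are assembled, I would simplify by factoring $A^{-1}(\gamma) = K^{-1}(\gamma)H(\gamma)$ to the left and $A'^{-1}(\gamma) = H'(\gamma)K'^{-1}(\gamma)$ to the right of each term. For the SAR part ($j\le m_1$) the two contributions are transposes of each other, so upon inserting $H(\gamma)H^{-1}(\gamma) = I_n$ at the appropriate places they combine into a symmetrization of a single matrix sandwiched by $A^{-1}$ and $A'^{-1}$, recovering the first sum in the displayed formula. For the SMA part ($j>m_1$) the two contributions also form a transpose pair and collapse to $\overline{T_j(\gamma)}$ after recognizing $H^{-1}(\gamma)W_j = T_j(\gamma)$. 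Verification of the Fréchet hypotheses is then mechanical: linearity of $h \mapsto D\Sigma(\gamma)(h)$ is manifest, the operator-norm bound in (\ref{Sigma_fre_der_bdd}) follows from uniform control of $\|A^{-1}\|, \|H^{-1}\|, \|K^{-1}\|$, and $\max_j\|W_j\|$ supplied by (\ref{sem_SARMA_bounds}), and the $o(\|h\|)$ remainder inherits the $O(\|h\|^2)$ quadratic tail of the Neumann expansion.

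The main obstacle I anticipate is the bookkeeping in the simplification step. With $H$ and $H^{-1}$ factors interleaved through the inverse-derivative expressions (and coming in on both sides of $G_j$ or $W_j$), it is easy to let an extra $H$ or $H^{-1}$ drift into or out of the symmetrized middle factor, so care is needed to align each of the four terms consistently against the target $A^{-1}(\cdot)A'^{-1}$ form before summing. Once this accounting is in place, the remaining work is routine matrix calculus combined with the uniform bounds of (\ref{sem_SARMA_bounds}).
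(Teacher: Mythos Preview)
Your proposal is correct and reaches the stated formula, but it is organized differently from the paper's proof. You write $\Sigma(\gamma)=K^{-1}HH'K'^{-1}$ and apply the Leibniz rule to four factors, using Neumann expansions to control remainders. The paper instead keeps $\Sigma=A^{-1}A'^{-1}$ as a two-factor product, computes the exact finite difference via the resolvent identity $A^{-1}(\gamma+\gamma^\dag)-A^{-1}(\gamma)=-A^{-1}(\gamma+\gamma^\dag)\bigl(A(\gamma+\gamma^\dag)-A(\gamma)\bigr)A^{-1}(\gamma)$, then expands $A(\gamma+\gamma^\dag)-A(\gamma)$ in terms of $H$ and $K$ to obtain an explicit expression $\Delta(\gamma,\gamma^\dag)$; the difference $\Delta-D\Sigma(\gamma)(\gamma^\dag)$ is decomposed into seven concrete matrix terms $\Pi_1,\ldots,\Pi_7$, each bounded by $C\|\gamma^\dag\|_g^2$ under (\ref{sem_SARMA_bounds}). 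Your route is more conceptual and keeps the bookkeeping lighter (four symmetric contributions that pair up transparently), while the paper's route is entirely elementary---it never invokes the product rule or chain rule for Fr\'echet derivatives and instead verifies the definition by hand, which also makes the remainder estimate fully explicit. One small slip: your Neumann identity should read $K^{-1}(\gamma+h)=K^{-1}(\gamma)\bigl[I_n-\sum_{j\le m_1}h_jG_j(\gamma)\bigr]^{-1}$ (the inverse sits on the right, since $K(\gamma+h)=(I_n-\sum h_jG_j)K(\gamma)$), but this does not affect the derivative you extract or the $O(\|h\|^2)$ tail.
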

\begin{proof}
	We first show that $D\Sigma\in\mathscr{L}\left(\Gamma^o,\mathcal{M}^{n\times n}\right)$. Clearly, $D\Sigma$ is a linear map and (\ref{sem_SARMA_bounds})
	\[
	\left\Vert\left(D\Sigma(\gamma)\right)\left(\gamma^\dag\right)\right\Vert\leq C \left\Vert\gamma^\dag\right\Vert_1,
	\]
	in the general case and
	\[
	\left\Vert\left(D\Sigma(\gamma)\right)\left(\gamma^\dag\right)\right\Vert\leq C \max_{j=1,\ldots,m_1+m_2}\left\vert\gamma_j^\dag\right\vert,
	\]
	in the `single nonzero diagonal block' case.
	Thus $D\Sigma$ is a bounded linear operator between two normed linear spaces, i.e. it is a continuous linear operator.
	
	With $A(\gamma)=H^{-1}(\gamma)K(\gamma)$, we now show that
	\begin{equation}\label{sem_SARMA_tgt}
		\frac{\left\Vert A^{-1}\left(\gamma+\gamma^\dag\right)A'^{-1}\left(\gamma+\gamma^\dag\right)-A^{-1}\left(\gamma\right)A'^{-1}\left(\gamma\right)-\left(D\Sigma(\gamma)\right)\left(\gamma^\dag\right)\right\Vert}{\left\Vert\gamma^\dag\right\Vert_g}\rightarrow 0, \text{ as }\left\Vert\gamma^\dag\right\Vert_g\rightarrow 0,
	\end{equation}
	where $\left\Vert\cdot\right\Vert_g$ is either the 1-norm or the max norm on $\Gamma$. First, note that \begin{eqnarray}
		&&A^{-1}\left(\gamma+\gamma^\dag\right)A'^{-1}\left(\gamma+\gamma^\dag\right)-A^{-1}(\gamma)A'^{-1}(\gamma)\nonumber\\
		&=&A^{-1}\left(\gamma+\gamma^\dag\right)\left(A^{-1}\left(\gamma+\gamma^\dag\right)-A^{-1}(\gamma)\right)'+\left(A^{-1}\left(\gamma+\gamma^\dag\right)-A^{-1}(\gamma)\right)A^{-1}(\gamma)\nonumber\\
		&=&-A^{-1}\left(\gamma+\gamma^\dag\right)A'^{-1}\left(\gamma+\gamma^\dag\right)\left(A\left(\gamma+\gamma^\dag\right)-A(\gamma)\right)'A'^{-1}(\gamma)\nonumber\\
		&-&A^{-1}\left(\gamma+\gamma^\dag\right)\left(A\left(\gamma+\gamma^\dag\right)-A(\gamma)\right)A^{-1}(\gamma)A'^{-1}(\gamma)\label{sem_SARMA_Sdiff}.
	\end{eqnarray}
	Next,
	\begin{eqnarray}
		A\left(\gamma+\gamma^\dag\right)-A(\gamma)&=&H^{-1}\left(\gamma+\gamma^\dag\right)K\left(\gamma+\gamma^\dag\right)-H^{-1}\left(\gamma\right)K\left(\gamma\right)\nonumber\\
		&=&H^{-1}\left(\gamma+\gamma^\dag\right)\left(K\left(\gamma+\gamma^\dag\right)-K(\gamma)\right)\nonumber\\
		&+&H^{-1}\left(\gamma+\gamma^\dag\right)\left(H\left(\gamma\right)-H\left(\gamma+\gamma^\dag\right)\right)H^{-1}\left(\gamma\right)K\left(\gamma\right)\nonumber\\
		&=&-H^{-1}\left(\gamma+\gamma^\dag\right)\left(\sum_{j=1}^{m_1}\gamma_j^\dag W_j+\sum_{j=m_1+1}^{m_1+m_2}\gamma_j^\dag W_jH^{-1}(\gamma)K(\gamma)\right).\nonumber\\\label{sem_SARMA1}
	\end{eqnarray}
	Substituting (\ref{sem_SARMA1}) in (\ref{sem_SARMA_Sdiff}) implies that \begin{equation}\label{sem_SARMA2}
		A^{-1}\left(\gamma+\gamma^\dag\right)A'^{-1}\left(\gamma+\gamma^\dag\right)-A^{-1}(\gamma)A'^{-1}(\gamma)=\Delta_1\left(\gamma,\gamma^\dag\right)+\Delta_2\left(\gamma,\gamma^\dag\right)=\Delta\left(\gamma,\gamma^\dag\right),
	\end{equation}
	say, where
	\begin{eqnarray*}
		\Delta_1\left(\gamma,\gamma^\dag\right)&=&A^{-1}\left(\gamma+\gamma^\dag\right)A'^{-1}\left(\gamma+\gamma^\dag\right)\left(\sum_{j=1}^{m_1}\gamma_j^\dag W'_j+K'(\gamma)H'^{-1}(\gamma)\sum_{j=m_1+1}^{m_1+m_2}\gamma_j^\dag W'_j\right)\\
		&\times&H'^{-1}\left(\gamma+\gamma^\dag\right)A'^{-1}(\gamma),\\
		\Delta_2\left(\gamma,\gamma^\dag\right)&=&A^{-1}\left(\gamma+\gamma^\dag\right)H^{-1}\left(\gamma+\gamma^\dag\right)\left(\sum_{j=1}^{m_1}\gamma_j^\dag W_j+\sum_{j=m_1+1}^{m_1+m_2}\gamma_j^\dag W_jH^{-1}(\gamma)K(\gamma)\right)\\
		&\times&A^{-1}(\gamma)A'^{-1}(\gamma).
	\end{eqnarray*}
	From the definitions above and recalling that $A(\gamma)=H^{-1}(\gamma)K(\gamma)$, we can write
	\begin{equation}\label{sem_SARMA_split}
		\Delta\left(\gamma,\gamma^\dag\right)=A^{-1}\left(\gamma+\gamma^\dag\right)\Upsilon\left(\gamma,\gamma^\dag\right) A'^{-1}\left(\gamma\right),
	\end{equation}
	with
	\begin{eqnarray*}
		\Upsilon\left(\gamma,\gamma^\dag\right)&=&\sum_{j=1}^{m_1}\gamma_j^\dag G'_j\left(\gamma+\gamma^\dag\right)H'^{-1}\left(\gamma+\gamma^\dag\right)+A'^{-1}\left(\gamma+\gamma^\dag\right)A'(\gamma)\sum_{j=m_1+1}^{m_1+m_2}\gamma_j^\dag T'_j\left(\gamma+\gamma^\dag\right)\\
		&+&\sum_{j=1}^{m_1}\gamma_j^\dag H^{-1}\left(\gamma+\gamma^\dag\right)G_j\left(\gamma \right)+\sum_{j=m_1+1}^{m_1+m_2}\gamma_j^\dag T_j\left(\gamma+\gamma^\dag\right).
	\end{eqnarray*}
	Then (\ref{sem_SARMA2}) implies that
	\begin{eqnarray}
		&&A^{-1}\left(\gamma+\gamma^\dag\right)A'^{-1}\left(\gamma+\gamma^\dag\right)-A^{-1}(\gamma)A'^{-1}(\gamma)-\left(D\Sigma(\gamma)\right)\left(\gamma^\dag\right)\nonumber\\
		&=&A^{-1}\left(\gamma+\gamma^\dag\right)A'^{-1}\left(\gamma+\gamma^\dag\right)-A^{-1}(\gamma)A'^{-1}(\gamma)- \Delta\left(\gamma,\gamma^\dag\right)-\left(D\Sigma(\gamma)\right)\left(\gamma^\dag\right)+\Delta\left(\gamma,\gamma^\dag\right)\nonumber\\
		&=&\Delta\left(\gamma,\gamma^\dag\right)-\left(D\Sigma(\gamma)\right)\left(\gamma^\dag\right),
	\end{eqnarray}
	so to prove (\ref{sem_SARMA_tgt}) it is sufficient to show that
	\begin{equation}\label{sem_SARMA3}
		\frac{\left\Vert\Delta\left(\gamma,\gamma^\dag\right)-\left(D\Sigma(\gamma)\right)\left(\gamma^\dag\right)\right\Vert}{\left\Vert \gamma^\dag\right\Vert_g}\rightarrow 0 \text{ as } \left\Vert \gamma^\dag\right\Vert_g\rightarrow 0.
	\end{equation}
	The numerator in (\ref{sem_SARMA3}) can be written as $\sum_{i=1}^{7}\Pi_i\left(\gamma,\gamma^\dag\right) A'^{-1}(\gamma)$ by adding, subtracting and grouping terms, where (omitting the argument $\left(\gamma,\gamma^\dag\right)$)
	\begin{eqnarray*}
		\Pi_1&=&A^{-1}\left(\gamma+\gamma^\dag\right)\sum_{j=1}^{m_1}\gamma_j^\dag G'_j\left(\gamma+\gamma^\dag\right) H'^{-1}(\gamma)\left(H(\gamma)-H\left(\gamma+\gamma^\dag\right)\right)'H'^{-1}\left(\gamma+\gamma^\dag\right),\\
		\Pi_2&=&A^{-1}\left(\gamma+\gamma^\dag\right)\sum_{j=1}^{m_1}\gamma_j^\dag H^{-1}\left(\gamma+\gamma^\dag\right)\left(H(\gamma)-H\left(\gamma+\gamma^\dag\right)\right)H^{-1}(\gamma)G_j\left(\gamma\right),\\
		\Pi_3&=&A^{-1}\left(\gamma+\gamma^\dag\right)\sum_{j=m_1+1}^{m_1+m_2}\gamma_j^\dag \left(A^{-1}\left(\gamma+\gamma^\dag\right)-A^{-1}\left(\gamma\right)\right)T'_j\left(\gamma+\gamma^\dag\right),\\
		\Pi_4&=&\left(A^{-1}\left(\gamma+\gamma^\dag\right)-A^{-1}\left(\gamma\right)\right)\sum_{j=m_1+1}^{m_1+m_2}\gamma_j^\dag\overline{T_j\left(\gamma+\gamma^\dag\right)},\\
		\Pi_5&=&A^{-1}(\gamma)\sum_{j=m_1+1}^{m_1+m_2}\gamma_j^\dag\overline{H^{-1}\left(\gamma+\gamma^\dag\right)\left(H(\gamma)-H\left(\gamma+\gamma^\dag\right)\right)H^{-1}(\gamma)W_j},\\
		\Pi_6&=&\Delta\left(\gamma,\gamma^\dag\right)\sum_{j=1}^{m_1}\gamma_j^\dag W_j'H'^{-1}(\gamma),\\
		\Pi_7&=&\left(A^{-1}\left(\gamma+\gamma^\dag\right)-A^{-1}\left(\gamma\right)\right)\sum_{j=1}^{m_1}\gamma_j^\dag H^{-1}(\gamma)G_j(\gamma).
	\end{eqnarray*}
	By (\ref{sem_SARMA_bounds}), (\ref{sem_SARMA_split}) and replication of earlier techniques, we have
	\begin{equation}\label{sem_SARMA4}
		\max_{i=1,\ldots,7}\sup_{\gamma\in\Gamma^o}\left\Vert \Pi_i \left(\gamma,\gamma^\dag\right)A^{-1}(\gamma)\right\Vert\leq C\left\Vert \gamma^\dag\right\Vert^2_g,
	\end{equation}
	where the norm used on the RHS of (\ref{sem_SARMA4}) depends on whether we are considering the general case or the `single nonzero diagonal block' case. Thus
	\[
	\frac{\left\Vert\Delta\left(\gamma,\gamma^\dag\right)-\left(D\Sigma(\gamma)\right)\left(\gamma^\dag\right)\right\Vert}{\left\Vert \gamma^\dag\right\Vert_g}\leq C \left\Vert \gamma^\dag\right\Vert_g\rightarrow 0 \text{ as } \left\Vert \gamma^\dag\right\Vert_g\rightarrow 0,
	\]
	proving (\ref{sem_SARMA3}) and thus (\ref{sem_SARMA_tgt}).
\end{proof}
\begin{corollary}\label{cor:sem_SAR_frech_der}
	For the spatial error model with SAR$(m_1)$ errors,
	\[
	\left(D\Sigma(\gamma)\right)\left(\gamma^\dag\right)= K^{-1}(\gamma)\sum_{j=1}^{m_1}\gamma^\dag_j\overline{G_j(\gamma)}
	K'^{-1}(\gamma).
	\]
\end{corollary}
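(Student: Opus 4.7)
The plan is to obtain this corollary as an immediate specialization of Lemma \ref{lemma:sem_SARMA_frech_der} by setting $m_2=0$. In the SAR$(m_1)$ error specification we have $u = K^{-1}(\gamma_0)\varepsilon$, so $\Sigma(\gamma) = K^{-1}(\gamma)K'^{-1}(\gamma)$, which is precisely the SARMA$(m_1,0)$ case. With $m_2=0$, the definition $H(\gamma) = I_n + \sum_{j=m_1+1}^{m_1+m_2}\gamma_j W_j$ collapses to $H(\gamma) = I_n$, whence $A(\gamma) = H^{-1}(\gamma)K(\gamma) = K(\gamma)$ and the second summation in the lemma (the one running over $j = m_1+1,\ldots, m_1+m_2$) is empty. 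The first summation simplifies because $H^{-1}(\gamma) = I_n$ reduces $H^{-1}(\gamma)G_j(\gamma)$ to simply $G_j(\gamma)$.

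Substituting these simplifications into the expression
\[
\left(D\Sigma(\gamma)\right)\left(\gamma^\dag\right)= A^{-1}(\gamma)\Bigl(\sum_{j=1}^{m_1}\gamma^\dag_j\overline{H^{-1}(\gamma)G_j(\gamma)}+\sum_{j=m_1+1}^{m_1+m_2}\gamma^\dag_j\overline{T_j(\gamma)}\Bigr)A'^{-1}(\gamma)
\]
furnished by Lemma \ref{lemma:sem_SARMA_frech_der} delivers the stated formula directly, since $A^{-1}(\gamma) = K^{-1}(\gamma)$ and $A'^{-1}(\gamma) = K'^{-1}(\gamma)$.

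The only residual issue is verifying that the hypothesis (\ref{sem_SARMA_bounds}) of Lemma \ref{lemma:sem_SARMA_frech_der} holds in the SAR$(m_1)$ setting. With $H(\gamma) = I_n$ the terms $\|H^{-1}(\gamma)\|$ and $\|H'^{-1}(\gamma)\|$ are identically $1$, so the condition collapses to $\sup_{\gamma\in\Gamma^o}(\|K^{-1}(\gamma)\|+\|K'^{-1}(\gamma)\|)+\max_{j=1,\ldots,m_1}\|W_j\| < C$, which is precisely the higher-order analogue of condition (\ref{SEM_frech}) displayed in the main text for the first-order SEM and is standard in the spatial econometrics literature following \cite{lee2004asymptotic}.

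There is no real obstacle here, as the corollary is a pure specialization; the only thing worth articulating carefully is the correspondence between symbols, namely that $A(\gamma) \equiv K(\gamma)$ and that the empty moving-average sum makes the ``averaging'' over $\overline{T_j(\gamma)}$ vanish. No fresh estimates or limit arguments are required beyond those already established in the proof of Lemma \ref{lemma:sem_SARMA_frech_der}.
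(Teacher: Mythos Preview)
Your proof is correct and follows exactly the same approach as the paper: specialize Lemma \ref{lemma:sem_SARMA_frech_der} by setting $m_2=0$, so that $H(\gamma)=I_n$, the second sum is empty, and $A(\gamma)=K(\gamma)$. The paper's own proof is a single sentence to this effect; your additional remarks on verifying condition (\ref{sem_SARMA_bounds}) in the SAR case are accurate but not strictly needed, as the corollary inherits the lemma's hypotheses.
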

\begin{proof}
	Taking $m_2=0$ in Lemma \ref{lemma:sem_SARMA_frech_der}, the elements involving sums from $m_1+1$ to $m_1+m_2$ do not arise and $H(\gamma)=I_n$, proving the claim.
\end{proof}
\begin{corollary}\label{cor:sem_SMA_frech_der}
	For the spatial error model with SMA$(m_2)$ errors,
	\[
	\left(D\Sigma(\gamma)\right)\left(\gamma^\dag\right)= H(\gamma)\sum_{j=1}^{m_2}\gamma^\dag_j\overline{T_j(\gamma)}H'(\gamma).
	\]
\end{corollary}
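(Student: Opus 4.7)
The plan is to obtain this as a direct specialization of Lemma \ref{lemma:sem_SARMA_frech_der} by setting $m_1 = 0$, which is precisely what distinguishes the SMA$(m_2)$ case from the general SARMA$(m_1,m_2)$ case as noted in the text after equation (\ref{sem_SARMA_bounds}).

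First, with $m_1 = 0$, the factor $K(\gamma) = I_n - \sum_{j=1}^{m_1}\gamma_j W_j$ reduces to $I_n$, so that $A(\gamma) = H^{-1}(\gamma) K(\gamma) = H^{-1}(\gamma)$ and therefore $A^{-1}(\gamma) = H(\gamma)$ and $A'^{-1}(\gamma) = H'(\gamma)$. Second, in the expression
\[
\left(D\Sigma(\gamma)\right)\left(\gamma^\dag\right)= A^{-1}(\gamma)\left(\sum_{j=1}^{m_1}\gamma^\dag_j\overline{H^{-1}(\gamma)G_j(\gamma)} + \sum_{j=m_1+1}^{m_1+m_2}\gamma^\dag_j\overline{T_j(\gamma)}\right)A'^{-1}(\gamma)
\]
provided by Lemma \ref{lemma:sem_SARMA_frech_der}, the first sum is empty, while the second sum reindexes (after shifting indices down by $m_1 = 0$) to $\sum_{j=1}^{m_2}\gamma^\dag_j\overline{T_j(\gamma)}$, with $T_j(\gamma)=H^{-1}(\gamma)W_j$. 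Substituting the factors $A^{-1}(\gamma) = H(\gamma)$ on the left and $A'^{-1}(\gamma) = H'(\gamma)$ on the right yields the stated expression.

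Since Lemma \ref{lemma:sem_SARMA_frech_der} was established under the bound (\ref{sem_SARMA_bounds}), and with $m_1 = 0$ only the sub-condition $\sup_{\gamma\in\Gamma^o}(\left\Vert H^{-1}(\gamma)\right\Vert+\left\Vert H'^{-1}(\gamma)\right\Vert) + \max_j \left\Vert W_j\right\Vert < C$ is actually needed (the $K$-terms trivialize), there is no obstacle at all: the proof is purely a substitution. The only point worth verifying is the reindexing of the $T_j$'s, which is harmless since the weight matrices $W_j$ for the moving-average part were only labeled $j = m_1+1, \ldots, m_1+m_2$ in Lemma \ref{lemma:sem_SARMA_frech_der} for notational convenience in the combined SARMA display.
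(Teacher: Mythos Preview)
Your proof is correct and follows exactly the same approach as the paper's own proof, which simply sets $m_1=0$ in Lemma \ref{lemma:sem_SARMA_frech_der}, notes that $K(\gamma)=I_n$ so the first sum vanishes and $A^{-1}(\gamma)=H(\gamma)$. Your version is just a more detailed spelling-out of the same one-line specialization.
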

\begin{proof}
	Taking $m_1=0$ in Lemma \ref{lemma:sem_SARMA_frech_der}, the elements involving sums from $1$ to $m_1$ do not arise and $K(\gamma)=I_n$, proving the claim.
\end{proof}
\begin{lemma}\label{lemma:sem_MESS_frech_der}
	For the spatial error model with MESS$(m_1)$ errors, if
	\begin{equation}\label{MESS_condition_der}
		\max_{j=1,\ldots,m_1}\left(\left\Vert W_j\right\Vert+\left\Vert W'_j\right\Vert\right)<1, \end{equation}
	then
	\[
	\left(D\Sigma(\gamma)\right)\left(\gamma^\dag\right)= \exp\left(\sum_{j=1}^{m_1}\gamma_j\left(W_j+W_j'\right)\right)\sum_{j=1}^{m_1}\gamma_j^\dag\left(W_j+W_j'\right).
	\]
\end{lemma}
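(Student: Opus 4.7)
The plan is to mirror the strategy used in Lemma~\ref{lemma:sem_SARMA_frech_der}: first verify that the claimed expression defines a bounded linear operator on $\Gamma^o$, then establish the limit (\ref{frechet_def}) by directly expanding $\Sigma(\gamma+\gamma^\dag)-\Sigma(\gamma)$ via the power series of the matrix exponential, using (\ref{MESS_condition_der}) to control all series absolutely.

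First I would dispense with the operator-theoretic part. Write $E(\gamma)=\sum_{j=1}^{m_1}\gamma_j(W_j+W_j')$ and note that linearity of $\gamma^\dag\mapsto \exp(E(\gamma))E(\gamma^\dag)$ in $\gamma^\dag$ is immediate. For continuity, the submultiplicativity of $\|\cdot\|$ gives
\[
\bigl\|\exp(E(\gamma))E(\gamma^\dag)\bigr\|\le \exp\!\Bigl(\sum_{j=1}^{m_1}|\gamma_j|\bigl(\|W_j\|+\|W_j'\|\bigr)\Bigr)\,\max_{j}\bigl(\|W_j\|+\|W_j'\|\bigr)\,\|\gamma^\dag\|_g,
\]
and (\ref{MESS_condition_der}) together with compactness of $\Gamma$ yields a uniform bound over $\gamma\in\Gamma^o$, so $D\Sigma\in\mathscr{L}(\Gamma^o,\mathcal{M}^{n\times n})$.

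Second I would establish the Fr\'echet property. Under the MESS convention in which $\Sigma(\gamma)=A^{-1}(\gamma)A'^{-1}(\gamma)=\exp(E(\gamma))$ (the two factors commute so $\exp$ of the sum applies), both $E(\gamma)$ and $E(\gamma^\dag)$ are linear combinations of the same family $\{W_j+W_j'\}_{j=1}^{m_1}$ and therefore commute with each other. The binomial expansion then yields
\[
\exp\bigl(E(\gamma)+E(\gamma^\dag)\bigr)-\exp(E(\gamma))-E(\gamma^\dag)\exp(E(\gamma))
=\sum_{k=2}^\infty\frac{1}{k!}\sum_{i=2}^{k}\binom{k}{i}E(\gamma)^{k-i}E(\gamma^\dag)^{i},
\]
whose spectral norm is bounded by $\exp(\|E(\gamma)\|+\|E(\gamma^\dag)\|)\,\|E(\gamma^\dag)\|^2/2$ by grouping with $(\|E(\gamma)\|+\|E(\gamma^\dag)\|)^k$. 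Dividing by $\|\gamma^\dag\|_g$ produces a bound of order $\|\gamma^\dag\|_g$, uniformly on $\Gamma^o$, which tends to zero as $\|\gamma^\dag\|_g\to 0$, verifying (\ref{frechet_def}) and thus the stated formula.

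The main obstacle is the initial collapse $\exp(\sum_j\gamma_j W_j)\exp(\sum_j\gamma_j W_j')=\exp(E(\gamma))$ that underlies the whole calculation: this identity requires $\sum_j\gamma_j W_j$ to commute with $\sum_j\gamma_j W_j'$, which is the standard MESS working assumption (symmetric $W_j$, or normal pairwise-commuting $W_j$) but is not automatic. If one were unwilling to impose it, the replacement would be the Duhamel representation $(d/dt)\exp(tX)|_{t=0}H=\int_0^1 e^{sX}He^{(1-s)X}\,ds$ applied to each factor separately, followed by a commutation/telescoping argument analogous to the splitting $\Pi_1,\ldots,\Pi_7$ in the proof of Lemma~\ref{lemma:sem_SARMA_frech_der}. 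In that route the quadratic remainder control still goes through using (\ref{MESS_condition_der}) to bound every exponential uniformly, but the bookkeeping is considerably heavier than the commutative version above.
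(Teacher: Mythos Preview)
Your approach is essentially the paper's: collapse $\Sigma(\gamma)$ to $\exp(E(\gamma))$, factor out $\exp(E(\gamma))$ from $\exp(E(\gamma+\gamma^\dag))-\exp(E(\gamma))$, and bound the power-series tail by something quadratic in $\|\gamma^\dag\|_g$. The only slip is the claim that $E(\gamma)$ and $E(\gamma^\dag)$ commute merely because they lie in the span of $\{W_j+W_j'\}$---that is false in general and requires the $W_j+W_j'$ to pairwise commute---but the paper silently makes the identical assumption when it writes $\exp(E(\gamma+\gamma^\dag))=\exp(E(\gamma))\exp(E(\gamma^\dag))$, so your argument is no less rigorous than theirs, and your final paragraph already flags this correctly.
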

\begin{proof}
	Clearly $D\Sigma\in\mathscr{L}\left(\Gamma^o,\mathcal{M}^{n\times n}\right)$. Next,
	\begin{eqnarray}
		&&\left\Vert A^{-1}\left(\gamma+\gamma^\dag\right)A'^{-1}\left(\gamma+\gamma^\dag\right)-A^{-1}(\gamma)A'^{-1}(\gamma)-\left(D\Sigma(\gamma)\right)\left(\gamma^\dag\right)\right\Vert\nonumber\\
		&=&\left\Vert\exp\left(\sum_{j=1}^{m_1}\left(\gamma_j+\gamma^\dag_j\right)\left(W_j+W_j'\right)\right)-\exp\left(\sum_{j=1}^{m_1}\gamma_j\left(W_j+W_j'\right)\right)-\left(D\Sigma(\gamma)\right)\left(\gamma^\dag\right)\right\Vert\nonumber\\
		&=&\left\Vert\exp\left(\sum_{j=1}^{m_1}\gamma_j\left(W_j+W_j'\right)\right)\left(\exp\left(\sum_{j=1}^{m_1}\gamma^\dag_j\left(W_j+W_j'\right)\right)-I_n-\sum_{j=1}^{m_1}\gamma^\dag_j\left(W_j+W_j'\right)\right)\right\Vert\nonumber\\
		&\leq&\left\Vert\exp\left(\sum_{j=1}^{m_1}\gamma_j\left(W_j+W_j'\right)\right)\right\Vert\left\Vert\exp\left(\sum_{j=1}^{m_1}\gamma^\dag_j\left(W_j+W_j'\right)\right)-I_n-\sum_{j=1}^{m_1}\gamma^\dag_j\left(W_j+W_j'\right)\right\Vert\nonumber\\
		&\leq& C\left\Vert I_n+\sum_{j=1}^p\gamma^\dag_j\left(W_j+W_j'\right)+\sum_{k=2}^{\infty}\left\{\sum_{j=1}^{m_1}\gamma^\dag_j\left(W_j+W_j'\right)\right\}^k-I_n-\sum_{j=1}^{m_1}\gamma^\dag_j\left(W_j+W_j'\right)\right\Vert\nonumber\\
		&\leq& C\left\Vert \sum_{k=2}^{\infty}\left\{\sum_{j=1}^{m_1}\gamma^\dag_j\left(W_j+W_j'\right)\right\}^k\right\Vert\leq C \sum_{k=2}^{\infty}\sum_{j=1}^{m_1}\left\vert \gamma^\dag_j\right\vert\left\Vert\left(W_j+W_j'\right)\right\Vert^k\nonumber\\
		&\leq& C\sum_{k=2}^{\infty}\left\Vert\gamma^\dag \right\Vert^k_g,\label{sem_MESS1}
	\end{eqnarray}
	by (\ref{MESS_condition_der}), without loss of generality, and again the norm used in (\ref{sem_MESS1}) depending on whether we are in the general or the `single nonzero diagonal block' case. Thus
	\[
	\frac{\left\Vert A^{-1}\left(\gamma+\gamma^\dag\right)A'^{-1}\left(\gamma+\gamma^\dag\right)-A^{-1}(\gamma)A'^{-1}(\gamma)-\left(D\Sigma(\gamma)\right)\left(\gamma^\dag\right)\right\Vert}{\left\Vert\gamma^\dag \right\Vert_g}\leq C \sum_{k=2}^{\infty}\left\Vert\gamma^\dag \right\Vert^{k-1}_g\rightarrow 0,
	\]
	as $\left\Vert\gamma^\dag \right\Vert_g\rightarrow 0$, proving the claim.
\end{proof}
\begin{theorem}\label{thm:tilde_hat_equiv}
	Under the conditions of Theorem \ref{thm:stat_properties} or \ref{thm:stat_properties_SAR}, $\mathscr{T}_n-\mathscr{T}_n^a=o_p(1)$ as $n\rightarrow\infty$.
\end{theorem}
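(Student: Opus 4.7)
The plan is to rewrite $\mathscr{T}_{n}^{a}-\mathscr{T}_{n}$ as a single cross-product and then exploit the first-order condition for $\widehat{\beta }$. Using $\widehat{u}=\widehat{v}+\widehat{\eta }$ one checks directly the identity
\[
\widehat{u}^{\prime }\Sigma (\widehat{\gamma })^{-1}\widehat{u}-\widehat{\eta }^{\prime }\Sigma (\widehat{\gamma })^{-1}\widehat{\eta }-\widehat{v}^{\prime }\Sigma (\widehat{\gamma })^{-1}\widehat{u}=\widehat{v}^{\prime }\Sigma (\widehat{\gamma })^{-1}\widehat{\eta },
\]
which gives $\mathscr{T}_{n}^{a}-\mathscr{T}_{n}=\widehat{\sigma }^{-2}\widehat{v}^{\prime }\Sigma (\widehat{\gamma })^{-1}\widehat{\eta }/\sqrt{2p}$. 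Since $\widehat{\sigma }^{2}\overset{p}\rightarrow \sigma _{0}^{2}>0$ by Theorem \ref{thm:cons_SAR}, it suffices to establish $\widehat{v}^{\prime }\Sigma (\widehat{\gamma })^{-1}\widehat{\eta }=o_{p}(\sqrt{p})$ under $H_{0}$ and under the local alternatives $H_{\ell }$.

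The central observation is that the normal equation underlying (\ref{betapmle_SAR}) yields the orthogonality $\Psi ^{\prime }\Sigma (\widehat{\gamma })^{-1}\widehat{\eta }=0$. Writing $\widehat{v}=\Psi \widehat{\beta }-f(x,\widehat{\alpha })$ and subtracting any element of the column space of $\Psi$ from the left factor, the cross-product collapses to
\[
\widehat{v}^{\prime }\Sigma (\widehat{\gamma })^{-1}\widehat{\eta }=-\bigl( f(x,\widehat{\alpha })-\Psi \beta _{0}\bigr) ^{\prime }\Sigma (\widehat{\gamma })^{-1}\widehat{\eta }.
\]
A first-order Taylor expansion of $f(x,\cdot )$ about $\alpha ^{\ast }$ combined with the series approximation $\theta _{0}-\Psi \beta _{0}=e$ gives, under $H_{0}$,
\[
f(x,\widehat{\alpha })-\Psi \beta _{0}=\frac{\partial f(x,\widetilde{\alpha })}{\partial \alpha ^{\prime }}(\widehat{\alpha }-\alpha ^{\ast })+e,
\]
with an extra $(p^{1/4}/\sqrt{n})h$ on the right under $H_{\ell }$. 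Substituting the expression
\[
\widehat{\eta }=\bigl( I_{n}-M^{\ast }(\widehat{\gamma })\bigr) \bigl[ u+e+(S(\widehat{\lambda })-S(\lambda _{0}))y\bigr] ,\quad M^{\ast }(\gamma )=\Psi \bigl( \Psi ^{\prime }\Sigma (\gamma )^{-1}\Psi \bigr) ^{-1}\Psi ^{\prime }\Sigma (\gamma )^{-1},
\]
derived exactly as in the proof of Theorem \ref{thm:stat_appr_SAR}, then leaves a short list of scalar cross terms to bound.

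Each term is controlled via the techniques already deployed in Theorems \ref{thm:stat_appr} and \ref{thm:stat_appr_SAR}. The key inputs are: $\|\widehat{\alpha }-\alpha ^{\ast }\|=O_{p}(n^{-1/2})$ from Assumption \ref{ass:alpha_order}; $\|\partial f(x,\cdot )/\partial \alpha _{j}\|=O_{p}(\sqrt{n})$ from Assumption \ref{ass:alpha_der}; $\|e\|=O_{p}(\sqrt{n}p^{-\mu })$ from Assumption \ref{ass:approx_error}; $\|I-M^{\ast }\|=O_{p}(1)$ via the similarity of $\Sigma ^{1/2}(I-M^{\ast })\Sigma ^{-1/2}$ with an orthogonal projection under Assumption \ref{ass:Sigma_spec_norm}; $\|(S(\widehat{\lambda })-S(\lambda _{0}))y\|=O_{p}(\sqrt{d_{\gamma }})$ from Lemma \ref{lemma:gamma_order} and Assumption \ref{ass:spec_norm_G}; and the Chebyshev-type inequality $\mathcal{E}|g^{\prime }\Sigma (\gamma )^{-1}u|^{2}=O(\|g\|^{2})$ used throughout the proof of Theorem \ref{thm:stat_appr}. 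These combine to give contributions of orders $O_{p}(1)$, $O_{p}(\sqrt{n}p^{-\mu })$, $O_{p}(np^{-2\mu })$, $O_{p}(\sqrt{n d_{\gamma }}p^{-\mu })$ and $O_{p}(\sqrt{d_{\gamma }})$, each of which is $o_{p}(\sqrt{p})$ under the stated rate conditions: $np^{-2\mu -1/2}\to 0$ and $\sqrt{n}p^{-\mu -1/2}\to 0$ both follow from $\sqrt{n}/p^{\mu +1/4}\to 0$, while $\sqrt{d_{\gamma }}/p^{1/4}\to 0$ is exactly $d_{\gamma }^{2}/p\to 0$ (imposed in Theorem \ref{thm:stat_appr_SAR}).

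The most delicate piece, and the anticipated main obstacle, is the local-alternative term $(p^{1/4}/\sqrt{n})h^{\prime }\Sigma (\widehat{\gamma })^{-1}\widehat{\eta }$. A crude Cauchy--Schwarz bound via $\|h\|=O_{p}(\sqrt{n})$ and $\|\Sigma (\widehat{\gamma })^{-1/2}\widehat{\eta }\|^{2}=n\widehat{\sigma }^{2}=O_{p}(n)$ would give $O_{p}(p^{1/4}\sqrt{n})$, which is too large. The remedy is to use $\Psi ^{\prime }\Sigma (\widehat{\gamma })^{-1}\widehat{\eta }=0$ once more and work through the decomposition of $\widehat{\eta }$; the leading contribution $(p^{1/4}/\sqrt{n})h^{\prime }\Sigma ^{-1}(I-M^{\ast })u$ then has variance $(p^{1/2}/n)\cdot O(\|h\|^{2})=O(p^{1/2})$ by the Chebyshev bound above, hence is $O_{p}(p^{1/4})=o_{p}(\sqrt{p})$; the remaining pieces are bounded analogously to those arising under $H_{0}$. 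A minor technical issue throughout is transferring spectral-norm bounds between $\Sigma $ and $\Sigma (\widehat{\gamma })$, which is handled by Proposition \ref{prop:Sigma_diff_bound} and the consistency of $\widehat{\gamma }$.
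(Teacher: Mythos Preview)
Your proof is correct and takes a genuinely different, somewhat cleaner route than the paper. The paper expands via the identity $n\widetilde{m}_{n}=2n\widehat{m}_{n}-\widehat{\sigma }^{-2}\widehat{v}^{\prime }\Sigma (\widehat{\gamma })^{-1}\widehat{v}$, then fully decomposes $\widehat{v}^{\prime }\Sigma (\widehat{\gamma })^{-1}\widehat{v}$ using the explicit formula for $\widehat{\theta }-\theta ^{\ast }$, and recognizes inside the resulting expression the combination $n\widehat{m}_{n}-\widehat{\sigma }^{-2}(A_{1}+A_{2}+A_{3}+A_{4})$ together with extra copies of $A_{3},A_{4}$ and two residual terms; the $A_{i}$ bounds from the proof of Theorem~\ref{thm:stat_appr} are then reused wholesale. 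Your approach instead reduces the difference directly to the single cross-product $\widehat{v}^{\prime }\Sigma (\widehat{\gamma })^{-1}\widehat{\eta }$ and immediately invokes the GLS normal-equation orthogonality $\Psi ^{\prime }\Sigma (\widehat{\gamma })^{-1}\widehat{\eta }=0$ to replace the left factor $\widehat{v}$ by the small residual $-(f(x,\widehat{\alpha })-\Psi \beta _{0})$. This makes the mechanism transparent: the difference $\mathscr{T}_{n}^{a}-\mathscr{T}_{n}$ is small precisely because $\widehat{v}$ lies in the column space of $\Psi $ up to the small pieces $e$, the Taylor remainder, and (under $H_{\ell }$) the $(p^{1/4}/\sqrt{n})h$ term. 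The paper's route is shorter in print because it recycles the $A_{i}$ machinery already built; yours requires enumerating the cross terms afresh but avoids the somewhat opaque bookkeeping of matching the expansion back to $n\widehat{m}_{n}$. Your treatment of the delicate local-alternative term via the variance bound $\mathcal{E}\lvert h^{\prime }\Sigma (\gamma )^{-1}(I-M^{\ast })u\rvert ^{2}=O(\Vert h\Vert ^{2})$ is exactly right and parallels the way the paper handles $A_{1}$ and $A_{3}$.
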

\begin{proof}
	It suffices to show that $n\widetilde{m}_{n}=n\widehat{m}_{n}+o_{p}(\sqrt{%
		p})$. As $\widehat{\eta }=y-\widehat{{\theta }},$ $\widehat{u}=y-\widehat{f}%
	$, and $\widehat{v}=\widehat{\theta }-\widehat{f}$, we have $%
	\widehat{u}=\widehat{\eta }+\widehat{v}$ and%
	\begin{eqnarray}
		n\widetilde{m}_{n} &=&\widehat{\sigma }^{-2}\left( \widehat{u}^{\prime
		}\Sigma \left( \widehat{\gamma }\right) ^{-1}\widehat{u}-\widehat{\eta }%
		^{\prime }\Sigma \left( \widehat{\gamma }\right) ^{-1}\widehat{\eta }\right)
		=\widehat{\sigma }^{-2}\left( 2\widehat{u}^{\prime }\Sigma \left( \widehat{%
			\gamma }\right) ^{-1}\widehat{v}-\widehat{v}^{\prime }\Sigma \left( \widehat{%
			\gamma }\right) ^{-1}\widehat{v}\right)   \notag \\
		&=&2n\widehat{m}_{n}-\widehat{\sigma }^{-2}\left[ \Psi \left( \Psi ^{\prime
		}\Sigma \left( \widehat{\gamma }\right) ^{-1}\Psi \right) ^{-1}\Psi ^{\prime
		}\Sigma \left( \widehat{\gamma }\right) ^{-1}({u+e}{)}-{e}+{\theta }_{0}-%
		\widehat{{f}}\right] ^{\prime }  \notag \\
		&&\Sigma \left( \widehat{\gamma }\right) ^{-1}\left[ \Psi \left( \Psi
		^{\prime }\Sigma \left( \widehat{\gamma }\right) ^{-1}\Psi \right) ^{-1}\Psi
		^{\prime }\Sigma \left( \widehat{\gamma }\right) ^{-1}({u+e}{)}-{e}+{\theta }%
		_{0}-\widehat{{f}}\right]   \notag \\
		&=&2n\widehat{m}_{n}-\widehat{\sigma }^{-2}u^{\prime }\Sigma \left( \widehat{%
			\gamma }\right) ^{-1}\Psi \left( \Psi ^{\prime }\Sigma \left( \widehat{%
			\gamma }\right) ^{-1}\Psi \right) ^{-1}\Psi ^{\prime }\Sigma \left( \widehat{%
			\gamma }\right) ^{-1}u\mathbf{-}\widehat{\sigma }^{-2}\left( {\theta }_{0}-%
		\widehat{{f}}\right) ^{\prime }\Sigma \left( \widehat{\gamma }\right)
		^{-1}\left( {\theta }_{0}-\widehat{{f}}\right)   \notag \\
		&&\mathbf{+}\widehat{\sigma }^{-2}\left( 2({\theta }_{0}-\widehat{{f}}%
		)-e\right) ^{\prime }\Sigma \left( \widehat{\gamma }\right) ^{-1}\left(
		I-\Psi \lbrack \Psi ^{\prime }\Sigma \left( \widehat{\gamma }\right)
		^{-1}\Psi ]^{-1}\Psi ^{\prime }\Sigma \left( \widehat{\gamma }\right)
		^{-1}\right) e  \notag \\
		&&-2\widehat{\sigma }^{-2}\left( {\theta }_{0}-\widehat{{f}}\right)
		^{\prime }\Sigma \left( \widehat{\gamma }\right) ^{-1}\Psi \left( \Psi
		^{\prime }\Sigma \left( \widehat{\gamma }\right) ^{-1}\Psi \right) ^{-1}\Psi
		^{\prime }\Sigma \left( \widehat{\gamma }\right) ^{-1}u  \notag \\
		&=&2n\widehat{m}_{n}-\left( n\widehat{m}_{n}-\widehat{\sigma }^{-2}\left(A_{1}+A_{2}+A_{3}+A_{4}\right)%
		\right) -\widehat{\sigma }^{-2}A_{4}  \notag \\
		&&\mathbf{+}\widehat{\sigma }^{-2}\left( 2({\theta }_{0}-\widehat{{f}}%
		)-e\right) ^{\prime }\Sigma \left( \widehat{\gamma }\right) ^{-1}\left(
		I-\Psi \lbrack \Psi ^{\prime }\Sigma \left( \widehat{\gamma }\right)
		^{-1}\Psi ]^{-1}\Psi ^{\prime }\Sigma \left( \widehat{\gamma }\right)
		^{-1}\right) e-2\widehat{\sigma }^{-2}A_{3}  \notag \\
		&=&n\widehat{m}_{n}+\widehat{\sigma }^{-2}\left( A_{1}+A_{2}-A_{3}\right)
		\notag \\
		&&+\widehat{\sigma }^{-2}\left( 2({\theta }_{0}-\widehat{{f}})-e\right)
		^{\prime }\Sigma \left( \widehat{\gamma }\right) ^{-1}\left( I-\Psi \left(
		\Psi ^{\prime }\Sigma \left( \widehat{\gamma }\right) ^{-1}\Psi \right)
		^{-1}\Psi ^{\prime }\Sigma \left( \widehat{\gamma }\right) ^{-1}\right) e.
		\label{eqn: alternativetest}
	\end{eqnarray}%
	In the proof of Theorem \ref{thm:stat_appr}, we have shown that%
	\begin{equation*}
		\left \vert \left( {\theta }_{0}-\widehat{{f}}\right) ^{\prime }\Sigma
		\left( \widehat{\gamma }\right) ^{-1}\left( I-\Psi \lbrack \Psi ^{\prime
		}\Sigma \left( \widehat{\gamma }\right) ^{-1}\Psi ]^{-1}\Psi ^{\prime
		}\Sigma \left( \widehat{\gamma }\right) ^{-1}\right) e\right \vert =o_{p}(%
		\sqrt{p})
	\end{equation*}%
	in the process of proving $|A_{2}|=o_{p}(\sqrt{p})$. Along with%
	\begin{eqnarray*}
		&&\left \vert e^{\prime }\Sigma \left( \widehat{\gamma }\right) ^{-1}\left(
		I-\Psi \left( \Psi ^{\prime }\Sigma \left( \widehat{\gamma }\right)
		^{-1}\Psi \right) ^{-1}\Psi ^{\prime }\Sigma \left( \widehat{\gamma }\right)
		^{-1}\right) e\right \vert  \\
		&\leq &\left \vert e^{\prime }\Sigma \left( \widehat{\gamma }\right)
		^{-1}e\right \vert +\left \vert e^{\prime }\Sigma \left( \widehat{\gamma }%
		\right) ^{-1}\Psi \left( \Psi ^{\prime }\Sigma \left( \widehat{\gamma }%
		\right) ^{-1}\Psi \right) ^{-1}\Psi ^{\prime }\Sigma \left( \widehat{\gamma }%
		\right) ^{-1}e\right \vert  \\
		&\leq &\left \Vert e\right \Vert ^{2}\sup_{\gamma \in \Gamma }\left \Vert
		\Sigma \left( \gamma \right) ^{-1}\right \Vert +\left \Vert e\right \Vert
		^{2}\sup_{\gamma \in \Gamma }\left \Vert \Sigma \left( \gamma \right)
		^{-1}\right \Vert ^{2}\left \Vert \frac{1}{n}\Psi \left( \frac{1}{n}\Psi
		^{\prime }\Sigma \left( \gamma \right) ^{-1}\Psi \right) ^{-1}\Psi ^{\prime
		}\right \Vert  \\
		&=&O_{p}\left( \left \Vert e\right \Vert ^{2}\right) =O_{p}\left( p^{-2\mu
		}n\right) =o_{p}(\sqrt{p}),
	\end{eqnarray*}%
	we complete the proof that $n\widetilde{m}_{n}=n\widehat{m}_{n}+o_{p}(\sqrt{p}).$
	In the SAR setting of Section \ref{sec:SAR_ext},%
	\begin{eqnarray*}
		n\widetilde{m}_{n} &=&\widehat{\sigma }^{-2}\left( \widehat{u}^{\prime
		}\Sigma \left( \widehat{\gamma }\right) ^{-1}\widehat{u}-\widehat{\eta }%
		^{\prime }\Sigma \left( \widehat{\gamma }\right) ^{-1}\widehat{\eta }\right)
		=\widehat{\sigma }^{-2}\left( 2\widehat{u}^{\prime }\Sigma \left( \widehat{%
			\gamma }\right) ^{-1}\widehat{v}-\widehat{v}^{\prime }\Sigma \left( \widehat{%
			\gamma }\right) ^{-1}\widehat{v}\right)  \\
		&=&2n\widehat{m}_{n}-\widehat{\sigma }^{-2}\left[ \Psi \left( \Psi ^{\prime
		}\Sigma \left( \widehat{\gamma }\right) ^{-1}\Psi \right) ^{-1}\Psi ^{\prime
		}\Sigma \left( \widehat{\gamma }\right) ^{-1}\left(
		u+e+\sum_{j=1}^{d_{\lambda }}(\lambda _{j_{0}}-\widehat{\lambda }%
		_{j})W_{j}y\right) -e+\theta _{0}-\widehat{f}\right] ^{\prime } \\
		&&\Sigma \left( \widehat{\gamma }\right) ^{-1}\left[ \Psi \left( \Psi
		^{\prime }\Sigma \left( \widehat{\gamma }\right) ^{-1}\Psi \right) ^{-1}\Psi
		^{\prime }\Sigma \left( \widehat{\gamma }\right) ^{-1}\left(
		u+e+\sum_{j=1}^{d_{\lambda }}(\lambda _{j_{0}}-\widehat{\lambda }%
		_{j})W_{j}y\right) -e+\theta _{0}-\widehat{f}\right] .
	\end{eqnarray*}%
	Compared to the expression in (\ref{eqn: alternativetest}), we have
	the additional terms
	\begin{equation*}
		-\widehat{\sigma }^{-2}\sum_{j=1}^{d_{\lambda }}(\lambda _{j_{0}}-\widehat{%
			\lambda }_{j})W_{j}y^{\prime }\Sigma \left( \widehat{\gamma }\right)
		^{-1}\Psi \left( \Psi ^{\prime }\Sigma \left( \widehat{\gamma }\right)
		^{-1}\Psi \right) ^{-1}\Psi ^{\prime }\Sigma \left( \widehat{\gamma }\right)
		^{-1}\sum_{j=1}^{d_{\lambda }}(\lambda _{j_{0}}-\widehat{\lambda }_{j})W_{j}y
	\end{equation*}%
	and%
	\begin{equation*}
		-2\widehat{\sigma }^{-2}\sum_{j=1}^{d_{\lambda }}(\lambda _{j_{0}}-\widehat{%
			\lambda }_{j})W_{j}y^{\prime }\Sigma \left( \widehat{\gamma }\right)
		^{-1}\Psi \left( \Psi ^{\prime }\Sigma \left( \widehat{\gamma }\right)
		^{-1}\Psi \right) ^{-1}\Psi ^{\prime }\Sigma \left( \widehat{\gamma }\right)
		^{-1}\left( u+\theta _{0}-\widehat{f}\right) .
	\end{equation*}%
	Both terms are $o_{p}(\sqrt{p})$ from the orders of $A_{5}$ and $A_{6}$ in
	the proof of Theorem \ref{thm:stat_appr_SAR}. Hence, in the SAR setting, $n\widetilde{m}_{n}=n%
	\widehat{m}_{n}+o_{p}(\sqrt{p})$ also holds.
	
	\sloppy We now present similar calculations that justify the validity of our bootstrap test for the SARARMA($m_{1}$,$m_{2},m_{3}$) model. The bootstrapped test statistic is constructed with
	\begin{equation*}
		n\widehat{m}_{n}^{\ast }=\widehat{{v}}^{\ast \prime }\Sigma \left(
		\widehat{\gamma }^{\ast }\right) ^{-1}\widehat{{u}}^{\ast }=(\widehat{%
			{\theta }}_{n}^{\ast }-{f}(x,\widehat{\alpha }_{n}^{\ast
		}))^{\prime }\Sigma \left( \widehat{\gamma }^{\ast }\right) ^{-1}\left(
		(I_{n}-\sum_{k=1}^{m_{1}}\widehat{\lambda }_{k}^{\ast }W_{1k})y^{\ast }-%
		{f}(x,\widehat{\alpha }_{n}^{\ast })\right).
	\end{equation*}
	
	Let $J_{n}=(I_{n}-\frac{1}{n}l_{n}l_{n}^{\prime })$. As $y=S(\lambda
	)^{-1}(\theta (x)+R(\gamma )\xi )$, we have%
	\begin{eqnarray*}
		\widetilde{\mathbf{\xi }} &=&J_{n}\widehat{\mathbf{\xi }}\\
		&=&J_{n}\left( \left(
		\sum_{l=1}^{m_{3}}\gamma _{3l}W_{3l}+I_{n}\right) ^{-1}+\left(
		\sum_{l=1}^{m_{3}}\gamma _{3l}W_{3l}+I_{n}\right)
		^{-1}\sum_{l=1}^{m_{3}}(\gamma _{3l}-\widehat{\gamma }_{3l})W_{3l}\left(
		\sum_{l=1}^{m_{3}}\widehat{\gamma }_{3l}W_{3l}+I_{n}\right) ^{-1}\right)  \\
		&&\times \left( I_{n}-\sum_{l=1}^{m_{2}}\gamma
		_{2l}W_{2l}+\sum_{l=1}^{m_{2}}(\gamma _{2l}-\widehat{\gamma }%
		_{2l})W_{2l}\right) \left( S(\lambda )y-\theta
		(x)+\sum_{k=1}^{m_{1}}(\lambda _{k}-\widehat{\lambda }_{k})W_{1k}y+\theta
		(x)-\psi ^{\prime }\widehat{\beta }\right)  \\
		&=&\xi -\frac{1}{n}l_{n}l_{n}^{\prime }\xi +J_{n}\left(
		\sum_{l=1}^{m_{3}}\gamma _{3l}W_{3l}+I_{n}\right) ^{-1}\left(
		I_{n}-\sum_{l=1}^{m_{2}}\gamma _{2l}W_{2l}\right) \left(
		\sum_{k=1}^{m_{1}}(\lambda _{k}-\widehat{\lambda }_{k})W_{1k}y+\theta
		(x)-\psi ^{\prime }\widehat{\beta }\right)  \\
		&&+J_{n}\left( \sum_{l=1}^{m_{3}}\gamma _{3l}W_{3l}+I_{n}\right)
		^{-1}\sum_{l=1}^{m_{2}}(\gamma _{2l}-\widehat{\gamma }_{2l})W_{2l}\left(
		S(\lambda )y-\theta (x)+\sum_{k=1}^{m_{1}}(\lambda _{k}-\widehat{\lambda }%
		_{k})W_{1k}y+\theta (x)-\psi ^{\prime }\widehat{\beta }\right)  \\
		&&+J_{n}\left( \sum_{l=1}^{m_{3}}\gamma _{3l}W_{3l}+I_{n}\right)
		^{-1}\sum_{l=1}^{m_{3}}(\gamma _{3l}-\widehat{\gamma }_{3l})W_{3l}\left(
		\sum_{l=1}^{m_{3}}\widehat{\gamma }_{3l}W_{3l}+I_{n}\right) ^{-1} \\
		&&\times \left( I_{n}-\sum_{l=1}^{m_{2}}\gamma
		_{2l}W_{2l}+\sum_{l=1}^{m_{2}}(\gamma _{2l}-\widehat{\gamma }%
		_{2l})W_{2l}\right) \left( S(\lambda )y-\theta
		(x)+\sum_{k=1}^{m_{1}}(\lambda _{k}-\widehat{\lambda }_{k})W_{1k}y+\theta
		(x)-\psi ^{\prime }\widehat{\beta }\right) ,
	\end{eqnarray*}%
	which can be written as
	\begin{equation*}
		\widetilde{\mathbf{\xi }}=\xi +\sum_{j=1}^{r}\zeta
		_{1n,j}p_{nj}+\sum_{j=1}^{s}\zeta _{2n,j}Q_{nj}\xi ,
	\end{equation*}%
	where $p_{nj}$ is an $n$-dimensional vector with bounded elements, $%
	Q_{nj}=[q_{nj,i}]$ is an $n\times n$ matrix with bounded row and column sum
	norms, and $\zeta _{1n,j}$ and $\zeta _{2n,j}$'s are equal to $l_{n}^{\prime
	}\xi /n$, elements of $\lambda _{k}-\widehat{\lambda }_{k}$, $\gamma _{2l}-%
	\widehat{\gamma }_{2l}$, $\theta (x)-\psi ^{\prime }\widehat{\beta }$ or
	their products. This differs from the proof of Lemma 2 in \cite{Jin2015}
	in the term $\theta (x)-\psi ^{\prime }\widehat{\beta }$ and potentially
	increasing order of $d_{\gamma }$. Then, $\zeta _{1n,j}=O_{p}(\sqrt{p^{1/2}/n}%
	\vee \sqrt{d_{\gamma }/n})$ and $\zeta _{2n,j}=O_{p}(\sqrt{p^{1/2}/n}\vee
	\sqrt{d_{\gamma }/n})$, instead of $O_{p}(\sqrt{1/n})$ as in \cite{Jin2015}. Based on this result, the assumptions in Theorem 4 of Su and Qu (2017) hold, so
	the validility of our bootstrap test directly follows.

\end{proof}
\newpage
\begin{table}
	\centering{
		\begin{tabular}{cccccccccccc}
			\hline \hline
			&  & \textbf{PS} &  &  &  & \textbf{Trig} &  &  &  & \textbf{B-s} &
			\\
			& {\small 0.01} & {\small 0.05} & {\small 0.10} &  & {\small 0.01} & {\small %
				0.05} & {\small 0.10} &  & {\small 0.01} & {\small 0.05} & {\small 0.10} \\
			\hline
			{\small $n=60$} &  &  &  &  &  &  &  &  &  &  &  \\ \hline
			${\small c=0}$ & ${\small 0.01}$ & ${\small 0.032}$ & ${\small 0.05}$ &  & $%
			{\small 0.01}$ & ${\small 0.028}$ & ${\small 0.054}$ &  & ${\small 0.02}$ & $%
			{\small 0.042}$ & ${\small 0.064}$ \\
			& ${\small 0.036}$ & ${\small 0.084}$ & ${\small 0.122}$ &  & ${\small 0.02}$
			& ${\small 0.056}$ & ${\small 0.084}$ &  & ${\small 0.044}$ & ${\small 0.008}
			$ & ${\small 0.11}$ \\
			${\small c=3}$ & ${\small 0.07}$ & ${\small 0.156}$ & ${\small 0.194}$ &  & $%
			{\small 0.166}$ & ${\small 0.248}$ & ${\small 0.296}$ &  & ${\small 0.208}$
			& ${\small 0.302}$ & ${\small 0.372}$ \\
			& ${\small 0.454}$ & ${\small 0.58}$ & ${\small 0.658}$ &  & ${\small 0.172}$
			& ${\small 0.29}$ & ${\small 0.358}$ &  & ${\small 0.166}$ & ${\small 0.274}$
			& ${\small 0.346}$ \\
			${\small c=6}$ & ${\small 0.37}$ & ${\small 0.532}$ & ${\small 0.644}$ &  & $%
			{\small 0.688}$ & ${\small 0.806}$ & ${\small 0.854}$ &  & ${\small 0.688}$
			& ${\small 0.82}$ & ${\small 0.884}$ \\
			& ${\small 0.998}$ & ${\small 1}$ & ${\small 1}$ &  & ${\small 0.676}$ & $%
			{\small 0.822}$ & ${\small 0.866}$ &  & ${\small 0.576}$ & ${\small 0.726}$
			& ${\small 0.81}$ \\ \hline
			{\small $n=100$} &  &  &  &  &  &  &  &  &  &  &  \\ \hline
			${\small c=0}$ & ${\small 0.008}$ & ${\small 0.03}$ & ${\small 0.044}$ &  & $%
			{\small 0.006}$ & ${\small 0.012}$ & ${\small 0.028}$ &  & ${\small 0.016}$
			& ${\small 0.028}$ & ${\small 0.042}$ \\
			& ${\small 0.022}$ & ${\small 0.052}$ & ${\small 0.068}$ &  & ${\small 0.004}
			$ & ${\small 0.028}$ & ${\small 0.05}$ &  & ${\small 0.018}$ & ${\small 0.048%
			}$ & ${\small 0.062}$ \\
			${\small c=3}$ & ${\small 0.352}$ & ${\small 0.478}$ & ${\small 0.574}$ &  &
			${\small 0.27}$ & ${\small 0.39}$ & ${\small 0.484}$ &  & ${\small 0.376}$ &
			${\small 0.518}$ & ${\small 0.614}$ \\
			& ${\small 0.54}$ & ${\small 0.666}$ & ${\small 0.744}$ &  & ${\small 0.288}$
			& ${\small 0.412}$ & ${\small 0.508}$ &  & ${\small 0.316}$ & ${\small 0.462}
			$ & ${\small 0.544}$ \\
			${\small c=6}$ & ${\small 0.984}$ & ${\small 0.99}$ & ${\small 0.99}$ &  & $%
			{\small 0.956}$ & ${\small 0.986}$ & ${\small 0.992}$ &  & ${\small 0.98}$ &
			${\small 0.992}$ & ${\small 0.994}$ \\
			& ${\small 0.998}$ & ${\small 0.998}$ & ${\small 0.998}$ &  & ${\small 0.948}
			$ & ${\small 0.99}$ & ${\small 0.992}$ &  & ${\small 0.956}$ & ${\small 0.99}
			$ & ${\small 0.996}$ \\ \hline
			{\small $n=200$} &  &  &  &  &  &  &  &  &  &  &  \\ \hline
			${\small c=0}$ & ${\small 0.002}$ & ${\small 0.016}$ & ${\small 0.034}$ &  &
			${\small 0.002}$ & ${\small 0.014}$ & ${\small 0.034}$ &  & ${\small 0.038}$
			& ${\small 0.074}$ & ${\small 0.102}$ \\
			& ${\small 0.008}$ & ${\small 0.026}$ & ${\small 0.048}$ &  & ${\small 0.012}
			$ & ${\small 0.028}$ & ${\small 0.036}$ &  & ${\small 0.01}$ & ${\small 0.036%
			}$ & ${\small 0.074}$ \\
			${\small c=3}$ & ${\small 0.176}$ & ${\small 0.29}$ & ${\small 0.356}$ &  & $%
			{\small 0.164}$ & ${\small 0.256}$ & ${\small 0.312}$ &  & ${\small 0.388}$
			& ${\small 0.354}$ & ${\small 0.606}$ \\
			& ${\small 0.34}$ & ${\small 0.496}$ & ${\small 0.582}$ &  & ${\small 0.144}$
			& ${\small 0.274}$ & ${\small 0.356}$ &  & ${\small 0.168}$ & ${\small 0.282}
			$ & ${\small 0.376}$ \\
			${\small c=6}$ & ${\small 0.888}$ & ${\small 0.942}$ & ${\small 0.96}$ &  & $%
			{\small 0.818}$ & ${\small 0.898}$ & ${\small 0.934}$ &  & ${\small 0.944}$
			& ${\small 0.974}$ & ${\small 0.986}$ \\
			& ${\small 0.99}$ & ${\small 0.998}$ & ${\small 1}$ &  & ${\small 0.816}$ & $%
			{\small 0.904}$ & ${\small 0.944}$ &  & ${\small 0.862}$ & ${\small 0.932}$
			& ${\small 0.954}$ \\ \hline \hline
		\end{tabular}
		\caption{Rejection probabilities of {\small SARARMA(0,1,0)} using
			asymptotic test ${\fancyt}$ at 1, 5, 10\% levels, power series (\textbf{PS}), trigonometric (\textbf{Trig}) and B-spline (\textbf{B-s}) bases. Compactly
			supported regressors.}\label{table:newsimsapp1}}
\end{table}
\newpage
\begin{table}
	\centering{
		\begin{tabular}{cccccccccccc}
			\hline \hline
			&  & \textbf{PS} &  &  &  & \textbf{Trig} &  &  &  & \textbf{B-s} &
			\\
			& {\small 0.01} & {\small 0.05} & {\small 0.10} &  & {\small 0.01} & {\small %
				0.05} & {\small 0.10} &  & {\small 0.01} & {\small 0.05} & {\small 0.10} \\
			\hline
			{\small $n=60$} &  &  &  &  &  &  &  &  &  &  &  \\ \hline
			${\small c=0}$ & ${\small 0.01}$ & ${\small 0.032}$ & ${\small 0.05}$ &  & $%
			{\small 0.01}$ & ${\small 0.028}$ & ${\small 0.054}$ &  & ${\small 0.06}$ & $%
			{\small 0.01}$ & ${\small 0.016}$ \\
			& ${\small 0.036}$ & ${\small 0.084}$ & ${\small 0.122}$ &  & ${\small 0.02}$
			& ${\small 0.056}$ & ${\small 0.084}$ &  & ${\small 0.044}$ & ${\small 0.008}
			$ & ${\small 0.116}$ \\
			${\small c=3}$ & ${\small 0.07}$ & ${\small 0.156}$ & ${\small 0.194}$ &  & $%
			{\small 0.16}$ & ${\small 0.252}$ & ${\small 0.292}$ &  & ${\small 0.09}$ & $%
			{\small 0.138}$ & ${\small 0.186}$ \\
			& ${\small 0.454}$ & ${\small 0.58}$ & ${\small 0.658}$ &  & ${\small 0.174}$
			& ${\small 0.29}$ & ${\small 0.358}$ &  & ${\small 0.166}$ & ${\small 0.272}$
			& ${\small 0.34}$ \\
			${\small c=6}$ & ${\small 0.37}$ & ${\small 0.532}$ & ${\small 0.644}$ &  & $%
			{\small 0.682}$ & ${\small 0.798}$ & ${\small 0.85}$ &  & ${\small 0.514}$ &
			${\small 0.644}$ & ${\small 0.714}$ \\
			& ${\small 0.998}$ & ${\small 1}$ & ${\small 1}$ &  & ${\small 0.676}$ & $%
			{\small 0.822}$ & ${\small 0.866}$ &  & ${\small 0.572}$ & ${\small 0.714}$
			& ${\small 0.8}$ \\ \hline
			{\small $n=100$} &  &  &  &  &  &  &  &  &  &  &  \\ \hline
			${\small c=0}$ & ${\small 0.008}$ & ${\small 0.03}$ & ${\small 0.044}$ &  & $%
			{\small 0.006}$ & ${\small 0.012}$ & ${\small 0.026}$ &  & ${\small 0}$ & $%
			{\small 0.004}$ & ${\small 0.006}$ \\
			& ${\small 0.022}$ & ${\small 0.052}$ & ${\small 0.068}$ &  & ${\small 0.006}
			$ & ${\small 0.028}$ & ${\small 0.05}$ &  & ${\small 0.018}$ & ${\small 0.05}
			$ & ${\small 0.062}$ \\
			${\small c=3}$ & ${\small 0.352}$ & ${\small 0.478}$ & ${\small 0.574}$ &  &
			${\small 0.268}$ & ${\small 0.396}$ & ${\small 0.486}$ &  & ${\small 0.158}$
			& ${\small 0.23}$ & ${\small 0.288}$ \\
			& ${\small 0.54}$ & ${\small 0.666}$ & ${\small 0.744}$ &  & ${\small 0.288}$
			& ${\small 0.412}$ & ${\small 0.508}$ &  & ${\small 0.322}$ & ${\small 0.466}
			$ & ${\small 0.55}$ \\
			${\small c=6}$ & ${\small 0.984}$ & ${\small 0.99}$ & ${\small 0.99}$ &  & $%
			{\small 0.958}$ & ${\small 0.986}$ & ${\small 0.992}$ &  & ${\small 0.918}$
			& ${\small 0.97}$ & ${\small 0.98}$ \\
			& ${\small 0.998}$ & ${\small 0.998}$ & ${\small 0.998}$ &  & ${\small 0.952}
			$ & ${\small 0.99}$ & ${\small 0.992}$ &  & ${\small 0.96}$ & ${\small 0.99}$
			& ${\small 0.998}$ \\ \hline
			{\small $n=200$} &  &  &  &  &  &  &  &  &  &  &  \\ \hline
			${\small c=0}$ & ${\small 0.002}$ & ${\small 0.016}$ & ${\small 0.034}$ &  &
			${\small 0.002}$ & ${\small 0.018}$ & ${\small 0.038}$ &  & ${\small 0}$ & $%
			{\small 0}$ & ${\small 0}$ \\
			& ${\small 0.008}$ & ${\small 0.026}$ & ${\small 0.048}$ &  & ${\small 0.012}
			$ & ${\small 0.028}$ & ${\small 0.032}$ &  & ${\small 0.01}$ & ${\small 0.036%
			}$ & ${\small 0.064}$ \\
			${\small c=3}$ & ${\small 0.176}$ & ${\small 0.29}$ & ${\small 0.356}$ &  & $%
			{\small 0.156}$ & ${\small 0.258}$ & ${\small 0.312}$ &  & ${\small 0.022}$
			& ${\small 0.03}$ & ${\small 0.044}$ \\
			& ${\small 0.34}$ & ${\small 0.496}$ & ${\small 0.582}$ &  & ${\small 0.144}$
			& ${\small 0.272}$ & ${\small 0.352}$ &  & ${\small 0.154}$ & ${\small 0.266}
			$ & ${\small 0.352}$ \\
			${\small c=6}$ & ${\small 0.888}$ & ${\small 0.942}$ & ${\small 0.96}$ &  & $%
			{\small 0.816}$ & ${\small 0.908}$ & ${\small 0.936}$ &  & ${\small 0.43}$ &
			${\small 0.522}$ & ${\small 0.554}$ \\
			& ${\small 0.99}$ & ${\small 0.998}$ & ${\small 1}$ &  & ${\small 0.816}$ & $%
			{\small 0.904}$ & ${\small 0.944}$ &  & ${\small 0.856}$ & ${\small 0.924}$
			& ${\small 0.944}$ \\ \hline \hline
		\end{tabular}
		\caption{Rejection probabilities of {\small SARARMA(0,1,0)} using
			asymptotic test ${\fancyt}^a$ at 1, 5, 10\% levels, power series (\textbf{PS}), trigonometric (\textbf{Trig}) and B-spline (\textbf{B-s}) bases. Compactly
			supported regressors.}\label{table:newsimsapp2}}
\end{table}
\newpage
\begin{table}
	\centering{
		\begin{tabular}{cccccccccccc}
			\hline \hline
			&  & \textbf{PS} & ${\fancyt}=\fancyt^{a}$ &  &  &
			\textbf{Trig} & $\fancyt$ &  &  & \textbf{Trig} & $\fancyt^{a}$
			\\
			& {\small 0.01} & {\small 0.05} & {\small 0.10} &  & {\small 0.01} & {\small %
				0.05} & {\small 0.10} &  & {\small 0.01} & {\small 0.05} & {\small 0.10} \\
			\hline
			{\small $n=60$} &  &  &  &  &  &  &  &  &  &  &  \\ \hline
			${\small c=0}$ & ${\small 0.02}$ & ${\small 0.05}$ & ${\small 0.072}$ &  & $%
			{\small 0.016}$ & ${\small 0.038}$ & ${\small 0.052}$ &  & ${\small 0.016}$
			& ${\small 0.038}$ & ${\small 0.052}$ \\
			& ${\small 0.038}$ & ${\small 0.082}$ & ${\small 0.11}$ &  & ${\small 0.038}$
			& ${\small 0.06}$ & ${\small 0.08}$ &  & ${\small 0.038}$ & ${\small 0.06}$
			& ${\small 0.08}$ \\
			${\small c=3}$ & ${\small 0.106}$ & ${\small 0.158}$ & ${\small 0.224}$ &  &
			${\small 0.062}$ & ${\small 0.11}$ & ${\small 0.146}$ &  & ${\small 0.062}$
			& ${\small 0.11}$ & ${\small 0.146}$ \\
			& ${\small 0.152}$ & ${\small 0.25}$ & ${\small 0.31}$ &  & ${\small 0.09}$
			& ${\small 0.158}$ & ${\small 0.204}$ &  & ${\small 0.09}$ & ${\small 0.158}$
			& ${\small 0.204}$ \\
			${\small c=6}$ & ${\small 0.552}$ & ${\small 0.686}$ & ${\small 0.73}$ &  & $%
			{\small 0.234}$ & ${\small 0.352}$ & ${\small 0.482}$ &  & ${\small 0.236}$
			& ${\small 0.354}$ & ${\small 0.43}$ \\
			& ${\small 0.634}$ & ${\small 0.774}$ & ${\small 0.82}$ &  & ${\small 0.404}$
			& ${\small 0.542}$ & ${\small 0.642}$ &  & ${\small 0.404}$ & ${\small 0.542}
			$ & ${\small 0.642}$ \\ \hline
			{\small $n=100$} &  &  &  &  &  &  &  &  &  &  &  \\ \hline
			${\small c=0}$ & ${\small 0.008}$ & ${\small 0.024}$ & ${\small 0.036}$ &  &
			${\small 0.002}$ & ${\small 0.018}$ & ${\small 0.036}$ &  & ${\small 0.002}$
			& ${\small 0.018}$ & ${\small 0.036}$ \\
			& ${\small 0.024}$ & ${\small 0.05}$ & ${\small 0.068}$ &  & ${\small 0.012}$
			& ${\small 0.026}$ & ${\small 0.052}$ &  & ${\small 0.012}$ & ${\small 0.026}
			$ & ${\small 0.052}$ \\
			${\small c=3}$ & ${\small 0.162}$ & ${\small 0.262}$ & ${\small 0.342}$ &  &
			${\small 0.142}$ & ${\small 0.22}$ & ${\small 0.286}$ &  & ${\small 0.142}$
			& ${\small 0.22}$ & ${\small 0.286}$ \\
			& ${\small 0.216}$ & ${\small 0.332}$ & ${\small 0.408}$ &  & ${\small 0.164}
			$ & ${\small 0.274}$ & ${\small 0.35}$ &  & ${\small 0.164}$ & ${\small 0.274%
			}$ & ${\small 0.35}$ \\
			${\small c=6}$ & ${\small 0.824}$ & ${\small 0.894}$ & ${\small 0.926}$ &  &
			${\small 0.79}$ & ${\small 0.868}$ & ${\small 0.892}$ &  & ${\small 0.79}$ &
			${\small 0.866}$ & ${\small 0.894}$ \\
			& ${\small 0.888}$ & ${\small 0.944}$ & ${\small 0.952}$ &  & ${\small 0.862}
			$ & ${\small 0.896}$ & ${\small 0.928}$ &  & ${\small 0.862}$ & ${\small %
				0.896}$ & ${\small 0.928}$ \\ \hline
			{\small $n=200$} &  &  &  &  &  &  &  &  &  &  &  \\ \hline
			${\small c=0}$ & ${\small 0.006}$ & ${\small 0.018}$ & ${\small 0.032}$ &  &
			${\small 0.008}$ & ${\small 0.022}$ & ${\small 0.032}$ &  & ${\small 0.008}$
			& ${\small 0.022}$ & ${\small 0.032}$ \\
			& ${\small 0.012}$ & ${\small 0.032}$ & ${\small 0.068}$ &  & ${\small 0.01}$
			& ${\small 0.026}$ & ${\small 0.046}$ &  & ${\small 0.01}$ & ${\small 0.026}$
			& ${\small 0.046}$ \\
			${\small c=3}$ & ${\small 0.096}$ & ${\small 0.182}$ & ${\small 0.258}$ &  &
			${\small 0.076}$ & ${\small 0.152}$ & ${\small 0.212}$ &  & ${\small 0.078}$
			& ${\small 0.15}$ & ${\small 0.208}$ \\
			& ${\small 0.126}$ & ${\small 0.24}$ & ${\small 0.33}$ &  & ${\small 0.098}$
			& ${\small 0.184}$ & ${\small 0.26}$ &  & ${\small 0.098}$ & ${\small 0.184}$
			& ${\small 0.26}$ \\
			${\small c=6}$ & ${\small 0.754}$ & ${\small 0.858}$ & ${\small 0.892}$ &  &
			${\small 0.596}$ & ${\small 0.728}$ & ${\small 0.794}$ &  & ${\small 0.596}$
			& ${\small 0.724}$ & ${\small 0.79}$ \\
			& ${\small 0.84}$ & ${\small 0.918}$ & ${\small 0.944}$ &  & ${\small 0.684}$
			& ${\small 0.794}$ & ${\small 0.866}$ &  & ${\small 0.684}$ & ${\small 0.792}
			$ & ${\small 0.866}$ \\ \hline \hline
		\end{tabular}
		\caption{Rejection probabilities of {\small SARARMA(0,1,0)} using
			asymptotic tests ${\fancyt},{\fancyt}^a$ at 1, 5, 10\% levels, power series (\textbf{PS}) and trigonometric (\textbf{Trig}) bases.  Unboundedly
			supported regressors.}\label{table:newsimsapp3}}
\end{table}
\newpage
\begin{table}
	\centering{
		\begin{tabular}{cccccccccccc}
			\hline \hline
			&  & \textbf{PS} & ${\fancyt}^\ast=\fancyt^{a \ast}$ &  &  &
			\textbf{Trig} & $\fancyt^{ \ast}$ &  &  & \textbf{Trig} & $\fancyt^{a \ast}$
			\\
			& {\small 0.01} & {\small 0.05} & {\small 0.10} &  & {\small 0.01} & {\small %
				0.05} & {\small 0.10} &  & {\small 0.01} & {\small 0.05} & {\small 0.10} \\
			\hline
			{\small $n=60$} &  &  &  &  &  &  &  &  &  &  &  \\ \hline
			${\small c=0}$ & ${\small 0.008}$ & ${\small 0.058}$ & ${\small 0.108}$ &  &
			${\small 0.01}$ & ${\small 0.046}$ & ${\small 0.124}$ &  & ${\small 0.01}$ &
			${\small 0.046}$ & ${\small 0.124}$ \\
			& ${\small 0.008}$ & ${\small 0.042}$ & ${\small 0.094}$ &  & ${\small 0.006}
			$ & ${\small 0.044}$ & ${\small 0.102}$ &  & ${\small 0.006}$ & ${\small %
				0.044}$ & ${\small 0.102}$ \\
			${\small c=3}$ & ${\small 0.052}$ & ${\small 0.17}$ & ${\small 0.318}$ &  & $%
			{\small 0.036}$ & ${\small 0.14}$ & ${\small 0.21}$ &  & ${\small 0.036}$ & $%
			{\small 0.14}$ & ${\small 0.21}$ \\
			& ${\small 0.034}$ & ${\small 0.16}$ & ${\small 0.184}$ &  & ${\small 0.034}$
			& ${\small 0.132}$ & ${\small 0.234}$ &  & ${\small 0.034}$ & ${\small 0.132}
			$ & ${\small 0.234}$ \\
			${\small c=6}$ & ${\small 0.35}$ & ${\small 0.67}$ & ${\small 0.808}$ &  & $%
			{\small 0.16}$ & ${\small 0.392}$ & ${\small 0.556}$ &  & ${\small 0.16}$ & $%
			{\small 0.392}$ & ${\small 0.558}$ \\
			& ${\small 0.262}$ & ${\small 0.656}$ & ${\small 0.794}$ &  & ${\small 0.204}
			$ & ${\small 0.468}$ & ${\small 0.66}$ &  & ${\small 0.204}$ & ${\small 0.468%
			}$ & ${\small 0.66}$ \\ \hline
			{\small $n=100$} &  &  &  &  &  &  &  &  &  &  &  \\ \hline
			${\small c=0}$ & ${\small 0.006}$ & ${\small 0.05}$ & ${\small 0.102}$ &  & $%
			{\small 0.006}$ & ${\small 0.05}$ & ${\small 0.11}$ &  & ${\small 0.004}$ & $%
			{\small 0.05}$ & ${\small 0.112}$ \\
			& ${\small 0.012}$ & ${\small 0.054}$ & ${\small 0.128}$ &  & ${\small 0.004}
			$ & ${\small 0.044}$ & ${\small 0.112}$ &  & ${\small 0.004}$ & ${\small %
				0.044}$ & ${\small 0.112}$ \\
			${\small c=3}$ & ${\small 0.13}$ & ${\small 0.342}$ & ${\small 0.516}$ &  & $%
			{\small 0.128}$ & ${\small 0.324}$ & ${\small 0.488}$ &  & ${\small 0.126}$
			& ${\small 0.32}$ & ${\small 0.488}$ \\
			& ${\small 0.122}$ & ${\small 0.326}$ & ${\small 0.498}$ &  & ${\small 0.114}
			$ & ${\small 0.298}$ & ${\small 0.474}$ &  & ${\small 0.114}$ & ${\small %
				0.298}$ & ${\small 0.474}$ \\
			${\small c=6}$ & ${\small 0.766}$ & ${\small 0.932}$ & ${\small 0.974}$ &  &
			${\small 0.728}$ & ${\small 0.92}$ & ${\small 0.974}$ &  & ${\small 0.728}$
			& ${\small 0.92}$ & ${\small 0.972}$ \\
			& ${\small 0.774}$ & ${\small 0.934}$ & ${\small 0.968}$ &  & ${\small 0.732}
			$ & ${\small 0.898}$ & ${\small 0.952}$ &  & ${\small 0.732}$ & ${\small %
				0.898}$ & ${\small 0.952}$ \\ \hline
			{\small $n=200$} &  &  &  &  &  &  &  &  &  &  &  \\ \hline
			${\small c=0}$ & ${\small 0.03}$ & ${\small 0.056}$ & ${\small 0.088}$ &  & $%
			{\small 0.028}$ & ${\small 0.06}$ & ${\small 0.098}$ &  & ${\small 0.028}$ &
			${\small 0.06}$ & ${\small 0.098}$ \\
			& ${\small 0.028}$ & ${\small 0.084}$ & ${\small 0.128}$ &  & ${\small 0.022}
			$ & ${\small 0.068}$ & ${\small 0.118}$ &  & ${\small 0.022}$ & ${\small %
				0.068}$ & ${\small 0.118}$ \\
			${\small c=3}$ & ${\small 0.17}$ & ${\small 0.346}$ & ${\small 0.49}$ &  & $%
			{\small 0.132}$ & ${\small 0.286}$ & ${\small 0.384}$ &  & ${\small 0.13}$ &
			${\small 0.288}$ & ${\small 0.38}$ \\
			& ${\small 0.178}$ & ${\small 0.34}$ & ${\small 0.488}$ &  & ${\small 0.128}$
			& ${\small 0.274}$ & ${\small 0.416}$ &  & ${\small 0.128}$ & ${\small 0.274}
			$ & ${\small 0.416}$ \\
			${\small c=6}$ & ${\small 0.794}$ & ${\small 0.92}$ & ${\small 0.966}$ &  & $%
			{\small 0.682}$ & ${\small 0.866}$ & ${\small 0.93}$ &  & ${\small 0.678}$ &
			${\small 0.864}$ & ${\small 0.93}$ \\
			& ${\small 0.84}$ & ${\small 0.936}$ & ${\small 0.976}$ &  & ${\small 0.698}$
			& ${\small 0.888}$ & ${\small 0.93}$ &  & ${\small 0.698}$ & ${\small 0.888}$
			& ${\small 0.93}$ \\ \hline \hline
		\end{tabular}	
		
		\caption{Rejection probabilities of {\small SARARMA(0,1,0)} using
			bootstrap tests ${\fancyt}^{\ast}, {\fancyt}^{a \ast}$ at 1, 5, 10\% levels, power series (\textbf{PS}) and trigonometric (\textbf{Trig}) bases. Unboundedly
			supported regressors.}\label{table:newsimsapp4}}
\end{table}

\bibliographystyle{chicago}
{\bibliography{thesisb}}

\end{document}